\newcommand{\doi}[1]{\textsc{doi}: \href{http://dx.doi.org/#1}{\nolinkurl{#1}}}
\newcommand\citewiththeorem[2]{{\cite[#1]{#2}}} % To cite a theorem inside a
\newcommand\pair[2]{\ensuremath{\left( #1,#2\right)}}
\newcommand\tlet[4]{\ensuremath{\mathsf{let}\ \pair{#1}{#2}=#3\ \mathsf{in}\ #4}}
\newcommand\tif[3]{\ensuremath{\mathsf{if}\ #1\ \mathsf{then}\ #2\ \mathsf{else}\ #3}}
\newcommand\Ct{\ensuremath{\overline{\mathbb{Q}}}}
\newcommand\B{\ensuremath{\mathbb B}}
\newcommand\N{\ensuremath{\mathbb N}}
\newcommand{\lpar}[1]{\ensuremath{\llparenthesis\, #1 \,\rrparenthesis}}
\newcommand{\punq}{\textnormal{PUNQ}}
\newcommand{\dlal}{\textnormal{DLAL}}
\newcommand{\ortho}{\ensuremath{\mathsf{ORTHO}}}
\newcommand{\card}{\ensuremath{\textnormal{card}}}
\newcommand\T{\mathtt{T}}
\newcommand\SUP{\mathtt{S}}
\newcommand\BV{\mathtt{BV}}
\newcommand\V{\mathtt{V}}
\newcommand\CF{\text{CF}}
\newcommand\FBQP{\text{FBQP}}
\newcommand\BQP{\text{BQP}}
\newcommand{\CHK}{\textnormal{\textsf{CHK}}}
\newcommand\ls{Lambda-$\mathcal{S}_1$}
\newcommand\lmap[1]{\ensuremath{\lfloor #1 \rfloor}}
\newcommand\dom{\text{dom}}
\newcommand\subsetast{\ensuremath{\sqsubseteq}}
\newcommand\eqast{\ensuremath{\simeq}}
\newcommand{\Mod}[1]{\ (\mathrm{mod}\ #1)}
\renewcommand\ket[1]{%
  \@ifnextchar\braket{\k@t{#1}\!\!}{\k@t{#1}\!}%
}
\newcommand\k@t[1]{{|{#1}\rangle}}
\begin{document}
\title{A feasible and unitary quantum programming language} 
%\title{A feasible and unitary programming language with quantum control} 
% THANKS to add to the title for the final non-anonymous version
\thanks{Partially funded by the French-Argentinian IRP SINFIN. Díaz-Caro is also funded by PICT 2021-I-A-00090 and 2019-1272, and PIP 11220200100368CO}
%ANONYMOUS

\author{Alejandro Díaz-Caro}
\affiliation{%
\institution{Universidad de Buenos Aires-CONICET, ICC}
\city{Buenos Aires}
\country{Buenos Aires, Argentina}}
\affiliation{%
\institution{Universidad Nacional de Quilmes, DCyT}
\city{Bernal}
\country{Bernal, Argentina}}
\author{Emmanuel Hainry}
\author{Romain Péchoux}
\author{Mário Silva}
\affiliation{%
\institution{Université de Lorraine, CNRS, Inria, LORIA}
\city{Nancy}
\country{F54000 Nancy, France}
\postcode{F54000}}
\authorsaddresses{}
%\author{
%  Alejandro Díaz-Caro\inst{1,2}
%  \and
%  Emmanuel Hainry\inst{3}
%  \and
%  Romain Péchoux\inst{3}
%  \and
%  Mário Silva\inst{3}
%}
%%
%\authorrunning{A.~Díaz-Caro, E.~Hainry, R.~Péchoux, and M.~Silva}
%\institute{
%  Universidad de Buenos Aires-CONICET, ICC, Buenos Aires, Argentina\\
%  \and
%  Universidad Nacional de Quilmes, DCyT, Bernal, PBA, Argentina
%  \email{adiazcaro@conicet.gov.ar}\\
%  \and
%  Université de Lorraine, CNRS, Inria, LORIA, F-54000 Nancy,France
%  \email{\{romain.pechoux,emmanuel.hainry,mario.silva\}@loria.fr}
%}

\begin{abstract}
  We introduce a novel quantum programming language featuring higher-order programs and quantum control flow which ensures that all qubit transformations are unitary. Our language boasts a type
  system guaranteeing both unitarity and polynomial-time normalization.
  Unitarity is achieved by using a special modality for superpositions
  while requiring orthogonality among superposed terms. Polynomial-time
  normalization is achieved  using a linear-logic-based type discipline employing Barber and Plotkin
  duality along with a specific modality to account for potential duplications. This type
  discipline also guarantees that derived values have polynomial size. Our language
  seamlessly combines the two modalities: quantum circuit programs uphold
  unitarity, and all programs are evaluated in polynomial time, ensuring their feasibility.
%  Consequently, the language is sound for polynomial time and complete for $\FBQP$, representing Bounded-error Quantum Polynomial time Functions.
\end{abstract}

\maketitle

\renewcommand{\shortauthors}{A.~Díaz-Caro, E.~Hainry, R.~Péchoux, and M.~Silva}

\section{Introduction}
\subsection{Motivation}
\emph{Classical control vs quantum control.}
Quantum programming languages can be classified into two primary categories
based on their control flow handling. On the one hand, \emph{classical control}~\cite{S04} involves quantum operations
executed on a specialized device within a classical computer. The classical
computer manages program execution by instructing which quantum operation to
apply to specific qubits. This approach resembles circuit description languages
that use high-level operations on quantum circuits, with examples such as the
quantum lambda calculus~\cite{SV06}, Quipper~\cite{QUIPPER}, and
Qwire~\cite{QWIRE}. Ensuring physical implementability typically involves
constraints and linearity-based type systems on quantum data to maintain
essential quantum physics properties such as the no-cloning
theorem~\cite{WoottersZurek82}. Notably, this category accommodates models like
QRAM~\cite{K96}.

On the other hand, \emph{quantum control}~\cite{DiazcaroLSFA21} allows programming
quantum operations based on quantum data, a fundamental concept in quantum
computing. For example, the CNOT operation governs the application of a NOT
operation based on a control qubit. More advanced examples like the Quantum
Switch~\cite{QuantumSwitch15,QuantumSwitch17} have gained popularity, enabling
the application of two operations in different orders based on a control qubit.

While both categories offer equivalent computational power, quantum control
provides a more natural approach, allowing programs to be ``fully quantum''
with operational control flow over quantum data. Programming languages like
QML~\cite{AG05}, Lambda-$\mathcal{S}_1$~\cite{DCM22}, and Qunity~\cite{QUNITY}
fall into this category. These programs must overcome significant
constraints to ensure physical implementability, particularly in terms of
efficient compilation into quantum circuits or low-level models like QRAM.

\paragraph{Unitarity.}
One major issue in quantum control is ensuring \emph{unitarity}, a fundamental
property of quantum systems that maintains the total probability of all
possible outcomes over time. Quantum gates in circuits are represented by
unitary maps, preserving $\ell^2$-norm and orthogonality. However,
representing quantum data as linear combinations in a Hilbert space leads to a
problem: programs must have an $\ell^2$-norm of $1$ (i.e., must lie in the
unitary sphere) to be physically realizable, often requiring orthogonality
among program branches. This property is generally not preserved by program
semantics, leading to the need to restrict programs to those satisfying
unitarity. The question of ensuring this restriction through type systems has
been first explored by~\cite{AG05} to some extent, and more recently
by~\cite{DCGMV19}, characterizing superpositions and isometries with
realizability techniques~\cite{V08,M11}. In this realizability model, types are
interpreted as subsets of a vector space's unit sphere, with all typed terms
preserving the $\ell^2$-norm, and quantum data expressed as superpositions of
classical data types.  The corresponding typing discipline, that is in a way
the dual to Intuitionistic Linear Logic~\cite{GL87,HDP93},  has been introduced
in~\cite{DCM22} to delineate a  programming language for unitarity called
Lambda-$\mathcal{S}_1$.

\paragraph{Feasibility.}\label{ss:f}
While unitarity is essential for quantum program implementability, it is
insufficient. \emph{Feasibility}~\cite{G83} (or \emph{tractability}), the property that a program can be executed within reasonable time and space-constraints, is equally
crucial. Compiling quantum programs to low-level models, like quantum
circuits, requires imposing restrictions on qubit count, gate count, and
error rates. Achieving feasibility involves studying program computational
complexity, typically related to polynomial-depth uniform circuit families.
Yao's Theorem~\cite{Y93} links such families to Bounded-error Quantum
Polynomial time (\textsc{bqp})~\cite{BV97}, a quantum analogue of the probabilistic complexity class $\textsc{bpp}$.
Feasibility has deep roots in classical complexity, leading to fields like
descriptive complexity~\cite{I12} and implicit computational
complexity~\cite{P20} characterizing complexity classes logically and through
programming languages.

Previous work~\cite{DLMZ10}, based on light linear logic~\cite{G94},
characterizes polynomial time in quantum lambda calculus. However, no type
system currently ensures quantum program feasibility with quantum control.
Thus, a major problem in quantum computing is to develop programming languages
with quantum control, typed in such a way as to ensure both unitarity and
feasibility of the programmed functions, so that they have a physical implementation that does not break the laws of quantum mechanics, and can (at least in principle) be efficiently compiled
into a circuit.

\subsection{Contribution}
We present $\punq$, a typed quantum programming language with quantum control
addressing unitarity and feasibility. Quantum control employs a quantum conditional inspired
by QML~\cite{AG05}, producing superposed outputs from superposed inputs.
Unitarity is achieved via a variant of Lambda-$\mathcal{S}_1$~\cite{DCM22},
introducing a modality $\sharp$ for superpositions, addressing quantum
no-cloning. This modality marks non-duplicable types, treating superpositions
of type $\sharp A$ linearly. Only terms of types distinct from
$\sharp A$ can be duplicated. For example, $\sharp \B$ denotes superpositions of Booleans
(qubits), and realizers of $\sharp \B \multimap \sharp \B$ represent single-qubit quantum
gates.
Feasibility is ensured by a variant of the Dual Light Affine Logic type system
($\dlal$)~\cite{BT09}, employing a  modality $\S$ to account for potential
duplication and a duality à la~\cite{BP96}, with two arrows corresponding to a non-linear context and a linear context, respectively.  In this setting, linear arrows are strictly linear to preserve unitarity and non-linear arrows cannot be applied to superpositions, thus, ensuring the no-cloning principle of quantum mechanics.

The main contributions of this paper are:
\begin{itemize}
  \item a new typed programming language with quantum control that enjoys
    subject reduction (Theorem~\ref{thm:sr}) and progress (Theorem~\ref{thm:progress}),
  \item a soundness result (Theorem~\ref{thm:unitarity}) showing that typable linear maps over qubits encode isometries and unitary operators when dimensions match,
  \item a completeness result (Theorem~\ref{thm:isocomplete}) stating that any isometry can be encoded by a $\punq$ program, 
  \item a non-separability result (Theorem~\ref{thm:non-separability}): there is no linear map that can separate qubits,
  \item a feasibility result (Theorem~\ref{thm:soundness}), ensuring polynomial time normalization and polynomially bounded size of normal forms,
  \item a complexity result for type checking over three fragments of $\punq{}$ (Theorem~\ref{thm:chk-complexity}). We identify two fragments in which we can check orthogonality in polytime and one in which it is potentially undecidable (Lemma~\ref{lemma:orthos-complexity}), and give examples of the expressivity of each fragment.
  
%  \item Establishing extensional completeness (Theorem~\ref{thm:completeness}),
%    mapping $\FBQP$ functions and polynomial-sized circuits to $\punq$ terms.
%    We also describe a compilation procedure in Section~\ref{s:compilation}
%    illustrating that any term encoding a function over qubits can be compiled
%    to polynomial-sized quantum circuit.
\end{itemize}

In summary, $\punq$ can be viewed as the first feasible and physically
realistic quantum programming language with quantum control. We provide several simple examples to illustrate our results: towards completeness, we show that standard one and two-qubit gates can be simulated by programs (Examples~\ref{ex:hadamard}, \ref{ex:z}, and \ref{ex:controlled-not}), we provide the encoding of a simple quantum teleportation protocol (Example~\ref{ex:teleportation}) illustrating that soundness can be used to certify unitarity, we also provide a quantum random walk algorithm (Example~\ref{ex:random-walk}) illustrating polynomial time normalization.

\subsection{Related Work}
\emph{Quantum control and unitarity.} 
QML~\cite{AG05} was the pioneering language to introduce a quantum
if-statement, enabling superposition in branches based on the superposition in
the if-statement guard. The $\ell^2$-norm preservation is ensured through a
semantic notion of validity: an if-statement is valid when its branches are
orthogonal, effectively reducing the two branches to orthogonal values.
Another approach for handling quantum control and superpositions, which is less
semantics-driven, is the introduction of Lineal~\cite{AD17}. It is an untyped
lambda calculus extended with superpositions of terms. Lineal strictly covers
measurement-free quantum programs, as it does not involve orthogonality checks
and does not enforce superpositions to have norm $1$. However, it treats
function application linearly, providing a generalization of the QML
if-statement.  Lambda-$\mathcal S_1$~\cite{DCGMV19,DCM22}, as well as the
current work, can be seen as the $\ell^2$-norm preserving restriction of
Lineal. Lambda-$\mathcal S$~\cite{DiazcaroDowekRinaldiBIO19,DCM23}, a preliminary
version of this typing discipline, does not ensure unitarity but includes
measurements.  Qunity~\cite{QUNITY} enforces the no-cloning property using constrained sharing, allowing duplicated variables to produce only
entangled states in a basis-dependent way. While this ensures that well-typed programs in Qunity have a quantum circuit representation, the typing does not guarantee any resource bound, neither on the time complexity of the programs, nor on the size of the compiled circuit. 
There have also been attempts~\cite{Y16} to define a notion
of ``quantum alternation'' allowing measurements to be quantum-controlled, where
the resulting system is not monotone with respect to the Löwner order and,
hence, cannot be considered to be a physically feasible concept~\cite{BP15}.
An alternative approach in~\cite{Y23} characterizes unitarity using a model
based on injective semantics. Complementary approaches following the
ZX-calculus research line provide graphical languages for quantum control with
quantum tests~\cite{Chardonnet23,CDVVV22}.  Currently, none of the mentioned systems can
guarantee the feasibility of their programs in terms of complexity or resource
requirements.

\paragraph{Quantum complexity classes.}
Programming-language-based characterizations of well-known complexity classes
have been deeply studied in the field of Implicit Computational Complexity
(see~\cite{P20} for a survey). To mention a few of them, ~\cite{BC92} provided
the first implicit (i.e., where the complexity bound does not need to be explicited
by the programmer) characterization of polynomial time and~\cite{GMR08} is the
first lambda-calculus characterization of polynomial space.  Although a great
deal of work has been done in this field over the last three decades, only a
small number of it has focused on the quantum paradigm. The paper~\cite{DLMZ10}
characterizes $\BQP$ on the quantum lambda-calculus. However, due to the
presence of unrestricted measurement, this classically-controlled system cannot
guarantee unitarity. Finally,~\cite{HPS23,Y20} provide two characterizations
of $\FBQP$ with quantum control which are restricted to first-order.

\subsection{Illustrating Example: Grover's Quantum Search Algorithm}

\begin{figure}
\[
\scalebox{0.95}{
   \begin{quantikz}[wire types={q,n,q}]
	\lstick[3]{$\ket{0}^{\otimes n}$}  & \gate{\text{H}} &\gate[3]{\mathsf{Oracle}}\gategroup[3,steps=4,style={draw=gray!40,fill=gray!20, inner	xsep=1pt},background,label style={label position=below,anchor=north,yshift=-0.2cm}]{$\mathsf{Grover}$} & \gate{\text{H}} & \gate[3]{\mathsf{Phase}}  & \gate{\text{H}} & \gate[3, style={fill=gray!20}]{\mathsf{Grover}} &\ \ \ldots\ \ & \gate[3, style={fill=gray!20}]{\mathsf{Grover}} & \meter{}\\
	& \vdots & & \vdots & & \vdots & & & & \vdots\\
	 & \gate{\text{H}} & &\gate{\text{H}} & &  \gate{\text{H}}& &\ \ \ldots\ \ & & \meter{}
	\end{quantikz}}
\]
\caption{Quantum search algorithm.}
\label{fig:grovers-algorithm}
\end{figure}

As an example of a \punq{} program, we will consider the algorithm for quantum search~\cite{G96}.
We are given a function $f:\{0,1\}^n\to\{0,1\}$ such that, for exactly one input $w\in\{0,1\}^n$, we have that $f(w)=1$. In the classical case, finding $w$ among $N\triangleq 2^n$ possibilities has average complexity $O(N)$, whereas using Grover's algorithm for quantum search, the value of $w$ can be found with high probability with only $O(\sqrt{N})$ operations (see Figure~\ref{fig:grovers-algorithm}).

Let $\B$ be the type of Booleans. For $n \geq 1$, define the type of tuples of bits as $\B^{n+1}\triangleq \B\times \B^{n}$ with $\B^1 \triangleq \B$.  The type $\sharp \B$ corresponds to qubits and $\sharp (\B^n)$, with $n \geq 1$, is the type of a tuple of $n$ (possibly entangled) qubits. For example, a superposition $\ket{\pm} \triangleq \frac{1}{\sqrt{2}}\cdot\ket{0}\pm\frac{1}{\sqrt{2}}\cdot\ket{1}$ has type $\sharp\B$ and, given a string $x=x_1\dots x_n\in\{0,1\}^n$, the tensor product state $\ket{x} \triangleq \ket{x_1}\otimes \dots \otimes \ket{x_n}$ can be encoded by a term of type $\sharp (\B^n)$. 

We will now describe a \punq{} term encoding the algorithm for Grover search in the case $n=2$. A very basic component is the Hadamard transformation 
\[\mathsf{H}\triangleq \lambda x.\tif{x}{\ket{+}}{\ket{-}}:\sharp\B\multimap \sharp \B,\]
 described in more detail in Example~\ref{ex:hadamard}. Intuitively, the type $\sharp (\B^n)\multimap \sharp (\B^n)$ corresponds to linear maps (unitary gates) over $n \geq 1$ qubits. The $\mathsf{if}$ construction assigns the first case to state $\ket{0}\,$ and the second to $\ket{1}\,$. Therefore, for $\alpha, \beta \in \mathbb{C}$, we have the reduction
\begin{align*}
\mathsf{H}\ (\alpha\cdot\ket{0}+\beta\cdot\ket{1}\,)\rightsquigarrow \alpha \cdot \tif{\ket{0}\,}{\ket{+}\,}{\ket{1}\,}\ +\beta \cdot \tif{\ket{1}\,}{\ket{+\,}}{\ket{1}\,}\rightsquigarrow \alpha\cdot \ket{+}\ +\beta\cdot \ket{-}
\end{align*}
We can then easily design a term applying the Hadamard gate to two qubits
\begin{align*}
\mathsf{H}_2&\triangleq  \lambda z.\tlet{x}{y}{z}{\pair{\mathsf{H}\ x}{\mathsf{H}\ y}}: \sharp(\B^2)\multimap \sharp (\B^2).
\end{align*}

We can likewise define the phase shift operator $P$, where $P\ \ket{x}\triangleq - \ket{x}$ for all $x\not = 0^n$ and $P\ \ket{0^n}\triangleq \ket{0^n}\,$. For $n=2$, we can encode $P$ by the term $\mathsf{Phase} : \sharp (\B^2)\multimap \sharp (\B^2)$:
\begin{align*}
\mathsf{Phase}\triangleq \lambda z.\tlet{x}{y}{z}{&\tif{x}{\bigl(\tif{y}{\pair{\ket{0}\,}{\ket{0}\,}}{-1\cdot \pair{\ket{0}\,}{\ket{1}\,}}\bigr)}{-1\cdot \pair{\ket{1}\,}{y}}}
\end{align*}

We will consider an oracle for $f$ given by a unitary $O$ where $O\ket{x}\triangleq (-1)^{f(x)}\ket{x}\,$, for all $x\in\{0,1\}^n$, i.e., where the oracle acts as the identity for all basis states except for $w$, on which it performs a phase shift of -1. This represents a unitary gate of dimension $n$, and therefore there exists a \punq{} term, say $\mathsf{Oracle}$, with type $\sharp(\B^n)\multimap \sharp (\B^n)$, that encodes it.
We may now define the Grover iteration step in the algorithm:
\[\mathsf{Grover}\triangleq \lambda x. \mathsf{H}_2\ (\mathsf{Phase}\ (\mathsf{H}_2\ (\mathsf{Oracle}\ x))): \sharp(\B^2)\multimap \sharp(\B^2).\]

In order to iterate the Grover step, we use the Church numeral encoding $\underline{n}\triangleq \lambda f.\lambda x. f^n(x)$ with type $\mathbb{N}\triangleq \forall X.(X\multimap X)\Rightarrow \S (X\multimap X)$. The type of Church numerals makes use of two extra constructs: a non-linear arrow $\Rightarrow$, whose input cannot be quantum data (superpositions), so that $\punq$ programs preserve the laws of quantum mechanics (e.g. no-cloning); a modality $\S$ from Girard's light linear logic~\cite{G94}, accounting for possible duplication in a term. 

This term can be applied to our term $\mathsf{Grover}$ with the substitution $X=\sharp(\B^2)$.  Notice that it does not break no-cloning as it roughly corresponds to a term of type $(\sharp(\B^2)\multimap \sharp(\B^2))\Rightarrow \S (\sharp(\B^2)\multimap \sharp(\B^2))$, whose input can be duplicated. Indeed, the input here is a quantum gate of type $(\sharp(\B^2)\multimap \sharp(\B^2))$ which does not constitute quantum data.

As such, we obtain the final term:
\[\mathsf{Search}\triangleq \lambda m.\lambda x. (m\ \mathsf{Grover})(\mathsf{H}_2\ x):\mathbb{N}\multimap \S \sharp(\B^2)\multimap \S \sharp(\B^2).\]

The term $\mathsf{Search}\ \underline{m} :\S \sharp(\B^2)\multimap \S\sharp(\B^2)$ simulates  a precise run of $m$ iterations  and produces as output a possibly entangled two-qubit state -- which is indeed the output before we perform any measurements. 

The non-trivial properties ensured by our type discipline on the above term are the following:
\begin{itemize}
\item By Theorem~\ref{thm:unitarity}, for any given integer $m$, since the term $\mathsf{Search}\ \underline{m}$ has type $\S \sharp(\B^2)\multimap \S\sharp(\B^2)$, it represents a unitary transformation with algebraic coefficients (we will later put a restriction on complex numbers to avoid the consideration of non-computable numbers), which can therefore be physically implemented in a quantum circuit, such as Figure~\ref{fig:grovers-algorithm}.
\item By Theorem~\ref{thm:soundness} (\textit{Polynomial time normalization}) the \punq{} type discipline ensures that the typed term $\mathsf{Search}\ \underline{m}\ \pair{\ket{0}\,}{\ket{0}\,}:\S \sharp(\B^2)$ reduces to a normal form $\sum_{i,j\in\{0,1\}}  \alpha_{ij} \cdot \pair{\ket{i}\,}{\ket{j}\,}$ in a number of steps that is polynomial on the original size of the term (in this case, the number of Grover iterations is linear on the parameter $m$ given -- for a more elaborate iteration example, see Example~\ref{ex:random-walk}). Notice that this property remains valid for any (possibly entangled) pair encoding a two-qubit state given as input.
\end{itemize}

\section{A programming language with quantum control}
\subsection{Syntax}
$\punq$ (short for Polytime UNitary Quantum language) is a programming language
with syntax defined by the grammar in Figure~\ref{fig:syntax}. A \emph{term}
can take the form of a variable $x$, a bit $\ket{0}$ or $\ket{1}$, a
conditional statement, an abstraction, an application, a pair, or a pair
destructor. We denote the set of terms as $\T$, and the terms are denoted by $r$,
$s$, $t_1$, $t_2$, and so on.
A \emph{superposition} can be either a term $t$, the null vector $\vec 0$, the
product $\alpha \cdot \vec t$ of a superposition $\vec t$ with an algebraic number $\alpha\in\Ct$~\cite{ADH97}, or the sum $\vec t_1 + \vec t_2$ of two superpositions. We represent the set of superpositions as $\SUP$, and superpositions themselves are denoted by $\vec r$, $\vec s$, $\vec t_1$, $\vec t_2$, and so forth.
In essence, terms correspond to objects that can reduce classically, possibly to a superposition, while superpositions represent quantum computations. We define the sets $\BV$ and $\V$ as the sets of \emph{basis values} and \emph{values}, respectively. Basis values are a subset of terms in normal form,
and values are superpositions of basis values. We use $v$, $w$, $v_1$, $v_2$,
and so on for basis values, while $\vec v$, $\vec w$, $\vec v_1$, $\vec v_2$,
and so on, denote values.
\begin{figure}[]
  \hrulefill
  \begin{align*}
    (\text{Terms}) &&\T \ni t  &:= x \mid\ket 0\mid\ket 1 \mid \tif t{\vec t}{\vec t}\mid  {\lambda x.\vec t}\mid t\ {t} \mid{\pair{t}{t}}  \mid\tlet xyt{\vec t}   \\
   (\text{Superpositions}) && \SUP \ni \vec t  &:=   t\mid \vec 0 \mid \alpha \cdot \vec t\mid\vec t+\vec t \\
   (\text{Basis values}) && \BV \ni v  &:= \ket 0\mid\ket 1\mid\lambda x.\vec t\mid\pair{v}{ v} \\ 
    (\text{Values}) && \V \ni \vec v  &:= v\mid \vec 0 \mid \alpha \cdot \vec v\mid\vec v+\vec v
  \end{align*}
  \hrulefill
  \caption{Syntax of $\punq$ programs.}
  \label{fig:syntax}
\end{figure}

A variable is free in a superposition if it is not bound by an abstraction or a
pair destructor. We denote the set of free variables in the superposition $\vec
t$ as $FV(\vec t)$. A superposition is closed if it contains no free variables.
Given a set $\mathcal{S}$ of superpositions, we define $\mathcal{S}_c$ as the
set of closed superpositions within $\mathcal{S}$. A $\punq$ program is a
closed superposition in $\SUP_c$ that can be assigned a type according to the
type discipline presented in Section~\ref{s:type}.

The size of a superposition $\vec t$, denoted $|\vec t|$, is the maximal size
of its superposed terms. Formally,
%\[|x|\triangleq 1, \qquad\qquad |\,\ket{0}\,|\triangleq 1, \qquad\qquad |\,\ket{1}\,|\triangleq 1,\]
%\[ |t_1\ t_2|\triangleq|t_1|+|t_2|+1,\qquad \qquad |(t_1, t_2)|\triangleq |t_1|+|t_2|+1, \]
%\[|\,\tlet xyt{\vec t}\,| \triangleq 1+|t|+|\vec t|,\qquad|\tif t{\vec t_1}{\vec t_2}|\triangleq 1+ |t|+\max(|\vec t_1|,|\vec t_2|),\]
%\[|\alpha \cdot \vec t|\triangleq |\vec t|,\qquad\qquad |{\lambda x.\vec t}|\triangleq 1+|\vec t|,\qquad\qquad |\vec t_1 + \vec t_2| \triangleq \max(|\vec t_1|,|\vec t_2|).\]
%
%
\[
  \begin{array}{r@{\ }l@{\quad\qquad}r@{\ }l}
    |x|&\triangleq 1& 
    |t_1\ t_2|=|(t_1, t_2)|&\triangleq|t_1|+|t_2|+1\\
    |\,\ket{0}\,|&\triangleq 1&
    |\tlet xyt{\vec t}| &\triangleq 1+|t|+|\vec t|\\ 
    |\,\ket{1}\,|&\triangleq 1&
    |\vec 0| &\triangleq 0\\ 
    |\tif t{\vec t_1}{\vec t_2}|&\triangleq 1+ |t|+\max(|\vec t_1|,|\vec t_2|)& 
    |\alpha \cdot \vec t| &\triangleq |\vec t|\\ 
    |{\lambda x.\vec t}|&\triangleq 1+|\vec t|& 
    |\vec t_1 + \vec t_2| &\triangleq \max(|\vec t_1|,|\vec t_2|)
  \end{array}
\]

The set $\SUP$ of  superpositions has the structure of a vector space and,
consequently, we define an equivalence relation $\equiv$ on $\SUP$ as follows:
%\[
%\vec t_1+\vec t_2 \equiv\vec t_2+\vec t_1,\qquad\qquad (\vec t_1+\vec t_2)+\vec t_3 \equiv \vec t_1+(\vec t_2+\vec t_3),\qquad\qquad \vec 0 + \vec t \equiv \vec t,
%\]
%\[
%0\cdot \vec t \equiv  \vec 0,\qquad\qquad 1\cdot \vec t \equiv \vec t,\qquad \qquad \alpha\cdot (\beta\cdot \vec t) \equiv\alpha\beta\cdot \vec t 
%\]
%\[
% \alpha\cdot \vec t+\beta\cdot \vec t \equiv (\alpha+\beta)\cdot \vec t,\qquad\qquad \alpha\cdot (\vec t_1+\vec t_2) \equiv\alpha\cdot \vec t_1+\alpha\cdot \vec t_2.
%\]
%
\[
  \begin{array}{r@{\ \,}l@{\qquad\qquad}r@{\ \,}l}
    \vec t_1+\vec t_2 &\equiv\vec t_2+\vec t_1 &
    (\vec t_1+\vec t_2)+\vec t_3 &\equiv \vec t_1+(\vec t_2+\vec t_3)\\
    \vec 0 + \vec t &\equiv \vec t  &
    0\cdot \vec t &\equiv  \vec 0\\
    1\cdot \vec t &\equiv \vec t &
    \alpha\cdot (\beta\cdot \vec t) &\equiv\alpha\beta\cdot \vec t \\
    \alpha\cdot \vec t+\beta\cdot \vec t &\equiv (\alpha+\beta)\cdot \vec t &
    \alpha\cdot (\vec t_1+\vec t_2) &\equiv\alpha\cdot \vec t_1+\alpha\cdot \vec t_2
  \end{array}
\]

This also implies that the summation symbol $\sum$ can be used unambiguously. A superposition $\vec{t}$ is in \emph{canonical form} if it is either $\vec{0}$ or $\vec{t} = \sum_{i=1}^n \alpha_i\cdot t_i$, where $\forall i\neq j, t_i\neq t_j$, and $\forall i,\ \alpha_i \neq 0$. The canonical form of a superposition $\vec t$ is unique, modulo associativity and commutativity. We denote the set of canonical forms as $\CF$.
We also define the syntactic sugar given in Figure~\ref{fig:syntaxsugar}.
%For $\vec s = \sum_{i=1}^n \alpha_i \cdot s_i$ and $\vec t = \sum_{j=1}^m \beta_j \cdot t_j$, we define the following syntactic sugar:
%
\begin{figure}[t]
  \hrulefill
  \begin{align*}
    \tif{\sum_{i=1}^n \alpha_i \cdot s_i}{\vec r_1}{\vec r_2} &\triangleq\sum\limits_{i=1}^n\alpha_i\cdot\tif{s_i}{\vec r_1}{\vec r_2}\\
    s\ \left(\sum_{i=1}^n \alpha_i \cdot t_i\right)&\triangleq \sum\limits_{i=1}^n\alpha_i\cdot s\ t_j \\
    \pair{\sum_{i=1}^n \alpha_i \cdot s_i}{\sum_{j=1}^m \beta_j \cdot t_j} &\triangleq \sum\limits_{i=1}^n\sum\limits_{j=1}^m\alpha_i\beta_j\cdot\pair{s_i}{t_j}\\
    \tlet{x}{y}{\sum_{i=1}^n \alpha_i \cdot s_i}{\vec r} &\triangleq\sum\limits_{i=1}^n\alpha_i\cdot\big(\tlet{x}{y}{s_i}{\vec r}\big)
  \end{align*}
  \hrulefill
  \caption{Syntactic sugar on $\punq$ syntax.}
  \label{fig:syntaxsugar}
\end{figure}

\subsection{Operational Semantics}
The semantics of $\punq$ programs is defined by the rewrite relation
$\rightsquigarrow \ \subseteq \SUP_c \times \SUP_c$, given in
Figure~\ref{fig:sem}. We denote its reflexive and transitive closure as
$\rightsquigarrow^*$.
\begin{figure}[t]
  \hrulefill
  \[
    \begin{prooftree}
      \hypo{\phantom{t}}
      \infer1[(If$_0$)]{\tif{\ket 0}{\vec s}{\vec t} \rightsquigarrow\vec s}
    \end{prooftree}
    \qquad
    \begin{prooftree}
      \hypo{\phantom{t}}
      \infer1[(If$_1$)]{\tif{\ket 1}{\vec s}{\vec t} \rightsquigarrow\vec t }
    \end{prooftree}
     \qquad
    \begin{prooftree}
      \hypo{}
      \infer1[(Abs)]{ (\lambda x.\vec t)\ v \rightsquigarrow {\vec t[v/x]}  }
    \end{prooftree}
    \]
    \\[0.2cm]
    \[
    \begin{prooftree}
      \hypo{}
      \infer1[(Let)]{\tlet xy{\pair vw}{\vec t} \rightsquigarrow {\vec t[v/x,w/y]}}
    \end{prooftree}
    \qquad
    \begin{prooftree}
      \hypo{t\rightsquigarrow\vec s}
      \infer1[(If$_{+}$)]{\tif t{\vec r_1}{\vec r_2}\rightsquigarrow\tif{\vec s}{\vec r_1}{\vec r_2}}
    \end{prooftree}
  \]
  \\[0.2cm]
  \[
    \begin{prooftree}
      \hypo{t\rightsquigarrow\vec s}
      \infer1[(App)]{r\ t\rightsquigarrow r\ \vec s}
    \end{prooftree}
    \qquad
    \begin{prooftree}
      \hypo{t\rightsquigarrow \vec s}
      \infer1[(App$_\V$)]{t\ v\rightsquigarrow\vec s\ v}
    \end{prooftree}
    \qquad
    \begin{prooftree}
      \hypo{t\rightsquigarrow\vec s}
      \infer1[(Pair)]{\pair tr\rightsquigarrow\pair{\vec s}r}
    \end{prooftree}
    \qquad
    \begin{prooftree}
      \hypo{t\rightsquigarrow \vec s}
      \infer1[(Pair$_\V$)]{\pair vt\rightsquigarrow\pair v{\vec s}}
    \end{prooftree}
  \]
  \\[0.2cm]
  \[
    \begin{prooftree}
      \hypo{t\rightsquigarrow\vec s}
      \infer1[(Let$_+$)]{\tlet xyt{\vec r}\rightsquigarrow\tlet xy{\vec s}{\vec r}}
    \end{prooftree}
   \]
  \\[0.2cm]   
   \[
    \begin{prooftree}
      \hypo{\sum_{i \in I}\alpha_i\cdot t_i + \sum_{j\in J}\beta_j\cdot v_j \in \CF}
      \hypo{\forall {i\in I},\, t_i\rightsquigarrow s_i}
      \infer2[(Sup)]{\sum_{i \in I}\alpha_i\cdot t_i + \sum_{j\in J}\beta_j\cdot v_j\rightsquigarrow \sum_{i\in I}\alpha_i\cdot s_i+\sum_{j \in J}\beta_j\cdot v_j}
    \end{prooftree}
    \qquad
    \begin{prooftree}
      \hypo{\vec t \equiv \vec t_1}
      \hypo{\vec t_1 \rightsquigarrow \vec s_1}
      \hypo{\vec s_1 \equiv \vec s}
      \infer3[(Equ)]{ \vec t \rightsquigarrow \vec s}
    \end{prooftree}
  \]
  \hrulefill
  \caption{Semantics of $\punq$ programs.}
  \label{fig:sem}
\end{figure}

It is important to emphasize that the syntactic sugar defined in Figure~\ref{fig:syntaxsugar} is used in
the reduction rules (If$_+$), (App), (App$_\V$), (Pair), (Pair$_\V$), and
(Let$_+$) to simplify notations. For instance, the reduction of rule (If$_+$)
can be written as 
\[
  \begin{prooftree}
    \hypo{t\rightsquigarrow\sum_i \alpha_i \cdot {s_i}}
    \infer1[(If$_{+}$)]{\tif t{\vec r_1}{\vec r_2}\rightsquigarrow\sum_i \alpha_i \cdot \tif{s_i}{\vec r_1}{\vec r_2}}
  \end{prooftree}
\]
%$\tif{\vec s}{\vec r_1}{\vec r_2} = \sum_i \alpha_i \cdot
%(\tif {s_i} {\vec r_1}{\vec r_2})$, given that $\vec s =  \sum_i \alpha_i \cdot {s_i}$.
Therefore, reductions through $\rightsquigarrow$ do not generate terms
or superpositions that are not syntactically valid.  Another crucial point to
note is that the relation $\rightsquigarrow$ is constrained to canonical forms
in rule (Sup) to prevent the reduction of a superposition in the form $v + 0
\cdot t$, where $t\in \T,\ v \in \V$, as it is true that $v + 0 \cdot t \equiv
v$ and, as a result, $v + 0 \cdot t$ should not reduce.

The semantics of $\punq$ is call-by-value.  Since there is an established
strategy, the confluence of this calculus is trivial.  The normal forms are
closed values and they are unique modulo the equivalence relation $\equiv$.

\section{A type system for unitarity and polytime normalization}\label{s:type}
In this section, we introduce a type system ensuring that typable closed
superpositions:
\begin{itemize}
  \item encode unitary transformations on the type of circuits over qubits (i.e., linear functions from qubits to qubits) (Section~\ref{s:unitarity});  
  \item normalize in time polynomial in their size, exhibiting only polynomial growth on the size of the superposition (Section~\ref{s:ptstrongnormalization}).
\end{itemize}
%Towards
%that end, 
The typing discipline is created by mixing the unitary-ensuring
type system of Lambda-$\mathcal S_1$~\cite{DCM22} together with the polytime
strong normalization properties of the \dlal{}~\cite{BT09} type system.

\subsection{Types, Judgments, and Environments}
The set $\mathbb{T}$ of types in \punq{} is generated by the following grammars:
\begin{align*}
\text{(\punq{}\ types)} && \mathbb{T} \ni A,B,C,\ldots  &:= X \mid A \multimap A \mid A \Rightarrow A \mid A\times A\mid Q \mid \S A \mid\forall X.A  \\
\text{(Ground types)} && Q,R,S,\ldots &:= \B \mid \sharp Q\mid \S Q \mid Q\times Q
\end{align*}

We use $A, B, C,$ and so on for \punq{} types, and $Q,R,S,\dots$ for ground types. Types include type variables $X$, a
basic type $\B$ for bits, a linear arrow $\multimap$, an intuitionistic arrow
$\Rightarrow$, a type construct $\times$ for pairs corresponding to the tensor
product of linear logic (we do not use the tensor notation of linear logic to
avoid confusion as pairs only correspond to separable states), a modality
$\sharp$ for superpositions, a modality $\S$ as a marker for possible
duplication, and polymorphism. The set of closed types is denoted as
$\mathbb{T}_c$. Ground types $Q$ represent all types that can be inhabited by tuples of qubits. Intuitively, objects of type $\sharp Q$ are unitary
superpositions of elements of type $Q$ and hence cannot be cloned (i.e.,
duplicated). For example, $\sharp\B$ is the type of a unitary superposition of
bits, i.e., qubits. Qubits correspond to values whose canonical forms are of
the shape $\alpha \cdot \ket{0}\, + \beta \cdot \ket{1}\,$, with $\alpha, \beta \in
\Ct$ and $|\alpha|^2 + |\beta|^2 = 1$.  We write $A^n$, with $n \geq 1$, as a
shorthand for $A \times \ldots \times A$, $n$ times. The type $\sharp(\B^n)
\multimap \sharp(\B^n)$ corresponds to quantum circuits over $n$ qubits.
Finally, a type of the form $\sharp \B \Rightarrow A$, for any type $A$, will
be disallowed by the typing discipline as $\Rightarrow$ is not linear and hence
could involve duplication or deletion of the input.

A \emph{typing environment} $\Gamma$ is a mapping from variables
to closed types in $\mathbb{T}_c$ and is sometimes written as $x_1: A_1,
\ldots, x_n: A_n$. The notation $\Gamma, \Delta$ represents the disjoint union
of the typing environments $\Gamma$ and $\Delta$. 
We write as $FV(\Gamma)$ the set of free type variables appearing in $\Gamma$.
%For a given type variable $X$, $X \notin FV(\Gamma)$ indicates that $X$ does not appear free in $\Gamma$.

\emph{Typing judgments} are of the form $\Gamma; \Delta \vdash \vec{t} : A$,
where $\Gamma$ and $\Delta$ are disjoint typing environments, $\vec{t}$ is
a superposition, and $A$ is a type. Here, $\Gamma$ is referred to as
the \emph{exponential context}, and $\Delta$ is the \emph{linear context}.

\subsection{Orthogonality}\label{ss:orthogonality}
We define the inner product $\langle - | - \rangle : \V_c \times \V_c \to
\Ct$ over closed values as
\[ \Big\langle \sum_{i=1}^n \alpha_i\cdot v_i\ \big|\ \sum_{j=1}^m \beta_j\cdot w_j\Big\rangle\triangleq \sum
_{i=1}^n \sum_{j=1}^m \overline{\alpha_i} \beta_j \delta_{v_i,w_j},\qquad \qquad \langle \vec{v}| \vec{0}\rangle=\langle \vec{0}|\vec{v}\rangle\triangleq 0,\]
where $\delta_{x,y}$ is the Kronecker delta that is equal to $1$ if $x=y$ and
$0$ otherwise.  Notice that inner product is obviously preserved by the
equivalence relation $\equiv$ on the vector space of values.  Hence $\V_c$ is a Hilbert space as $\ell^2(\V_c)=\V_c$.

Type checking a superposition will require that we are able to test orthogonality between terms. This is straightforward to do for values in normal form, but we may also face the situation where a term correctly represents a superposition, and yet it is not fully reduced, such as the term $\frac{1}{\sqrt{2}}\cdot (\lambda x. x) \ket{0}+\frac{1}{\sqrt{2}}\cdot \ket{1}$. Therefore, we use the following definition of orthogonality between terms.
\begin{definition}[Orthogonality]
  The \emph{orthogonality relation}  $\perp\subseteq\SUP_c \times \SUP_c$ is defined as $\vec t\perp\vec s$ if and only if
  $\vec t\rightsquigarrow^*\vec v$, $\vec s\rightsquigarrow^*\vec w$ and $\langle\vec v|\vec w\rangle=0$.
\end{definition}

  For a given typing environment $\Gamma$, let $\sigma_{\Gamma}$ denote a
  type-preserving substitution of variables in $\Gamma$ by closed basis values. 
  The orthogonality relation can be extended to any two superpositions $\vec
  t,\ \vec s \in \SUP$ such that $\Gamma; \Delta \vdash \vec t : A$ and
  $\Gamma; \Delta \vdash \vec s : A$ by defining $(\vec{t} \perp_A^{\Gamma,\Delta}
  \vec{s})$ iff $\ \forall \sigma_{\Gamma \cup \Delta},\  \vec t \sigma_{\Gamma
  \cup \Delta} \perp_{A}  \vec s\sigma_{\Gamma \cup \Delta}$. 
\subsection{Type System}
%\paragraph{Bang function}
We introduce a \emph{bang function} on types, which the type system utilizes to
remove the $\sharp$ modalities when they correspond to a non-linear use.

\begin{definition}[Bang function]\label{def:bang}
  The function ${!}: \mathbb{T} \to \mathbb{T}$ is an endomorphism on types defined as
  \begin{align*}
    {!}(X) &\triangleq X &
    {!}(\B) &\triangleq \B \\
    {!}(A\multimap B)& \triangleq A \multimap B & 
    {!}(A\Rightarrow B) &\triangleq A \Rightarrow B \\
    {!}(A \times B) &\triangleq {!}(A) \times {!}(B) &
    {!}(\sharp Q)& \triangleq {!}(Q) \\
    {!}(\S A)& \triangleq \S {!}(A) &
    {!}(\forall X.A) &\triangleq \forall X.{!}(A)
  \end{align*}
\end{definition}
This function is reminiscent of the bang modality in linear logic~\cite{G87}
because it transforms a non-clonable type (non-duplicable type, e.g.,
superposition) into a clonable type (duplicable type). Therefore, it
corresponds to withdrawing $\sharp$ modalities. For instance, $! (\sharp \B) =
\B$ represents the type of bits and is clonable. Similarly, $!(\sharp \B
\multimap \sharp \B) = \sharp \B \multimap \sharp \B$ is the type of unitary
maps on qubits (see Theorem~\ref{thm:unitarity}), which is clonable by default.

%\paragraph{Subtyping relation}
The \emph{subtyping} relation $\leq \subseteq \mathbb{T} \times \mathbb{T}$ is
defined in Figure~\ref{fig:subtype}. The intuition behind this relation is as
follows: if a type $Q$ is considered as the base of a vector space for its
values, then $\sharp Q$ results in the intersection of the span of $A$ with the
unitary sphere, that is, complex linear combinations of objects of type $A$ with unit norm.  Therefore, $\leq$ corresponds to set
inclusion, and it holds that $Q \leq \sharp Q$ and $\sharp Q= \sharp \sharp
Q$. Additionally, it can be shown that $\sharp Q \times \sharp R\leq \sharp
({!}(Q) \times {!}(R))$, implying the desirable property that the vector space
of separable qubits $\sharp \B \times \sharp \B$ is included in the vector
space of $2$ qubits $\sharp (\B \times \B)$.

\begin{figure}[t]
  \hrulefill
  \[    
    \begin{prooftree}
      \hypo{A'\leq A}
      \hypo{B\leq B'}
      \infer2[]{\vphantom{\sharp}A\multimap B \leq A'\multimap B'}
    \end{prooftree}
    \qquad
    \begin{prooftree}
      \hypo{A'\leq A}
      \hypo{B\leq B'}
      \infer2[]{\vphantom{\sharp}A\Rightarrow B \leq A'\Rightarrow B'}
    \end{prooftree}
    \qquad
    \begin{prooftree}
      \hypo{A\leq A'}
      \hypo{B\leq B'}
      \infer2[]{\vphantom{\sharp}A\times B \leq A'\times B'}
    \end{prooftree}
    \qquad
    \begin{prooftree}
      \hypo{A\leq B}
      \infer1[]{\S A\leq \S B}
    \end{prooftree}
    \qquad 
    \begin{prooftree}
      \hypo{A\leq B}
      \infer1[]{\forall X.A\leq \forall X. B}
    \end{prooftree}
  \]
  
    \[
    \begin{prooftree}
      \hypo{\vphantom{Q}}
      \infer1[]{\vphantom{\sharp}A\leq A}
    \end{prooftree}
    \qquad
    \begin{prooftree}
      \hypo{\vphantom{Q}}
      \infer1[]{Q\leq \sharp Q}
    \end{prooftree}
    \qquad
    \begin{prooftree}
      \hypo{\vphantom{A\leq B}}
      \infer1[]{\sharp\sharp Q\leq \sharp Q}
    \end{prooftree}
    \qquad
    \begin{prooftree}
      \hypo{\vphantom{A\leq B}}
      \infer1[]{Q\leq \sharp {!}(Q)}
    \end{prooftree}
    \qquad 
    \begin{prooftree}
      \hypo{\vphantom{A\leq B}}
      \infer1[]{\sharp \S Q \leq \S\sharp Q}
    \end{prooftree}
    \qquad
    \begin{prooftree}
      \hypo{A\leq B}
      \hypo{B\leq C}
      \infer2[]{\vphantom{\sharp}A\leq C}
    \end{prooftree}
  \]
  
  \hrulefill
  \caption{Subtyping relation.}
  \label{fig:subtype}
\end{figure}

Since the $!$ function will only be used in the double arrow in rule
introduction $(\Rightarrow_i)$ on the left of that arrow, the double arrow
carries all the information about whether the $!$ function was applied.
Therefore, $!$ is not treated as a modality and appears implicitly in type
substitution.
Given a type $C$ and a variable $X$, let $[C/X]$ represent the \emph{type
substitution} $\sigma^+$. It is inductively defined on types (up to
$\alpha$-renaming) as follows:
\begin{align*}
  \sigma^k(Y)&\triangleq Y,\ \text{with }Y \neq X,
  & \sigma^k(A\Rightarrow B)&\triangleq \sigma^-(A) \Rightarrow \sigma^+(B),\\
  \sigma^+(X)&\triangleq C, 
  & \sigma^k(\sharp Q)&\triangleq \sharp \sigma^k(Q), \\
  \sigma^-(X)&\triangleq \ !(C), 
  & \sigma^k(\S A)&\triangleq \S \sigma^k(A),\\
  \sigma^k(\B)&\triangleq  \B, 
  & \sigma^k(A\times B)&\triangleq \sigma^k(A)\times \sigma^k(B),\\
  \sigma^k(A\multimap B) &\triangleq \sigma^k(A)\multimap \sigma^k(B),
  & \sigma^k(\forall Y.A)& \triangleq \forall Y.\sigma^k(A), 
\end{align*}
%\begin{align*}
%\sigma(X)&\triangleq B, &
%\sigma(\B)&\triangleq  \B, &
%\sigma(A\multimap C) &\triangleq \sigma(A)\multimap \sigma(C),\\
%\sigma(A\Rightarrow C)&\triangleq \overline{\sigma}(A) \Rightarrow \sigma(C), & \sigma(\sharp A)&\triangleq \sharp \sigma(A), &\sigma(\S A)&\triangleq \S \sigma(A),\\
%\sigma(A\times C)&\triangleq \sigma(A)\times \sigma(C),& 
%\sigma(Y)&\triangleq Y, &
%\sigma(\forall Y.A)& \triangleq \forall Y.\sigma(A), \\
%\overline{\sigma}(X)&\triangleq !(B), &
%\overline{\sigma}(\B)&\triangleq  \B,&
%\overline{\sigma}(A\multimap C) &\triangleq \sigma(A)\multimap \sigma(C),\\
%\overline{\sigma}(A\Rightarrow C)&\triangleq \overline{\sigma}(A) \Rightarrow \sigma(C), & \overline{\sigma}(\sharp A)&\triangleq \overline{\sigma}(A), &\overline{\sigma}(\S A)&\triangleq \S \overline{\sigma}(A),\\
%\overline{\sigma}(A\times C)&\triangleq \overline{\sigma}(A)\times \overline{\sigma}(C),& 
%\overline{\sigma}(Y)&\triangleq Y, &
%\overline{\sigma}(\forall Y.A)& \triangleq \forall Y.\overline{\sigma}(A), \\
%\end{align*}
where $k\in\{+,-\}$  and where $!$ is the bang function from Definition~\ref{def:bang}. This ensures that non-clonability is preserved on inputs to non-linear applications, e.g., $(X\Rightarrow X)[\sharp \B/X]\ =\ \sigma^-(X)\Rightarrow \sigma^+(X)\ =\ !(\sharp \B)\Rightarrow \sharp \B\ =\ \B \Rightarrow \sharp \B$.

The \emph{typing rules} are provided in Figure~\ref{fig:type}.
$\punq$ is the set of typable closed superpositions in $\SUP_c$.

Several key points are worth highlighting:
\begin{inparaenum}[(i)]
  \item The $!$ function is employed in rule ($\Rightarrow_i$) to ensure the
    no-cloning property.
  \item In rule ($\Rightarrow_e$), the notation $[z:C]$ indicates that the
    variable $z$ is optional: it can either appear in both the hypothesis and
    the conclusion of the rule or not at all. The variable $z$ is then passed
    to the exponential context in the conclusion of the rule. Consequently,
    qubits are treated linearly following the $\dlal$ type discipline, and they
    cannot be cloned by side effect. For instance, if $z$ is of type $\sharp
    \B$, then $\lambda z.t\ s$ is of type $\B \Rightarrow B$, as per rule
    ($\Rightarrow_i$).
  \item Rules  $(\mathsf{if}_\sharp)$ and $(\sharp_i)$ are the only two rules
    that make use of the orthogonality predicate $\perp_A \subseteq
    \mathbb{T}_c \times \mathbb{T}_c$.
  \item Rule  $(\sharp_i)$ disallows superposing arrow types.
  \item Rule $(\forall_e)$ utilizes the type substitution $[B/X]$.
\end{inparaenum}

\begin{figure}[t]
  \hrulefill
  \[
    \begin{prooftree}
      \hypo{\Gamma;\Delta \vdash \vec{t}:A}
      \infer1[($\mathsf{W}$)]{\Gamma,\Gamma';\Delta\vdash \vec{t}:A}
    \end{prooftree}
    \qquad\quad
    \begin{prooftree}
      \hypo{\Gamma,x:B,y:B;\Delta\vdash \vec{t}:A}
      \infer1[($\mathsf{C}$)]{\Gamma,x:B;\Delta \vdash \vec{t}[x/y]:A}
    \end{prooftree}
    \qquad \quad
    \begin{prooftree}
      \hypo{\Gamma;\Delta \vdash\vec t:A}
      \hypo{\vec t\equiv\vec s}
      \infer2[($\equiv$)]{\Gamma;\Delta \vdash\vec s:A}
    \end{prooftree}
  \]
  
  \[
    \begin{prooftree}
      \hypo{\Gamma;\Delta \vdash\vec t:A}
      \hypo{A\leq B}
      \infer2[($\leq$)]{\Gamma;\Delta \vdash \vec t:B}
    \end{prooftree}
    \qquad\quad
    \begin{prooftree}
      \hypo{\phantom{A\leq B}}
      \infer1[($\mathsf{Ax}$)]{;x: A\vdash x: A}
    \end{prooftree}
    \qquad\quad
    \begin{prooftree}
      \hypo{\phantom{A\leq B}}
      \infer1[($0$)]{;\vdash\ket 0:\B}
    \end{prooftree}
    \qquad\quad
    \begin{prooftree}
      \hypo{\phantom{A\leq B}}
      \infer1[($1$)]{;\vdash\ket 1:\B}
    \end{prooftree}
  \]

  \[
    \begin{prooftree}
      \hypo{\Gamma;\Delta \vdash t:\sharp\B}
      \hypo{\Gamma';\Delta';\vdash \vec s_1 :Q}
      \hypo{\Gamma';\Delta'\vdash \vec s_2 :Q}
      \hypo{\vec s_1 \perp^{\Gamma';\Delta'} \vec s_2}
      \infer4[($\mathsf{if}_\sharp$)]{\Gamma,\Gamma';\Delta,\Delta'\vdash\tif{t}{\vec s_1}{\vec s_2}:\sharp Q}
    \end{prooftree}
  \]

  \[\begin{prooftree}
      \hypo{;\Gamma,\Delta\vdash\vec t:A}
      \infer1[($\S_i$)]{\Gamma;\S\Delta\vdash\vec t:\S A}
    \end{prooftree}
    \qquad \quad
    \begin{prooftree}
      \hypo{\Gamma;\Delta \vdash t:\B}
      \hypo{\Gamma';\Delta' \vdash\vec{s_1}:A}
      \hypo{\Gamma';\Delta' \vdash\vec{s_2}:A}
      \infer3[($\mathsf{if}$)]{\Gamma,\Gamma';\Delta \vdash\tif t{\vec s_1}{\vec s_2}:A}
    \end{prooftree}
  \]

  \[
    \begin{prooftree}
      \hypo{\Gamma;\Delta \vdash s:\S B}
      \hypo{\Gamma';\Delta',x:\S B\vdash t:A}
      \infer2[($\S_e$)]{\Gamma,\Gamma';\Delta,\Delta'\vdash t[s/x]:A}
    \end{prooftree}
\qquad\quad
    \begin{prooftree}
      \hypo{\Gamma;\Delta\vdash\vec t:A}
      \hypo{X \notin FV(\Gamma,\Delta)}
      \infer2[($\forall_i$)]{\Gamma;\Delta \vdash\vec t:\forall X.A}
    \end{prooftree}
  \]

    \[
    \begin{prooftree}
      \hypo{ \forall i,\ \Gamma;\Delta\vdash \vec t_i : Q}
      \hypo{\forall j\neq k, \ \vec t_j\perp^{\Gamma;\Delta}\vec t_k }
      \hypo{\sum_{i=1}^n|\alpha_i|^2=1}
      \infer3[($\sharp_i$)]{\Gamma;\Delta\vdash\sum_{i=1}^n \alpha_i\cdot\vec t_i:\sharp Q}
    \end{prooftree}
    \qquad\quad
    \begin{prooftree}
      \hypo{\Gamma;\Delta \vdash\vec t:\forall X.A}
      \infer1[($\forall_e$)]{\Gamma;\Delta \vdash\vec t:A[B/X]}
    \end{prooftree}
  \]

  \[
    \begin{prooftree}
      \hypo{\Gamma;\Delta,x:A\vdash \vec{t}:B}
      \infer1[($\multimap_i$)]{\Gamma;\Delta \vdash \lambda x.\vec{t}:A\multimap B}
    \end{prooftree}
    \qquad\quad
    \begin{prooftree}
      \hypo{\Gamma;\Delta \vdash t: A\multimap B}
      \hypo{\Gamma' ;\Delta' \vdash s:A}
      \infer2[($\multimap_{e}$)]{\Gamma,\Gamma';\Delta,\Delta' \vdash t \ s:B}
    \end{prooftree} 
  \]
  
  \[
    \begin{prooftree}
      \hypo{\Gamma,x:A;\Delta \vdash \vec t:B}
      \infer1[($\Rightarrow_{i}$)]{\Gamma;\Delta \vdash{\lambda x.\vec t}: {!}(A)\Rightarrow B}  
    \end{prooftree} 
    \qquad\quad
    \begin{prooftree}
      \hypo{\Gamma;\Delta \vdash {t}:{A\Rightarrow B}}
      \hypo{;[z:C]\vdash  s:A}
      \infer2[($\Rightarrow_{e}$)]{\Gamma,[z:C]\ ;\Delta \vdash {t \ s}:B}
    \end{prooftree}
  \]
  
  \[
    \begin{prooftree}[separation=3mm]
      \hypo{\Gamma;\Delta \vdash t:A}
      \hypo{\Gamma';\Delta' \vdash s:B}
      \infer2[($\times_i$)]{\Gamma,\Gamma';\Delta,\Delta' \vdash\pair{t}{s}:A\times B}
    \end{prooftree}
    \qquad \quad
    \begin{prooftree}[separation=3mm]
      \hypo{\Gamma;\Delta\vdash t:A\times B}
      \hypo{\Gamma';\Delta',x:A,y:B \vdash\vec s:C}
      \infer2[($\times_{e}$)]{\Gamma,\Gamma';\Delta,\Delta'\vdash\tlet xyt{\vec s}:C}
    \end{prooftree}
  \]

  \[
    \begin{prooftree}
      \hypo{\Gamma ;\Delta \vdash  t:\sharp(Q\times R)}
      \hypo{\Gamma'; \Delta' ,x:\sharp Q,y:\sharp R\vdash\vec s:S}
      \infer2[($\times_{e\sharp}$)]{\Gamma,\Gamma';\Delta,\Delta' \vdash\tlet xy{ t}{\vec s}:\sharp S}
    \end{prooftree}
  \]

  \hrulefill
  \caption{Typing rules for $\punq$ programs.}
  \label{fig:type}
\end{figure}

 \punq{} enjoys the following standards properties.

\begin{restatable}[Subject reduction]{theorem}{thmsr}
%\begin{theorem}[Subject reduction]
  If $;\vdash\vec t:A$ and $\vec t\rightsquigarrow\vec r$, then $;\vdash\vec r:A$. \label{thm:sr}
%\end{theorem}
\end{restatable}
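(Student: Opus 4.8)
The plan is to prove subject reduction by induction on the derivation of $\vec t \rightsquigarrow \vec r$, i.e., by a case analysis on the last rule used in the reduction. For each reduction rule I would assume a typing derivation $;\vdash \vec t : A$, invert it to recover typings of the subterms, and then reassemble a derivation for $\vec r$. Since the calculus is call-by-value with a fixed strategy, the redexes we must handle are exactly the base rules (If$_0$), (If$_1$), (Abs), (Let), together with the congruence rules (If$_+$), (App), (App$_\V$), (Pair), (Pair$_\V$), (Let$_+$), (Sup), and the equivalence-closure rule (Equ). The congruence cases are routine: we invert the relevant typing rule, apply the induction hypothesis to the reducing subterm (noting that the syntactic sugar of Figure~\ref{fig:syntaxsugar} distributes the context over a superposition, so one also needs a small auxiliary fact that if $;\Delta'\vdash t:A$ reduces via the IH to a sum $\sum_i\alpha_i\cdot s_i$ with each $s_i$ typable, then the surrounding context applied to that sum is still typable at the same type — essentially because rules $(\sharp_i)$, $(\leq)$ and the congruence of the context commute appropriately), and rebuild the derivation. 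The (Equ) case follows immediately from rule $(\equiv)$ and the IH; the (Sup) case follows from $(\sharp_i)$ (or $(\leq)$ from $Q\leq\sharp Q$) applied componentwise, using that inner products, hence orthogonality, are preserved by $\equiv$ and are unchanged when non-reducing value-summands are carried along.

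The substitution base cases are where the real work lies, and they require a substitution lemma established before the main induction: \emph{if $\Gamma; \Delta, x:A \vdash \vec t : B$ and $\Gamma'; \Delta' \vdash v : A$ with the contexts disjoint, then $\Gamma, \Gamma'; \Delta, \Delta' \vdash \vec t[v/x] : B$}, and likewise a version substituting a variable from the exponential context $\Gamma$ by a closed basis value (needed because $v$ is a basis value and closed, so $\Gamma' = \Delta' = \emptyset$ in the relevant instances, but the general statement is cleaner for the induction). For (Abs), inverting $(\multimap_e)$ on $(\lambda x.\vec t)\,v$ gives $;\Delta,x:A\vdash\vec t:B$ and $;\Delta'\vdash v:A$ (after inverting $(\multimap_i)$ on the abstraction, modulo $(\leq)$, $(\mathsf W)$, $(\mathsf C)$, $(\equiv)$), and the substitution lemma closes it. For (Let), inverting $(\times_e)$ or $(\times_{e\sharp})$ on $\tlet xy{\pair vw}{\vec t}$ yields typings for $v$, $w$, and $\vec t$ with $x,y$ in the linear context, and we apply the substitution lemma twice. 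For (If$_0$)/(If$_1$) we invert $(\mathsf{if})$ or $(\mathsf{if}_\sharp)$: since $\ket 0:\B$ (and $\ket 0:\sharp\B$ by subtyping), both branches are typable at the conclusion type, so the surviving branch $\vec s$ (resp. $\vec t$) is typable — here for $(\mathsf{if}_\sharp)$ the output type $\sharp Q$ is reached and the branch has type $Q\leq\sharp Q$, so $(\leq)$ finishes the job; the orthogonality premise is simply discarded.

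The main obstacle, as usual for linear-logic-style systems with subtyping, polymorphism, and several structural rules, is making the \emph{inversion / generation lemma} precise: a typing judgment for, say, an application or an abstraction may be derived with intervening uses of $(\mathsf W)$, $(\mathsf C)$, $(\equiv)$, $(\leq)$, $(\S_i)$, $(\S_e)$, $(\forall_i)$, $(\forall_e)$ before or after the "principal" rule, so one cannot just read off the premises. I would therefore first prove a generation lemma describing, for each term shape, the possible shapes of its type and the derivable typings of its immediate subterms up to subtyping and the structural rules; in particular I need that subtyping on arrow and product types behaves contravariantly/covariantly as in Figure~\ref{fig:subtype} so that from $;\Delta\vdash\lambda x.\vec t:A\multimap B$ one extracts $A'\le A$, $B\le B'$ with $;\Delta,x:A'\vdash\vec t:B'$. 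A second, more delicate point specific to this system: the congruence rules reduce a \emph{term} $t$ inside a larger superposition, and the induction hypothesis gives $;\Delta'\vdash \vec s:A$ where $\vec s$ may be a genuine superposition even though $t$ was a term of a ground type; one must check that plugging $\vec s$ back into the evaluation context (application head, pair component, let-scrutinee, if-guard) still typechecks — this is exactly where the premises of $(\mathsf{if}_\sharp)$, $(\times_{e\sharp})$ and the sugar conventions interact, and it is the step I would write out most carefully. Finally, I would note that the substitution lemma itself is proved by induction on the derivation of the judgment being substituted into, with the polymorphism and $\S$ rules handled by the usual commutation arguments, and the $(\Rightarrow_i)$ case using that $v$ is a basis value of a ground-ish type so that $!$ applied to its type is harmless.
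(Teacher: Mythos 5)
Your plan matches the paper's proof: both proceed by induction on the reduction relation $\rightsquigarrow$, both rest on a two-part substitution lemma (one for the exponential context, one for the linear context) proved separately by structural induction, and both invoke generation/inversion lemmas to peel off the intervening structural, subtyping, and modal rules. If anything you are more explicit than the paper's Appendix~\ref{app:SR}, which writes out only the substitution base cases (Abs), (If$_0$)/(If$_1$), (Let) and leaves the congruence, (Sup), and (Equ) cases as implicitly routine.
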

\begin{proof}
  By induction on the relation $\rightsquigarrow$.  The full proof, including
  the needed substitution lemma, is given in Appendix~\ref{app:SR}.
\end{proof}

\begin{restatable}[Progress]{theorem}{thmprogress}
%\begin{theorem}[Subject reduction]
  If $;\vdash\vec t:A$, then either $\vec t \rightsquigarrow\vec r$ for some $\vec r$ or $\vec{t}\in \V$. \label{thm:progress}
%\end{theorem}
\end{restatable}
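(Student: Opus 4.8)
The plan is to prove a slightly stronger statement by induction on the typing derivation of $;\vdash\vec t:A$: if $;\vdash\vec t:A$ then either $\vec t$ reduces or $\vec t\in\V$. First I would reduce to a canonical-form analysis. Using rule $(\equiv)$ and the fact that $\equiv$ is preserved by typing, I may assume $\vec t$ is in canonical form, i.e.\ either $\vec t=\vec 0$ (which is a value, done) or $\vec t=\sum_{i=1}^n\alpha_i\cdot t_i$ with the $t_i$ pairwise distinct terms and $\alpha_i\neq 0$. If some $t_i$ is not a basis value, the strategy is to show $t_i$ reduces and then lift this to $\vec t$ via rule $(\mathsf{Sup})$ (being careful that $(\mathsf{Sup})$ only requires the non-value summands to reduce, so I split the sum into value and non-value parts as in the rule). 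If all $t_i$ are basis values, then $\vec t\in\V$ and we are done. So the crux is the single-term case: show that a closed, typable term $t$ that is not a basis value must reduce.

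For the single-term case I would do case analysis on the shape of $t$ and invoke a \emph{generation/inversion lemma} for the typing rules (to be stated alongside), which tells, for a closed term of each syntactic form, what its possible typings are. The cases: $t=x$ is impossible since $t$ is closed. $t=\ket 0,\ket 1$: these are basis values. $t=\lambda x.\vec s$: basis value. $t=\pair{t_1}{t_2}$: by inversion $t_1,t_2$ are closed and typable; by IH each is either a value or reduces; if either reduces, apply $(\mathsf{Pair})$ or $(\mathsf{Pair}_\V)$; if both are values, I need them to be \emph{basis} values to conclude $\pair{t_1}{t_2}\in\BV$ — here I appeal to the fact that a closed value of a type that can sit in the left/right of a pair, once in canonical form, is either already a basis value or is a genuine superposition $\sum\alpha_i v_i$ with $n\geq 2$ or a nontrivial scalar, and in the latter case the syntactic sugar for pairs unfolds $\pair{\sum_i\alpha_i v_i}{t_2}$ into a sum, putting us back in the multi-term case handled above. $t=\tif{t_1}{\vec s_1}{\vec s_2}$: by inversion $t_1$ is closed of type $\B$ or $\sharp\B$; by IH $t_1$ is a value or reduces; if it reduces use $(\mathsf{If}_+)$; if it is a value, a closed value of type $\B$ in canonical form is $\ket 0$ or $\ket 1$ (apply $(\mathsf{If}_0)/(\mathsf{If}_1)$), and a closed value of type $\sharp\B$ is $\sum\alpha_i\ket{b_i}$, which unfolds via the $\mathsf{if}$-sugar to a sum of conditionals on $\ket 0,\ket 1$, again reducing. $t=t_1\ t_2$: by inversion $t_1$ has an arrow type and $t_2$ the corresponding domain type, both closed; by IH, if $t_2$ reduces use $(\mathsf{App}_\V)$ when $t_1$ is already a value else $(\mathsf{App})$ — more precisely, reduce $t_1$ first if possible via $(\mathsf{App})$, otherwise $t_1$ is a value, hence (by inversion on arrow type) of the form $\lambda x.\vec r$, then reduce $t_2$ via $(\mathsf{App}_\V)$ if it is not a value, otherwise $t_2$ is a value; if $t_2$ is a basis value apply $(\mathsf{Abs})$, and if it is a nontrivial superposition the application sugar unfolds it into a sum. $t=\tlet xy{t_1}{\vec s}$: analogous, using $(\mathsf{Let}_+)$ to reduce $t_1$, and when $t_1$ is a value of a pair type, inversion forces it to be (after canonicalization) either a basis pair $\pair vw$ — apply $(\mathsf{Let})$ — or a superposition that the let-sugar unfolds into a sum.

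The main obstacle I expect is not the combinatorics of the case analysis but the bookkeeping around the syntactic sugar and canonical forms: one must verify that whenever a subterm in evaluation position is a closed value that is \emph{not} a basis value (e.g.\ a proper superposition or a scaled term), the surrounding context's sugar genuinely desugars it into a $\CF$ sum with at least one reducible or value summand in a way compatible with rule $(\mathsf{Sup})$'s side condition (which forbids reducing $v+0\cdot t$). This forces the strengthened induction hypothesis to be phrased in terms of canonical forms throughout, and requires a small lemma characterizing closed values of ground types ($\B$, $\sharp\B$, $Q\times R$) up to $\equiv$. With that lemma in hand, each case closes mechanically. The inversion lemma itself is routine given that the non-syntax-directed rules $(\mathsf{W}),(\mathsf{C}),(\equiv),(\leq),(\S_e),(\forall_i),(\forall_e)$ either do not change the term shape or are absorbed by working up to $\equiv$ and subtyping.
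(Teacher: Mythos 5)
Your approach is essentially the same as the paper's: structural case analysis on the shape of $\vec t$, inverting the typing judgment, applying the IH to subterms, and concluding a reduction from the corresponding semantic rule. The paper's proof does exactly this but more tersely, and it handles the sum and scalar cases by a sentence each rather than via explicit canonicalization.

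There is one piece of genuine care in your proposal that the paper passes over: the side condition of the $(\mathsf{Sup})$ rule forces the redex to be in canonical form, so for $\vec t=\vec s+\vec r$ one really does need to pass through $(\equiv)$ and argue on the canonical form $\sum_i\alpha_i\cdot t_i$ (where the $t_i$ are structural subterms and hence covered by the IH). Making the ``canonicalize first'' step explicit is a real improvement, not over-engineering. That said, strictly speaking both your proof and the paper's only conclude that the \emph{canonical form} of $\vec t$ is a value, not $\vec t$ itself; the theorem is implicitly read up to $\equiv$ (consistent with the paper's remark that normal forms are unique modulo $\equiv$), and you should flag that in the write-up rather than quietly conflate the two.

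However, you have also built in a layer of complexity that is unnecessary and suggests a misreading of the grammar. In the single-term case, when $t$ is one of $\pair{t_1}{t_2}$, $\tif{t_1}{\vec s_1}{\vec s_2}$, $t_1\ t_2$, or $\tlet xy{t_1}{\vec s}$, the positions you worry about ($t_1$, $t_2$, the condition, the scrutinee of the let) are \emph{terms}, not superpositions, by the grammar in Figure~\ref{fig:syntax}. A term that is a value is necessarily a \emph{basis} value, since the non-basis constructors $\vec 0$, $\alpha\cdot\vec v$, $\vec v+\vec v$ only exist at the superposition level. Hence, once the IH tells you a sub\emph{term} is a value, you get a basis value for free, and the base-rule $(\mathsf{Abs})$, $(\mathsf{Let})$, $(\mathsf{If}_0)/(\mathsf{If}_1)$ applies directly. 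The whole sub-argument about ``a closed value of type $\sharp\B$ is $\sum\alpha_i\ket{b_i}$, which unfolds via the $\mathsf{if}$-sugar'' and ``if $t_2$ is a nontrivial superposition the application sugar unfolds it'' never arises. The syntactic sugar is only needed to make the contextual reduction rules well-formed when a subterm reduces \emph{to} a superposition; it does not create a new case in the progress argument. Dropping that machinery leaves you with essentially the paper's proof plus the (useful) canonical-form bookkeeping.
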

\begin{proof}
By induction on the structure of $\vec{t}$. Full proof given in Appendix~\ref{app:SR}.
\end{proof}

\subsection{Intuitions}
Most rules in Figure~\ref{fig:type} are reminiscent of the typing discipline
of Dual Light Affine Logic~\cite{BT09}, extended with multiplicative pairs
(cf.~Figure~\ref{fig:dlal}). The weakening rule $(\mathsf{W})$ and contraction
rule $(\mathsf{C})$ are limited to exponential contexts since discarding or
duplicating a qubit would break unitarity. In our setting,
the typing rules are linear in the linear context.

Rule ($\equiv$) guarantees a type preservation property on the vector space of
superpositions. This rule is crucial for subject reduction. Indeed, terms
of a typable superposition must have a norm of $1$ (by rule $(\sharp_i)$),
preventing a superposition of the form $\frac{1}{2} \cdot t + \frac{1}{2} \cdot
t$ from typing, even though it may be obtained as a reduct of the operational
semantics. An alternative approach could have been restricting the relation
$\rightsquigarrow$ to pairs of canonical forms in $\nicefrac{\SUP_c}{\equiv} \times
\nicefrac{\SUP_c}{\equiv}$ and evaluating arbitrary superpositions only using their
canonical representative.

Rule $(\leq)$ is the subtyping rule, allowing the system to handle entangled
datatypes.

The typing rules ($\mathsf{if}$) and ($\mathsf{if}_\sharp$) are additive rules
for conditionals on type $A$ controlled by classical data $\B$ and quantum data
$\sharp \B$, respectively. In this latter case, the conditional evaluates to a
superposition of conditionals, as per rule ($\mathrm{If}_+$) in
Figure~\ref{fig:sem}. Hence, the result has type $\sharp A$.

The typing rules for linear and intuitionistic arrows resemble those
in~\cite{BP96}, with the additional requirement that the introduction and
elimination of the intuitionistic arrow, rules ($\Rightarrow_i$) and
($\Rightarrow_e$), can only be performed on ``banged'' data. The bang function
ensures that the resulting type is copyable (or clonable) by turning a qubit
into a bit, $!(\sharp \B)=\B$, propagating over pairs, and not changing
applications. For instance, $!(\sharp \B \multimap \sharp\B)=\sharp \B
\multimap \sharp\B$, as it is safe to copy a quantum gate. Therefore, the
intuitionistic arrow explicitly forbids values as inputs, and types of the form
$\sharp A \Rightarrow B$ are uninhabited.

The pair constructor and separable pair destructor are typed using the
multiplicative rules ($\times_i$) and ($\times_e$), following the encoding
of~\cite{BT09}. Rule ($\times_{e\sharp}$) is the rule for the quantum pair
destructor, allowing the handling of a superposition of pairs, hence a possibly
entangled state. In this setting, the pair can be viewed as the tensor product
used in quantum computing. Variables $x$ and $y$ are typed by $\sharp Q$ and
$\sharp R$ to preserve unitarity. Note that this typing rule can be extended to
superpositions as follows:
\[
  \begin{prooftree}
    \hypo{\Gamma_1 ;\Delta_1 \vdash \sum_i \alpha_i \cdot t_i :\sharp(Q\times R)}
    \hypo{\Gamma_2; \Delta_2 ,x:\sharp Q,y:\sharp R\vdash\vec s:S}
    \infer2[($\vec \times_{e\sharp}$)]{\Gamma_1,\Gamma_2;\Delta_1,\Delta_2 \vdash\sum_i \alpha_i \cdot \tlet xy{t_i}{\vec s}:\sharp S}
  \end{prooftree}
\]
Rule ($\vec \times_{e\sharp}$) can be simulated using rules
($\times_{e\sharp}$) and ($\sharp_i$) but has the advantage of providing a more
flexible typing discipline and simplifying the orthogonality requirements of
rule ($\sharp_i$). Indeed, orthogonality is tested on subterms $t_i$ rather
than on terms $\tlet xy{t_i}{\vec s}$.

$\dlal$ characterizes the complexity class FPTIME~\cite{BT09} by imposing
constraints on the use of the contraction rule in proofs and introducing a new
modality $\S$ in rules ($\S_i)$ and ($\S_e)$ to address some of these
limitations. Broadly speaking, the modality $\S$ serves as a marker for objects
resulting from iterations that cannot be further iterated. Using this construct
limits the number of iterations possible in proofs and, consequently, the
complexity of the system.  Specifically, $\dlal$ enforces a design principle
known as \textit{stratification} by restricting the introduction and
elimination of the modality $\S$.  This stratification is not sufficient on its
own to characterize polynomial time (it provides a characterization of
elementary time~\cite{G94}) and this is the reason why rule $(\Rightarrow_e)$
is restricted to at most one open variable.

Rule ($\sharp_i$) is the introduction rule for superpositions. A superposition
of objects can be created under the conditions that they share the same type,
are pairwise orthogonal according to the predicate $\perp_A$, that the norm
of the generated superposition $\sum_{i=1}^n|\alpha_i|^2$ equals $1$, and that the type $A$ is not an arrow type. Implicit in this rule is the fact that the construction of superpositions is limited to complex amplitudes $\alpha_i$ that are polytime approximable, a set which we denote by $\Ct$ throughout the paper. When considering quantum computations restricted to some polytime complexity class, it is standard to include such restrictions since, for instance, intractable problems could be encoded into transition amplitudes~\cite{BV97,ADH97}.

Our notion of orthogonality is that of~\cite{DCM22}, where the
superposition of functions is disallowed. It is treated more realistically
than in~\cite{DCGMV19}, where the superposition of functions is allowed, but it is generally not a
function.

\section{Examples}\label{ss:ex}

We may now consider some examples of typing judgements in \punq{} which correspond to different tools in quantum computation. Informally, a term $t$ is said to represent a unitary operator if, when applied to another term encoding some superposition, the application reduces to what would be the \punq{} representation of the output of the operator. A precise definition of \emph{representing a quantum gate or a unitary operator} is given in Section~\ref{ss:isometries}.

\subsection{Superpositions}

To demonstrate the typing of a superposition we will consider the example of the $X$-basis states $\ket{\pm}\triangleq\frac{1}{\sqrt{2}}\cdot\ket{0}\pm \frac{1}{\sqrt{2}}\cdot \ket{1}\,$, also called the \emph{plus} and \emph{minus} states, which correspond to the unit-norm eigenstates of the Pauli-$X$ operator.

The conditions in the typing rule $(\sharp_i)$ correspond to checking that the basis terms have the correct type (i.e. since we are interested in typing a qubit state, these terms should be either booleans or qubits), that they are mutually orthogonal, and that the resulting state has unit norm.
\begin{example}[X-basis states] The one-qubit states
  $\ket{\pm}\,$ have type $\sharp\B$.
  \[
    \pi_\pm \triangleq \qquad
    \scalebox{1}{
      \begin{prooftree}
	\infer0[$(0)$]{;\vdash \ket{0}:\B}
	\infer0[$(1)$]{;\vdash \ket{1}:\B}
	\hypo{\ket{0}\perp\ket{1}}
	\hypo{\big|\frac{1}{\sqrt{2}}\big|^2 + \big|\pm\frac{1}{\sqrt{2}}\big|^2= 1}
	\infer4[($\sharp_i$)]{;\vdash \frac{1}{\sqrt{2}}\cdot\ket{0}\pm\frac{1}{\sqrt{2}}\cdot\ket{1}:\sharp\B}
    \end{prooftree}}
  \]\label{example:x-basis}
\end{example}

Notice that, if we use qubit terms as the basis to form the superposition (i.e. values of type $\sharp \B$) then their superposition will have type $\sharp \sharp \B$ by rule $(\sharp_i)$ but this corresponds to the typical qubit type $\sharp \B$ and can be enforced via subtyping since $\sharp\sharp Q\leq \sharp Q$.

\subsection{Single-Qubit Gates}

We will now turn to the construction of single-qubit gates which will form the basic building blocks for expressivity in the \punq{} language.

We will consider the specific examples of the Hadamard and $Z$ gates but we will see that their construction technique can be applied for any single-qubit gate. Remember that a necessary and sufficient condition of any unitary matrix $U:\mathbb{C}^{2^n}\rightarrow\mathbb{C}^{2^n}$ is that its columns form an orthogonal basis of $\mathbb{C}^{2^n}$, meaning that they are pairwise orthogonal and have unit norm. These are precisely the requirements of the quantum control typing rule ($\mathsf{if}_\sharp$).

\begin{example}[Hadamard]\label{ex:hadamard}
  Let $\pi_\pm$ represent the derivation trees for $\ket{+}$ and $\ket{-}$ in Example~\ref{example:x-basis}. The term $\mathsf{H} \triangleq \lambda x.
  \tif{x}{\ket{+}
  }{\ket{-}
  }$ representing the Hadamard gate has type $\sharp \B \multimap \sharp \B$:
  \[  \scalebox{1}{
      \begin{prooftree}
	\infer0[$(\mathsf{Ax})$]{;x:\sharp \B\vdash x:\sharp \B}
	\hypo{\pi_+}
	\ellipsis{}{;\vdash \ket{+}: \sharp  \B}
	\hypo{\pi_-}
	\ellipsis{}{;\vdash \ket{-}: \sharp  \B}
	\hypo{\ket{+}\perp\ket{-}}
	\infer4[$(\mathsf{if}_\sharp)$]{;x:\sharp \B\vdash \tif{x}{\ket{+}
	  }{\ket{-}
	}: \sharp \sharp \B}
	\infer1[$(\leq)$]{;x:\sharp \B\vdash \tif{x}{\ket{+}
	  }{\ket{-}
	}: \sharp \B}
	\infer1[$(\multimap_{i})$]{;\vdash \mathsf{H} : \sharp \B \multimap \sharp \B}
    \end{prooftree}}
  \]
\end{example}

\begin{example}[$Z$ gate]  \label{ex:z}
  
  The term $\mathsf{Z}\triangleq \lambda x. \tif{x}{\ket{0}}{-1\cdot \ket{1}}$ representing the Pauli $Z$ gate can also be typed:
  \[  \scalebox{1}{
      \begin{prooftree}
	\infer0[$(\mathsf{Ax})$]{;x:\sharp \B\vdash x:\sharp \B}
	\infer0[(0)]{;\vdash \ket{0}:\B}
	\infer1[$(\leq)$]{;\vdash \ket{0}: \sharp  \B}
	\infer0[(1)]{;\vdash \ket{1}:\B}
	\infer1[$(\sharp_i)$]{;\vdash -1\cdot \ket{1}: \sharp  \B}
	\hypo{ \ket{0}\perp\ket{1}}
	\infer4[$(\mathsf{if}_\sharp)$]{;x:\sharp \B\vdash \tif{x}{\ket{0}
	  }{-1\cdot \ket{1}
	}: \sharp \sharp \B}
	\infer1[$(\leq)$]{;x:\sharp \B\vdash \tif{x}{\ket{0}
	  }{-1\cdot\ket{1}
	}: \sharp \B}
	\infer1[$(\multimap_{i})$]{;\vdash \mathsf{Z} : \sharp \B \multimap \sharp \B}
    \end{prooftree}}
  \]
\end{example}

Using nested if statements, it is possible to construct larger gates. In fact, it is not hard to show that this suffices to construct \emph{all} unitary gates (see proof of Theorem~\ref{thm:isocomplete}). However, it is much more common to reason about complex operations not in their matrix representation but rather as compositions of small gates. Therefore, \punq{} allows for the typing of larger gates defined from smaller ones.

\subsection{Quantum Controlled Gates}

The quantum controlled-$NOT$ gate is a unitary operation on two qubits which performs a bitswitch on the second qubit depending on the state of the first, i.e. it is the linear operator that for each basis state $\ket{xy}$ returns $\ket{x(y\oplus x)}$, where $x,y\in\{0,1\}$. The following example shows that, once we have successfully typed a term representing a unitary gate, creating its quantum controlled version in \punq{} is straightforward.

\begin{example} [Controlled-NOT gate]\label{ex:controlled-not}
  Let $\mathsf{NOT}\triangleq \lambda x.\tif{x}{\ket{1}}{\ket{0}}$ which can be derived with type $\sharp \B\multimap \sharp\B$. We can derive the following term with type $\sharp (\B \times \B) \multimap \sharp( \B \times \B)$:
  \[
    \mathsf{CNOT} \triangleq \lambda z.\tlet{x}{y}{z}{\tif{x}{\pair{\ket{0}\,}{y}}{\pair{\ket{1}\,}{\mathsf{NOT}\ y}}}
  \]  
    \[  \scalebox{.95}{
      \begin{prooftree}
      \infer0[$(\mathsf{Ax})$]{;z:\sharp(\B^2)\vdash z:\sharp(\B^2)}
      \infer0[$(\mathsf{Ax})$]{;x:\sharp\B\vdash x:\sharp \B} 
      \hypo{}
      \ellipsis{}{;y:\sharp \B\vdash \mathsf{NOT}\ y:\sharp \B}
      \ellipsis{}{  ;y:\sharp \B\vdash\pair{\ket{0}\,}{y}\perp\pair{\ket{1}\,}{\mathsf{NOT}\ y}:\sharp(\B^2)}
      \infer2[$(\mathsf{if}_\sharp)$]{;x,y:\sharp\B\vdash \tif{x}{\pair{\ket{0}\,}{y}}{\pair{\ket{1}\,}{\mathsf{NOT}\ y}}:\sharp\sharp(\B^2)}
      \infer1[$(\leq)$]{;x,y:\sharp\B\vdash \tif{x}{\pair{\ket{0}\,}{y}}{\pair{\ket{1}\,}{\mathsf{NOT}\ y}}:\sharp(\B^2)}
	\infer2[$(\times_{e\sharp})$]{;z:\sharp (\B^2)\vdash \tlet{x}{y}{z}{\tif{x}{\pair{\ket{0}\,}{y}}{\pair{\ket{1}\,}{\mathsf{NOT}\ y}}}:\sharp (\B^2)}
	\infer1[$(\multimap_i)$]{;\vdash\mathsf{CNOT}:\sharp(\B^2)\multimap \sharp(\B^2)}
    \end{prooftree}}
  \]
\end{example}

Notice that the derivation does not require any particular properties of the $NOT$ gate, only that the term $\mathsf{NOT}$ representing it has type $\sharp \B\multimap \sharp \B$. The choice of using the values $\ket{0}$ and $\ket{1}$ to encode the value of $x$ in each branch of $\mathsf{CNOT}$ comes from the fact that the \punq{} typing system prohibits direct reuse of the variable $x$ since, in general, this can break unitarity. Hence, reusing a qubit variable inside any branch controlled on it is not permitted.

\subsection{Composition}

One may also use \punq{} to define larger operations operations from smaller ones by composing them. To demonstrate this, we will consider the example of quantum teleportation~\cite{BBGCJPW93} with delayed measurements, and show that by defining smaller terms it is easy to describe a term that represents the total operation. 

\begin{example}[Quantum teleportation]\label{ex:teleportation}

  \begin{figure}[t]
    \[
      \begin{quantikz}
	\lstick{$q$} & & &\ctrl{1}\gategroup[2,steps=2,style={draw=gray!40,fill=gray!20, inner
	xsep=1pt},background]{$\mathsf{Alice}$} & \gate{\text{H}} & \gategroup[3,steps=2,style={draw=gray!40,fill=gray!20, inner
	xsep=1pt},background]{$\mathsf{Bob}$} & \ctrl{2} & \\
	\lstick{$x$} &\gate{\text{H}} \gategroup[2,steps=2,style={draw=gray!40,fill=gray!20, inner
	  xsep=1pt},background,label style={label
	position=below,anchor=north,yshift=-0.2cm}]{$\mathsf{Bell}$}& \ctrl{1} &\targ{} & &\ctrl{1} & &\\
	\lstick{$y$} & & \targ{} & &  &\targ{} &\gate{\text{Z}} &
      \end{quantikz}
    \]
    \caption{Circuit for quantum teleportation with delayed measurements.}
    \label{fig:teleportation}
  \end{figure}

  Using terms $\mathsf{H}$ and $\mathsf{CNOT}$ as defined in Examples~\ref{ex:hadamard}~and~\ref{ex:controlled-not}, we may define a term representing the linear map $\ket{ab}\mapsto
    \frac{1}{\sqrt{2}}\left( \ket{0b}+(-1)^a\cdot\ket{1(1-b)}\right)$:
  \[
    \mathsf{Bell}\triangleq \lambda z.\tlet{x}{y}{z}{\mathsf{CNOT}\pair{\mathsf{H}\ x}{y}}:\sharp(\B^2)\multimap \sharp (\B^2)
  \]
In the specific case where $a=b=0$, it produces the Bell state
  $\frac{1}{\sqrt{2}}\big(\ket{00}\,+\ket{11}\,\big)$, which is used as a resource in
  teleportation. We can also encode Alice's and Bob's actions:
  \begin{align*}
    \mathsf{Alice} &\triangleq \lambda a.\tlet{q}{x}{\mathsf{CNOT}\ a}{\pair{\mathsf{H}\ q}{x}} \\
    \mathsf{Bob}&\triangleq \lambda b. \mathsf{let}\ \pair{w}{y}=b \ \mathsf{in}\\
    &\qquad\ \ \mathsf{let}\, \pair{q}{x}= w\ \mathsf{in}\ (\tif{q}{\pair{\ket{0}\,}{\mathsf{CNOT}\pair{x}{y}}}{\pair{\ket{1}\,}{\mathsf{CNOT}\pair{x}{\mathsf{Z}\ y}}})    
  \end{align*}
  with types $\sharp (\B^2)\multimap \sharp(\B^2)$ and $\sharp (\B^3)\multimap \sharp(\B^3)$, respectively. The final term can be obtained, representing the full operation as depicted in Figure~\ref{fig:teleportation}:
  \[
    \mathsf{telep}\triangleq \lambda z. \tlet{q}{w}{z}{(\tlet{x}{y}{\mathsf{Bell}\ w}{\mathsf{Bob}\ \pair{\mathsf{Alice}\ \pair{q}{x}}{y}})}:\sharp (\B^3)\multimap \sharp (\B^3).
  \]  
\end{example}

\subsection{Iteration}

So far we have only considered examples of constant-time operations. We will now turn to iteration in \punq{} which is inherited from the \dlal{} typing system.

Gate iteration stands at the intersection of the \dlal{} and the Lambda-$\mathcal{S}_1$ typing systems and is an example of the versatility that \punq{} can provide in safely mixing the two systems. As a toy example, consider the quantum random walk~\cite{K03,CCDFGS03} as an illustrating example for polytime iteration. This technique is known to provide a quantum speedup over its classical analog in different applications~\cite{A03} and an illustrative circuit is provided in Figure~\ref{fig:quantumwalk}.

In \punq{}, as in \dlal{}, we can implement Church numerals with the following type and syntax:
\begin{align*}
\N & \triangleq \forall X.(X\multimap X)\Rightarrow \S(X\multimap X)
&
\underline{n} & \triangleq \lambda f.\lambda x. f(f( \dots (f\ x)\dots)\quad (n\ \text{times}).
\end{align*}

For instance, we have that $\underline{2}\triangleq \lambda f.\lambda x. f(f\ x)$, which can be derived with type $\mathbb{N}$. \dlal{} is complete for polynomial time, and indeed, for any polynomial $P(n)=n ^{2^d}$, there is a DLAL term $\mathsf{poly}$  with type $\mathbb{N}\multimap \S^{2d}\, \N$ that simulates it~\cite[Proposition 11]{BT09}, which can also be derived in \punq{}. We will now use the Church encoding from \dlal{} to perform gate iteration in the example of a quantum random walk.

\begin{example}[Quantum random walk] \label{ex:random-walk}
 We define the R and L operators, for some set of $K$ nodes, correspond to ``taking a step'' in one direction or another in a given structure. For instance, if we consider a closed loop, we consider the unitary operators corresponding to a ``right'' step and a ``left'' step (see Figure~\ref{fig:quantumwalka}):
 \[\text{R}\ \ket{i}\triangleq \ket{i+1\Mod{K}}\quad\text{and}\quad \text{L}\ \ket{i}\triangleq \ket{i-1\Mod{K}}\,,\quad i\in\{0,\dots,K-1\},\]
where the basis state $\ket{i}\ $ corresponds to the particle in node $i$. For $k=\lceil \log(K-1)\rceil$, the operators R and L correspond to unitary operators in the space $\Ct^{2^k}\to \Ct^{2^k}$, and therefore by Theorem~\ref{thm:isocomplete} we show there exist corresponding \punq{} terms, say $\mathsf{R}$ and $\mathsf{L}$, of type $\sharp(\B^k)\multimap \sharp(\B^k)$ that simulate them. A single step of the quantum walk can then be done with the following term:
\[\mathsf{step}\triangleq \lambda z.\tlet{x}{y}{z}{(\tif{\mathsf{H}\ x}{\pair{\ket{0}\,}{\mathsf{L}\,y}}{\pair{\ket{1}\,}{\mathsf{R}\,y}})}:\sharp(\B^{k+1})\multimap \sharp(\B^{k+1})\]
where $x$ corresponds to the \emph{coin} qubit deciding the direction of the step, on which we continuously apply the Hadamard gate. Since these terms correspond to an endomorphism, we may use the Church encoding of numerals to perform an iteration of $\mathsf{step}$. For instance, this allows us to define an abstraction that takes in an integer input and iterates $\mathsf{step}$ precisely that number of times. Let $\mathsf{walk}\triangleq \lambda m.\lambda x. (m\ \mathsf{step})\ x$, which has type $\mathbb{N}\multimap \S \sharp(\B^{k+1})\multimap \S \sharp (\B^{k+1})$ with the following derivation, where $A=\sharp (\B^{k+1})$:
\[
\scalebox{.88}{
      \begin{prooftree}
      \hypo{}
      \ellipsis{}{;x: A, F:A\multimap A\vdash F\ x: A}
      \infer1[$(\S_i)$]{;x:\S A, F:\S (A\multimap A)\vdash F\ x:\S A}
      \infer0[$(\mathsf{Ax})$]{;m:\N\vdash m:\N}
      \infer1[$(\forall_e)$]{;m:\N \vdash m:(A\multimap A)\Rightarrow \S(A\multimap A)}
      \hypo{;\vdash \mathsf{step}:A\multimap A}
      \infer2[$(\Rightarrow_e)$]{;m:\N\vdash m\ \mathsf{step}:\S(A\multimap A)}
      \infer2[$(\mathsf{C})$]{;m:\N, x:\S A \vdash (m\ \mathsf{step})\ x: \S A}
      \infer1[$(\multimap_i)$]{;m:\N \vdash \lambda x.(m\ \mathsf{step})\ x:\S A\multimap \S A}
	\infer1[$(\multimap_i)$]{;\vdash \mathsf{walk}:\mathbb{N}\multimap \S A\multimap \S A}
    \end{prooftree}}
\]

For any $\underline{n}$ of type $\mathbb{N}$ we have that $\mathsf{walk}\ \underline{n}\rightsquigarrow^\ast \lambda x.(\mathsf{step}(\mathsf{step}\ \dots (\mathsf{step}\ x)\dots))$ where $\mathsf{step}$ is applied $n$ times. Similarly, using the term $\mathsf{poly}$ encoding a polynomial $P$, we can construct an abstraction taking in an integer $n$ that iterates $\mathsf{step}$ a number $P(n)$ of times:
\[\mathsf{pwalk}\triangleq \lambda n.\lambda x.((\mathsf{poly}\ n)\ \mathsf{step}) \ x: \N\multimap  \S^{2d+1} \sharp(\B^{k+1})\multimap \S^{2d+1} \sharp (\B^{k+1}). \]
\end{example}

\newcommand{\h}{1.5}
\begin{figure}[]
\centering
\begin{subfigure}[b]{0.47\textwidth}
\centering
\scalebox{1}{
\begin{tikzpicture}  
  [font=\scriptsize,scale=.8,auto=center,minimum size=1pt,main node/.style={circle,fill=gray!0,draw=black}] 
    
  \node[main node, style={fill=gray!65}] (a1) at ({-\h*cos(18)},{\h*sin(18)}) {0};  
  \node[main node,style={fill=gray!20}] (a2) at (0,\h)  {1}; 
  \node[main node] (a3) at ({\h*cos(18)},{\h*sin(18)})  {2};  
  \node[main node] (a4) at ({\h*cos(-54)},{\h*sin(-54)}) {3};  
  \node[main node,style={fill=gray!20}] (a5) at (-{\h*cos(-54)},{\h*sin(-54)})  {4};  
  \node[left=0 of a1] (t1) {$\ket{000}$}; 
  \node[right=0 of a2] (t2) {$\ket{001}$};
  \node[right=0 of a3] (t3) {$\ket{010}$}; 
  \node[right=0 of a4] (t4) {$\ket{011}$}; 
  \node[left=0 of a5] (t5) {$\ket{100}$}; 
  
  \path
    (a2) edge node {} (a3)
    (a3) edge node {} (a4)
    (a4) edge node {} (a5)
    (a1) edge[->,thick] node[xshift=-2mm,yshift=-1mm] {L} (a5)
    (a1) edge[->,thick] node[xshift=-2mm,yshift=1mm] {R} (a2);
  
\end{tikzpicture} }
\vspace{3mm}

\scalebox{0.9}{
$\ket{0}\otimes\ket{000}\rightarrow^{\text{step}}\frac{1}{\sqrt{2}}\ket{0}\otimes\text{L}\,\ket{000}+\frac{1}{\sqrt{2}}\ket{1}\otimes\text{R}\ket{000}$}
\caption{A step in a loop  with $K=5$ nodes.}
\label{fig:quantumwalka}
\end{subfigure}
\hfill
\begin{subfigure}[b]{0.43\textwidth}
\centering
\scalebox{1}{
\begin{quantikz}[wire types = {q,q,n,q}, row sep=3mm]
\lstick{$c$} & \gate{\text{H}}\gategroup[4,steps=3,style={draw=gray!40,fill=gray!20, inner
	xsep=1pt},background]{$\mathsf{step}$} & \octrl{1} & \ctrl{1} &\\
\lstick{$x_1$} & & \gate[3]{\text{L}} & \gate[3]{\text{R}} &  \\
\lstick{$\vdots$} & & &  &  \\
\lstick{$x_k$} & & & & 
\end{quantikz}}
\caption{Quantum circuit representation of $\mathsf{step}$.}
\label{fig:quantumwalk}
\end{subfigure}
\caption{Polytime quantum random walk.}
\end{figure}

\subsection{No-Cloning and No-Deleting}

In order to allow for cloning of variable while ensuring general good behavior for qubits, \punq{} incorporates different features of the \dlal{} and \ls{} typing systems.

While \dlal{} exhibits full weakening (see Figure~\ref{fig:dlal}), there is a separation in the context between variables that may appear at most once (linear context) and more than once (exponential context), which then gets carried over into different types of abstractions. In \ls{}, on the other hand, we find a single context for typing judgements, and weakening is only allowed on so-called ``flat~types'', where a type is restricted in its use of superpositions. 

In \punq{}, the difference between the linear and exponential contexts captures the difference between allowed and disallowed treatment of qubits (or superpositions more generally), which then get resolved when creating the abstraction. Consider the following simple example of how \punq{} sensibly allows qubits to be the output of cloning, but not the input.

\begin{example}[Qubits made from cloning] With two instances of rule $(\mathsf{Ax})$ and one use of $(\times_i)$, it is straightforward to obtain the typing judgement $;x:\sharp \B,y:\sharp \B\vdash \pair{x}{y}:\sharp \B\times \sharp \B$. Now, moving $x$ and $y$ into the exponential context using rule $(\S_i)$ we obtain $x:\sharp \B,y:\sharp \B;\vdash \pair{x}{y}:\S(\sharp \B\times \sharp \B)$ where we are now allowed to contract $x$ and $y$ using rule $(\mathsf{C})$ and thus obtain $x:\sharp \B;\vdash \pair{x}{x}:\S(\sharp \B\times \sharp \B)$.

Though it seems as if the validity of this typing judgement allows for cloning of the $x$ variable representing a qubit, when we take the final step in closing the term, we must use rule $(\Rightarrow_i)$, since $x$ appears in the exponential context. Combined with the fact that $!(\sharp \B)= \B$, we obtain $;\vdash \lambda x.\pair{x}{x}:\B\Rightarrow \S(\sharp \B\times \sharp \B)$. Notice that this abstraction does not represent qubit cloning, but rather an operator that creates a pair of qubits using a boolean value that it takes as input to choose in which state the qubits should be.

\end{example}

In Section~\ref{s:unitarity} we will show that \punq{} preserves unitarity over qubit maps in the same way as \ls{} (Theorem~\ref{thm:unitarity}), where in cases in which the number of qubits increases, the \punq{} term must behave as an isometry. Likewise, no term exists that represents a map that decreases the number of qubits, which would constitute qubit deletion. A consequence of this fact is that we may not reset qubits to some pre-fixed value, as shown in the following example.

\begin{example}[Constant output zero]
To create a lambda abstraction in which the input qubit does not appear in the expression, the variable must be introduced via weakening, which is limited to the exponential context. For instance, using rules $(0)$, $(\leq)$, and $(\mathsf{W})$ we may obtain the typing judgement $x:\sharp \B;\vdash \ket{0}:\sharp \B$. Applying rule $(\Rightarrow_i)$ to create the abstraction, we obtain $;\vdash \lambda x.\ket{0}:\B\Rightarrow \sharp \B$, which corresponds to ``forgetting'' a boolean input and giving as output a qubit in state zero.
\end{example}

\section{Unitarity}
\label{s:unitarity}

In $\punq$, we have adopted the Lambda-$\mathcal{S}_1$ typing discipline to
ensure that linear maps over qubits, i.e., terms of type $\sharp \B \multimap
\sharp \B$, encode unitary operators. The proof of unitarity in~\cite{DCM22} employs
the realizability techniques developed in~\cite{DCGMV19}, which we will tailor to our setting. The major
differences include a straightforward extension to tuples of qubits --
demonstrating unitarity for types $\sharp (\B^n)\multimap \sharp(\B^n)$ -- and
the incorporation of $\dlal$ constructs: the modality $\S$, the intuitionistic
arrow $\Rightarrow$, as well as polymorphism.
All the proofs omitted in this section are fully developed in Appendix~\ref{app:unitarity}.

\subsection{Realizability}

The \textit{base} of a closed value $\vec{v}\in\V_c$ is
defined as the set of its basis values in $\BV_c$:
%If $v\in\BV_c$ and $\vec v\in\V_c$, 
\begin{align*}
  \text{base}(v) & \triangleq\{v\} &
  \text{base}(\alpha\cdot \vec{v}) & \triangleq \text{base}(\vec{v}),\ \text{if }\alpha\not = 0,\ \emptyset \text{ otherwise.} \\
  \text{base}(\vec 0) &\triangleq \emptyset &
  \text{base}(\vec{v_1}+\vec{v_2}) & \triangleq \text{base}(\vec{v_1})\cup\text{base}(\vec{v_2})
\end{align*}
Given a set $S$ of closed values in a vector space $\V_c$, the \emph{span}
($\mathsf{span}$) and \emph{base} ($\flat$) of $S$ are defined as:
\begin{align*}
  \mathsf{span}(S) &\triangleq\big\{\sum_{i=1}^n \alpha_i\cdot \vec{v}_i:n\geq 0,\alpha_1,\dots,\ \alpha_n\in\Ct,\ \vec{v}_1,\dots,\vec{v_n}\in S\big\} &&\subseteq \V_c\\
  \flat S &\triangleq\bigcup_{\vec{v}\in S} \text{base}(\vec{v}) &&\subseteq \BV_c
\end{align*}
 
We first define the \textit{realizability predicate} for a superposition $\vec{t}$
and set of closed values $S\subseteq\V_c$, written $\vec{t}\Vdash S$ that is true if and
only if $\vec{t}\rightsquigarrow^* \vec{v}$ for some $\vec{v}\in S$. The
\textit{set of realizers of} $S$ %, written $\{\Vdash R\}$
is then defined as
$\{\vec{t}\in\SUP_c:\vec{t}\Vdash S\}$. %, that is,
%$\{\vec{t}\in\SUP_c:\exists \vec{v}\in R,\vec{t}\rightsquigarrow^*
%\vec{v}\}$.
%We denote by $\vec{t}\langle \vec{v}/x \rangle$ the bilinear substitution
%defined for $\vec{t}=\sum_i\alpha_i\cdot t_i$ and $\vec{v}=\sum_j \beta_j
%\cdot v_j$ as $\vec{t}\langle \vec{v}/x \rangle \triangleq \sum_i \sum_j
%\alpha_i\beta_j\cdot t_i[v_j/x]$.  The notation $\vec{t}\langle \sigma
%\rangle$ is also used for any substitution $\sigma$.

%Let $\mathbb T_s\subset\mathbb T_c$ be the set of simple types, that is,
%types not involving any free or bounded variable.  
Let %$\mathsf{Vars}$ be the set of type variables, and let 
$\mathcal{S}_1\subseteq \V_c$ be the set of
unit values, that is, $\mathcal S_1\triangleq\{\vec{v}\in
\V_c:|\braket{\vec{v}}{\vec{v}}|^2=1\}$.  Given a function $\tau$ from type variables to subsets of $\mathcal S_1$, the \textit{unitary semantics}
$\llbracket\cdot\rrbracket_\tau : \mathbb{T} \to \mathcal{P}(\mathcal S_1)$ is defined in Figure~\ref{fig:usem} by induction on types.
\begin{figure*}[h]
\hrulefill
\begin{align*}
  \llbracket X \rrbracket_\tau & \triangleq \tau(X),  &  \llbracket A \multimap B\rrbracket_\tau &\triangleq \left\{\lambda x.\vec{t}:\forall \vec{v} \in \llbracket A \rrbracket_\tau,(\lambda x.\vec{t})\vec v\Vdash \llbracket B\rrbracket_\tau\right\}\!,\\
  \llbracket \B \rrbracket_\tau  &\triangleq \{\ket{0}\,,\ket{1}\,\}, & \llbracket A \Rightarrow B\rrbracket_\tau & \triangleq \left\{\lambda x.\vec{t}:\forall v \in\flat \llbracket A\rrbracket_\tau, (\lambda x.\vec{t})v\Vdash\llbracket B\rrbracket_\tau\right\}\!,\\
   \llbracket \sharp A \rrbracket_\tau&\triangleq \mathsf{span}(\llbracket A \rrbracket_\tau)\cap \mathcal{S}_1, &\llbracket A\times B\rrbracket_\tau &\triangleq \left\{(\vec{v},\vec{w}):\vec v \in\llbracket A \rrbracket_\tau,\vec w \in\llbracket B \rrbracket_\tau\right\}\!,\\
  \llbracket \S A \rrbracket_\tau &\triangleq \llbracket A \rrbracket_\tau, &   \llbracket \forall X.A\rrbracket_\tau &\triangleq\bigcap_{R\subseteq\mathcal S_1} \llbracket A\rrbracket_{\tau \cup \{ X\to R\}}. 
\end{align*}
\hrulefill
\caption{Unitary semantics $\llbracket\cdot\rrbracket_\tau : \mathbb{T} \to \mathcal{P}(\mathcal S_1)$.} \label{fig:usem}
\end{figure*}

The following property of the definition of $!$ (Definition~\ref{def:bang}) clarifies the nature of $\flat$ and the role of $!$ in erasing superposition.
\begin{restatable}[]{lemma}{thmbangflat}
  Let $A\in \mathbb{T}_s$, then \(\llbracket !(A) \rrbracket_\emptyset = \flat \llbracket A \rrbracket_\emptyset\).
\label{thm:bangflat}
\end{restatable}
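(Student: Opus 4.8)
The plan is to prove $\llbracket !(A) \rrbracket_\emptyset = \flat \llbracket A \rrbracket_\emptyset$ by structural induction on the ground type $A$ (the statement mentions $A \in \mathbb{T}_s$, which I take to be the set of types on which $\flat$ and the semantics interact cleanly, i.e.\ roughly the ground/value types built from $\B$, $\sharp$, $\times$, $\S$, and arrows as leaves). The base cases are $A = X$ and $A = \B$: for $\B$ we have $!(\B) = \B$, so $\llbracket \B \rrbracket_\emptyset = \{\ket 0, \ket 1\}$, which is already a set of basis values, hence equal to its own $\flat$; the variable case is handled by $\tau = \emptyset$ (or is vacuous for closed ground types). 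For arrow leaves $A \multimap B$ and $A \Rightarrow B$ we use $!(A \multimap B) = A \multimap B$ and $!(A \Rightarrow B) = A \Rightarrow B$, and observe that every element of $\llbracket A \multimap B\rrbracket_\emptyset$ and $\llbracket A \Rightarrow B\rrbracket_\emptyset$ is already a basis value of the form $\lambda x.\vec t$, so again $\flat$ acts as the identity on these sets.

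The inductive cases are the three type constructors that $!$ recurses through. For $A \times B$: I would show $\flat \llbracket A \times B \rrbracket_\emptyset = \flat\llbracket A\rrbracket_\emptyset \times \flat\llbracket B\rrbracket_\emptyset$ using the fact that $\text{base}((\vec v,\vec w))$ relates componentwise to $\text{base}(\vec v)$ and $\text{base}(\vec w)$ via the pairing syntactic sugar, and that basis values of pair type are exactly pairs of basis values; then apply the induction hypothesis to each factor and compare with $\llbracket !(A)\times !(B)\rrbracket_\emptyset$. For $\S A$: since $\llbracket \S A \rrbracket_\emptyset = \llbracket A \rrbracket_\emptyset$ and $!(\S A) = \S !(A)$ with $\llbracket \S !(A)\rrbracket_\emptyset = \llbracket !(A)\rrbracket_\emptyset$, the claim reduces directly to the induction hypothesis. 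For $\forall X.A$: with $\tau = \emptyset$ and closed types this is essentially the induction hypothesis applied after the intersection over $R \subseteq \mathcal S_1$ is trivial in the closed setting, using $!(\forall X.A) = \forall X.!(A)$.

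The genuinely interesting case is $\sharp Q$, where $!(\sharp Q) = !(Q)$, so I must show $\llbracket !(Q) \rrbracket_\emptyset = \flat \llbracket \sharp Q \rrbracket_\emptyset = \flat\big(\mathsf{span}(\llbracket Q\rrbracket_\emptyset) \cap \mathcal S_1\big)$. The key sub-claim is that taking the base of the unit-norm span of a set $S$ of values recovers exactly $\flat S$: every basis value appearing in some unit-norm linear combination of elements of $S$ already appears in $\flat S$ (because $\text{base}$ of a linear combination is contained in the union of the bases of the summands), and conversely every basis value $v \in \flat S$ is the base of the unit value $v$ itself, which lies in $\mathsf{span}(S)\cap\mathcal S_1$. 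Hence $\flat(\mathsf{span}(S)\cap\mathcal S_1) = \flat S$, and then $\flat \llbracket \sharp Q\rrbracket_\emptyset = \flat\llbracket Q\rrbracket_\emptyset = \llbracket !(Q)\rrbracket_\emptyset$ by the induction hypothesis on $Q$.

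The main obstacle, modest but worth care, is the $\sharp Q$ case: I need to be sure that $\flat$ is insensitive to passing through $\mathsf{span}(\cdot) \cap \mathcal S_1$ — in particular that the unit-norm restriction does not shrink the base, which is why exhibiting the singleton $v$ (norm $1$) for each $v \in \flat S$ matters, and that zero coefficients (handled by the $\alpha \neq 0$ clause in the definition of $\text{base}$) do not introduce spurious basis values. I would also double-check that the arrow leaf cases are legitimately in $\mathbb{T}_s$ so that the induction is well-founded, and that the equivalence relation $\equiv$ on values (which preserves $\text{base}$) causes no ambiguity. None of these require heavy computation; the proof is essentially a bookkeeping argument once the span/base interaction is isolated as a lemma.
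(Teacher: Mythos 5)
Your overall structure is the same as the paper's: structural induction on the type, treating $\B$ and the two arrow types as leaves where the denotation is already a set of basis values (so $\flat$ is the identity), recursing through $\times$, $\S$, $\sharp$, and isolating $\sharp Q$ as the only interesting case. The paper's proof covers exactly the cases $\B$, $\multimap$, $\Rightarrow$, $\times$, $\sharp$, $\S$ and omits $X$ and $\forall X.A$, which supports your reading of $\mathbb{T}_s$; your sketches for $X$ and $\forall X.A$ are therefore not needed (and the $\forall$ one would in any case require strengthening the statement to arbitrary $\tau$, since inside the intersection one works with $\tau=\{X\mapsto R\}\neq\emptyset$).

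There is, however, a genuine gap in the $\sharp Q$ case, specifically in the inclusion $\flat\bigl(\mathsf{span}(S)\cap\mathcal S_1\bigr)\supseteq\flat S$. You argue that any $v\in\flat S$ witnesses its own membership because ``the unit value $v$ itself lies in $\mathsf{span}(S)\cap\mathcal S_1$''. That claim is false in general: take $S=\bigl\{\tfrac{1}{\sqrt2}\cdot\ket 0 + \tfrac{1}{\sqrt2}\cdot\ket 1\bigr\}$. Then $\ket 0\in\flat S$, but $\mathsf{span}(S)$ is the one-dimensional set $\bigl\{\alpha\cdot\bigl(\tfrac{1}{\sqrt2}\ket 0+\tfrac{1}{\sqrt2}\ket 1\bigr):\alpha\in\Ct\bigr\}$, which does not contain $\ket 0$. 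The base of a value is in general not contained in the span of the singleton generating it, so $\flat S\not\subseteq\mathsf{span}(S)$. The correct witness is not $v$ but the original value $\vec u\in S$ with $v\in\text{base}(\vec u)$: we have $\vec u\in S\subseteq\mathsf{span}(S)$, and either $\vec u$ already has unit norm in the application (since $S=\llbracket Q\rrbracket_\emptyset\subseteq\mathcal S_1$) or one normalizes to $\vec u/\lVert\vec u\rVert$, which has the same base because $\text{base}$ is invariant under nonzero scaling. Equivalently, once $S\subseteq\mathcal S_1$ one has $S\subseteq\mathsf{span}(S)\cap\mathcal S_1$, and monotonicity of $\flat$ gives the inclusion directly. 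The paper reaches the same conclusion by chaining the auxiliary identities $\flat S=\flat\,\mathsf{span}(S)$ and $S=S\cap\mathcal S_1$ for $S\subseteq\BV_c$; your reorganization into a single sub-lemma about $\flat(\mathsf{span}(S)\cap\mathcal S_1)$ is fine once the witness is corrected.
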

\begin{proof}
  By induction on $A\in\mathbb T_s$.
  The full proof is given in Appendix~\ref{app:unitarity}.
\end{proof}

The unitary semantics of a typing context $\Gamma$ with respect to $\tau$, called the \textit{unitary semantics} of $\Gamma$ and written $\llbracket \Gamma \rrbracket_\tau$, is defined as
\[
\llbracket \Gamma \rrbracket_\tau \triangleq\big\{\sigma \text{ substitution}: \text{dom}(\sigma)=\text{ dom}(\Gamma)\text{ and }\forall x\in \text{dom}(\sigma), \sigma(x)\in\llbracket \Gamma(x)\rrbracket_\tau\big\}.\]

We define $\vec{t}\langle \sum_i \alpha_i \cdot v_i/x \rangle\triangleq\sum_i \alpha_i\cdot \vec t[v_i/x]$. 
The notation $\vec{t}\langle \sigma \rangle$ is also used for any substitution $\sigma$.
We say that a typing judgment $\Gamma;\Delta \vdash \vec t: A$ is \emph{valid} when
$\text{dom}(\Delta)\subseteq FV(\vec{t})\subseteq \text{dom}(\Gamma,\Delta)$ and
$\vec{t}\langle \sigma\rangle \Vdash\llbracket  A \rrbracket_\tau$ for all $\sigma$ and $\tau$ such that $\sigma\in\llbracket \Gamma,\Delta\rrbracket_\tau$.
A typing rule is valid when the validity of its premises implies the validity of its conclusion.

\begin{restatable}[]{lemma}{lemsubstitutionRealizability}\label{lem:substitutionRealizability}
  Let $A,B\in\mathbb T$ and $\tau$ defined in $FV(A,B)$.
  Then,
  $\llbracket A[B/X]\rrbracket_\tau
  =
  \llbracket A\rrbracket_{\tau\cup\{X\mapsto\llbracket B\rrbracket_\tau\}}$.
\end{restatable}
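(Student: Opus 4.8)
The plan is to prove this substitution lemma by structural induction on the type $A$, mirroring the inductive definition of the unitary semantics in Figure~\ref{fig:usem} together with the definition of type substitution $[B/X]$. The key observation is that the semantics is defined compositionally on type structure, so for every syntactic case the desired identity reduces to the induction hypotheses applied to immediate subtypes, after unfolding both $\llbracket\cdot\rrbracket$ and $[B/X]$. I would first fix $\tau$ defined on $FV(A,B)$ and abbreviate $\tau' \triangleq \tau\cup\{X\mapsto\llbracket B\rrbracket_\tau\}$, then proceed case by case.

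First I would handle the base cases. If $A = X$, then $A[B/X] = B$ and $\llbracket B\rrbracket_\tau = \tau'(X) = \llbracket X\rrbracket_{\tau'}$. If $A = Y \neq X$, then $A[B/X] = Y$ and both sides equal $\tau(Y) = \tau'(Y)$ since $\tau$ and $\tau'$ agree off $X$. If $A = \B$, both sides are $\{\ket 0,\ket 1\}$. The inductive cases $A = A_1 \multimap A_2$, $A = A_1 \Rightarrow A_2$, $A = A_1 \times A_2$, $A = \S A_1$, and $A = \forall Y.A_1$ each follow by unfolding the semantics, observing that substitution commutes with the type constructor, and invoking the induction hypothesis on the proper subterms. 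For the $\forall Y.A_1$ case I would first $\alpha$-rename so that $Y \notin FV(B) \cup \{X\}$, then note $\llbracket\forall Y.A_1[B/X]\rrbracket_\tau = \bigcap_{R}\llbracket A_1[B/X]\rrbracket_{\tau\cup\{Y\mapsto R\}}$; applying the IH (with $\tau$ replaced by $\tau\cup\{Y\mapsto R\}$, valid since $Y\notin FV(B)$ means $\llbracket B\rrbracket_{\tau\cup\{Y\mapsto R\}} = \llbracket B\rrbracket_\tau$) gives the result.

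The one genuinely delicate case is $A = \sharp Q$, because here the type substitution is $\sigma^k(\sharp Q) = \sharp\sigma^k(Q)$ and the definition of $\sigma^-$ involves the bang function: $\sigma^-(X) = {!}(C)$. However, on the positive side $\sigma^+(X) = C$, and since $\sharp Q$ only ever occurs in positive position inside the ground-type grammar relevant here (and the lemma as stated concerns the single arrow-free substitution $[B/X] = \sigma^+$), the $\sharp$ case is just $\llbracket\sharp(Q[B/X])\rrbracket_\tau = \mathsf{span}(\llbracket Q[B/X]\rrbracket_\tau)\cap\mathcal S_1 = \mathsf{span}(\llbracket Q\rrbracket_{\tau'})\cap\mathcal S_1 = \llbracket\sharp Q\rrbracket_{\tau'}$ by the IH on $Q$. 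The subtlety I expect to be the main obstacle is bookkeeping the positive/negative polarity discipline: one must be careful that the statement is formulated for the outermost $[B/X] = \sigma^+$ and that, whenever the induction descends under a $\Rightarrow$ on the left, the relevant identity to propagate is the $\flat$-version $\llbracket{!}(A)[B/X]\rrbracket = \ldots$, which is exactly where Lemma~\ref{thm:bangflat} would be invoked. I would either strengthen the induction hypothesis to cover both polarities simultaneously (proving $\llbracket\sigma^+(A)\rrbracket_\tau = \llbracket A\rrbracket_{\tau'}$ and $\llbracket\sigma^-(A)\rrbracket_\tau = \flat\llbracket A\rrbracket_{\tau'}$ together), or restrict attention to the arrow-free fragment where $\sigma^+ = \sigma^-$; given the way the lemma is stated and used, I would go with the simple single-polarity induction and treat the $\Rightarrow$ case using $\flat$ via Lemma~\ref{thm:bangflat} as needed.
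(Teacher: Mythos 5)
Your overall strategy—structural induction on $A$, unfolding the semantics and the substitution, and invoking the IH on immediate subterms—is exactly the paper's approach, and all your straightforward cases (variables, $\B$, $\multimap$, $\times$, $\S$, $\sharp$, $\forall$ with $\alpha$-renaming) coincide with the paper's. The interesting difference is that you flag the polarity/bang issue in the $\Rightarrow$ case, and here you are actually more careful than the paper. In the paper's own proof the case $A = C \Rightarrow D$ simply writes $A[B/X] = C[B/X] \Rightarrow D[B/X]$ and applies the IH, which silently treats $[B/X]$ as a naive, polarity-blind substitution. But the paper's Section~3 definition makes $[B/X] = \sigma^+$, and $\sigma^+(C\Rightarrow D) = \sigma^-(C) \Rightarrow \sigma^+(D)$, where $\sigma^-(X)={!}(B)$. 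So if $X$ occurs in $C$ the left subtype is genuinely different, and the paper's step is not justified as written. You are right that Lemma~\ref{thm:bangflat} is the tool one would reach for.

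That said, your proposed fix has a concrete problem. The strengthened invariant $\llbracket\sigma^-(A)\rrbracket_\tau = \flat\llbracket A\rrbracket_{\tau'}$ already fails at $A = Y \neq X$: the left side is $\tau(Y)$, the right side is $\flat\tau(Y)$, and these disagree whenever $\tau(Y)$ contains non-basis values (which is permitted, since $\tau$ ranges over arbitrary subsets of $\mathcal S_1$). Putting $\flat$ on both sides, $\flat\llbracket\sigma^-(A)\rrbracket_\tau = \flat\llbracket A\rrbracket_{\tau'}$, repairs that base case, but then the invariant no longer propagates through $\multimap$ in negative position, because $\llbracket C_1\multimap C_2\rrbracket$ quantifies over the full (unflattened) $\llbracket C_1\rrbracket$, whereas the IH only gives you $\flat$-equality. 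For instance with $A = (X\multimap X)\Rightarrow \B$ and $B = \sharp\B$, the substitution yields $(\B\multimap\B)\Rightarrow\B$, yet $\llbracket X\multimap X\rrbracket_{\tau'}$ (qubit isometries) and $\llbracket\B\multimap\B\rrbracket$ (arbitrary Boolean maps, including constants) are different sets, so the two sides of the stated equality genuinely diverge. This is a real gap that neither your proposal nor the paper's proof resolves; a correct statement would need to restrict how $X$ may occur in $A$, or else accept an inclusion rather than an equality.
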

\begin{proof}
  By induction on $A\in\mathbb T$.
  The full proof is given in Appendix~\ref{app:unitarity}.
  \end{proof}

\begin{restatable}[]{theorem}{thmvalidity}
The typing rules in Figure~\ref{fig:type} are valid.\label{thm:validity}
\end{restatable}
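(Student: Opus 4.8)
The plan is to prove Theorem~\ref{thm:validity} by checking each typing rule in Figure~\ref{fig:type} in turn, showing that if all its premises are valid judgments then so is its conclusion. The bulk of the work is bookkeeping, but it divides into a handful of genuinely distinct arguments. First I would dispose of the structural and definitional rules: $(\mathsf{W})$ follows because adding a variable to the exponential context only shrinks the set of admissible $\sigma$; $(\mathsf{C})$ follows because a substitution identifying $x$ and $y$ still lands in $\llbracket B\rrbracket_\tau$ for the single merged variable; $(\equiv)$ uses that $\equiv$ is preserved by $\rightsquigarrow^*$ and by the inner product (already noted in the text), so $\vec t\langle\sigma\rangle$ and $\vec s\langle\sigma\rangle$ reduce to the same value up to $\equiv$; $(\leq)$ requires a separate sublemma that $A\leq B$ implies $\llbracket A\rrbracket_\tau\subseteq\llbracket B\rrbracket_\tau$, proved by induction on the derivation of $A\leq B$ — the interesting cases being $Q\leq\sharp Q$ (membership in $\mathsf{span}\cap\mathcal S_1$), $\sharp\sharp Q\leq\sharp Q$ (span is idempotent on $\mathcal S_1$), $Q\leq\sharp\,!(Q)$ (uses Lemma~\ref{thm:bangflat}), and $\sharp\S Q\leq\S\sharp Q$ (trivial since $\llbracket\S A\rrbracket_\tau=\llbracket A\rrbracket_\tau$). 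The axiom rules $(\mathsf{Ax})$, $(0)$, $(1)$ are immediate, and $(\forall_i)$, $(\forall_e)$ follow from the definition of $\llbracket\forall X.A\rrbracket_\tau$ together with Lemma~\ref{lem:substitutionRealizability} for the elimination rule.

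Next come the arrow and pair rules. For $(\multimap_i)$ and $(\multimap_e)$ the argument is the standard logical-relations computation: to show $\lambda x.\vec t\langle\sigma\rangle$ realizes $\llbracket A\multimap B\rrbracket_\tau$, take $\vec v\in\llbracket A\rrbracket_\tau$; since values are finite superpositions of basis values, push the application through the syntactic sugar and rule (Abs) to reduce $(\lambda x.\vec t\langle\sigma\rangle)\vec v$ to $\vec t\langle\sigma,\,\vec v/x\rangle$, then invoke validity of the premise; elimination is the converse. Rules $(\Rightarrow_i)$ and $(\Rightarrow_e)$ are the same but one must feed only \emph{basis} values, which is exactly what Lemma~\ref{thm:bangflat} is for: a substitution in $\llbracket\Gamma,x:A\rrbracket_\tau$ restricts, after applying $!$, to supplying $x$ with elements of $\flat\llbracket A\rrbracket_\tau=\llbracket !(A)\rrbracket_\tau$; the optional $[z:C]$ in $(\Rightarrow_e)$ is handled by a uniform case split. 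Rules $(\times_i)$, $(\times_e)$, $(\times_{e\sharp})$ use that $\llbracket A\times B\rrbracket_\tau$ is literally the set of pairs of realizers, together with rule (Let); for $(\times_{e\sharp})$ one additionally uses that a value of type $\sharp(Q\times R)$ is a span of pairs, so the let distributes over the superposition via the syntactic sugar and one lands in $\mathsf{span}(\llbracket S\rrbracket_\tau)\cap\mathcal S_1$. The $\S$ rules $(\S_i)$, $(\S_e)$ are almost trivial since $\llbracket\S A\rrbracket_\tau=\llbracket A\rrbracket_\tau$ and $\llbracket\S\Delta\rrbracket_\tau=\llbracket\Delta\rrbracket_\tau$; they are there only for complexity, not semantics.

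The interesting rules — and the main obstacle — are the two that mention orthogonality: $(\mathsf{if}_\sharp)$ and $(\sharp_i)$. For $(\sharp_i)$, given $\sigma\in\llbracket\Gamma,\Delta\rrbracket_\tau$, each $\vec t_i\langle\sigma\rangle$ reduces to some $\vec v_i\in\llbracket Q\rrbracket_\tau$; I must show $\sum_i\alpha_i\cdot\vec v_i\in\mathsf{span}(\llbracket Q\rrbracket_\tau)\cap\mathcal S_1$. Membership in the span is immediate; the norm-$1$ condition is where pairwise orthogonality $\vec t_j\perp^{\Gamma;\Delta}\vec t_k$ enters — by definition of $\perp_A^{\Gamma,\Delta}$ (quantifying over all basis-value substitutions $\sigma_{\Gamma\cup\Delta}$), the reducts $\vec v_j,\vec v_k$ are orthogonal in the Hilbert space $\V_c$, so $|\langle\sum_i\alpha_i\vec v_i\mid\sum_i\alpha_i\vec v_i\rangle| = \sum_i|\alpha_i|^2=1$ by the hypothesis. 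One subtlety to get right: $\sigma$ ranges over substitutions into arbitrary closed values whereas the orthogonality predicate is stated with substitutions $\sigma_\Gamma$ into closed \emph{basis} values, so I need a small argument that orthogonality of all basis-value instances plus bilinearity of $\langle-\mid-\rangle$ gives orthogonality of the general-value instances, or alternatively restrict attention to the way $\sigma$ decomposes over the basis. For $(\mathsf{if}_\sharp)$, given $t\langle\sigma\rangle$ of type $\sharp\B$, it reduces to some $\alpha\cdot\ket0+\beta\cdot\ket1$ with $|\alpha|^2+|\beta|^2=1$; by the syntactic sugar for $\mathsf{if}$ on superpositions and rules $(\mathrm{If}_+)$, $(\mathrm{If}_0)$, $(\mathrm{If}_1)$, the whole conditional reduces to $\alpha\cdot\vec s_1\langle\sigma\rangle+\beta\cdot\vec s_2\langle\sigma\rangle$, which reduces further to $\alpha\cdot\vec v_1+\beta\cdot\vec v_2$ with $\vec v_1,\vec v_2\in\llbracket Q\rrbracket_\tau$ orthogonal by the premise $\vec s_1\perp^{\Gamma';\Delta'}\vec s_2$; hence the result lies in $\mathsf{span}(\llbracket Q\rrbracket_\tau)\cap\mathcal S_1=\llbracket\sharp Q\rrbracket_\tau$. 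The plain $(\mathsf{if})$ rule is the same computation without the norm bookkeeping. Throughout, the recurring technical care is making the syntactic sugar of Figure~\ref{fig:syntaxsugar} interact correctly with $\rightsquigarrow^*$ and with substitution — that is the only place where a careless step could go wrong, so I would isolate it as a preliminary lemma ("reduction commutes with the sugar and with $\langle\sigma\rangle$") before running the case analysis.
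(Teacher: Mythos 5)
Your proposal follows essentially the same rule-by-rule logical-relations argument as the paper's Appendix A.2, with the same division of labor: trivial structural rules, standard realizability computations for the arrows and pairs, $\llbracket\S A\rrbracket_\tau=\llbracket A\rrbracket_\tau$ collapsing the $\S$ rules, Lemma~\ref{lem:substitutionRealizability} for $(\forall_e)$, and the orthogonality rules $(\mathsf{if}_\sharp)$, $(\sharp_i)$ as the genuinely quantum content. Two places where you diverge are worth noting.

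First, you propose a standalone sublemma that $A\leq B$ implies $\llbracket A\rrbracket_\tau\subseteq\llbracket B\rrbracket_\tau$, proved by induction over the subtyping rules. The paper dismisses $(\leq)$ in five words (``by definition of subtyping''), so your version is more careful; the case analysis you sketch is the right one, and $Q\leq\sharp\,!(Q)$ indeed needs Lemma~\ref{thm:bangflat}.

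Second — and this is the one spot where your sketch contains a real hazard — you flag that the orthogonality predicate $\perp^{\Gamma;\Delta}$ is defined via substitutions $\sigma_{\Gamma\cup\Delta}$ into closed \emph{basis} values, while the validity condition quantifies over $\sigma\in\llbracket\Gamma,\Delta\rrbracket_\tau$, which substitutes arbitrary closed values, including superpositions. You are right that there is a mismatch to bridge; the paper's own proofs of $(\sharp_i)$ and $(\mathsf{if}_\sharp)$ silently upgrade the premise without comment. However, your first proposed resolution does not work: ``orthogonality of all basis-value instances plus bilinearity'' is insufficient, because it only controls the diagonal terms. Concretely, if $\sigma(x)=\alpha\ket 0+\beta\ket 1$ and each $\vec t_i\langle\sigma\rangle$ reduces to $\alpha\vec w_{i,0}+\beta\vec w_{i,1}$, then expanding $\langle\vec t_1\langle\sigma\rangle\mid\vec t_2\langle\sigma\rangle\rangle$ by bilinearity leaves cross terms $\overline\alpha\beta\langle\vec w_{1,0}\mid\vec w_{2,1}\rangle$ and $\overline\beta\alpha\langle\vec w_{1,1}\mid\vec w_{2,0}\rangle$ that the basis-substitution hypothesis says nothing about. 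Your hedge (``or alternatively restrict attention to the way $\sigma$ decomposes over the basis'') is the right instinct, but as stated it is a placeholder, not an argument; to finish this you would need either a stronger orthogonality premise quantified over $\llbracket\Gamma,\Delta\rrbracket_\tau$, or an auxiliary lemma exploiting the specific structure of the typing contexts in these two rules. This does not invalidate your overall plan — the paper itself does not close this gap either — but a careful write-up must either prove the extension or strengthen the definition of $\perp^{\Gamma;\Delta}$.
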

\begin{proof}
  We have to check each rule with respect to the definition of typing
  judgement. To exemplify, we give the case of rule $(\multimap_i)$. The
  proof for the remaining rules is given in Appendix~\ref{app:unitarity}.

  Suppose that $\Gamma;\Delta,x:A\vdash \vec{t}:B$ is a valid judgment. Then
  since $(\dom(\Delta)\cup \{x\}) \subseteq FV(\vec{t})$ we have
  $\dom(\Delta)\subseteq FV(\lambda x.\vec{t})$.  Similarly, since
  $FV(\vec{t})\subseteq (\dom(\Gamma,\Delta)\cup \{x\})$ it is also true that
  $FV(\lambda x.\vec{t})\subseteq \dom(\Gamma,\Delta).$ For any $\sigma\in
  \llbracket \Gamma,\Delta\rrbracket_\tau$, we have that $(\lambda
  x.\vec{t})\langle \sigma\rangle = \lambda x.\vec{t}\langle \sigma \rangle$,
  since $x\not\in \dom(\sigma)$. For all $\vec{v}\in\llbracket
  A\rrbracket_\tau$, we observe that $(\vec{t}\langle \sigma\rangle)\langle
  \vec{v}/x\rangle = \vec{t}\langle \sigma,\{\vec{v}/x\}\rangle\Vdash
  \llbracket B \rrbracket_\tau $, since $\sigma,\{\vec{v}/x\}\in\llbracket
  \Gamma,x:A\rrbracket_\tau$. Therefore, $(\lambda
  x.\vec{t})\langle\sigma\rangle \Vdash  \llbracket A\multimap B
  \rrbracket_\tau$.
\end{proof}

\subsection{Isometries over Qubits} \label{ss:isometries}
\begin{definition}[Isometry and unitarity]
  An operator $\mathcal{F}:\Ct^{2^n}\to \Ct^{2^k}$ is \emph{isometric} if
  it preserves the inner product, i.e., if $\langle \mathcal{F}(u) |
  \mathcal{F}(v)\rangle=\langle u | v\rangle$, $\forall u,v\in \Ct^{2^n}$.
  Furthermore, if $\mathcal{F}:\Ct^{2^n}\to \Ct^{2^n}$ is isometric then
  $\mathcal{F}$ is \emph{unitary}. \label{def:isouni}
\end{definition}

To simplify notation, for $i=0,\dots,2^n-1$, let $\ket{i}\triangleq\pair{\ket{i_1}\,}{\pair{\ket{i_2}\,}{\dots
\pair{\ket{i_{n-1}}\,}{\ket{i_n}\,}\dots}}$, where $i_1\dots i_n$ is the binary representation of $i$. We show
that any linear application between qubit terms must preserve orthogonality and the norm of basis states.

\begin{restatable}{lemma}{lemunitaryabstractions}\label{lemma:unitary-abstractions}
For any $\vec{t}\in\punq$  and any $k\geq n$ for $k,n\in\mathbb{N}-\{0\}$,
\[
\vec{t}\in\llbracket \sharp (\B^n) \multimap \sharp (\B^k)\rrbracket_\emptyset 
\iff   
\begin{cases} \forall i=0,\dots,2^n-1, \exists \vec{v}_i\in \llbracket \sharp(\B^k)\rrbracket_\emptyset \text{ such that }\vec{t}\ \ket{i} \rightsquigarrow^* \vec{v}_i\text{,}\  \text{and}\\
 \vec v_i\perp_{\sharp(\B^k)} \vec v_j,\,\forall i\not = j.
 \end{cases}
\]
\end{restatable}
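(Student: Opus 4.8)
The plan is to prove both directions by unfolding the definition of the unitary semantics from Figure~\ref{fig:usem}, using the fact that the basis values of $\llbracket\sharp(\B^n)\rrbracket_\emptyset$ are exactly the tuples $\ket{i}$ for $i=0,\dots,2^n-1$. First I would establish this auxiliary fact: $\flat\llbracket\sharp(\B^n)\rrbracket_\emptyset = \{\ket{i} : 0\le i<2^n\}$. This follows by induction on $n$ from $\llbracket\B\rrbracket_\emptyset=\{\ket{0},\ket{1}\}$, the definition $\llbracket A\times B\rrbracket_\emptyset=\{(\vec v,\vec w):\vec v\in\llbracket A\rrbracket_\emptyset,\vec w\in\llbracket B\rrbracket_\emptyset\}$, and $\llbracket\sharp A\rrbracket_\emptyset=\mathsf{span}(\llbracket A\rrbracket_\emptyset)\cap\mathcal S_1$, noting that taking $\mathsf{span}$ and intersecting with $\mathcal S_1$ does not change the set of basis values (every $\ket{i}$ already lies in $\mathcal S_1$, and $\flat$ commutes appropriately with $\mathsf{span}$). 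I would also record that $\llbracket\sharp(\B^k)\rrbracket_\emptyset=\mathsf{span}(\{\ket{0},\dots,\ket{2^k-1}\})\cap\mathcal S_1$, i.e.\ it is precisely the set of closed values whose canonical form is a unit-norm linear combination of the $\ket{i}$'s.

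For the forward direction, assume $\vec t\in\llbracket\sharp(\B^n)\multimap\sharp(\B^k)\rrbracket_\emptyset$. By definition of the semantics of $\multimap$, for every $\vec v\in\llbracket\sharp(\B^n)\rrbracket_\emptyset$ we have $\vec t\,\vec v\Vdash\llbracket\sharp(\B^k)\rrbracket_\emptyset$, i.e.\ $\vec t\,\vec v$ reduces to some value in $\llbracket\sharp(\B^k)\rrbracket_\emptyset$. In particular, since each $\ket{i}$ belongs to $\llbracket\sharp(\B^n)\rrbracket_\emptyset$ (it is a basis value, and basis values of a ground type are values of that type by $Q\le\sharp Q$ / the definition of $\mathsf{span}$), we obtain $\vec v_i\in\llbracket\sharp(\B^k)\rrbracket_\emptyset$ with $\vec t\,\ket{i}\rightsquigarrow^*\vec v_i$. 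For orthogonality, I would consider, for $i\ne j$, the input $\vec v_{ij}\triangleq\frac1{\sqrt2}\cdot\ket{i}+\frac1{\sqrt2}\cdot\ket{j}$, which lies in $\llbracket\sharp(\B^n)\rrbracket_\emptyset$. Then $\vec t\,\vec v_{ij}$, expanding via the syntactic sugar for application over a sum (Figure~\ref{fig:syntaxsugar}) and using the call-by-value semantics, reduces to $\frac1{\sqrt2}\cdot\vec v_i+\frac1{\sqrt2}\cdot\vec v_j$; since this must lie in $\mathcal S_1$, its norm is $1$, which forces $\langle\vec v_i|\vec v_j\rangle=0$ (expanding $|\frac1{\sqrt2}\vec v_i+\frac1{\sqrt2}\vec v_j|^2 = \frac12\langle\vec v_i|\vec v_i\rangle+\frac12\langle\vec v_j|\vec v_j\rangle+\mathrm{Re}\langle\vec v_i|\vec v_j\rangle$ and using $|\vec v_i|^2=|\vec v_j|^2=1$, this gives $\mathrm{Re}\langle\vec v_i|\vec v_j\rangle=0$; repeating with amplitude $\frac{i}{\sqrt2}$ in front of $\ket{j}$, or with $\frac{1}{\sqrt2}\ket{i}+\frac{\sqrt{-1}}{\sqrt2}\ket{j}$, kills the imaginary part as well). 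Hence $\vec v_i\perp_{\sharp(\B^k)}\vec v_j$ (here there is no context, so $\perp$ and $\perp_{\sharp(\B^k)}$ coincide).

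For the converse, assume the right-hand side. To show $\vec t\in\llbracket\sharp(\B^n)\multimap\sharp(\B^k)\rrbracket_\emptyset$ I must check that for an arbitrary $\vec v\in\llbracket\sharp(\B^n)\rrbracket_\emptyset$, $\vec t\,\vec v\Vdash\llbracket\sharp(\B^k)\rrbracket_\emptyset$. Write $\vec v$ in canonical form as $\sum_{i\in I}\alpha_i\cdot\ket{i}$ with $\sum_{i\in I}|\alpha_i|^2=1$ (possible by the auxiliary fact). By the application syntactic sugar and the call-by-value rules, $\vec t\,\vec v\rightsquigarrow^*\sum_{i\in I}\alpha_i\cdot\vec v_i$, where $\vec v_i$ are the values given by hypothesis. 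It remains to check $\sum_{i\in I}\alpha_i\cdot\vec v_i\in\llbracket\sharp(\B^k)\rrbracket_\emptyset$: it is a linear combination of the $\ket{j}$'s (since each $\vec v_i\in\llbracket\sharp(\B^k)\rrbracket_\emptyset\subseteq\mathsf{span}(\{\ket{j}\})$), so I only need its norm to be $1$. Computing $\langle\sum_i\alpha_i\vec v_i|\sum_j\alpha_j\vec v_j\rangle=\sum_{i,j}\overline{\alpha_i}\alpha_j\langle\vec v_i|\vec v_j\rangle$ and using the pairwise orthogonality hypothesis $\langle\vec v_i|\vec v_j\rangle=\delta_{ij}$ (orthogonality for $i\ne j$, and norm $1$ for $i=j$ since $\vec v_i\in\mathcal S_1$), this equals $\sum_i|\alpha_i|^2=1$. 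Thus $\vec t\,\vec v\Vdash\llbracket\sharp(\B^k)\rrbracket_\emptyset$, as required.

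The main obstacle I anticipate is not any single step but the careful bookkeeping around the semantics of application on superpositions: I must justify that $\vec t\,(\sum_i\alpha_i\cdot\ket{i})$ genuinely reduces to $\sum_i\alpha_i\cdot\vec v_i$, which requires combining the syntactic sugar of Figure~\ref{fig:syntaxsugar} with rules (App), (App$_\V$) and (Sup) of Figure~\ref{fig:sem}, and appealing to confluence (the strategy being fixed) so that the particular reduction order does not matter. A secondary subtlety is extracting the full complex inner product being zero from norm conditions — this needs at least two well-chosen probe inputs (one with a real relative phase, one with an imaginary one) rather than a single one, which is why I phrase the forward direction with a small family of test vectors.
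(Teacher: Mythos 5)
Your proposal is correct and follows essentially the same route as the paper's proof: both directions unfold the semantics of $\multimap$ and $\sharp$ at basis inputs, and the forward direction uses the same two probe superpositions (one with a real relative phase on $\ket{j}$, one with an imaginary one) to extract $\langle\vec{v}_i|\vec{v}_j\rangle=0$ from norm preservation, while the converse is the same direct norm computation using pairwise orthogonality. The only differences are cosmetic --- you write $\vec{t}\,\ket{i}$ where the paper, after naming $\vec{t}=\lambda x.\vec{s}$, works with the one-step reduct $\vec{s}[\ket{i}/x]$, and you make explicit the small auxiliary fact about $\flat\llbracket\sharp(\B^n)\rrbracket_\emptyset$ that the paper leaves implicit.
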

\begin{proof}
  The proof is given in Appendix~\ref{app:unitarity}. 
\end{proof}

This property will allow us to show that closed superpositions of qubit applications behave as isometric operators. Let $\lmap{-}$ be the linear map from closed values of type $\sharp (\B^n)$ to $\Ct ^{2^n}$ defined by $\lmap{\sum_{i=0}^{2^n-1} \alpha_i\cdot \ket{i}} \triangleq  \begin{pmatrix}
 \alpha_0 & \cdots & \alpha_{2^n -1}
   \end{pmatrix}^T \in \Ct ^{2^n}$.
  A closed superposition $\vec{t} \in \SUP_c$ of type $\sharp (\B^n) \multimap \sharp (\B^k)$ \emph{represents} an operator $\mathcal{F}:\Ct ^{2^n}\rightarrow \Ct^{2^k}$ if, for all values $\vec{v}$ of type $\sharp (\B^n)$ and $\vec{w}$ of type $\sharp (\B^k)$,  it holds that $\vec{t}\ \vec{v}\rightsquigarrow^* \vec{w}$ if and only if $\mathcal{F}(\lmap{\vec{v}})=\lmap{\vec{w}}$.

By linearity of $\multimap$, preserving the norm of the basis states is enough to show that an abstraction represents an isometry. Moreover, when the dimensions $n$ and $k$ match the abstraction represents a unitary operator.

\begin{theorem}[Soundness]\label{thm:unitarity}
%\begin{restatable}[Isometry and Unitarity]{theorem}{thmunitarity}\label{thm:unitarity} 
For any closed superposition  $\lambda x.\vec{t}\in\llbracket\sharp(\B^n)\multimap \sharp(\B^k)\rrbracket_\emptyset$  and any $k, n \in \mathbb{N}-\{0\}$, the following hold:
  \begin{enumerate}
    \item If $k\geq n$, then $\lambda x.\vec{t}$ represents an isometry in $\Ct^{2^n}\to\Ct^{2^k}$,
    \item If $k= n$, then $\lambda x.\vec t$ represents a unitary operator in
      $\Ct^{2^n}\to\Ct^{2^n}$.
  \end{enumerate}
  Furthermore, if $k<n$, then the type is uninhabited.
\end{theorem}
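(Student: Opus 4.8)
The plan is to split the statement into its three claims and handle them in the natural order, leaning heavily on Lemma~\ref{lemma:unitary-abstractions} for the first two and a short dimension-counting argument for the third. First I would observe that by Theorem~\ref{thm:validity} a closed $\lambda x.\vec t \in \llbracket \sharp(\B^n)\multimap\sharp(\B^k)\rrbracket_\emptyset$, so Lemma~\ref{lemma:unitary-abstractions} applies whenever $k\geq n$: it gives us values $\vec v_0,\dots,\vec v_{2^n-1}\in\llbracket\sharp(\B^k)\rrbracket_\emptyset$ with $\lambda x.\vec t\ \ket i\rightsquigarrow^*\vec v_i$ and $\vec v_i\perp_{\sharp(\B^k)}\vec v_j$ for $i\neq j$. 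Each $\vec v_i$ is a unit-norm value of type $\sharp(\B^k)$, so $\lmap{\vec v_i}$ is a unit vector in $\Ct^{2^k}$, and pairwise orthogonality of the $\vec v_i$ translates (via the definition of $\perp$ and the fact that $\lmap{-}$ is an isometry from closed values of type $\sharp(\B^k)$ onto $\Ct^{2^k}$ with the inner product matching $\langle-|-\rangle$) into $\langle\lmap{\vec v_i}|\lmap{\vec v_j}\rangle=\delta_{ij}$. Hence $\{\lmap{\vec v_i}\}_{i=0}^{2^n-1}$ is an orthonormal family in $\Ct^{2^k}$.

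For claim (1), I would define $\mathcal F:\Ct^{2^n}\to\Ct^{2^k}$ to be the linear map sending the computational basis vector $\lmap{\ket i}$ to $\lmap{\vec v_i}$; this is well-defined and linear on all of $\Ct^{2^n}$. It is isometric because it maps an orthonormal basis to an orthonormal family, hence preserves inner products. It remains to check that $\lambda x.\vec t$ \emph{represents} $\mathcal F$ in the sense of the definition: for an arbitrary value $\vec v=\sum_i\alpha_i\cdot\ket i$ of type $\sharp(\B^n)$ we must show $\lambda x.\vec t\ \vec v\rightsquigarrow^*\vec w$ iff $\mathcal F(\lmap{\vec v})=\lmap{\vec w}$. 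The forward direction follows from the linearity built into the operational semantics — the syntactic sugar for application to a superposition together with rule (Sup) and confluence gives $\lambda x.\vec t\ \vec v\rightsquigarrow^*\sum_i\alpha_i\cdot\vec v_i$, and $\lmap{\sum_i\alpha_i\cdot\vec v_i}=\sum_i\alpha_i\lmap{\vec v_i}=\mathcal F(\lmap{\vec v})$; the converse holds because normal forms are unique modulo $\equiv$, so if $\mathcal F(\lmap{\vec v})=\lmap{\vec w}$ then $\vec w\equiv\sum_i\alpha_i\cdot\vec v_i$, which is exactly the reduct. One technical point worth care here is that $\sum_i\alpha_i\cdot\vec v_i$ need not be in canonical form (distinct $\vec v_i$ may still share basis values), so I would phrase the bookkeeping in terms of $\lmap{-}$ rather than syntactic canonical forms.

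Claim (2) is then immediate: when $k=n$, $\mathcal F$ from claim (1) is an isometry $\Ct^{2^n}\to\Ct^{2^n}$ between spaces of equal finite dimension, hence surjective, hence unitary, by Definition~\ref{def:isouni}. For the final sentence, suppose for contradiction $k<n$ and $\lambda x.\vec t\in\llbracket\sharp(\B^n)\multimap\sharp(\B^k)\rrbracket_\emptyset$; applying $\lambda x.\vec t$ to each $\ket i$ ($i=0,\dots,2^n-1$) yields $2^n$ pairwise-orthogonal unit values of type $\sharp(\B^k)$ — I would need a version of Lemma~\ref{lemma:unitary-abstractions} or its proof technique that produces these orthogonal reducts without the hypothesis $k\geq n$ (the reducts exist by Progress and subject reduction, and orthogonality follows since applying to orthogonal inputs $\ket i\perp\ket j$ must give orthogonal outputs, else the restriction of the map to the span of two basis vectors fails to preserve the norm of, say, $\tfrac1{\sqrt2}(\ket i-\ket j)$). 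But $\llbracket\sharp(\B^k)\rrbracket_\emptyset\subseteq\Ct^{2^k}$ has dimension $2^k<2^n$, so it cannot contain $2^n$ orthonormal vectors — contradiction. The main obstacle I anticipate is the careful handling of non-canonical reducts and the precise statement of the norm/orthogonality-preservation fact in the $k<n$ case; once Lemma~\ref{lemma:unitary-abstractions} and Theorem~\ref{thm:validity} are in hand, the rest is linear algebra.
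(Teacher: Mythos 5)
Your handling of claims (1) and (2) matches the paper's proof: apply Lemma~\ref{lemma:unitary-abstractions} to get pairwise-orthogonal unit reducts $\vec v_i\in\llbracket\sharp(\B^k)\rrbracket_\emptyset$, define the linear map by $\lmap{\ket{i}}\mapsto\lmap{\vec v_i}$, and read off isometry (and unitarity when $k=n$) from orthonormality and dimension. Your unpacking of what it means to \emph{represent} the operator is actually more explicit than the paper's single sentence ``by the linearity of the calculus,'' and the worry about non-canonical reducts is legitimate and well handled by working through $\lmap{-}$. Where you diverge is the $k<n$ case: you propose to re-establish pairwise orthogonality of the $2^n$ reducts (correctly noting that Lemma~\ref{lemma:unitary-abstractions} as stated presupposes $k\geq n$, so you would have to redo part of its proof) and then get a contradiction because $\Ct^{2^k}$ cannot hold $2^n$ orthonormal vectors. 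The paper takes a cheaper route that avoids reproving orthogonality entirely: the reducts $\vec v_i\in\llbracket\sharp(\B^k)\rrbracket_\emptyset$ exist directly from the realizability semantics, they must be linearly dependent since $2^k<2^n$, and applying $\lambda x.\vec t$ to the normalized dependence relation $\sum_i \tfrac{\beta_i}{\sqrt{\sum_i|\beta_i|^2}}\cdot\ket{i}\in\llbracket\sharp(\B^n)\rrbracket_\emptyset$ yields a reduct $\vec 0\notin\llbracket\sharp(\B^k)\rrbracket_\emptyset$, contradicting the realizer definition. Both arguments are sound; the paper's buys a shorter proof with no lemma variant needed, while yours makes the orthonormality more salient at the cost of a small extra derivation. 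One minor remark: the theorem's hypothesis already places $\lambda x.\vec t$ in $\llbracket\sharp(\B^n)\multimap\sharp(\B^k)\rrbracket_\emptyset$, so invoking Theorem~\ref{thm:validity} to obtain that membership is unnecessary here (validity would be the bridge if you had started from a typing judgment instead).
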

%\end{restatable}
\begin{proof}
  Let $\|\cdot\|$ and $\cdot^\dagger$ represent the typical vector norm and the conjugate transpose, respectively.
  
 1) Let $\lambda x.\vec{t}\in\llbracket \sharp(\B^n)\multimap \sharp(\B^k)\rrbracket_\emptyset$. Then, from
      Lemma~\ref{lemma:unitary-abstractions}, for $i=0,\dots,2^n-1$, there exist
      $\vec{v}_i\in\llbracket \sharp(\B^k)\rrbracket_\emptyset$ such that
      $\vec{t}[\ket{i}/x]\rightsquigarrow^* \vec{v}_i$ with $\langle
      \vec{v}_i | \vec{v}_j\rangle=0$, for $i\not =j$. Let
      $\mathcal{V}:\Ct^{2^n}\to\Ct^{2^k}$ be the linear operator defined by
      $\mathcal{V}(\lmap{\ket{i}})=\lmap{\vec{v}_i} $, for each $i=0,\dots,2^n-1$.
      By the linearity of the calculus we have that $\lambda x.\vec{t}$
	  represents the operator $\mathcal{V}$. Moreover, $\mathcal{V}$ is isometric by Definition~\ref{def:isouni} since $\|\lmap{ \vec{v}_i }\|=1$     and $\lmap{ \vec{v}_i }^\dagger\cdot  
      \lmap{ \vec{v}_j} =0$, for $0\leq i\not =
      j\leq 2^n-1$.

2) The case $(k=n)$ is easy to show, since from the case $(k\geq n)$ the
      term $\lambda x.\vec{t}$ must represent some isometric operator that in
      this case is also dimension-preserving, so by Definition~\ref{def:isouni}
      it must also be unitary.
      Let us now show that the type $\sharp (\B^n)\multimap \sharp (\B^k)$ is
      uninhabited for $k<n$. We will consider an argument by contradiction.  Let us
      start with a typable and closed abstraction $\lambda
      x.\vec{t}\in\llbracket \sharp (\B^n)\multimap \sharp
      (\B^k)\rrbracket_\emptyset$. Then, let us define the values $\vec v_i\in
      \llbracket \sharp (\B^k)\rrbracket_\emptyset$ satisfying $(\lambda
      x.\vec{t})\ \ket{i}\rightsquigarrow^* \vec{v}_i $. Since there are $2^n$
      such $\vec{v}_i$, they must not be linearly independent, meaning that
      there is a choice of $\beta_i\in\Ct$ such that $\sum_{i=1}^{2^n} \beta_i\cdot
      \vec v_i = \vec{0}$ and not all values $\beta_i$ are zero. Then, we may
      define the superposition $\vec{s}\triangleq \sum_i \frac{\beta_i}{\sqrt{\sum_i
      |\beta_i|^2}} \cdot \ket{i} \in \llbracket \sharp
      (\B^n)\rrbracket_\emptyset$ for which $(\lambda x.\vec{t})\ \vec{s}
      \rightsquigarrow^* \vec{0}\not \in \llbracket \sharp
      (\B^k)\rrbracket_\emptyset$. This concludes the proof.
\end{proof}

Notice that, since we have that $\llbracket \S A \rrbracket_\tau\triangleq \llbracket A \rrbracket_\tau$ in the unitary semantics, the result of Theorem~\ref{thm:unitarity} also applies to terms $\lambda x.\vec{t}$ obtained from duplication and therefore having a type that contains the $\S$ symbol, since $\llbracket \S^a \sharp (\B^n)\multimap \S^b \sharp (\B^k)\rrbracket_ \emptyset= \llbracket \sharp (\B^n)\multimap \sharp (\B^k)\rrbracket_\emptyset$.

\begin{restatable}[Completeness]{theorem}{thmisocomplete}\label{thm:isocomplete}
For any isometry $\mathcal{I}:\Ct^{2^n}\to\Ct^{2^k}$, with $k\geq n \geq 1$, there exists a closed superposition  $\lambda x.\vec{t}\in\SUP_c$ of type $\sharp(\B^n)\multimap \sharp(\B^k)$ that represents $\mathcal{I}$.
\end{restatable}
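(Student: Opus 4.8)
The plan is to prove this by induction on $n$, exploiting the fact that an isometry $\mathcal{I}$ is completely determined by the images $\mathcal{I}\ket{i}$ of the computational basis states $\ket{0},\dots,\ket{2^n-1}$, which — because $\mathcal I$ preserves the inner product — form a family of pairwise orthogonal unit vectors in $\Ct^{2^k}$. The $\punq$ term I would build is a tower of nested conditionals that reads the $n$ input bits one at a time and, at each leaf, outputs a term denoting the appropriate column of $\mathcal I$; crucially, the orthogonality side conditions of rule $(\mathsf{if}_\sharp)$ (and of $(\sharp_i)$ at the leaves) will be satisfied precisely because the columns of an isometry are orthonormal. Note that this construction silently uses the standing assumption that the entries of $\mathcal I$ lie in $\Ct$, so that the amplitudes occurring in the leaves are admissible in rule $(\sharp_i)$.

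For the base case $n=1$, write $\mathcal I\ket{0}=\sum_{j=0}^{2^k-1}\beta_{0j}\cdot\ket{j}$ and $\mathcal I\ket{1}=\sum_{j}\beta_{1j}\cdot\ket{j}$, where $\ket{j}$ denotes the $k$-tuple of bits encoding $j$. Since each column has unit norm and the $\ket{j}$ are pairwise distinct basis values (hence pairwise orthogonal), rule $(\sharp_i)$ gives $;\vdash\mathcal I\ket{0}:\sharp(\B^k)$ and $;\vdash\mathcal I\ket{1}:\sharp(\B^k)$, and since the two columns are orthogonal, $\mathcal I\ket{0}\perp\mathcal I\ket{1}$. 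Then rule $(\mathsf{if}_\sharp)$, followed by subtyping ($\sharp\sharp Q\leq\sharp Q$) and $(\multimap_i)$, yields $;\vdash\lambda x.\tif{x}{\mathcal I\ket{0}}{\mathcal I\ket{1}}:\sharp(\B^1)\multimap\sharp(\B^k)$, and one checks directly from $(\mathrm{Abs})$, $(\mathrm{If}_0)$, $(\mathrm{If}_1)$ that this term reduces $\ket{b}$ to $\mathcal I\ket{b}$.

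For the inductive step $n\geq 2$, let $\mathcal I_0,\mathcal I_1:\Ct^{2^{n-1}}\to\Ct^{2^k}$ be defined by $\mathcal I_b\ket{i'}\triangleq\mathcal I\ket{bi'}$ for $b\in\{0,1\}$, $i'\in\{0,1\}^{n-1}$; these are isometries, being $\mathcal I$ precomposed with an isometric embedding of a subspace, and since $k\geq n>n-1\geq 1$ the induction hypothesis gives closed terms $f_0,f_1$ of type $\sharp(\B^{n-1})\multimap\sharp(\B^k)$ representing $\mathcal I_0,\mathcal I_1$. I would set $\vec t\triangleq\lambda z.\tlet{x}{y}{z}{\tif{x}{f_0\ y}{f_1\ y}}$. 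Rule $(\times_{e\sharp})$ splits $z:\sharp(\B^n)=\sharp(\B\times\B^{n-1})$ into $x:\sharp\B$ and $y:\sharp(\B^{n-1})$, rule $(\multimap_e)$ types $f_b\ y:\sharp(\B^k)$ sharing the linear variable $y$ in both branches, and rule $(\mathsf{if}_\sharp)$ applies once I verify $f_0\ y\perp^{;\,y:\sharp(\B^{n-1})}f_1\ y$: the closed basis values of type $\sharp(\B^{n-1})$ are exactly the bit-tuples $\ket{i'}$, and $f_b\ \ket{i'}\rightsquigarrow^\ast\mathcal I_b\ket{i'}=\mathcal I\ket{bi'}$, so this reduces to checking $\mathcal I\ket{0i'}\perp\mathcal I\ket{1i'}$ for every $i'$, which holds since $0i'\neq 1i'$ and $\mathcal I$ is an isometry. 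After subtyping and $(\multimap_i)$ we get $;\vdash\vec t:\sharp(\B^n)\multimap\sharp(\B^k)$, and, writing $\ket{i}=\pair{\ket{b}}{\ket{i'}}$, rules $(\mathrm{Abs})$, $(\mathrm{Let})$, $(\mathrm{If}_b)$ reduce $\vec t\ \ket{i}$ to $f_b\ \ket{i'}$, which by the induction hypothesis reduces to $\mathcal I_b\ket{i'}=\mathcal I\ket{i}$; by linearity of application (the syntactic sugar of Figure~\ref{fig:syntaxsugar}) and determinism/confluence of $\rightsquigarrow$ with normal forms unique modulo $\equiv$, this establishes that $\vec t$ \emph{represents} $\mathcal I$.

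I expect the main obstacle to be bookkeeping rather than conceptual: one must (i) phrase the induction so that the restricted isometries $\mathcal I_0,\mathcal I_1$ really are isometries to which the hypothesis applies, (ii) track the exponential and linear contexts carefully through $(\mathsf{if}_\sharp)$, $(\times_{e\sharp})$ and the subtyping steps — in particular justifying that both branches legitimately share the linear variable $y$ — and (iii) unfold the typed orthogonality predicate $\perp^{\Gamma;\Delta}$ down to its finitely many computational-basis substitutions, where it becomes exactly the statement that distinct columns of $\mathcal I$ are orthogonal. The only genuine hypothesis being used beyond isometry is that the entries of $\mathcal I$ belong to $\Ct$, which is what makes $(\sharp_i)$ available at the leaves; this should be recorded explicitly.
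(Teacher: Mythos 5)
Your proof is correct, and it reaches the same destination as the paper's by a genuinely different path. The paper's proof is a single flat construction: it first destructs all $n$ bits of the input with a chain of $\mathsf{let}$s into variables $x_1,\dots,x_n$, then builds one tower of nested $\mathsf{if}$-statements over $x_1,\dots,x_n$ with the $2^n$ column superpositions $\vec v_i = \sum_j \alpha^i_j\cdot\ket{j}$ sitting directly at the leaves, and simply observes that $(\lambda x.\vec t)\,\ket i\rightsquigarrow^\ast\vec v_i$. You instead build the term by induction on $n$: split off one bit, recursively obtain encoders $f_0,f_1$ for the restricted isometries $\mathcal I_0,\mathcal I_1$, and glue them with a single $\tif x{f_0\,y}{f_1\,y}$ under one $\mathsf{let}$. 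The constructions are distinct — yours composes applications to a shared linear variable at every level, the paper's places closed superposition literals at the leaves of one big term — but the orthogonality side conditions of $(\mathsf{if}_\sharp)$ are discharged in both cases by the same fact, namely that distinct columns of $\mathcal I$ are orthogonal, and in both cases the intermediate conditionals carry free variables so the predicate $\perp^{\Gamma;\Delta}$ must be unfolded over basis-value substitutions, which you make more explicit than the paper does. What the paper's flat construction buys is avoiding an induction and a recursive invocation of ``represents''; what your inductive construction buys is modularity, a cleaner base case, and an explicit confirmation that the restricted maps $\mathcal I_b$ remain isometries so the hypothesis really is preserved. Both proofs wave at the same final step — passing from correctness on computational-basis inputs to correctness on all $\vec v\in\llbracket\sharp(\B^n)\rrbracket_\emptyset$ via the linearity of the call-by-value sugar and uniqueness of normal forms — and both rely on the standing restriction to $\Ct$ for the admissibility of the amplitudes in $(\sharp_i)$, which you are right to flag.
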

\begin{proof}
Constructive proof given in Appendix~\ref{app:unitarity}.
\end{proof}

On top of satisfying unitarity, $\punq{}$ also satisfies another natural property of quantum computation, which is that qubits may occupy entangled states and that there is no operation that disentangles any general state.

\begin{restatable}[Non-separability]{theorem}{thmnonseparability}\label{thm:non-separability}
%\begin{theorem}[Non-separability]\label{thm:non-separability}
  For any $n,k\in\mathbb{N}-\{0\}$, the type $\sharp (\B^{n+k})\multimap
  \big(\sharp(\B^{n})\times \sharp(\B^{k})\big)$ is uninhabited. 
\end{restatable}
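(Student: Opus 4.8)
The plan is to argue by contradiction: assume there is a closed superposition $\lambda x.\vec t \in \llbracket \sharp(\B^{n+k}) \multimap (\sharp(\B^n) \times \sharp(\B^k)) \rrbracket_\emptyset$, and derive a violation of the inner-product-preservation property that the first component of the pair must satisfy. First I would analyze the structure of the output type. Since $\lambda x.\vec t$ represents a linear map on the inputs (by the linearity of the calculus and the semantics of $\multimap$), applying it to each computational basis state $\ket{i}$, for $i = 0,\dots,2^{n+k}-1$, yields a value $\vec v_i \in \llbracket \sharp(\B^n)\times\sharp(\B^k)\rrbracket_\emptyset$, i.e. $\vec t[\ket{i}/x]\rightsquigarrow^* \pair{\vec a_i}{\vec b_i}$ with $\vec a_i \in \llbracket \sharp(\B^n)\rrbracket_\emptyset$ and $\vec b_i \in \llbracket\sharp(\B^k)\rrbracket_\emptyset$ (this should follow from Progress, Subject Reduction, and an inversion argument on the shape of values of pair type; the key point is that a value of type $\sharp Q \times \sharp R$ reduces to an honest pair of values, not to a superposition of pairs, because the pair constructor is a basis value and $\times$ is not under a $\sharp$). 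The crucial structural fact is that each $\vec a_i$ lies in $\llbracket\sharp(\B^n)\rrbracket_\emptyset$, hence has unit norm.

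Next I would exploit a dimension-counting argument in the style of the $k<n$ case of Theorem~\ref{thm:unitarity}. Consider the map $i \mapsto \vec a_i$ into $\Ct^{2^n}$. Since $2^{n+k} > 2^n$, the family $\{\lmap{\vec a_i}\}_{i}$ is linearly dependent, so there exist coefficients $\beta_i \in \Ct$, not all zero, with $\sum_i \beta_i \cdot \lmap{\vec a_i} = 0$. Normalizing, set $\vec s \triangleq \sum_i \gamma_i \cdot \ket{i}$ with $\gamma_i = \beta_i / \sqrt{\sum_j|\beta_j|^2}$, so $\vec s \in \llbracket\sharp(\B^{n+k})\rrbracket_\emptyset$. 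By linearity, $(\lambda x.\vec t)\,\vec s \rightsquigarrow^* \pair{\sum_i \gamma_i\cdot \vec a_i}{\sum_i\gamma_i\cdot\vec b_i}$, modulo the distributivity of the pair constructor over sums (using the syntactic sugar of Figure~\ref{fig:syntaxsugar}). But the first component has $\lmap{\cdot}$ equal to $\sum_i \gamma_i \cdot \lmap{\vec a_i} = 0$, i.e. the first component is $\equiv \vec 0$, which is not in $\llbracket\sharp(\B^n)\rrbracket_\emptyset$ (it fails the unit-norm condition). Hence $(\lambda x.\vec t)\,\vec s \not\Vdash \llbracket\sharp(\B^n)\times\sharp(\B^k)\rrbracket_\emptyset$, contradicting $\lambda x.\vec t \in \llbracket \sharp(\B^{n+k})\multimap(\sharp(\B^n)\times\sharp(\B^k))\rrbracket_\emptyset$.

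The main obstacle I anticipate is the bookkeeping needed to justify that applying the linear abstraction to a basis input produces a genuine pair $\pair{\vec a_i}{\vec b_i}$ with each component separately in the unitary semantics of its $\sharp$-type — in other words, an inversion lemma characterizing $\llbracket\sharp(\B^n)\times\sharp(\B^k)\rrbracket_\emptyset$ as literally $\{\pair{\vec a}{\vec b} : \vec a \in \llbracket\sharp(\B^n)\rrbracket_\emptyset,\ \vec b\in\llbracket\sharp(\B^k)\rrbracket_\emptyset\}$ up to $\equiv$, and the fact that the normal form of a value of type $\sharp(\B^n)\times\sharp(\B^k)$ is indeed of that shape rather than a nontrivial superposition of pairs. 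This is essentially built into the definition of $\llbracket A\times B\rrbracket_\tau$ in Figure~\ref{fig:usem} together with Theorem~\ref{thm:validity} and Progress, but making the reduction-commutes-with-pairing step precise (so that $(\lambda x.\vec t)\,\vec s$ really reduces componentwise) requires the same care as the analogous steps in the proof of Lemma~\ref{lemma:unitary-abstractions}; I would either invoke that lemma's machinery directly or reprove the small piece needed. Everything else is the standard linear-dependence-forces-a-zero argument already used in Theorem~\ref{thm:unitarity}.
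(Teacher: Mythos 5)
There is a genuine gap in the dimension-counting step. Applying the abstraction to $\vec s=\sum_i\gamma_i\cdot\ket{i}$ yields, by linearity, the superposition of pairs $\sum_i\gamma_i\cdot\pair{\vec a_i}{\vec b_i}$ --- \emph{not} the pair of superpositions $\pair{\sum_i\gamma_i\cdot\vec a_i}{\sum_i\gamma_i\cdot\vec b_i}$. The syntactic sugar of Figure~\ref{fig:syntaxsugar} only unfolds a pair of sums into the full ``Cartesian'' double sum $\sum_{i,j}\gamma_i\gamma_j\cdot\pair{\vec a_i}{\vec b_j}$; it gives you no way to fold the diagonal sum $\sum_i\gamma_i\cdot\pair{\vec a_i}{\vec b_i}$ back into a single pair. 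So there is no ``first component'' to evaluate, and the vanishing of $\sum_i\gamma_i\lmap{\vec a_i}$ says nothing about whether the output is $\equiv$ to a genuine pair $\pair{\vec a}{\vec b}$ with $\vec a\in\llbracket\sharp(\B^n)\rrbracket_\emptyset$, $\vec b\in\llbracket\sharp(\B^k)\rrbracket_\emptyset$.

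Worse, even as corrected the strategy does not produce a contradiction: making $\sum_i\gamma_i\lmap{\vec a_i}=0$ does not force the output out of $\llbracket\sharp(\B^n)\times\sharp(\B^k)\rrbracket_\emptyset$. Take the two-qubit unitary $\ket{00}\mapsto\ket{00}$, $\ket{01}\mapsto\ket{10}$, $\ket{10}\mapsto\ket{01}$, $\ket{11}\mapsto-\ket{11}$; the kernel of $\sum_i\gamma_i\lmap{\vec a_i}$ consists exactly of the $(\gamma_i)$ of the form $(c_1,c_2,-c_1,c_2)$, and for every such choice the output $\sum_i\gamma_i\cdot\pair{\vec a_i}{\vec b_i}$ is (up to the sugar) $\pair{c_1\ket{0}+c_2\ket{1}}{\ket{0}-\ket{1}}$, which is separable and (after renormalizing each component) lies in $\llbracket\sharp\B\times\sharp\B\rrbracket_\emptyset$. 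The entangled outputs that actually witness non-inhabitability for this map arise from coefficient vectors \emph{outside} your kernel (e.g.\ $\gamma_0=\gamma_3=\tfrac{1}{\sqrt 2}$). The missing idea is entanglement rather than cancellation: the paper's proof observes that, via $\sharp Q\times\sharp R\leq\sharp({!}(Q)\times{!}(R))$, any inhabitant would also inhabit $\sharp(\B^{n+k})\multimap\sharp(\B^{n+k})$, hence by Theorem~\ref{thm:unitarity} represents a unitary, hence its images form an orthonormal basis and every state of $\llbracket\sharp(\B^{n+k})\rrbracket_\emptyset$ --- including an entangled one --- is reached; but an entangled state cannot be $\equiv$ to a literal pair, which is what $\llbracket\sharp(\B^n)\times\sharp(\B^k)\rrbracket_\emptyset$ requires.
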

%\end{theorem}
\begin{proof}
Proof in Appendix~\ref{app:unitarity}.
\end{proof}

\section{Polytime normalization}
\label{s:ptstrongnormalization}
In this section, we show that \punq{} preserves the polytime strong
normalization properties of $\dlal$~\cite{BT09}, when adapted to the context of quantum computation. For that purpose, we define an
encoding from superpositions in $\punq$ to sets of $\dlal$ terms and show that the polynomial time strong normalization of each of the terms in this set implies the polytime normalization of the initial $\punq$ program.
All the proofs omitted in this section are fully developed in
Appendix~\ref{app:ptstrongnormalization}.

\subsection{Dual Light Affine Logic}
$\dlal$ terms~\cite{BT09} are a strict subset of System $F$~\cite{G72,R74} that can be typed
with respect to the rules of Figure~\ref{fig:dlal}.
The set $\mathbb S$ of types in $\dlal$ is the set
produced by the following grammar:
\[
  A := X\mid A \multimap A \mid A \Rightarrow A \mid %A\otimes A\mid 
  \S A \mid\forall X.A,
\] 
and can be viewed as a subset of $\mathbb{T}$.  The notion of typing
environments $\Gamma,\Delta$ are defined in a standard way. For a given set of $\dlal$ terms $\mathcal{S}$, we write $\Gamma;\Delta\vdash_{\dlal} \mathcal{S}:A$ if $\forall t \in \mathcal{S}$, the derivation of $\Gamma;\Delta\vdash t:A$ can be obtained in $\dlal$.
\begin{figure}[t]
  \hrulefill
  \[
    \begin{prooftree}
      \hypo{\Gamma;\Delta \vdash {t}:A}
      \infer1[(Weak)]{\Gamma,\Gamma';\Delta,\Delta'\vdash {t}:A}
    \end{prooftree}
    \qquad\quad
    \begin{prooftree}
      \hypo{\Gamma,x:B,y:B;\Delta\vdash {t}:A}
      \infer1[(Cntr)]{\Gamma,x:B;\Delta \vdash {t}[x/y]:A}
    \end{prooftree}
    \qquad\quad
    \begin{prooftree}
      \hypo{\phantom{A\leq B}}
      \infer1[(Id)]{;x: A\vdash x: A}
    \end{prooftree}
  \]
  
  \[
    \begin{prooftree}
      \hypo{\Gamma;\Delta,x:A\vdash {t}:B}
      \infer1[($\multimap$ i)]{\Gamma;\Delta \vdash \lambda x.{t}:A\multimap B}
    \end{prooftree}
    \qquad\quad
    \begin{prooftree}
      \hypo{\Gamma;\Delta \vdash t: A\multimap B}
      \hypo{\Gamma' ;\Delta' \vdash s:A}
      \infer2[($\multimap$ e)]{\Gamma,\Gamma';\Delta,\Delta' \vdash t \ s:B}
    \end{prooftree} 
  \]
  
  \[
    \begin{prooftree}
      \hypo{\Gamma,x:A;\Delta \vdash \vec t:B}
      \infer1[($\Rightarrow$ i)]{\Gamma;\Delta \vdash{\lambda x.\vec t}: A \Rightarrow B}  
    \end{prooftree} 
    \qquad\quad
    \begin{prooftree}
      \hypo{\Gamma;\Delta \vdash {t}:{A\Rightarrow B}}
      \hypo{;[z:C]\vdash  s:A}
      \infer2[($\Rightarrow$ e)]{\Gamma,[z:C]\ ;\Delta \vdash {t \ s}:B}
    \end{prooftree}
  \]
  
  \[
    \begin{prooftree}
      \hypo{;\Gamma,\Delta\vdash t:A}
      \infer1[($\S$ i)]{\Gamma;\S\Delta\vdash t:\S A}
    \end{prooftree}
    \qquad\quad
    \begin{prooftree}
      \hypo{\Gamma;\Delta \vdash s:\S B}
      \hypo{\Gamma';\Delta',x:\S B\vdash t:A}
      \infer2[($\S$ e)]{\Gamma,\Gamma';\Delta,\Delta'\vdash t[s/x]:A}
    \end{prooftree}
  \]
  
  \[
    \begin{prooftree}
      \hypo{\Gamma;\Delta\vdash t:A}
      \hypo{X \notin FV(\Gamma,\Delta)}
      \infer2[($\forall$ i)]{\Gamma;\Delta \vdash t:\forall X.A}
    \end{prooftree}
    \qquad\quad
    \begin{prooftree}
      \hypo{\Gamma;\Delta \vdash  t:\forall X.A}
      \infer1[($\forall$ e)]{\Gamma;\Delta \vdash t:A[B/X]}
    \end{prooftree}
  \]
  \hrulefill
  \caption{Dual Light Affine Logic ($\dlal$).}
  \label{fig:dlal}
\end{figure}

The size of a \dlal{} term $t$, denoted $|t|$, is defined as the number of its
symbols, and the size of a set of terms $\mathcal{S}$, noted $|\mathcal{S}|$,
is defined as the biggest size of a term $t \in \mathcal{S}$, i.e.,
$|\mathcal{S} | \triangleq \max\{|t| \mid  t \in \mathcal{S}\}$. The reason for this choice is that superpositions in \punq{} will be encoded as sets of terms in \dlal{}, which in a sense will encode the set of base terms present in the superposition. Since in \punq{} the size of a term is the maximum size appearing in a superposition, this choice ensures that we capture the correct notion of polynomial size of the final term.

The following theorem implies both a polynomial bound on the reduction length and a polynomial bound on the size of each term obtained during the reduction in the \dlal{} system.

%We now rephrase the polynomial time strong normalization theorem~\cite[Theorem 8]{BT09}.

\begin{theorem}[Polynomial time strong normalization~\citewiththeorem{Thm.~8}{BT09}]\label{thm:BT}
\label{thm:dlal-polytime-strong}
If $\Gamma;\Delta\vdash_{\dlal} {t}:A$, then there exists a constant $d$,
referred to as the \emph{depth of $t$}, such that $t$ reduces to its normal form in at
most $O(|t|^{2^d})$ reduction steps and within a time complexity of
$O(|t|^{2^{d+2}})$ on a deterministic Turing machine. This result holds
independently of the chosen reduction strategy. 
\end{theorem}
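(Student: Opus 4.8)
The plan is to recover the statement from the stratification machinery that underlies all the light logics, specialised to the dual presentation of $\dlal$. (Since this is quoted as Theorem~8 of~\cite{BT09}, I only sketch the argument.) First I would fix, for a $\dlal$ derivation of $\Gamma;\Delta \vdash_{\dlal} t:A$, the notion of \emph{box depth}: each use of rule $(\S\ i)$ opens a box, and the depth $d$ of $t$ is the maximal nesting of such boxes along any branch of the derivation. A $\beta$/substitution redex in $t$ then inherits a depth, namely the number of boxes surrounding it, and the first thing to prove is the \emph{stratification lemma}: reduction never increases $d$, never creates a redex at strictly smaller depth, and a step fired at depth $i$ only affects the subterm structure from level $i$ downwards. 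This is a routine inspection of the reduction rules, using the invariant enforced by $(\S\ i)$ that the exponential context of a judgment always sits one box-level above its linear context.

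Second, I would introduce a family of weights $W_0(t),\dots,W_d(t)$, where $W_i(t)$ counts the term symbols occurring at box depth exactly $i$ (with a suitable accounting of pending contractions). The \emph{local progress lemma} then says: a reduction step at depth $i$ strictly decreases $W_i$, leaves $W_j$ unchanged for $j<i$, and increases each $W_j$ with $j>i$ only by the size of a single duplicated sub-box. Hence reduction can be organised in $d+1$ rounds, round $i$ exhausting all depth-$i$ redexes in at most $W_i(t)$ steps. The crucial point is the size propagation across rounds: at depth $0$ nothing duplicable is copied, so $W_0(t)=O(|t|)$; when one descends to level $i{+}1$, the contractions performed at level $i$ may copy depth-$(i{+}1)$ boxes, but the restriction of rule $(\Rightarrow\ e)$ to arguments with \emph{at most one} free variable bounds the number of copies of any such box by $W_i$, so $W_{i+1}=O(W_i^2)$. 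This "squaring per level" gives $W_i(t)=O(|t|^{2^i})$ and a total reduction length $\sum_{i\le d} O(|t|^{2^i})=O(|t|^{2^d})$; because the weights bound \emph{every} reduction sequence, not a chosen one, the bound is strategy-independent. For the Turing-machine statement, one represents the term as a string, which stays of length $L=O(|t|^{2^d})$ throughout by the same weight bound; locating a redex and performing one contraction (including the copying caused by substitution) costs $O(L^2)$ on a one-tape machine, and multiplying by the $O(|t|^{2^d})$ steps yields $O(|t|^{3\cdot 2^d})=O(|t|^{2^{d+2}})$.

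The main obstacle is the step just described — pinning down precisely why the per-level blow-up is quadratic rather than merely exponential, since stratification alone only yields an elementary (tower) bound. That is the whole content of $\dlal$ (and of Girard's light logics): it hinges on the interaction of the box discipline with the single-free-variable constraint on $(\Rightarrow\ e)$. An alternative and arguably cleaner route, close to what~\cite{BT09} actually does, is to work with light affine proof nets: translate $\dlal$ derivations into proof nets in a depth-preserving way and run the weight argument there, where duplication of boxes by contraction is syntactically explicit. In that case the obstacle merely shifts to verifying that the translation neither collapses nor inflates box depth and is faithful with respect to cut elimination.
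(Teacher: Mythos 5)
The paper quotes this result verbatim as Theorem~8 of Baillot and Terui~\cite{BT09} and does not prove it, so there is no internal proof to compare against; I am assessing your reconstruction against the literature. Judged on those terms, your sketch has the right skeleton: stratify by $\S$-box depth, assign a weight per level, show reduction proceeds level by level with a quadratic per-level blow-up giving $W_i = O(|t|^{2^i})$ and hence $O(|t|^{2^d})$ total steps, then pay $O(L^2)$ per step on a one-tape machine with $L = O(|t|^{2^d})$, landing within $O(|t|^{3\cdot 2^d}) = O(|t|^{2^{d+2}})$. Your arithmetic for the Turing-machine bound is correct, and you are right that the whole difficulty is isolating \emph{why} the blow-up per level is quadratic rather than exponential, and that the $(\Rightarrow\ e)$ single-free-variable restriction is the DLAL-specific ingredient that makes this work.

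Two points deserve flagging. First, your ``local progress lemma'' is stated too strongly as a per-step claim: firing one $(\Rightarrow\ e)$ or contraction step at depth $i$ can copy a depth-$(i+1)$ box once for each occurrence of the bound (or contracted) variable, so the per-step increase of $W_{i+1}$ is a multiple of the box size, not ``the size of a single duplicated sub-box.'' What is actually invariant is the round-level bound: exhausting all depth-$i$ redexes creates at most $O(W_i)$ copies of each depth-$(i+1)$ box, whence $W_{i+1} = O(W_i^2)$. You do go on to state this recurrence, and it is the correct target, but it is a corollary of the round-level accounting, not of the per-step claim as written, which would fail on a single reduction step with high fan-out. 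Second, the ``alternative route'' via light affine proof nets you mention at the end is not really an alternative — it is essentially what Baillot and Terui do: they translate DLAL derivations into a light-affine term calculus / proof nets with explicit $!$/$\S$ boxes and run the weight argument there, precisely because the copying caused by contraction is syntactically explicit at that level, which is what makes the round-level recurrence provable. So your honest assessment of where the work lies is accurate, but you should merge the two halves of your sketch: the proof-net translation is the tool that makes the weight argument go through, not a substitute for it.
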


Given two sets of \dlal{} terms $\mathcal{S}$ and $\mathcal{T}$, we write $\mathcal{S} \to \mathcal{T}$ if $\mathcal{T} = \{t \mid s \in \mathcal{S} \wedge (s \rightarrow t \vee s\not\rightarrow )\}$, meaning that the set $\mathcal{T}$ is obtained from $\mathcal{S}$ by attempting to apply a single reduction step. Therefore, if the term is already normalized it remains in the set.
Let $\to^n$ be the $n$-fold transitive
closure of $\rightarrow$, i.e., if $\mathcal{S}_0 \to \ldots \to
\mathcal{S}_{n}$ then $\mathcal{S}_0 \to^n \mathcal{S}_n$. We also write
$\mathcal{S}\to^{\leq n} \mathcal{T}$ if there exists $k \leq n$ such that
$\mathcal{S} \to^k \mathcal{T}$. Termination is defined as reaching a point where all terms in the set are in normal form, i.e., $\mathcal{S}\downarrow^n$ if $\mathcal{S}\rightarrow^n \mathcal{T}$ and $\mathcal{T}\rightarrow \mathcal{T}$. Similarly, we write $\mathcal{S}\downarrow^{\leq n}$ for termination in at most $n$ steps.

%\Ecom{Is this well written? Given S, $P(|S|)$ is a constant...}
\begin{corollary}\label{coro:sound}
For any type $A\in\mathbb{T}$, there is a polynomial $P_A
\in \mathbb{N}[X]$ such that, if $\Gamma;\Delta\vdash_{\dlal} \mathcal{S}:A$, then $\mathcal{S} \downarrow^{\leq P_A(|\mathcal{S}|)}$ and, for each $\mathcal{T}$ satisfying $\mathcal{S} \to^n
\mathcal{T}$ for some $n\in\mathbb{N}$, $|\mathcal{T}| \leq  P_A(|\mathcal{S}|).$
\end{corollary}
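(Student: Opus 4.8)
## Proof proposal

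The plan is to derive Corollary~\ref{coro:sound} directly from Theorem~\ref{thm:BT} by lifting the single-term bounds to the set-level reduction $\to$ on sets of $\dlal$ terms. First I would observe that, since $\Gamma;\Delta\vdash_{\dlal}\mathcal S:A$, every term $t\in\mathcal S$ satisfies $\Gamma;\Delta\vdash_{\dlal} t:A$. By Theorem~\ref{thm:BT}, each such $t$ has a depth $d_t$, and crucially the depth of a term in $\dlal$ depends only on the structure of its typing derivation, hence can be bounded by a quantity depending only on the type $A$ (this is where I would be careful: I would argue that the depth is bounded uniformly over all terms typable with a fixed type $A$ — if the paper's notion of depth does not give this for free, one takes $d_A$ to be the maximal depth occurring over the finitely many possible derivations up to the relevant invariants, or more safely bounds $d$ by a function of $|A|$ as is standard for stratified systems). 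Set $d \triangleq d_A$, so every $t\in\mathcal S$ normalizes in at most $O(|t|^{2^d}) \leq O(|\mathcal S|^{2^d})$ steps, and every reduct of $t$ has size bounded by the same quantity, since Theorem~\ref{thm:BT} bounds the time complexity by $O(|t|^{2^{d+2}})$ and size is bounded by time.

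Next I would define $P_A(X) \triangleq c\cdot X^{2^{d_A+2}}$ for a suitable constant $c$ absorbing the $O(\cdot)$ constants (taking the max of the step-count constant and the size constant). The key step is then to transfer these per-term bounds to the set relation $\to$. Recall $\mathcal S \to \mathcal T$ means $\mathcal T = \{t \mid s\in\mathcal S \wedge (s\rightarrow t \vee s\not\rightarrow)\}$: a single step reduces each non-normal term by one step and keeps each normal term fixed. So if $\mathcal S = \mathcal S_0 \to \mathcal S_1 \to \cdots \to \mathcal S_n \to \cdots$, I would show by induction that every term appearing in any $\mathcal S_n$ is a reduct (in the ordinary $\dlal$ sense) of some term in $\mathcal S_0$. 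Hence: (i) every term in every $\mathcal S_n$ has size $\leq P_A(|\mathcal S_0|)$, giving the size bound $|\mathcal T|\leq P_A(|\mathcal S|)$ for all reachable $\mathcal T$; and (ii) each original term $s\in\mathcal S_0$ reaches its normal form within $O(|s|^{2^d}) \leq P_A(|\mathcal S|)$ applications of $\to$, and once a term is normal it stays normal along $\to$. Taking the maximum of these termination times over the (finitely many) terms in $\mathcal S_0$ — which is still bounded by $P_A(|\mathcal S|)$ — we get that after at most $P_A(|\mathcal S|)$ steps all terms in the set are simultaneously normal, i.e. $\mathcal S\downarrow^{\leq P_A(|\mathcal S|)}$.

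The main obstacle I anticipate is the uniformity of the depth constant $d$ across the set $\mathcal S$: Theorem~\ref{thm:BT} hands us a depth $d$ per term, not a single depth for all terms of a given type. I would resolve this by noting that in stratified light-logic systems the depth of a term is controlled by the nesting of $\S$-modalities and $\multimap$/$\Rightarrow$ in its typing derivation, which in turn is governed by the type $A$ of the conclusion together with the (fixed) contexts $\Gamma,\Delta$ — this is precisely the sense in which $\dlal$ "characterizes FPTIME" uniformly. So a single $d_A$ (depending also on $\Gamma,\Delta$, which are fixed in the statement) works for every $t\in\mathcal S$. The remaining steps — defining $P_A$, the induction that every reachable term is an ordinary reduct of an original term, and the bookkeeping that normal terms are $\to$-fixed — are routine. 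One minor point to state carefully: $\mathcal S$ must be assumed finite (or at least finitely represented) for "the maximum termination time over $\mathcal S$" to make sense; this holds in the intended application where $\mathcal S$ is the finite set of base terms of a $\punq$ superposition.
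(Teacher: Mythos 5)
Your proof is correct and takes essentially the same route the paper intends: the paper gives no explicit argument but remarks after the statement that $P_A$ ``only depends on $A$ as the depth of a term is a function of its type,'' which is precisely the uniformity issue you identify and resolve the same way. The rest---extracting $P_A$ from the per-term bounds of Theorem~\ref{thm:BT} and lifting them to the set relation $\to$ by observing that every element of a reachable $\mathcal{T}$ is an ordinary $\dlal$ reduct of some element of $\mathcal{S}$, and that normal terms are $\to$-fixed---is the routine bookkeeping the paper leaves implicit, and your caveat that $\mathcal{S}$ must be finite for the max to exist is a fair observation about the setup.
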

Notice that the polynomial $P_A$ can be explicitly computed and
only depends on $A$ as the depth of a term is a function of its type
(see~\cite{BT09}). Furthermore, Corollary~\ref{coro:sound} applies for any reduction strategy, so we may fix the strategy to be call-by-value for our purposes.

%For a set of terms $\mathcal{S}$, we define the \textit{single-transition set} $\mathcal{S}_{\rightarrow}$ of $\mathcal{S}$ as follows:
%\[\mathcal{S}_{\rightarrow}\triangleq \{r\mid t\in\mathcal{S},\,t\rightarrow  r\}.\]

\subsection{Encoding of \texorpdfstring{$\punq$}{PUNQ} in \texorpdfstring{$\dlal$}{DLAL}}

In order to prove \punq{}'s polynomial time normalization, we will give a translation from \punq{} terms into sets of DLAL terms. We will then show that, for the \punq{} terms of interest (i.e. those that are typable), their reduction length in \punq{} is upperbounded by the maximum reduction length in its translation. Given that this encoding will only increase polynomially the size of the term, this will provide a polynomial upper bound in the reduction of \punq{} terms.

We extend the \dlal{} syntax with the term $\ast$, which will represent the encoding of $\vec{0}$, and the following typing rule ensuring that $\ast$ has any type:
\[\begin{prooftree}
\infer0[$(\ast)$]{;\vdash \ast:\forall X.X}
\end{prooftree}\]

We will call this extension of the \dlal{} syntax and typing system $\dlal_\ast$. Since $\ast$ does not reduce in any way, the polytime termination result in Theorem~\ref{thm:dlal-polytime-strong} and its corollary also apply to $\dlal_\ast$.

Given two sets of terms $\mathcal{S},\mathcal{T}$, let
$\mathcal{S}\ \mathcal{T}$ be syntactic sugar for the set $\{s\ t \mid s \in
\mathcal{S} \wedge t \in \mathcal{T}\}$.  We now define formally the map
$\lpar{\cdot}: \punq{}\to \mathcal{P}(\text{$\dlal$}_\ast)$ in Figure~\ref{fig:mapenc} and the type encoding $(\cdot)^\star: \mathbb{T}\to \mathbb{S} $ in Figure~\ref{fig:typenc}.
This encoding is extended to contexts as follows: $\Gamma^\star \triangleq
\{x_i:(A_i)^\star \mid x_i:A_i\in\Gamma\}$.

Such an encoding does not require that the set of \dlal{} terms perfectly simulate the corresponding \punq{} term, but it must contain enough of the structure of the original term to provide a valid upper bound in its number of reduction steps. As such, the type encoding $(\cdot)^\star$ in Figure~\ref{fig:typenc} can be seen as a ``dequantization'' of the type (i.e. removing quantum superposition) combined with translation into \dlal{} types that correspond to the basic \punq{} types, such as the encoding of pairs and booleans.

The term translation $\lpar{\cdot}$ from $\punq$ terms to sets of $\dlal_\ast$ terms is standard, with a few exceptions. In \dlal{}, a term encoding the conditional ``$\tif t{t_1}{t_2}$'' would be typed multiplicatively, requiring disjoint contexts for $t_1$ and $t_2$. However, for the purpose of guaranteeing unitarity, the conditional is typed additively in \punq{} (i.e., the two contexts are the same).
To handle this difference, we rewrite the conditional such that the branches have no open linear variables, which allows for a correct typing derivation in $\dlal_\ast$. This transformation in general requires that we use information on the derivation of the term, but this is handled implicitly in the translation. For some natural number $k$, we define the notation $\lambda x_{[1..k]}.\vec{t}\triangleq \lambda x_1.\lambda x_2.\dots \lambda x_k.\vec{t}$. Similarly, when $\vec{t}$ contains $k$ instances of a free variable $x$, we denote by $\vec{t}_{[x:1..k]}$ the term $\vec{t}$ where each occurrence of $x$ is replaced by $x_i$, for $i=1,\dots,k$. For instance, for $\vec{t}=\lambda y.x(x\ y)$, we have $\vec{t}_{[x:1..k]}=\lambda y.x_1(x_2\  y)$. 

Also of note is the translation of the abstraction ``$\lambda x.\vec{t}$'', where we perform a \textit{linearization} of the lambda abstraction, separating all the repeated instances of $x$ with separate variables, labeled $x_i$. Likewise, in the application case ``$t_1\ t_2$'', we repeat the encoding of $t_2$ the appropriate number of times. This is encoded in a function $\eta:\punq{}\to\{0,1,\dots\}$, encoding the number of times the first input is repeated. This ensures that generated set of \dlal{} terms will account for all possible branches contained in the original \punq{} term (see Example~\ref{ex:2had}). For instance, $\eta(\lambda x.x)=1$ since $x$ is not duplicated, and $\eta((\lambda x.x)(\lambda f.\lambda y.f\ f\ y))= 2$ since once the term reduces to the value $\lambda f.\lambda y.f\ f\ y$, the input $f$ appears twice in the term.

Let us now look at a few properties of the type encoding $(\cdot)^\star$. A first property we may notice is that the type encoding does not change if we first transform a type into its clonable version, i.e. when we apply the ! function.

\begin{restatable}[]{lemma}{lemtranslationbang}\label{lem:translation-bang}
For any $A\in\mathbb{T}$, we have that $(!(A))^\star=A^\star.$
\end{restatable}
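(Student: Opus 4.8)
The plan is to proceed by structural induction on the type $A \in \mathbb{T}$, following the definition of the type encoding $(\cdot)^\star$ in Figure~\ref{fig:typenc} and the bang function $!$ in Definition~\ref{def:bang}. The key observation is that $!$ and $(\cdot)^\star$ commute because both operations ``forget'' quantum structure: $!$ strips $\sharp$ modalities while leaving arrows, $\S$, pairs, $\B$, and type variables structurally unchanged, and $(\cdot)^\star$ is expected to collapse $\sharp Q$ to the encoding of $Q$ (a dequantization) while being a straightforward homomorphism on the other constructors. So in each case the two sides should reduce to the same $\dlal$ type after unfolding one layer and applying the induction hypothesis.

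Concretely, I would split into the cases of the grammar for $\mathbb{T}$. For a type variable $X$: $!(X) = X$ by Definition~\ref{def:bang}, so $(!(X))^\star = X^\star = X^\star$, trivially. For $\B$: $!(\B) = \B$, so both sides equal $\B^\star$. For $A \multimap B$: $!(A \multimap B) = A \multimap B$, so $(!(A \multimap B))^\star = (A \multimap B)^\star$ directly, with no appeal to the induction hypothesis needed. Similarly for $A \Rightarrow B$: $!(A \Rightarrow B) = A \Rightarrow B$, so the equation is immediate. For $A \times B$: $!(A \times B) = !(A) \times !(B)$, so $(!(A \times B))^\star = (!(A) \times !(B))^\star = (!(A))^\star \,\overline{\times}\, (!(B))^\star$ where $\overline{\times}$ denotes whatever $(\cdot)^\star$ does to pairs; by the induction hypothesis this equals $A^\star \,\overline{\times}\, B^\star = (A \times B)^\star$. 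For $\S A$: $!(\S A) = \S\, !(A)$, so $(!(\S A))^\star = (\S\, !(A))^\star = \S (!(A))^\star = \S A^\star = (\S A)^\star$ by the induction hypothesis and the fact that $(\cdot)^\star$ commutes with $\S$. For $\forall X.A$: $!(\forall X.A) = \forall X.\,!(A)$, and the argument is analogous. The crucial case is $\sharp Q$ (which only arises when $A$ is a ground type): here $!(\sharp Q) = !(Q)$ by Definition~\ref{def:bang}, so $(!(\sharp Q))^\star = (!(Q))^\star = Q^\star$ by the induction hypothesis, and we must check that $(\sharp Q)^\star = Q^\star$ as well, i.e.\ that the type encoding indeed erases the $\sharp$ modality. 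This is exactly the ``dequantization'' behaviour of $(\cdot)^\star$, so it should hold by definition.

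The main obstacle is simply aligning the argument with the precise definition of $(\cdot)^\star$ in Figure~\ref{fig:typenc}, which is not spelled out in the excerpt: one must verify that $(\cdot)^\star$ is a homomorphism with respect to $\times$, $\S$, and $\forall$, and that it sends both $Q$ and $\sharp Q$ (for ground $Q$) to the same $\dlal$ type. Provided the type encoding collapses $\sharp$ as intended, no genuine difficulty remains; the proof is a routine induction. The only subtlety to watch is that $!$ applied to a ground type stays a ground type, so the nested $\sharp$ case ($\sharp\sharp Q$, etc.) is handled uniformly by the same clause without needing a separate argument.
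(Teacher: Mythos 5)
Your proof is correct and takes exactly the same approach as the paper, which gives only the one-line justification ``by induction on the structure of $A$''; you simply spell out the case analysis, and all eight cases check out against Definition~\ref{def:bang} and Figure~\ref{fig:typenc} (in particular, $(\sharp Q)^\star = Q^\star$ does hold by definition, so the $\sharp$ case closes as you anticipated).
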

\begin{proof}
By induction on the structure of $A$.
\end{proof}

Similarly, a useful fact is that the encoding is preserved by \punq{} subtyping.

\begin{restatable}[]{lemma}{lemtranslationsubtype}\label{lem:translation-subtype}
For any $A,B\in\mathbb{T}$, if $A\leq B$ then $A^\star = B^\star$.
\end{restatable}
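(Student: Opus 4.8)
The plan is to proceed by induction on the derivation of $A \leq B$ using the rules in Figure~\ref{fig:subtype}, showing in each case that $A^\star = B^\star$. First I would handle the base cases. For $A \leq A$, the statement is immediate. For $Q \leq \sharp Q$, note that by the definition of $(\cdot)^\star$ in Figure~\ref{fig:typenc} the $\sharp$ modality is erased by the encoding (this is the ``dequantization'' mentioned in the text), so $(\sharp Q)^\star = Q^\star$. The case $\sharp\sharp Q \leq \sharp Q$ is identical: both sides collapse to $Q^\star$. For $Q \leq \sharp\, {!}(Q)$, I would combine the erasure of $\sharp$ with Lemma~\ref{lem:translation-bang}, which gives $({!}(Q))^\star = Q^\star$, so $(\sharp\, {!}(Q))^\star = ({!}(Q))^\star = Q^\star$. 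The case $\sharp \S Q \leq \S \sharp Q$ again follows because $\sharp$ is erased on both sides, leaving $(\S Q)^\star = \S(Q^\star)$ on each side (using that $(\cdot)^\star$ commutes with $\S$).

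For the inductive cases, each congruence rule reduces directly to the induction hypothesis. For $A \multimap B \leq A' \multimap B'$ with premises $A' \leq A$ and $B \leq B'$, the IH gives $(A')^\star = A^\star$ and $B^\star = (B')^\star$, and since $(C \multimap D)^\star = C^\star \multimap D^\star$ we get $(A \multimap B)^\star = A^\star \multimap B^\star = (A')^\star \multimap (B')^\star = (A' \multimap B')^\star$. The cases for $\Rightarrow$, $\times$, $\S$, and $\forall X$ are entirely analogous, using that $(\cdot)^\star$ is defined compositionally on these constructors. The transitivity rule $A \leq C$ from $A \leq B$ and $B \leq C$ is handled by chaining the two equalities obtained from the IH: $A^\star = B^\star = C^\star$.

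I do not anticipate a genuine obstacle here; the lemma is essentially a bookkeeping statement reflecting that $(\cdot)^\star$ forgets exactly the quantum content ($\sharp$) that subtyping manipulates. The one point requiring slight care is making sure that the definition of $(\cdot)^\star$ in Figure~\ref{fig:typenc} does indeed satisfy $(\sharp Q)^\star = Q^\star$ and $(\S A)^\star = \S(A^\star)$ and commutes with the other constructors as I have assumed; granting that (it is the intended ``dequantization'' encoding), and invoking Lemma~\ref{lem:translation-bang} for the one rule involving $!$, the proof is a routine case analysis.
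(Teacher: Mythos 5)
Your proposal is correct and matches the paper's approach; the paper simply summarizes the argument as ``by inspection of the rules in Figure~\ref{fig:subtype}'', and your case-by-case induction is exactly that inspection spelled out, including the one non-trivial base case $Q \leq \sharp\,{!}(Q)$ that requires Lemma~\ref{lem:translation-bang}.
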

\begin{proof}
By inspection of the rules in Figure~\ref{fig:subtype}.
\end{proof}

\begin{figure*}[t]
\hrulefill
\centering
$$ {\lpar{\vec{0}}\triangleq \{\ast\}}  \qquad\qquad {\lpar{x}} \triangleq \{x\}\qquad\qquad 
  {\lpar{\ket{0}}} \triangleq\{ \lambda x.\lambda y.x \} \qquad\qquad
  {\lpar{\ket{1}}} \triangleq \{\lambda x.\lambda y.y\}$$
  \vspace*{-2mm}
 $$ \lpar{\tif{t}{t_1}{t_2}}  \triangleq  \Big(\lpar{t}\ \lambda x_{[1..k]}.\lpar{t_1}\ \lambda x_{[1..k]}.\lpar{t_2}\Big)\ x_1 \dots x_k,$$
 with $\Gamma;x_1:A_1,\dots,x_k:A_k\vdash t_1,t_2:B$
   \vspace*{1mm}
 $${\lpar{\tlet{x}{y}{t_1}{t_2}}} \triangleq {\lpar{t_1}\ (\lambda x.\lambda y.\lpar{t_2})}\qquad \qquad \quad {\lpar{\alpha\cdot \vec t}} \triangleq {\lpar{\vec t} }\textnormal{, if $\alpha\not = 0$, $\{\ast\}$ otherwise}$$
  \vspace*{-1mm} 
$$ {\lpar{\vec{t}_1+\vec{t}_2}} \triangleq {\lpar{\vec{t}_1}\cup \lpar{\vec{t}_2}} \qquad\quad  {\lpar{t_1\ t_2}} \triangleq \big(\dots \big({\lpar{t_1}}\ \lpar{t_2}\big)\ \lpar{t_2}\ \big)\dots \lpar{t_2}\big)\quad \text{($\eta(t_1)$ times)}$$
  \vspace*{-1mm}
$${\lpar{\pair{t_1}{t_2}}} \triangleq {\lambda x. (x\ \lpar{t_1}\ \lpar{t_2}) } \qquad \qquad  \lpar{\lambda x.\vec{t}}\triangleq \lambda x_{[1..k]}.\lpar{\vec{t}_{[x:1.. k]}},\quad k=\eta(\lambda x.\vec{t})$$

\hrulefill
\caption{$\lpar{\cdot}: \punq{}\to \mathcal{P}(\text{$\dlal$}_\ast).$}\label{fig:mapenc}
\end{figure*}
%
%\begin{align*}
%[\ket{0}]&\triangleq\{ \lambda x.\lambda y.x\} & [\ast ]&\triangleq\{\ast\} & [\lambda x.\vec{t}] &\triangleq \{\lambda x.u \mid u\in [\vec{t}]\}\\
%[\ket{1}]&\triangleq \{\lambda x.\lambda y.y\} & [x] &\triangleq \{x\} & [t\ t']&\triangleq \{u\ v \mid  u\in[t],\, v\in[t']\}\\
%[\vec{0}]&\triangleq\{\ast\} & [\alpha\cdot t]&\triangleq [t] & [\vec{t}_1+\vec{t}_2]&\triangleq [\vec{t}_1]\cup [\vec{t}_2]\\
%\\
%\end{align*}
%
%[\pair{t_1}{t_2}]&\triangleq \{u_1 \otimes u_2 \mid u_1 \in [t_1],\,u_2\in [t_2]\}
%
%\[[\tlet{x}{y}{t_1}{t_2}]\triangleq\{\tlet{x}{y}{u_1}{u_2} \in [t_1],\,u_2\in[t_2]\}\]
%
\begin{figure*}[t]
\hrulefill
\begin{align*}
  (X)^\star &\triangleq X &   (\B)^\star &\triangleq\forall X .(X\multimap X\multimap X) \\  
  (A_1\multimap A_2)^\star &\triangleq (A_1)^\star \multimap (A_2)^\star &   (A_1\Rightarrow A_2)^\star &\triangleq (A_1)^\star \Rightarrow (A_2)^\star \\ 
  (A_1\times A_2)^\star &\triangleq \forall X.(((A_1)^\star \multimap (A_2)^\star \multimap X) \multimap X) &
  (\sharp Q)^\star &\triangleq (Q)^\star \\
  (\S A)^\star &\triangleq \S (A)^\star &
  (\forall X.A)^\star &\triangleq\forall X.(A)^\star  
\end{align*}
\hrulefill
\caption{$(\cdot)^\star: \mathbb{T}\to \mathbb{S}.$}\label{fig:typenc}
\end{figure*}

Regarding the term translation $\lpar{\cdot}$, we can check that it is preserved by the equivalence relation $\equiv$ in \punq{}, modulo the presence of the term $\ast$.

\begin{definition} Let $\mathcal{S},\mathcal{T}$ be two sets. Then $\mathcal{S}\subsetast \mathcal{T}$ if $\mathcal{S}\subseteq \mathcal{T}\cup\{\ast\}$. We also define $\mathcal{T}\eqast \mathcal{S}$ if $\mathcal{T}\subsetast \mathcal{S}$ and $\mathcal{S}\subsetast \mathcal{T}$.
\end{definition}

\begin{restatable}[]{lemma}{lemtransequiv}\label{lem:transequiv}
If $\vec{t}\equiv \vec{r}$, then $\lpar{\vec{t}}\eqast\lpar{\vec{r}}$.
\end{restatable}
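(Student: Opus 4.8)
The plan is to induct on the derivation of $\vec t\equiv\vec r$, viewing $\equiv$ as the least relation that is reflexive, symmetric, transitive, compatible with $+$ and with scalar multiplication, and contains every instance of the eight generating equations (commutativity/associativity of $+$, $\vec 0+\vec t\equiv\vec t$, $0\cdot\vec t\equiv\vec 0$, $1\cdot\vec t\equiv\vec t$, $\alpha\cdot(\beta\cdot\vec t)\equiv\alpha\beta\cdot\vec t$, $\alpha\cdot\vec t+\beta\cdot\vec t\equiv(\alpha+\beta)\cdot\vec t$, $\alpha\cdot(\vec t_1+\vec t_2)\equiv\alpha\cdot\vec t_1+\alpha\cdot\vec t_2$). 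The essential simplification is that $\equiv$ acts only on the $+$/scalar skeleton of a superposition, and on that skeleton $\lpar{\cdot}$ is extremely rigid: $\lpar{\vec t_1+\vec t_2}=\lpar{\vec t_1}\cup\lpar{\vec t_2}$, $\lpar{\alpha\cdot\vec t}=\lpar{\vec t}$ for $\alpha\neq 0$, and $\lpar{\vec 0}=\lpar{0\cdot\vec t}=\{\ast\}$. So the proof is essentially an unfolding of these three clauses on both sides of each equation, plus the remark that $\lpar{\cdot}$ is applied only to typable superpositions — all superpositions along an $\equiv$-chain issuing from a typable one are typable by rule $(\equiv)$ of the type system — so it is always defined.

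First I would dispatch the closure rules. Reflexivity is immediate; symmetry and transitivity hold because $\eqast$ is itself symmetric and transitive. Compatibility with $+$ follows from $\lpar{\vec t_1+\vec t_2}=\lpar{\vec t_1}\cup\lpar{\vec t_2}$ together with the monotonicity of $\cup$ for $\eqast$ (if $\mathcal S\eqast\mathcal S'$ and $\mathcal T\eqast\mathcal T'$ then $\mathcal S\cup\mathcal T\eqast\mathcal S'\cup\mathcal T'$), and compatibility with $\alpha\cdot(-)$ holds because $\lpar{\alpha\cdot(-)}$ is either the identity (when $\alpha\neq 0$) or the constant map $\{\ast\}$ (when $\alpha=0$), both of which preserve $\eqast$. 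Then I would check the eight generating equations by direct computation: commutativity and associativity of $+$ yield literal equalities of the translated sets (commutativity and associativity of $\cup$), while $\vec 0+\vec t\equiv\vec t$, $0\cdot\vec t\equiv\vec 0$, $1\cdot\vec t\equiv\vec t$, $\alpha\cdot(\beta\cdot\vec t)\equiv\alpha\beta\cdot\vec t$, and $\alpha\cdot(\vec t_1+\vec t_2)\equiv\alpha\cdot\vec t_1+\alpha\cdot\vec t_2$ each give the two sets up to an added or removed copy of $\{\ast\}$ — which is exactly where $\eqast$ rather than plain equality is needed, and the reason $\ast$ was introduced as the image of $\vec 0$.

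The one equation needing real care is $\alpha\cdot\vec t+\beta\cdot\vec t\equiv(\alpha+\beta)\cdot\vec t$, which I would handle by a case split on which of $\alpha$, $\beta$, $\alpha+\beta$ vanish: if none do, both sides translate to $\lpar{\vec t}$; if exactly one of $\alpha,\beta$ is $0$, the left is $\lpar{\vec t}\cup\{\ast\}$ and the right is $\lpar{\vec t}$; if $\alpha=\beta=0$, both are $\{\ast\}$. I expect the genuinely delicate sub-case to be $\alpha,\beta\neq 0$ with $\alpha+\beta=0$, where the left translates to $\lpar{\vec t}$ and the right to $\lpar{(\alpha+\beta)\cdot\vec t}=\{\ast\}$; this is the main obstacle, and reconciling the two sides is exactly where the restriction of $\lpar{\cdot}$ to $\punq$ (and, concretely, to the $\equiv$-steps that occur inside typed reductions) is used, so that such a cancelling configuration is never reached with $\lpar{\vec t}\not\eqast\{\ast\}$. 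Once that case is pinned down, the rest is mechanical bookkeeping of occurrences of $\ast$; in particular there are no congruence cases for abstraction, application, pairing, or conditionals, since $\equiv$ never rewrites underneath those constructors.
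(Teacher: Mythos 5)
Your plan takes exactly the paper's route: a case analysis on the generating equations of $\equiv$, with $\eqast$ rather than literal equality absorbing the stray $\ast$'s introduced by $\vec 0$ and zero scalars. Your handling of the closure rules (reflexivity, symmetry, transitivity, and compatibility of $\cup$ and of $\lpar{\alpha\cdot(-)}$ with $\eqast$) is correct, and you are right that there are no congruence cases under $\lambda$, application, pairing, or conditionals. You are also \emph{more careful} than the paper's own proof, which checks only commutativity of $+$ and $\vec 0+\vec t\equiv\vec t$ explicitly and then asserts that every remaining generator yields literal equality $\lpar{\vec t}=\lpar{\vec r}$. That assertion does not hold for $\alpha\cdot\vec t+\beta\cdot\vec t\equiv(\alpha+\beta)\cdot\vec t$: when exactly one of $\alpha,\beta$ is zero one gets only $\eqast$, not $=$ (harmless for the lemma, but not what the paper claims); when $\alpha,\beta\neq 0$ and $\alpha+\beta=0$ the two sides translate to $\lpar{\vec t}$ and $\{\ast\}$, which are not $\eqast$ unless $\lpar{\vec t}\subseteq\{\ast\}$. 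You correctly isolate this as the delicate sub-case; the paper's proof simply skips it.

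Where your proposal is incomplete is the claimed resolution. Restricting $\lpar{\cdot}$ to $\punq$ does not rule out the delicate sub-case, because the typing rule $(\equiv)$ makes $\punq$ closed under $\equiv$. Concretely, $\vec s\triangleq\frac{1}{\sqrt 2}\cdot\ket{0}+\frac{1}{\sqrt 2}\cdot\ket{1}+1\cdot\pair{\ket{0}}{\ket{0}}+(-1)\cdot\pair{\ket{0}}{\ket{0}}$ is typable at $\sharp\B$ by $(\equiv)$, yet $\lpar{\vec s}$ contains $\lpar{\pair{\ket{0}}{\ket{0}}}$, which is neither $\ast$ nor in $\lpar{\ket{0}}\cup\lpar{\ket{1}}$, so $\lpar{\vec s}\not\eqast\lpar{\frac{1}{\sqrt 2}\cdot\ket{0}+\frac{1}{\sqrt 2}\cdot\ket{1}}$ even though the two superpositions are $\equiv$-related. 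So the obstacle is genuine and is shared with the paper's own proof; appealing to typability alone does not clear it. Closing the gap requires either a side condition on the lemma (restricting attention to $\equiv$-steps that never cancel a term whose translation is not already present, as is the case for the steps taken toward canonical form during reduction) or a weakening of the conclusion to $\lpar{\vec r}\subsetast\lpar{\vec t}$ when $\vec r$ is the canonical form of $\vec t$, which is all the (Equ) case of Lemma~\ref{lem:transition-dlal} actually uses.
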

%\end{lemma}
\begin{proof}
By inspection of the equivalence relation. The full proof is given in
  Appendix~\ref{app:ptstrongnormalization}.
\end{proof}

We now turn to the normalization behavior of each term in the set corresponding to a given translation. First, we may notice that every term in a translation has the same type, which corresponds to the type given by the $(\cdot)^\star$. Notice also that, though not explicitly stated, the notion of depth of a typing derivation~\cite{BT09} is asymptotically preserved by the translation.

%Notice that $(\Gamma,\Delta)^\star = (\Gamma^\star,\Delta^\star)$.

%\begin{lemma}
\begin{restatable}[]{lemma}{lemdlaltranslation}\label{lem:dlaltranslation}
$\Gamma;\Delta\vdash \vec t: A$ implies $\Gamma^\star;\Delta^\star \vdash_{\dlal} \lpar{\vec t}:A^\star$. %Furthermore, if the derivation of $t$ in \punq{} has depth $d$, then the derivation of each $t_i\in[t]$ in \dlal{} has depth at most $O(d)$. \label{lem:dlaltranslation}
\end{restatable}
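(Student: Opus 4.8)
The plan is to prove Lemma~\ref{lem:dlaltranslation} by induction on the typing derivation $\Gamma;\Delta\vdash\vec t:A$ in \punq{}, showing that \emph{every} term in the set $\lpar{\vec t}$ can be assigned the type $A^\star$ in $\dlal_\ast$ under the translated contexts $\Gamma^\star;\Delta^\star$. Since $\lpar{\cdot}$ produces a set, the statement $\Gamma^\star;\Delta^\star\vdash_{\dlal}\lpar{\vec t}:A^\star$ is read (as defined just before Figure~\ref{fig:dlal}) as: for all $u\in\lpar{\vec t}$, the judgment $\Gamma^\star;\Delta^\star\vdash u:A^\star$ is derivable in $\dlal_\ast$. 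The two auxiliary facts I would invoke throughout are Lemma~\ref{lem:translation-bang} ($(!(A))^\star = A^\star$), needed for the $(\Rightarrow_i)$ case, and Lemma~\ref{lem:translation-subtype} ($A\leq B\Rightarrow A^\star=B^\star$), needed for the $(\leq)$ case, which then becomes trivial since the translated type does not change.

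Most cases are routine structural bookkeeping. For the axiom, bit, and $\vec 0$ cases one checks directly: $\lpar{x}=\{x\}$ types as $A^\star$; $\lpar{\ket 0}=\{\lambda x.\lambda y.x\}$ has type $\forall X.(X\multimap X\multimap X)=(\B)^\star$; $\lpar{\vec 0}=\{\ast\}$ has type $\forall X.X$, which specializes to any $A^\star$ via $(\forall\,e)$. For $(\S_i),(\S_e),(\forall_i),(\forall_e),(\mathsf W),(\mathsf C),(\times_i),(\times_e),(\multimap_i),(\multimap_e)$ one applies the induction hypothesis to each element of the translated set and reassembles using the corresponding $\dlal$ rule, using that $(\cdot)^\star$ commutes with $\multimap,\Rightarrow,\S,\forall$ and that Church-encoded pairs $\lambda x.(x\,\lpar{t_1}\,\lpar{t_2})$ type at $\forall X.(((A_1)^\star\multimap(A_2)^\star\multimap X)\multimap X)$; the elimination $\lpar{\tlet xy{t_1}{t_2}}=\lpar{t_1}\,(\lambda x.\lambda y.\lpar{t_2})$ then types by instantiating $X:=C^\star$ and applying $(\multimap\,e)$. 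The rules $(\sharp_i)$, $(\mathsf{if}_\sharp)$, $(\times_{e\sharp})$, and the subtyping rule $(\leq)$ are where the ``dequantization'' does its work: on the type side $(\sharp Q)^\star=(Q)^\star$ kills the $\sharp$ modality, and on the term side $\lpar{\sum_i\alpha_i\cdot\vec t_i}=\bigcup_i\lpar{\vec t_i}$ (dropping zero coefficients to $\{\ast\}$), so each element lies in some $\lpar{\vec t_i}$ and the induction hypothesis applied to the premise $\Gamma;\Delta\vdash\vec t_i:Q$ gives it type $(Q)^\star=(\sharp Q)^\star$; the orthogonality and norm side-conditions of $(\sharp_i)$ and $(\mathsf{if}_\sharp)$ are simply discarded. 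For $(\mathsf{if})$ and $(\mathsf{if}_\sharp)$ one must additionally verify that the linearized encoding $\bigl(\lpar t\,\lambda x_{[1..k]}.\lpar{t_1}\,\lambda x_{[1..k]}.\lpar{t_2}\bigr)x_1\cdots x_k$ — which abstracts out the shared linear variables $x_1:A_1,\dots,x_k:A_k$ of the two branches, applies the Church boolean $\lpar t:(\B)^\star$ to select a branch, and re-applies the $x_i$ — is well-typed at $B^\star$ in $\dlal_\ast$ with $x_i:(A_i)^\star$ back in the linear context; this is a direct computation using the type of $(\B)^\star$ and $(\multimap\,e)$/$(\multimap\,i)$.

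The main obstacle is the $(\Rightarrow_i)$ / $(\Rightarrow_e)$ pair together with the \emph{linearization} built into the translations of abstraction and application, i.e. the factor $\eta(\lambda x.\vec t)=k$ and the replacement of $\vec t$ by $\vec t_{[x:1..k]}$. In \punq{}, rule $(\Rightarrow_i)$ moves $x:A$ from the exponential context and produces type $!(A)\Rightarrow B$; the translation produces $\lambda x_{[1..k]}.\lpar{\vec t_{[x:1..k]}}$, so I must show this term has type $A^\star\Rightarrow B^\star$ in $\dlal_\ast$ — which uses $(!(A))^\star=A^\star$ (Lemma~\ref{lem:translation-bang}) and requires the induction hypothesis to be applied to a version of the premise in which $x$ has been duplicated $k$ times in the exponential context via $(\mathsf C)$/$(\mathsf W)$, so that all $k$ linearized copies $x_1,\dots,x_k:A^\star$ can be bound by nested $\Rightarrow$-abstractions (collapsed to a single $\Rightarrow$ at the \punq{} level but realized as $k$ abstractions in $\dlal_\ast$). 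Correspondingly, in $(\Rightarrow_e)$ the encoding $\lpar{t_1\,t_2}$ feeds $\lpar{t_2}$ to $\lpar{t_1}$ exactly $\eta(t_1)$ times, and one must check $\eta(t_1)=k$ matches the number of abstracted copies, so that all applications are well-typed — this is where the implicit dependence of the translation on the typing derivation must be made precise, and it is the step I would spend the most care on, essentially proving a small invariant that the arity $\eta$ agrees with the number of exponential duplications recorded in the derivation. Once that bookkeeping is pinned down, everything else follows by the structural induction sketched above.
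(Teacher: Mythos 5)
Your overall strategy — induction on the \punq{} derivation, reading the set-valued translation pointwise, invoking Lemma~\ref{lem:translation-bang} for $(\Rightarrow_i)$ and Lemma~\ref{lem:translation-subtype} for $(\leq)$, handling $(\sharp_i)$/$(\mathsf{if}_\sharp)$/$(\times_{e\sharp})$ by simply dropping the orthogonality and norm side-conditions, and typing the $\mathsf{if}$ encoding by abstracting out the shared linear variables and applying the Church boolean — is exactly the approach taken in the paper's Appendix proof. You have also correctly isolated the one genuinely delicate point: the linearization factor $\eta$ and the interplay between $\lpar{\lambda x.\vec t}=\lambda x_{[1..k]}.\lpar{\vec t_{[x:1..k]}}$ and the derived \dlal{} type.

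However, the fix you sketch for $(\Rightarrow_i)$ does not type-check. You propose to apply the IH to a premise in which $x$ is split into $k$ copies $x_1,\dots,x_k$ in the exponential context and then close them with $k$ nested $\Rightarrow$-abstractions. That produces a \dlal{} term of type $A^\star\Rightarrow\dots\Rightarrow A^\star\Rightarrow B^\star$ with $k$ arrows, whereas the lemma demands type $(!(A)\Rightarrow B)^\star = A^\star\Rightarrow B^\star$ with a \emph{single} arrow; the two coincide only when $k=1$. What actually needs to happen in the $(\Rightarrow_i)$ case is the opposite of linearization: the $k$ linearized copies $x_1,\dots,x_k$ must all be identified back to a single exponential variable $x$ and bound by a \emph{single} $(\Rightarrow\,\text{i})$, which is exactly what \dlal{}'s exponential context permits via (Cntr); the $\eta$-based linearization is only needed on the $\multimap$ side, where it keeps the linear context linear and the arities of $\lpar{\lambda x.\vec t}$ and $\lpar{t_1\,t_2}$ in sync. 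In fact the paper's own proof for $(\Rightarrow_i)$ writes $\lpar{\lambda x.\vec t}=\lambda x.\lpar{\vec t}$ without linearization — i.e.\ it implicitly treats $\eta=1$ for $\Rightarrow$-abstractions — which is the reading under which the types match; your nested-abstraction reading would make the lemma false as stated, so the invariant you flag at the end (``$\eta$ agrees with the number of exponential duplications'') is the wrong invariant for $(\Rightarrow_i)$.
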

%\end{lemma}
\begin{proof}
  By induction on the derivation of $\Gamma;\Delta\vdash \vec t: A$. The full proof is given in
  Appendix~\ref{app:ptstrongnormalization}.
\end{proof}

%Given that we are interested in the maximum reduction time in the list encoding of a $\punq$ term into $\dlal$, we will define the size of a set of $\dlal$ terms $[t]$ as the size of the biggest term $u\in[t]$, i.e., $|[t]|\triangleq \max\{|u| : u\in[t]\}$. The size of a term is defined as the number of nodes in its term tree. 
Another important detail is that the encoding increase the size of the original \punq{} term only polynomially and is therefore an appropriate way to connect the complexity of the two systems.

\begin{restatable}[]{lemma}{lemsize}\label{lem:size}
%\begin{lemma}
For any $A\in\mathbb{T}$, there exists a polynomial $P_A$ such that, for any $\vec t$ where $;\vdash \vec{t}:A$, the following holds $|\lpar{\vec t}|= O(P_A(|\vec t|))$.
\end{restatable}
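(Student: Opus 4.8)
The plan is to prove the statement by induction on the typing derivation of $\Gamma;\Delta\vdash\vec t:A$, establishing the stronger, context-relative claim that for every type $A$ and every pair of contexts $\Gamma,\Delta$ there is a polynomial $P$ with $|\lpar{\vec t}|=O(P(|\vec t|))$ whenever $\Gamma;\Delta\vdash\vec t:A$; specialising to $\Gamma=\Delta=\emptyset$ then yields $P_A$. As in $\dlal$, the degree of this polynomial is governed by the depth of the derivation, which for a closed term is a function of $A$ alone, so the $A$-indexing in the statement is legitimate. For each rule of Figure~\ref{fig:type} I would check that passing from the premises to the conclusion inflates the translation by at most a fixed polynomial applied to the sizes supplied by the induction hypotheses.

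All cases but one produce only an additive increase. The axioms and the base values of Figure~\ref{fig:mapenc} translate to terms of constant size, and $\lpar{\vec 0}=\{\ast\}$ as well. The rules $(\mathsf W)$, $(\mathsf C)$, $(\S_i)$, $(\forall_i)$, $(\forall_e)$, together with $(\equiv)$ and $(\leq)$, leave $\lpar{\cdot}$ unchanged up to $\eqast$ and up to renaming of variables (using Lemma~\ref{lem:transequiv} for $(\equiv)$ and Lemma~\ref{lem:translation-subtype} for $(\leq)$), so the size is preserved. The rules $(\S_e)$, $(\times_i)$, $(\times_e)$, $(\times_{e\sharp})$, $(\multimap_i)$, $(\Rightarrow_i)$ and the clauses for $\alpha\cdot\vec t$ and $\vec t_1+\vec t_2$ merely recombine the translations of the immediate subterms inside a constant-size wrapper -- for the sum, $\lpar{\vec t_1+\vec t_2}=\lpar{\vec t_1}\cup\lpar{\vec t_2}$ while $|\vec t_1+\vec t_2|=\max(|\vec t_1|,|\vec t_2|)$ match exactly, and for $(\S_e)$ the substituted variable is linear, hence occurs once -- so the sizes add. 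The conditional rules $(\mathsf{if})$ and $(\mathsf{if}_\sharp)$ wrap $\lpar{t_1}$ and $\lpar{t_2}$ under $k$ fresh binders and re-apply $k$ variables, where $k$ is the number of shared free linear variables and $k\leq|\tif t{t_1}{t_2}|$, so the overhead is again additive and polynomial. Finally, in $\lpar{\lambda x.\vec t}=\lambda x_{[1..k]}.\lpar{\vec t_{[x:1..k]}}$ with $k=\eta(\lambda x.\vec t)$, the body is only renamed and $k$ is the number of occurrences of $x$ in $\vec t$, so $k\leq|\vec t|$ and the overhead is additive.

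The single delicate case -- and the main obstacle -- is the application clause: $\lpar{t_1\ t_2}$ contains $\eta(t_1)$ copies of $\lpar{t_2}$, hence $|\lpar{t_1\ t_2}|\leq|\lpar{t_1}|+\eta(t_1)\cdot(|\lpar{t_2}|+1)$, which is a genuine product rather than a sum. To close the induction one must bound $\eta(t_1)$ by a polynomial in $|t_1|$ for typable $t_1$. This cannot come from a purely syntactic count -- iterating a ``doubling'' term would make $\eta$ grow exponentially -- but that very phenomenon is what the $\dlal$ discipline rules out, since duplication is available only through the non-linear arrow and never on the argument of a linear arrow. I would therefore bound $\eta(t_1)$, which counts the occurrences of the bound variable in the $\punq$ value reached by $t_1$, by the size of that value: by Lemma~\ref{lem:dlaltranslation} the set $\lpar{t_1}$ is $\dlal_\ast$-typable, so by Corollary~\ref{coro:sound} all of its reducts have size polynomial in $|\lpar{t_1}|$, which by the induction hypothesis on the strict subterm $t_1$ is polynomial in $|t_1|$; since $\lpar{\cdot}$ never shrinks a term, the value reached by $t_1$ -- whose translation is among those reducts -- has size polynomial in $|t_1|$, and so does $\eta(t_1)$. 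Plugging this in, $|\lpar{t_1\ t_2}|$ becomes a polynomial in $|t_1|$ and $|t_2|$, hence in $|t_1\ t_2|$, at the cost of a controlled rise in degree -- which is precisely why the bounding polynomial depends on the type.

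I expect the only non-routine ingredients to be (i) the auxiliary fact that $\punq$ reductions are tracked by $\lpar{\cdot}$, so that the value of $t_1$ is indeed covered by the $\dlal_\ast$-reducts of $\lpar{t_1}$, and (ii) checking that the degrees accumulated through the induction remain finite; the rest is bookkeeping over the clauses of Figure~\ref{fig:mapenc}.
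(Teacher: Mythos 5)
Your proof follows the paper's own structure: a straightforward induction over the structure of the term (equivalently the typing derivation), checking that each clause of Figure~\ref{fig:mapenc} inflates the translation at most polynomially, with the application clause singled out -- correctly -- as the only one that multiplies rather than adds sizes. The difference lies entirely in how you control $\eta(t_1)$. The paper simply writes the number of copies of $\lpar{t_2}$ as $O(|t_1|)$, while you argue that such a syntactic bound cannot hold and instead route through Lemma~\ref{lem:dlaltranslation} and Corollary~\ref{coro:sound} to obtain polynomiality of $\eta(t_1)$.

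That worry is, in fact, unfounded in a call-by-value calculus. If $t_1\rightsquigarrow^\ast\lambda x.\vec w$, then $\eta(t_1)$ is the number of occurrences of $x$ in $\vec w$, and this is bounded by the maximal binding multiplicity of $t_1$ (the largest number of occurrences of any bound variable under its binder anywhere in $t_1$), hence by $|t_1|$: rule (Abs) substitutes a value $v$ that contains no occurrence of the freshly bound $x$ (up to $\alpha$-renaming), (Let), (If$_0$), (If$_1$) and the congruence rules select or rewrap subterms, and none of these can increase the count of a bound variable under its own binder -- your "iterated doubling" scenario would require reducing \emph{under} the binder, which call-by-value never does. So the paper's linear bound is correct (if unstated), and your detour through the $\dlal$ polytime theorem buys nothing except a larger per-application degree.

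Your detour is nevertheless sound in spirit, but two small ingredients need repair. First, Lemma~\ref{lem:transition-dlal} is stated only for closed superpositions in $\punq$, whereas the $t_1$ in the induction is generally open; your strengthened context-relative induction hypothesis handles $|\lpar{t_1}|$, but not the reduction-tracking step, which you flag yourself as item (i). Second, the claim that "$\lpar{\cdot}$ never shrinks a term" is false: $|\lambda y.x|=2$ yet $\lpar{\lambda y.x}=\{x\}$ of size $1$ when $y$ does not occur (the translation erases dummy binders). The argument can be patched by observing that $\eta(t_1)$ equals the number of outer binders in $\lpar{\lambda x.\vec w}$, which is trivially at most $|\lpar{\lambda x.\vec w}|$, and that suffices -- but as stated the monotonicity step is a gap.
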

\begin{proof}
  By induction on the structure of $\vec t$. The full proof is given in
  Appendix~\ref{app:ptstrongnormalization}.
\end{proof}

To demonstrate how this translation guarantees polynomial-time termination in \punq{}, we consider the relation between the reduction of a \punq{} term $\vec{t}$ and the reduction of its translation $\lpar{\vec{t}}$. Given our choice of translation, a term in \punq{} will correspond to a set in \dlal{} containing not only all basis states present in any superposition, but also all possible branchings in the program.

We now show that a reduction in $\punq{}$ corresponds to at least one reduction over its encoding in $\dlal$. Notice first that, once we reach a value in \punq{}, its DLAL translation will not reduce.

\begin{restatable}[]{lemma}{lemvalues}\label{lem:values}
For any $\vec{v}\in\V_c$, we have that $\lpar{\vec v}$ does not reduce.
\end{restatable}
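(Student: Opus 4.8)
The plan is to prove, by structural induction on the value $\vec v\in\V_c$, the slightly stronger statement that every term occurring in the set $\lpar{\vec v}$ is either $\ast$ or a $\lambda$-abstraction. Since the reduction relation on $\dlal_\ast$ has been fixed to be call-by-value (as permitted by Corollary~\ref{coro:sound}), reduction never takes place under a $\lambda$, so every $\lambda$-abstraction is a normal form; together with the fact that $\ast$ is irreducible, this immediately yields that no term of $\lpar{\vec v}$ admits a reduction step, i.e.\ $\lpar{\vec v}$ does not reduce.

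First I would dispatch the base cases: $\lpar{\vec 0}=\{\ast\}$, and $\lpar{\ket 0}=\{\lambda x.\lambda y.x\}$, $\lpar{\ket 1}=\{\lambda x.\lambda y.y\}$ are singletons consisting of a $\lambda$-abstraction. For the inductive cases built from the vector-space operations, $\lpar{\alpha\cdot\vec v'}$ equals $\{\ast\}$ when $\alpha=0$ and $\lpar{\vec v'}$ otherwise, so the claim follows from the induction hypothesis; likewise $\lpar{\vec v_1+\vec v_2}=\lpar{\vec v_1}\cup\lpar{\vec v_2}$, so it follows from the induction hypothesis applied to both summands (recall that $\alpha\cdot\vec v'$ and $\vec v_1+\vec v_2$ are values exactly when $\vec v'$, $\vec v_1$, $\vec v_2$ are). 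The remaining cases are $\vec v=\lambda x.\vec t$ and $\vec v=\pair{v_1}{v_2}$ with $v_1,v_2\in\BV_c$: by definition of $\lpar{\cdot}$ we have $\lpar{\lambda x.\vec t}=\{\lambda x_1\cdots\lambda x_k.s\mid s\in\lpar{\vec t_{[x:1..k]}}\}$ with $k=\eta(\lambda x.\vec t)\geq 1$, and $\lpar{\pair{v_1}{v_2}}=\{\lambda x.(x\,s_1\,s_2)\mid s_1\in\lpar{v_1},\,s_2\in\lpar{v_2}\}$; in both cases every element is a $\lambda$-abstraction, irrespective of the shape of the bodies $s$, $s_1$, $s_2$.

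The one genuinely delicate point — and the reason the lemma is worth stating separately — is exactly these last two cases. A value such as $\lambda z.((\lambda w.w)\,\ket 0)$ is a perfectly good $\punq$ value, yet the body of its translation contains the redex $(\lambda w_1.w_1)(\lambda x.\lambda y.x)$, so under an unrestricted $\beta$-reduction the set $\lpar{\lambda z.((\lambda w.w)\,\ket 0)}$ would be reducible. What saves us is that the outer $\lambda$ introduced by the encoding of an abstraction (resp.\ of a pair) blocks any call-by-value step, which is precisely why the reduction strategy on $\dlal_\ast$ was fixed to call-by-value. So the substantive part of the argument is simply the observation that $\lpar{\cdot}$ always places abstraction- and pair-values behind an outermost $\lambda$; the rest is a routine induction, requiring only mild care with the set-indexed reduction relation $\mathcal{S}\to\mathcal{T}$ and with the convention $\eta(\lambda x.\vec t)\geq 1$ that guarantees $\lpar{\lambda x.\vec t}$ is genuinely a set of abstractions rather than $\lpar{\vec t}$ itself.
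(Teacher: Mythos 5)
Your proof is correct and follows the same structural induction and key observation (the translation places abstraction and pair values behind an outermost $\lambda$, which cannot reduce under the call-by-value strategy fixed for $\dlal_\ast$) as the paper's. Your phrasing via the strengthened invariant ``every element is $\ast$ or a $\lambda$-abstraction'' is marginally tidier --- it makes the pair case immediate without the paper's unnecessary appeal to the induction hypothesis there --- and you usefully make explicit the convention $\eta(\lambda x.\vec t)\geq 1$, which the paper's proof of the abstraction case tacitly assumes when it writes $\lpar{\lambda x.\vec t}=\lambda x.\lpar{\vec t}$.
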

\begin{proof}
  By induction on the structure of $\vec v$. The full proof is given in Appendix~\ref{app:ptstrongnormalization}.
\end{proof}

Now we consider the central lemma for the proof of Theorem~\ref{thm:soundness}.

\begin{restatable}[]{lemma}{lemtransitiondlal}\label{lem:transition-dlal}
For any $\vec t \in \punq{}$, $\vec t\rightsquigarrow \vec r$ implies $\lpar{\vec r}\subsetast \bigcup_{n>0}\{\mathcal{S}_n \mid \lpar{\vec t} \to^n \mathcal{S}_n\}$.
\end{restatable}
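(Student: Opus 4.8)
The plan is to proceed by induction on the derivation of $\vec t\rightsquigarrow \vec r$, following the rules of Figure~\ref{fig:sem}. The base cases are the axiom-style rules (If$_0$), (If$_1$), (Abs), and (Let); the inductive cases are the congruence rules (If$_+$), (App), (App$_\V$), (Pair), (Pair$_\V$), (Let$_+$), (Sup), and the bookkeeping rule (Equ). For the base cases, the key observation is that the $\dlal_\ast$ translation of a redex simulates the corresponding $\punq$ reduction in one or more $\beta$-steps: e.g. for (If$_0$), $\lpar{\tif{\ket 0}{\vec s}{\vec t}} = (\lpar{\ket 0}\ \lambda x_{[1..k]}.\lpar{\vec s}\ \lambda x_{[1..k]}.\lpar{\vec t})\,x_1\dots x_k$, and since $\lpar{\ket 0}=\lambda x.\lambda y.x$, this reduces in two $\beta$-steps to $(\lambda x_{[1..k]}.\lpar{\vec s})\,x_1\dots x_k$, which then reduces in $k$ further steps to $\lpar{\vec s}$ (up to the possible presence of $\ast$, hence the need for $\subsetast$). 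The (Abs) case uses the linearization of abstractions: $\lpar{(\lambda x.\vec t)\,v}$ contains terms of the form $(\lambda x_{[1..k]}.\lpar{\vec t_{[x:1..k]}})\,\lpar v\dots\lpar v$ with $\eta(\lambda x.\vec t)=k$ applied copies, which $\beta$-reduces to $\lpar{\vec t_{[x:1..k]}}[\lpar v/x_1,\dots,\lpar v/x_k]$; I then need a substitution-compatibility sublemma stating $\lpar{\vec t[v/x]}\subsetast \lpar{\vec t_{[x:1..k]}}[\lpar v/x_i]$, which follows from inspecting the definition of $\lpar{\cdot}$ by induction on $\vec t$. The (Let) case is analogous, using $\lpar{\tlet xyt{\vec t}} = \lpar t\ (\lambda x.\lambda y.\lpar{\vec t})$ and the encoding of pairs $\lpar{\pair vw}=\lambda x.(x\ \lpar v\ \lpar w)$.

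For the congruence cases, the structure is uniform: if $\vec t\rightsquigarrow\vec r$ is derived from a premise $t'\rightsquigarrow\vec s'$ by a context rule, the induction hypothesis gives $\lpar{\vec s'}\subsetast\bigcup_{n>0}\{\mathcal S_n\mid\lpar{t'}\to^n\mathcal S_n\}$, and I must lift this through the corresponding evaluation context in the $\dlal_\ast$ translation. The main point to check is that each $\punq$ evaluation context translates to a $\dlal_\ast$ context under which reduction can still proceed and which is monotone with respect to $\subsetast$ — e.g. for (App), $\lpar{r\ t} = (\dots(\lpar r\ \lpar t)\ \lpar t)\dots\lpar t)$ with $\eta(r)$ copies, and a reduction inside one of the $\lpar t$-copies lifts to a reduction of the whole set (here I must be careful that reducing in \emph{one} copy suffices: the set-reduction $\to$ reduces every non-normal term, so I only need that at least one $\dlal_\ast$ step occurs, which holds because at least one copy of $\lpar t$ is non-normal). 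The syntactic-sugar convention (Figure~\ref{fig:syntaxsugar}) means the congruence rules may actually produce a superposition $\sum_i\alpha_i\cdot C[s_i]$; here $\lpar{\cdot}$ distributes over $+$ as a union and ignores nonzero scalars, so $\lpar{\sum_i\alpha_i\cdot C[s_i]} = \bigcup_i\lpar{C[s_i]}$, and I apply the single-context argument componentwise. For (Sup), the reduction acts simultaneously on all non-value terms $t_i$ in a canonical form; since $\lpar{\sum_i\alpha_i\cdot t_i+\sum_j\beta_j\cdot v_j} = \bigcup_i\lpar{t_i}\cup\bigcup_j\lpar{v_j}$ and $\lpar{v_j}$ is already normal by Lemma~\ref{lem:values}, the set-reduction $\to$ reduces exactly the $\lpar{t_i}$-parts, matching the $\punq$ step. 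For (Equ), I invoke Lemma~\ref{lem:transequiv} to pass the $\equiv$-steps through $\eqast$, and combine with transitivity of $\subsetast$ (noting $\ast$ never reduces, so adding or removing $\ast$ from a set does not affect reducibility).

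The main obstacle I anticipate is the bookkeeping around the \emph{linearization} parameter $\eta$ and the subscript notation $\vec t_{[x:1..k]}$: I must establish a clean substitution lemma saying that $\lpar{\cdot}$ commutes with substitution of a basis value up to $\subsetast$ and up to the duplication of variables dictated by $\eta$, and I must verify that $\eta$ is stable under the reductions involved so that the number of copies on the $\dlal_\ast$ side stays consistent before and after the step (in particular that $\eta$ of a redex agrees with $\eta$ of its contractum in the relevant positions). A secondary subtlety is that in the (Sup) and syntactic-sugar cases a single $\punq$ step corresponds to possibly \emph{several} parallel $\dlal_\ast$ reductions across different elements of the set, so I must make sure the statement — which only asks for membership in $\bigcup_{n>0}\{\mathcal S_n\}$, i.e. for \emph{some} positive number of steps — is genuinely satisfied for every element of $\lpar{\vec r}$, tracing each element back to the $\punq$ term it came from and counting its steps individually. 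Once these sublemmas are in place, the induction is routine case analysis.
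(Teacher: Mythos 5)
Your plan matches the paper's proof: induction on the derivation of $\rightsquigarrow$, with the same base-case computations, the same substitution lemma (Lemma~\ref{lemma:substitution}, stated in the paper as an equality for basis values rather than a $\subsetast$), Lemma~\ref{lem:values} for (Sup), and Lemma~\ref{lem:transequiv} for (Equ). The concerns you flag about $\eta$-bookkeeping across linearization are genuine but are also handled only implicitly in the paper's argument, so you are not off track.
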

\begin{proof}
  By induction on relation $\rightsquigarrow$. The full proof is given in
  Appendix~\ref{app:ptstrongnormalization}.
\end{proof}

\begin{example}[Two Hadamards]\label{ex:2had} Let $\vec{t}=(\underline{2}\ \mathsf{H})\ \ket{0}\,$, having \punq{} type $\S \sharp \B$, correspond to applying twice the Hadamard gate (Example~\ref{ex:hadamard}) to state $\ket{0}\,$. We denote by $\mathcal{S}_0$ its translation into a set of \dlal{} terms, and, for all $n>0$, let $\mathcal{S}_{n}$ be the set such that $\mathcal{S}_0\rightarrow^n \mathcal{S}_n$. Let $\mathtt{tt}\triangleq \lambda x.\lambda y. x$ and $\mathtt{ff}\triangleq \lambda x.\lambda y. y$ be a shorthand notation for the boolean encoding in \dlal{}. First, we translate the Hadamard term. Notice that, since $\ket{+}$ and $\ket{-}$ are closed terms, we do not need to add any new variables in the translation of the $\tif{(\cdot)}{(\cdot)}{(\cdot)}$. The translation is then:
\[\lpar{\mathsf{H}}=\lambda x.\lpar{\tif{x}{\ket{+}}{\ket{-}}}=\lambda x.x\ \lpar{\ket{+}}\ \lpar{\ket{-}} = \lambda x.x\ \{\mathtt{ff},\mathtt{tt}\}\ \{\mathtt{ff},\mathtt{tt}\},\]
which corresponds to the set of 4 terms $\{\lambda x.x\ b_1\ b_2 \mid b_1,b_2\in\{\mathtt{ff},\mathtt{tt}\}\}$.
The translation of $\vec{t}$ corresponds to 16 \dlal{} terms, all of which have type $(\S \sharp \B)^\star =\S(\forall X. X\multimap X\multimap X)$ in $\dlal{}_*$:
\[\lpar{\vec{t}}=\mathcal{S}_0=\big\{ (\lambda f_1.\lambda f_2.\lambda q.f_1(f_2\ q)\ (\lambda x.x\ b_1\ b_2)\ (\lambda x.x\ b_3\ b_4))\ \mathtt{tt}\mid b_i\in \{\mathtt{ff},\mathtt{tt}\}\big\}.\]
This set reduces in 7 steps, according to the following progression:
\begin{align*}
\mathcal{S}_1&=\big\{ (\lambda f_2.\lambda q.(\lambda x.x\ b_1\ b_2)(f_2\ q)\ (\lambda x.x\ b_3\ b_4))\ \mathtt{tt}\mid b_i\in \{\mathtt{ff},\mathtt{tt}\}\big\}\\
\mathcal{S}_2&=\big\{(\lambda q. (\lambda x.x\ b_1\ b_2)\ (\lambda x.x\ b_3\ b_4)\ q)\ \mathtt{tt}\mid b_i\in\{\mathtt{ff},\mathtt{tt}\}\big\}\\
 \mathcal{S}_3 &= \big\{(\lambda x.x\ b_1\ b_2)\ (\lambda x.x\ b_3\ b_4)\ \mathtt{tt}\mid b_i\in\{\mathtt{ff},\mathtt{tt}\}\big\}\\
 \mathcal{S}_4 &=\big\{(\lambda x.x\ b_1\ b_2)\ (\mathtt{tt}\ b_3\ b_4)\mid b_i\in\{\mathtt{ff},\mathtt{tt}\}\big\}\\
 \mathcal{S}_5 &=\big\{(\lambda x.x\ b_1\ b_2)\ b_3\mid b_i\in\{\mathtt{ff},\mathtt{tt}\}\big\}\\
 \mathcal{S}_6 &=\big\{b_3 \ b_1\ b_2\mid b_i\in\{\mathtt{ff},\mathtt{tt}\}\big\}\\
 \mathcal{S}_7 &=\{\mathtt{ff},\mathtt{tt}\}
\end{align*}

In \punq{}, $\vec{t}$ reduces to a value in 6 steps. Defining $\vec{r}_i$ as $\vec{r}_0\triangleq \vec{t}$ and $\vec{r}_i\rightsquigarrow \vec{r}_{i+1}$, we have  
\begin{align*}
\vec{r}_1 &= (\lambda x. \mathsf{H}\ \mathsf{H}\ x)\ \ket{0} && \lpar{\vec r_1}= \mathcal{S}_2\\
\vec{r}_2 &= \mathsf{H}\ \mathsf{H}\ \ket{0} &&\lpar{\vec r_2}=\mathcal{S}_3\\
\vec{r}_3 &= \mathsf{H}\ (\tif{\ket{0}}{\ket{+}}{\ket{-}\,})&&\lpar{\vec r_3}= \mathcal{S}_4\\
\vec{r}_4 &= \sfrac{1}{\sqrt{2}}\cdot\mathsf{H}\ \ket{0} + \sfrac{1}{\sqrt{2}}\cdot\mathsf{H}\ \ket{1}&&\lpar{\vec r_4}= \mathcal{S}_5\\
\vec{r}_5 &= \sfrac{1}{\sqrt{2}}\cdot\tif{\ket{0}}{\ket{+}}{\ket{-}\,} + \sfrac{1}{\sqrt{2}}\cdot\tif{\ket{1}}{\ket{+}}{\ket{-}\,}&&\lpar{\vec r_5}= \mathcal{S}_6\\
\vec{r}_6 &= \ket{0}&&\lpar{\vec r_6}=\{\mathtt{tt}\}\subseteq \mathcal{S}_7
\end{align*}
Notice that the PUNQ term was not perfectly simulated since $\mathtt{ff}$ also appears in $\mathcal{S}_7$. However, in every step of the reduction $\vec{r}_0\rightsquigarrow ^\ast\vec{r}_6$, we find the translation of $\vec{r}_i$ in a further reduction of $\mathcal{S}_0$, which suffices to bound the number of steps until achieving normalization.
\end{example}

The previous results allow us to conclude that, like $\dlal$, \punq{} ensures polytime normalization.

\begin{theorem}[Polynomial time normalization]\label{thm:soundness}
  For any $A\in \mathbb{T}_c$, there is a polynomial $P_A \in \mathbb{N}[X]$
  such that, for any $\punq{}$ superposition $\vec t$ with type $A$,  $\vec t$ reduces to a value $\vec v$ in at most $O(P_A(|\vec t|)$
    reduction steps and $|\vec v|=O(P_A(|\vec t|))$.
\end{theorem}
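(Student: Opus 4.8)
The plan is to pull the polynomial-time strong normalization of $\dlal$ back through the translation $\lpar{\cdot}$. Fix $A\in\mathbb{T}_c$ and a (necessarily closed) typable superposition $\vec t$ with $;\vdash\vec t:A$. By Lemma~\ref{lem:dlaltranslation} this gives $;\vdash_{\dlal}\lpar{\vec t}:A^\star$ in $\dlal_\ast$, and since $\ast$ is inert the bound of Corollary~\ref{coro:sound} applies: there is a polynomial $Q_{A^\star}\in\mathbb{N}[X]$ with $\lpar{\vec t}\downarrow^{\le N}$ for $N:=Q_{A^\star}(|\lpar{\vec t}|)$, and every set $\mathcal S$ reachable via $\lpar{\vec t}\to^{n}\mathcal S$ satisfies $|\mathcal S|\le N$. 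By Lemma~\ref{lem:size} there is a polynomial $R_A$ with $|\lpar{\vec t}|=O(R_A(|\vec t|))$, so $P_A:=Q_{A^\star}\circ R_A$ (absorbing the $O(\cdot)$) is a polynomial depending only on $A$ with $N=O(P_A(|\vec t|))$.

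Next I would bound the length of the reduction of $\vec t$, which is essentially unique since the calculus is call-by-value with a fixed strategy (hence confluent). Write $\vec t=\vec r_0\rightsquigarrow\vec r_1\rightsquigarrow\cdots$ and let $\lpar{\vec t}=\mathcal S_0\to\mathcal S_1\to\cdots$ be the deterministic $\dlal_\ast$ reduction. Iterating Lemma~\ref{lem:transition-dlal}, and using that $\to$ is stable under $\subsetast$ (because $\ast$ does not reduce and the strategy is fixed), one tracks each $\lpar{\vec r_i}$ against this reduction and obtains a strictly increasing sequence $0=n_0<n_1<\cdots$ with $\lpar{\vec r_i}\subsetast\mathcal S_{n_i}$: every $\rightsquigarrow$-step forces at least one $\to$-step. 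To turn this into the bound, I would additionally show that a non-value $\punq{}$ term never translates to a $\dlal_\ast$-normal set: by Progress (Theorem~\ref{thm:progress}) such a term contracts a redex of shape $(\lambda x.\vec s)\,v$, $\tif{\ket 0}{\cdot}{\cdot}$, $\tif{\ket 1}{\cdot}{\cdot}$, or $\tlet xy{\pair vw}{\cdot}$ -- possibly deep inside, handled by induction over the congruence rules -- and in each case the translation exhibits a $\beta$-redex, since the translations of $\ket 0$, $\ket 1$, pairs, and abstractions are head-$\lambda$'s that end up applied. Hence $\vec r_N$ cannot be a non-value (its translation would be $\subsetast\mathcal S_{n_N}$ with $n_N\ge N$, i.e.\ $\subsetast$ the normal form, yet would contain a redex), so the reduction reaches a value $\vec v$ in at most $N=O(P_A(|\vec t|))$ steps, with $;\vdash\vec v:A$ by Subject Reduction (Theorem~\ref{thm:sr}). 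Lemma~\ref{lem:values} confirms the picture in the other direction: once $\vec v$ is a value, $\lpar{\vec v}$ is already normal.

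For the size of $\vec v$ I would first observe that $\lpar{\cdot}$ never decreases size, i.e.\ $|\vec s|\le|\lpar{\vec s}|$ for every $\vec s$, by a direct symbol-counting induction whose only nontrivial cases are pairs, applications, and abstractions, where the translation only adds structure (the duplicated binders $x_{[1..k]}$ in $\lpar{\lambda x.\vec t}$ only enlarge the image). Since $\lpar{\vec v}\subsetast\mathcal S$ for some reduct $\mathcal S$ of $\lpar{\vec t}$, the size clause of Corollary~\ref{coro:sound} yields $|\vec v|\le|\lpar{\vec v}|\le|\mathcal S|\le N=O(P_A(|\vec t|))$, after enlarging $P_A$ if necessary.

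I expect the crux to be the second paragraph: upgrading Lemma~\ref{lem:transition-dlal} -- which literally places $\lpar{\vec r}$ only inside a \emph{union} of later reducts and only up to $\ast$ -- into a clean, strictly increasing correspondence against the deterministic reduction of $\lpar{\vec t}$, and in particular nailing down that a non-value $\punq{}$ term always has a non-normal translation, so that the $\dlal_\ast$ length bound $N$ genuinely caps the $\punq{}$ reduction rather than merely bounding how far the translations keep pace.
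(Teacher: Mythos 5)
You follow the paper's route step for step: Lemma~\ref{lem:dlaltranslation} for typability of the translation, Corollary~\ref{coro:sound} and Lemma~\ref{lem:size} for the polynomial budget, and Lemma~\ref{lem:transition-dlal} to carry the bound back. You also correctly isolate where the argument is delicate. The trouble is that the two auxiliary facts you introduce to close it are both false, for the same underlying reason: $\lpar{\cdot}$ erases an application argument entirely when $\eta(t_1)=0$, which occurs precisely when a weakened exponential variable is abstracted. Concretely, for any (arbitrarily large, non-value) $s$ of type $\B$, one has $\lpar{(\lambda x.\ket 0)\,s}=\lpar{\lambda x.\ket 0}=\lpar{\ket 0}=\{\lambda x.\lambda y.x\}$, which is already normal in $\dlal_\ast$. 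So a non-value $\punq$ term can have a normal translation --- the call-by-value steps spent reducing the soon-to-be-discarded $s$ are simply invisible to $\lpar{\cdot}$ --- and your ``non-value $\Rightarrow$ non-normal translation'' lemma does not hold; the strictly increasing sequence $n_0<n_1<\cdots$ you want against the deterministic $\dlal_\ast$ reduction stalls. The same mechanism defeats your size inequality $|\vec s|\le|\lpar{\vec s}|$: the term $\lambda x_1.\cdots\lambda x_n.\ket 0$ with all $\eta=0$ translates to the fixed set $\{\lambda x.\lambda y.x\}$ regardless of $n$, while its $\punq$ size $n+1$ grows without bound.

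To be fair, the paper's own proof is terse at exactly these points: its sentence that ``each reduction step in \punq{} corresponds to at least one reduction step applied to all the (non-normalized) terms in its translation'' is vacuous on the example above, and it never separately derives the $|\vec v|=O(P_A(|\vec t|))$ clause. A complete argument must account separately for the call-by-value work performed on arguments that the translation erases --- each such argument is itself closed and typable, hence carries its own polynomial budget from the same machinery, and the budgets must be combined --- rather than rely on a one-step-per-step correspondence between $\rightsquigarrow$ and $\to$. Your proposal has the merit of surfacing exactly this gap, but the two lemmas you propose will not close it.
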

\begin{proof}
  Let $\vec t\in\punq{}$ such that $\Gamma;\Delta\vdash \vec t: A$. Then, by
  Lemma~\ref{lem:dlaltranslation},  $\Gamma^\star;\Delta^\star
  \vdash_{\dlal_\ast}  \lpar{\vec t}: A^\star$ holds and each term $t_i\in\lpar{\vec t}$ has size at most $O(|\vec t|)$.  By Corollary~\ref{coro:sound}, there is a polynomial $P_{A^\star} \in
  \mathbb{N}[X]$ such that every $\dlal_\ast$ term $t_i\in\lpar{\vec t}$ can be reduced in $N\triangleq O(P_ {A^\star}(|t_i|))$ steps. Since the size $|t_i|$ of $\dlal_\ast$ terms is bounded by $P_A(|\vec t|)$, from Lemma~\ref{lem:size}, we have $N=O(P_{A^\star}(P_A(|\vec t|)))$. Therefore all the terms in
  $\lpar{\vec t}$ reduce in polynomial time on the size of $\vec t$. As shown in Lemma~\ref{lem:transition-dlal}, each reduction step in {\punq} corresponds to at least one reduction step applied to all the (non-normalized) terms in its translation into $\dlal_\ast$, we conclude that $\vec t$ must reduce to normal form in at most $N$ steps.  
\end{proof}

\section{Type Checking}\label{s:type checking}

In this section we discuss type checking in \punq{}, which corresponds to the following problem.
\begin{definition}[Type checking] We define the problem of type checking in $\punq$, denoted $\CHK{}$, as follows. Given $\vec{t}\in \SUP_c$ and $A\in\mathbb{T}_c$, $\CHK(\vec{t},A)$ is true if and only if $;\vdash\vec{t}:A$ can be derived.
\end{definition}

\paragraph*{Preliminaries.} Let $\mathsf{PTIME}$ be the set of decision problems solvable by a deterministic Turing machine running in polynomial time and, given an oracle $\mathcal{C}$, let us denote by $\mathsf{PTIME}^\mathcal{C}$ the set of decision problems that can be determined by a polytime deterministic oracle Turing machine with oracle $\mathcal{C}$, which can be easily extended to the case where $\mathcal{C}$ is a complexity class. Finally, the class $\Pi^0_1$ of the arithmetical hierarchy is defined as
\(
\Pi_1^0 \triangleq \Big\{\psi\mid \exists \phi \in \mathsf{REC}, \forall \overline{x}.(\psi(\overline{x})\Leftrightarrow \forall \overline{y}.\phi(\overline{x},\overline{y}))\Big\},
\)
where $\overline{x},\overline{y}$ are variables with values belonging to enumerable sets, $\psi$ and $\phi$ are formulas in the language of first-order arithmetic, and $\mathsf{REC}$ is the class of recursive sets (decidable problems)~\cite{O92}.

\paragraph*{Orthogonality criteria}

Outside of checking orthogonality in rules $(\mathsf{if}_\sharp)$ and $(\sharp_i)$, \punq{} type checking is no more complicated than that of \dlal{}, which is decidable in polynomial time~\cite{BT09}. This is because the complexity added by the \ls{} typing discipline (modulo orthogonality checking) corresponds to that of a simply typed lambda calculus, where type checking can be done efficiently~\cite{M04}. Therefore, if one has access to an oracle for checking orthogonality between terms, then type checking in \punq{} can be done efficiently.

\begin{definition} Let $\varphi$ be an \emph{oracle for orthogonality} between \punq{} terms, where, for all $\vec{t},\vec{s}\in \SUP$, and contexts $\Gamma,\Delta$, we have $\varphi(\vec{t},\vec{s},\Gamma,\Delta)\triangleq 0\iff \vec{t}\perp^{\Gamma;\Delta} \vec{s}$.
\end{definition}

\begin{restatable}[]{lemma}{lemmaOrthoOracle}\label{lemma:ortho-oracle}
$\CHK{}\in \mathsf{PTIME}^{\varphi}$.
\end{restatable}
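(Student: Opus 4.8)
The plan is to exhibit a type-checking procedure that recurses on the structure of the superposition $\vec t$ in an essentially syntax-directed way, and to show that each of its steps runs in polynomial time while it makes only polynomially many calls to $\varphi$. The key observation is that the only side conditions in the typing rules of Figure~\ref{fig:type} that are not purely structural are the orthogonality premises of $(\mathsf{if}_\sharp)$ and $(\sharp_i)$, and those are precisely what we delegate to the oracle: a call $\varphi(\vec s_1,\vec s_2,\Gamma',\Delta')$ decides $\vec s_1\perp^{\Gamma';\Delta'}\vec s_2$ in a single step. Everything else amounts, up to bookkeeping, to type checking for the $\dlal$ fragment extended with multiplicative pairs, the $\sharp$ modality, ground types, and the subtyping relation $\leq$, and we must argue that these additions preserve polynomiality.

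First I would put derivations into a canonical shape so that the non-syntax-directed rules $(\mathsf{W})$, $(\mathsf{C})$, $(\equiv)$, $(\leq)$, and $(\S_e)$ occur only at controlled positions: weakening pushed towards the axioms, contraction applied eagerly according to free-variable occurrences, $(\equiv)$ used once at the root to bring $\vec t$ into canonical form, and $(\leq)$-steps merged into the surrounding logical rule; this rewriting is standard and enlarges the derivation only polynomially. After this normalization, the last logical rule is determined by the outermost constructor of $\vec t$: a sum or a scalar product forces $(\sharp_i)$ (possibly followed by $(\leq)$); a conditional forces $(\mathsf{if})$ or $(\mathsf{if}_\sharp)$ according to the synthesized type of the guard; an abstraction $\lambda x.\vec t$ forces $(\multimap_i)$ or $(\Rightarrow_i)$ according to the shape of the goal type $A$ (for the intuitionistic case one checks that the domain lies in the image of the idempotent map $!$ and recurses with $x$ typed by the canonical, most-superposed $!$-preimage, subtyping absorbing any remaining slack); an application forces $(\multimap_e)$ or $(\Rightarrow_e)$; a pair forces $(\times_i)$; and a pair destructor forces $(\times_e)$ or $(\times_{e\sharp})$. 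In each case the splitting of the linear and exponential contexts between the premises is determined, up to the action of the structural rules, by the free-variable occurrences in the subterms, and the only genuinely missing data are the intermediate types — the argument type in $(\multimap_e)$ and $(\Rightarrow_e)$, the component types in $(\times_e)$ and $(\times_{e\sharp})$, the witness type in $(\forall_e)$, the modality depth in $(\S_e)$. These are recovered by a type-synthesis subroutine obtained by adapting the polynomial-time inference algorithms for $\dlal$~\cite{BT09} and for simply typed $\lambda$-calculus~\cite{M04}: the subtyping relation $\leq$ is decidable in polynomial time since it is generated by finitely many shape-preserving rules, the bang function $!$ and the $!$-aware type substitution $\sigma^{\pm}$ are computable in linear time, and the $\sharp$- and ground-type layers add only a bounded amount of structure on top of the $\dlal$ skeleton.

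Finally I would bound the overall cost: the recursion tree has one node per subterm occurrence of $\vec t$, hence $O(|\vec t|)$ nodes; each node runs one invocation of the $\dlal$/STLC synthesis routine together with a bounded number of subtyping and $!$-computations, all polynomial in $|\vec t|$, plus the orthogonality side conditions — one $\varphi$-query at an $(\mathsf{if}_\sharp)$ node and at most $\binom{n}{2}=O(|\vec t|^2)$ $\varphi$-queries at a $(\sharp_i)$ node with $n\leq|\vec t|$ summands. Summing over the tree yields a polynomial number of oracle calls and polynomial running time, so $\CHK\in\mathsf{PTIME}^{\varphi}$. The main obstacle is the middle step: one must verify that restricting attention to the $\dlal$-like skeleton of a $\punq{}$ derivation really does keep type synthesis polynomial — in particular that the interplay of subtyping with polymorphism ($\forall_e$) and with the $\sharp$ modality does not force the algorithm to explore exponentially many candidate types or instantiations — and that the canonicalization of derivations in the first step is both sound and of polynomial size; these are the points where the proof must go beyond merely quoting~\cite{BT09,M04}.
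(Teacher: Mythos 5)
Your proposal follows the same route as the paper, which dispatches this lemma with only a one-paragraph remark in the main text (and no explicit proof in the appendix): modulo the orthogonality premises of $(\mathsf{if}_\sharp)$ and $(\sharp_i)$, which are delegated to the oracle $\varphi$, \punq{} type checking reduces to the polynomial-time inference algorithms for \dlal{}~\cite{BT09} and the simply typed $\lambda$-calculus~\cite{M04}. Your elaboration is substantially more detailed — normalizing derivations, making the remaining rules syntax-directed, and bounding oracle calls by $O(|\vec t|^2)$ per node — and it honestly flags the genuine concern (whether subtyping, $\sharp$, and $(\forall_e)$ instantiation can force a super-polynomial search) that the paper leaves implicit by simply citing those two references.
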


We will address the complexity of type checking in three fragments of the \punq{} language, defined by their orthogonality criteria. The first two fragments only require access to the syntax of the term, whereas the last one takes in as input a ``candidate type'' used to bound the time needed to check orthogonality. Notice that, since we only check orthogonality in cases where we also check the type -- rules $(\mathsf{if}_\sharp)$ and $(\sharp_i)$, this does not lead to a loss of generality.

\begin{definition}
We define the following three criteria for orthogonality over superpositions.
\begin{itemize}
\item $\ortho_\emptyset: \SUP\times \SUP\to\{0,1\}$ defined as
\[\ortho_\emptyset(\vec{t},\vec{s})\triangleq \begin{cases} 0, & \textnormal{ if }\vec{t}\perp \vec{s},\\ 1, & \textnormal{otherwise.}\end{cases}\]
\item $\ortho_\times: \SUP\times \SUP\to\{0,1\}$ defined as 
\[\ortho_\times(\vec{t},\vec{s})\triangleq \begin{cases} \ortho_\times(t_1,s_1)\cdot \ortho_\times(t_2,s_2),&\text{if }\vec{t}=\pair{t_1}{t_2}\text{ and }\vec{s}=\pair{s_1}{s_2},\\ 
\ortho_\emptyset(\vec{t},\vec{s}), & \textnormal{otherwise.}\end{cases}\]
\item $\ortho_\mathsf{untyped}: \SUP\times \SUP\times \mathbb{T}\to\{0,1\}$~defined as follows. Let $FV(\vec{t})\cup FV(\vec{s})=x_1,\dots,x_n$.
\[\ortho_{\mathsf{untyped}}(\vec{t},\vec{s},Q)\triangleq 1-\prod\nolimits_{\vec{v}_1,\dots \vec{v}_n\in \V_c} \mathsf{PERP}\Big(\vec{t}\langle \vec{v}_1/x_1,\dots, \vec{v}_n/x_n\rangle,\  \vec{s}\langle \vec{v}_1/x_1,\dots, \vec{v}_n/x_n\rangle,Q\Big),\]
with $\mathsf{PERP}:\SUP_c\times\SUP_c\times \mathbb{T} \to \{0,1\}$ defined as
$$\mathsf{PERP}(\vec{t}_1,\vec{t}_2,Q)\triangleq 1 \iff \vec{t}_1,\vec{t}_2\rightsquigarrow^* \vec{w}_1,\vec{w}_2\in\llbracket Q\rrbracket_\emptyset\textnormal{ and }\ortho_\emptyset(\vec{w}_1,\vec{w}_2)=0.$$
\end{itemize}
For $\mathcal{X}\in\{\emptyset,\times,\mathsf{untyped}\}$, we denote by $\CHK_\mathcal{X}$ set of terms that can be typed in $\punq{}$ using $\ortho_\mathcal{X}$ as definition of orthogonality.
\end{definition}

In $\ortho_\emptyset$, orthogonality is defined only over closed terms. Notice, for instance, that the Hadamard and $Z$ gates (Examples~\ref{ex:hadamard}~and~\ref{ex:z}) only employ this orthogonality, and therefore $\mathsf{H},\mathsf{Z}\in\CHK{}_\emptyset$.
In $\ortho_\times$, we only detect orthogonality via closed terms in a pair structure, for instance by checking that $\pair{\ket{0}\,}{x}$ is orthogonal to $\pair{\ket{1}\,}{y}$ over any possible variable substitutions on $x$ and $y$. This is used in the case of the $CNOT$ gate (Example~\ref{ex:controlled-not}) and the circuit for teleportation which employs it (Example~\ref{ex:teleportation}), and so $\mathsf{CNOT},\mathsf{telep}\in\CHK_\times$. In $\ortho_\mathsf{untyped}$, we check over all possible value substitutions -- even those where the value used does not have the correct type -- check if the term reduces into the correct type and, if so, test its orthogonality.

\begin{restatable}[]{lemma}{lemmachecksincluded}\label{lemma:checks-included}
$\CHK_\emptyset\subseteq \CHK_\times \subseteq \CHK_\mathsf{untyped}\subseteq \CHK$.
\end{restatable}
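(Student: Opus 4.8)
The plan is to exploit that the three fragments are typed with \emph{identical} rules, differing only in which orthogonality predicate is required by the side conditions of $(\mathsf{if}_\sharp)$ and $(\sharp_i)$; so each inclusion reduces to a pointwise implication between predicates, after which a derivation valid in the more restrictive fragment stays valid verbatim in the less restrictive one. Concretely I would prove: (a) $\ortho_\emptyset(\vec t,\vec s)=0\Rightarrow\ortho_\times(\vec t,\vec s)=0$; (b) $\ortho_\times(\vec t,\vec s)=0\Rightarrow\ortho_{\mathsf{untyped}}(\vec t,\vec s,Q)=0$, where $Q$ is the type at which $\vec t,\vec s$ are typed at the node carrying the check; and (c) $\ortho_{\mathsf{untyped}}(\vec t,\vec s,Q)=0\Rightarrow\vec t\perp_Q^{\Gamma,\Delta}\vec s$ for the corresponding contexts. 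The recurring tool is the factorisation of inner products over pairs: since reduction is call-by-value, $\pair{t_1}{t_2}\rightsquigarrow^*\pair{\vec v_1}{\vec v_2}$ forces $t_i\rightsquigarrow^*\vec v_i$, and then $\langle\pair{\vec v_1}{\vec v_2}|\pair{\vec w_1}{\vec w_2}\rangle=\langle\vec v_1|\vec w_1\rangle\langle\vec v_2|\vec w_2\rangle$ because a Kronecker delta between two basis pairs splits over the components. With this in hand, (a) is a straightforward induction on $|\vec t|$: if $\vec t$ or $\vec s$ is not a pair then $\ortho_\times$ is by definition $\ortho_\emptyset$ and there is nothing to prove; and if $\vec t=\pair{t_1}{t_2}$, $\vec s=\pair{s_1}{s_2}$, then $\ortho_\emptyset(\vec t,\vec s)=0$ forces both sides to reduce to values with $\langle\vec v_1|\vec w_1\rangle\langle\vec v_2|\vec w_2\rangle=0$, so $\ortho_\emptyset$ already vanishes on one component pair, and the induction hypothesis propagates this to $\ortho_\times$, whence $\ortho_\times(\vec t,\vec s)=\ortho_\times(t_1,s_1)\,\ortho_\times(t_2,s_2)=0$.

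For (b) I would unfold the recursion defining $\ortho_\times$ to pick out a position $p$ in the common pair skeleton of $\vec t,\vec s$ at which the sub-superpositions $t',s'$ satisfy $\ortho_\emptyset(t',s')=0$; such $t',s'$ are necessarily closed (otherwise that value would be $1$) and orthogonal. Any substitution of the free variables of $\vec t,\vec s$ by closed values leaves $t'$ and $s'$ untouched, so the substituted terms still carry $t'$ resp.\ $s'$ at $p$. If either substituted term fails to normalise to a value of $\llbracket Q\rrbracket_\emptyset$, the corresponding factor $\mathsf{PERP}$ carries no orthogonality requirement; otherwise both normalise into $\llbracket Q\rrbracket_\emptyset$, and since reduction of a superposition of pairs acts componentwise and preserves the skeleton, iterating the factorisation through $p$ (with $t'\rightsquigarrow^* v'$, $s'\rightsquigarrow^* w'$, $\langle v'|w'\rangle=0$) shows the two normal forms are orthogonal. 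Either way every factor of the product defining $\ortho_{\mathsf{untyped}}$ equals $1$, so $\ortho_{\mathsf{untyped}}(\vec t,\vec s,Q)=0$.

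For (c) I would observe that each type-preserving substitution $\sigma$ of the variables of $\Gamma,\Delta$ by closed basis values — exactly those over which $\perp_Q^{\Gamma,\Delta}$ quantifies — is among the closed-value substitutions ranged over by $\ortho_{\mathsf{untyped}}$. From the premises at that node we have $\Gamma;\Delta\vdash\vec t:Q$ and $\Gamma;\Delta\vdash\vec s:Q$, and $\sigma\in\llbracket\Gamma,\Delta\rrbracket_\emptyset$, so validity of the typing rules (Theorem~\ref{thm:validity}) gives $\vec t\sigma\rightsquigarrow^*\vec w_1\in\llbracket Q\rrbracket_\emptyset$ and $\vec s\sigma\rightsquigarrow^*\vec w_2\in\llbracket Q\rrbracket_\emptyset$. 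Then $\ortho_{\mathsf{untyped}}(\vec t,\vec s,Q)=0$ forces $\mathsf{PERP}(\vec t\sigma,\vec s\sigma,Q)=1$, i.e.\ $\ortho_\emptyset(\vec w_1,\vec w_2)=0$, i.e.\ $\vec t\sigma\perp_Q\vec s\sigma$; since $\sigma$ was arbitrary, $\vec t\perp_Q^{\Gamma,\Delta}\vec s$. Putting (a), (b) and (c) together, the side conditions of $(\mathsf{if}_\sharp)$ and $(\sharp_i)$ transfer along $\CHK_\emptyset\subseteq\CHK_\times\subseteq\CHK_{\mathsf{untyped}}\subseteq\CHK$, while every other rule is literally shared.

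The step I expect to be the main obstacle is (b). Because $\ortho_{\mathsf{untyped}}$ ranges over \emph{all} closed-value substitutions — including ill-typed ones that may make the terms diverge or get stuck at the wrong type — one must argue simultaneously that (i) the closed orthogonal subterm singled out by $\ortho_\times$ is genuinely untouched by every substitution and survives reduction, so it keeps forcing orthogonality on every run that does reach the target type, and (ii) runs that do not reach the target type cost nothing for the predicate. Making (i) precise calls for an auxiliary lemma stating that reduction of a superposition of pairs acts componentwise on the pair skeleton and that orthogonality of a component is preserved by further reduction; that is where most of the routine verification sits.
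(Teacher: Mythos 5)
Your approach matches the paper's own proof: both reduce the three inclusions to pointwise implications between the orthogonality criteria ($\ortho_\emptyset\Rightarrow\ortho_\times\Rightarrow\ortho_\mathsf{untyped}\Rightarrow\perp$) and observe that every other typing rule is literally shared; the paper does this in three sentences and your elaboration of (a) via separability of reducts of syntactic pairs and of (c) via Theorem~\ref{thm:validity} is the right way to make those steps precise.

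There is one point worth pinning down in (b), which you half-notice when you call it the main obstacle. You read $\mathsf{PERP}$ as defaulting to $1$ when a substitution makes either term fail to reach $\llbracket Q\rrbracket_\emptyset$ (``carries no orthogonality requirement''). That is the intended semantics, and the paper's prose (``check if the term reduces into the correct type and, if so, test its orthogonality'') says the same; but the formal definition has the opposite polarity: $\mathsf{PERP}(\vec t_1,\vec t_2,Q)=1$ is stated as an ``iff'' whose right-hand side \emph{requires} $\vec t_1,\vec t_2\rightsquigarrow^*\vec w_1,\vec w_2\in\llbracket Q\rrbracket_\emptyset$ and $\ortho_\emptyset(\vec w_1,\vec w_2)=0$. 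Under that literal reading, any substitution by a closed value of the wrong shape (say a lambda) makes $\mathsf{PERP}=0$, hence $\ortho_\mathsf{untyped}=1$; so $\ortho_\mathsf{untyped}$ would reject every open superposition with a free variable, and the inclusion $\CHK_\times\subseteq\CHK_\mathsf{untyped}$ would already fail on $\mathsf{CNOT}$ (substitute a lambda for $y$ in $\pair{\ket{0}\,}{y}$). The paper's own proof of this lemma silently uses the corrected polarity, so your step (b) is in line with the paper's intent; but to be rigorous you should note that the definition of $\mathsf{PERP}$ must be flipped — e.g., $\mathsf{PERP}(\vec t_1,\vec t_2,Q)=0$ iff both terms reduce into $\llbracket Q\rrbracket_\emptyset$ and the reducts are \emph{not} orthogonal — for (b), and the lemma itself, to go through.
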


%
%\begin{restatable}[]{theorem}{thmpunqsincluded}\label{thm:punqs-included}
%$\punq_\emptyset\subseteq \punq_S \subseteq \punq_T\subseteq \punq$.
%\end{restatable}
%

$\ortho_\emptyset$ and $\ortho_\times$ can be verified efficiently, and therefore ensure that type checking can be done in polytime. However, we conjecture that general orthogonality is undecidable.

\begin{restatable}[]{lemma}{lemmaorthoscomplexity}\label{lemma:orthos-complexity}
$\ortho_\emptyset,\ortho_\times\in\mathsf{PTIME}$, and $\ortho_\mathsf{untyped}\in \Pi^0_1$.
\end{restatable}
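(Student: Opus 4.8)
The plan is to treat the three criteria in turn. In all three cases orthogonality is ultimately decided by normalising the superpositions at hand and then comparing the resulting closed values, so the workhorse is Theorem~\ref{thm:soundness}, which provides, for each type $A$, a polynomial $P_A$ bounding both the number of call-by-value reduction steps needed to reach a value from a term of type $A$ and the size of that value (and, via the $\dlal$-encoding together with Corollary~\ref{coro:sound}, the size of every superposition occurring along the reduction). Two elementary facts will be used repeatedly: each individual step of $\rightsquigarrow$ (Figure~\ref{fig:sem}) is computable in time polynomial in the size of the current superposition; and the inner product $\langle\vec v|\vec w\rangle$ of two closed values is computable by pairwise syntactic comparison of the polynomially many basis values occurring in their canonical forms, together with arithmetic and zero-testing over $\Ct$, which is feasible under the standing assumption that amplitudes are polytime approximable.

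For $\ortho_\emptyset$: inside a $\CHK_\emptyset$ derivation this function is only ever invoked, in rules $(\mathsf{if}_\sharp)$ and $(\sharp_i)$, on closed superpositions $\vec t,\vec s$ that are \emph{at the same time} typed with a ground type $Q$; by Theorem~\ref{thm:soundness} each of them reaches a value within $O(P_Q(|\vec t|))$ (resp.\ $O(P_Q(|\vec s|))$) steps, and then the test $\langle\vec v|\vec w\rangle = 0$ is carried out as above, so the whole call runs in polynomial time. For $\ortho_\times$: its recursive clause strips a common syntactic pair constructor from both arguments, so its call tree has size at most $|\vec t|+|\vec s|$, and each leaf is a call of $\ortho_\emptyset$ on two subterms which are either open — in which case $\ortho_\emptyset$ answers $1$ at once, reduction on $\SUP_c$ not applying to open terms — or closed and typed with a component of the ground type being checked, to which the previous sentence applies; polynomially many polynomial-time leaves combined by multiplication give $\ortho_\times\in\mathsf{PTIME}$.

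For $\ortho_\mathsf{untyped}$: unfolding the definition, $\ortho_\mathsf{untyped}(\vec t,\vec s,Q)=0$ holds exactly when $\mathsf{PERP}(\vec t\langle\overline v/\overline x\rangle,\vec s\langle\overline v/\overline x\rangle,Q)=1$ for \emph{every} tuple $\overline v=(\vec v_1,\dots,\vec v_n)$ of closed values, where $\overline x=FV(\vec t)\cup FV(\vec s)$. Since the set of such tuples is enumerable, by the definition of $\Pi^0_1$ it suffices to show that the matrix predicate ``$\mathsf{PERP}(\cdot,\cdot,Q)=1$'' is decidable. This is exactly where the candidate type $Q$ is used: to decide $\mathsf{PERP}(\vec t_1,\vec t_2,Q)$ we reduce $\vec t_1$ and $\vec t_2$ for at most $P_Q(|\vec t_1|)$ and $P_Q(|\vec t_2|)$ steps respectively (the a priori ceiling from Theorem~\ref{thm:soundness}), answer $0$ unless both reach values within that ceiling and both values lie in $\llbracket Q\rrbracket_\emptyset$ — membership of a value in canonical form in $\llbracket Q\rrbracket_\emptyset$ being decidable by matching its shape against the grammar of ground types and, at each $\sharp$, checking the unit-norm condition over $\Ct$ — and otherwise return $[\ortho_\emptyset(\vec w_1,\vec w_2)=0]$, computed as in the first paragraph on the closed values $\vec w_1,\vec w_2$. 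This algorithm always halts, so $\mathsf{PERP}$ is recursive, and hence $\ortho_\mathsf{untyped}\in\Pi^0_1$.

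The delicate point, and the only real obstacle, is this last decidability claim for $\mathsf{PERP}$. Read with an unbounded $\rightsquigarrow^*$, ``$\vec t_i$ reduces to a value'' is a halting-type question for a Turing-complete reduction relation (the substituted terms $\vec t_i$ need not be typable), hence only semi-decidable, which would merely place $\ortho_\mathsf{untyped}$ in $\Pi^0_2$. What rescues $\Pi^0_1$ — and the reason $Q$ is passed as an explicit argument — is precisely that the candidate ground type $Q$ bounds the admissible reduction length polynomially via Theorem~\ref{thm:soundness}, so ``$\vec t_i$ reduces to a value of $\llbracket Q\rrbracket_\emptyset$'' is to be read as, and decided by, a $P_Q$-bounded reduction; one should check that this reading is the intended one and is consistent with the soundness inclusion $\CHK_\mathsf{untyped}\subseteq\CHK$ of Lemma~\ref{lemma:checks-included}.
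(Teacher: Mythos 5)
Your proof follows the same route as the paper's: bound the reduction length for ground types via Theorem~\ref{thm:soundness} to obtain polytime for $\ortho_\emptyset$ and (by a polynomial-size recursion bottoming out in $\ortho_\emptyset$) for $\ortho_\times$, and place $\ortho_\mathsf{untyped}$ in $\Pi^0_1$ by noting that it is a $\forall$ over the countable $\V_c$ of a matrix predicate $\mathsf{PERP}$ that is made decidable by truncating $\rightsquigarrow^*$ at the $P_Q$ ceiling. The subtlety you flag --- that the literal, unbounded reading of $\rightsquigarrow^*$ in $\mathsf{PERP}$ would make the matrix only $\Sigma^0_1$ and hence $\ortho_\mathsf{untyped}$ only $\Pi^0_2$ --- is exactly what the paper addresses, more tersely, by observing that if no value has been reached after $P_Q(|\vec t_i|)$ steps then the instantiated term cannot have type $Q$ and the trial substitution is safely discarded. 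Your closing concern about consistency with $\CHK_\mathsf{untyped}\subseteq\CHK$ does resolve: any type-preserving basis-value substitution produces a closed term typable at $Q$, which by Theorem~\ref{thm:soundness} reduces within the ceiling, so on precisely the instantiations that $\perp^{\Gamma;\Delta}$ inspects the truncated $\mathsf{PERP}$ agrees with the unbounded one, and Lemma~\ref{lemma:checks-included} is unaffected.
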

\begin{proof}
Proof given in Appendix~\ref{app:type-checking}.
\end{proof}

The proof of Lemma~\ref{lemma:orthos-complexity} rests also on the choice of $\Ct$ for the amplitudes in \punq{}, where comparisons can be done efficiently~\cite[Proposition 2.2]{HHHJ05}. Furthermore, since type checking is done classically, we must bound the complexity of reducing terms to check their orthogonality in the classical Turing machine model (see Appendix~\ref{app:type-checking}).

\begin{restatable}[]{theorem}{thmchkcomplexity}\label{thm:chk-complexity}
$\CHK{}_\emptyset, \CHK{}_\times\in \mathsf{PTIME}$ and $\CHK{}_\mathsf{untyped}\in\mathsf{PTIME}^{\Pi_1^0}$.
\end{restatable}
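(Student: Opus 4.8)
The plan is to derive all three bounds from the two results already established: Lemma~\ref{lemma:ortho-oracle}, which reduces $\CHK$ to type checking in the $\dlal$-plus-simply-typed skeleton of \punq{} together with polynomially many calls to an orthogonality oracle $\varphi$, and Lemma~\ref{lemma:orthos-complexity}, which pins down the complexity of each of the three orthogonality criteria. The theorem then follows by composing the oracle machine of Lemma~\ref{lemma:ortho-oracle} with the appropriate orthogonality procedure, and checking that the queries it issues stay small.

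First I would make precise the shape of the reduction provided by Lemma~\ref{lemma:ortho-oracle}: there is a deterministic oracle Turing machine $M$ that, on input $(\vec t, A)$ with $n \triangleq |\vec t| + |A|$, runs in time $\poly(n)$, queries its oracle only with the orthogonality instances appearing in the (polynomial-size) typing derivation it is building, at nodes using rules $(\mathsf{if}_\sharp)$ and $(\sharp_i)$, and accepts iff $;\vdash \vec t : A$. In particular $M$ issues at most $\poly(n)$ queries, and every query argument — the two superpositions, the two contexts, and (for the untyped criterion) the ground type $Q$ which is precisely the target type at that node of the derivation, available to $M$ as remarked before the definition of the three criteria — has size bounded by $\poly(n)$, since it is read off subterms of $\vec t$ or subexpressions of a derivation of size $\poly(n)$. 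This uniformity and size bound, together with the fact that $M$'s control flow depends on orthogonality only through the oracle answers, is the single non-routine ingredient; it is exactly what Lemma~\ref{lemma:ortho-oracle} and the definition of $\CHK_\mathcal{X}$ supply, so I expect no fresh obstacle here.

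For $\CHK_\emptyset$ and $\CHK_\times$: by the very definition of these problems, they are decided by $M$ when its oracle returns $\ortho_\emptyset$, respectively $\ortho_\times$, on each query. By Lemma~\ref{lemma:orthos-complexity} both criteria are computable in time polynomial in the size of their arguments, hence in time $\poly(n)$ on each of $M$'s queries. Inlining this computation in place of each oracle call turns $M$ into an ordinary deterministic machine running in time $\poly(n)$ — equivalently, $\mathsf{PTIME}^{\mathsf{PTIME}} = \mathsf{PTIME}$. Hence $\CHK_\emptyset, \CHK_\times \in \mathsf{PTIME}$.

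For $\CHK_\mathsf{untyped}$: the same machine $M$ now queries the criterion $\ortho_\mathsf{untyped}$, passing the candidate type of the active $(\mathsf{if}_\sharp)$ or $(\sharp_i)$ node. By Lemma~\ref{lemma:orthos-complexity}, $\ortho_\mathsf{untyped}$ is a $\Pi_1^0$ predicate. Since $M$ runs in polynomial time and issues polynomially many queries of polynomial size, this is a genuine polynomial-time Turing reduction to a $\Pi_1^0$ oracle, and therefore $\CHK_\mathsf{untyped} \in \mathsf{PTIME}^{\Pi_1^0}$. The delicate step of the whole argument is the size/uniformity bound on the queries noted in the second paragraph; once that is in place, everything reduces to composing the complexity bounds of the two lemmas.
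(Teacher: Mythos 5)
Your proof is correct and takes essentially the same approach as the paper, which states the result simply as following from Lemmas~\ref{lemma:ortho-oracle} and~\ref{lemma:orthos-complexity}. Your elaboration — spelling out that the oracle machine of Lemma~\ref{lemma:ortho-oracle} issues polynomially many queries of polynomial size (so that $\mathsf{PTIME}^{\mathsf{PTIME}}=\mathsf{PTIME}$ applies), and that the candidate ground type is available at each $(\mathsf{if}_\sharp)$ and $(\sharp_i)$ node — is exactly the content left implicit in the paper's one-line proof.
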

\begin{proof}
From Lemmas~\ref{lemma:ortho-oracle} and~\ref{lemma:orthos-complexity}.
\end{proof}

\section{Conclusions and Future Work}
We have introduced $\punq$, a new quantum programming language with quantum
control capabilities.  
$\punq$ stands out as the first feasible
language with quantum control. Supporting higher-order programs, the language boasts a robust type system ensuring both unitarity
and polynomial-time normalization.
To conclude, we claim $\punq$ is a pioneering language, seamlessly combining
theoretical depth with practical applicability in the realm of quantum
computing. Plenty of interesting questions remain open:
\begin{itemize}
\item \textit{Extending quantum control}. So far, in \punq{}, we have limited superpositions to ground types, i.e., non functional types such as qubits. Extending superpositions to functional types, e.g., $\sharp(A\multimap B)$ or $\sharp (A\Rightarrow B)$, would allow for natural characterizations of experiments in indefinite causal order such as the quantum switch~\cite{QuantumSwitch15,QuantumSwitch17}. This question is also interesting for polymorphism, since the lists of type $A$ in \dlal{}, $\forall X.(A\multimap X\multimap X)\Rightarrow \S(X\multimap X)$, cannot be superposed in \punq{}.
\item \textit{Orthogonality criteria}. Extending the tractable criteria for orthogonality given in Section~\ref{s:type checking} to allow for more expressivity in the oracle-free case.
\item \textit{Circuit compilation}. Limiting quantum programs to polynomial time in a quantum Turing machine corresponds to a limitation to uniform quantum circuits~\cite{Y93}. However, direct compilation strategies that preserve polynomial time are not trivial~\cite{HPS23}. In this vein, it would be interesting to show a compilation strategy that would assign a polysize circuit representation of any \punq{} term that corresponds to a unitary transformation.
\end{itemize}

\bibliography{biblio}

%%% -*-BibTeX-*-
%%% Do NOT edit. File created by BibTeX with style
%%% ACM-Reference-Format-Journals [18-Jan-2012].

\begin{thebibliography}{50}

%%% ====================================================================
%%% NOTE TO THE USER: you can override these defaults by providing
%%% customized versions of any of these macros before the \bibliography
%%% command.  Each of them MUST provide its own final punctuation,
%%% except for \shownote{}, \showDOI{}, and \showURL{}.  The latter two
%%% do not use final punctuation, in order to avoid confusing it with
%%% the Web address.
%%%
%%% To suppress output of a particular field, define its macro to expand
%%% to an empty string, or better, \unskip, like this:
%%%
%%% \newcommand{\showDOI}[1]{\unskip}   % LaTeX syntax
%%%
%%% \def \showDOI #1{\unskip}           % plain TeX syntax
%%%
%%% ====================================================================

\ifx \showCODEN    \undefined \def \showCODEN     #1{\unskip}     \fi
\ifx \showDOI      \undefined \def \showDOI       #1{#1}\fi
\ifx \showISBNx    \undefined \def \showISBNx     #1{\unskip}     \fi
\ifx \showISBNxiii \undefined \def \showISBNxiii  #1{\unskip}     \fi
\ifx \showISSN     \undefined \def \showISSN      #1{\unskip}     \fi
\ifx \showLCCN     \undefined \def \showLCCN      #1{\unskip}     \fi
\ifx \shownote     \undefined \def \shownote      #1{#1}          \fi
\ifx \showarticletitle \undefined \def \showarticletitle #1{#1}   \fi
\ifx \showURL      \undefined \def \showURL       {\relax}        \fi
% The following commands are used for tagged output and should be
% invisible to TeX
\providecommand\bibfield[2]{#2}
\providecommand\bibinfo[2]{#2}
\providecommand\natexlab[1]{#1}
\providecommand\showeprint[2][]{arXiv:#2}

\bibitem[Adleman et~al\mbox{.}(1997)]%
        {ADH97}
\bibfield{author}{\bibinfo{person}{Leonard~M. Adleman},
  \bibinfo{person}{Jonathan DeMarrais}, {and} \bibinfo{person}{Ming-Deh~A.
  Huang}.} \bibinfo{year}{1997}\natexlab{}.
\newblock \showarticletitle{Quantum Computability}.
\newblock \bibinfo{journal}{\emph{SIAM J. Comput.}} \bibinfo{volume}{26},
  \bibinfo{number}{5} (\bibinfo{year}{1997}), \bibinfo{pages}{1524--1540}.
\newblock
\urldef\tempurl%
\url{https://doi.org/10.1137/S0097539795293639}
\showDOI{\tempurl}


\bibitem[Altenkirch and Grattage(2005)]%
        {AG05}
\bibfield{author}{\bibinfo{person}{Thorsten Altenkirch} {and}
  \bibinfo{person}{Jonathan Grattage}.} \bibinfo{year}{2005}\natexlab{}.
\newblock \showarticletitle{A Functional Quantum Programming Language}. In
  \bibinfo{booktitle}{\emph{{LICS} 2005}}. \bibinfo{publisher}{{IEEE} Computer
  Society}, \bibinfo{pages}{249--258}.
\newblock
\urldef\tempurl%
\url{https://doi.org/10.1109/LICS.2005.1}
\showDOI{\tempurl}


\bibitem[Ambainis(2003)]%
        {A03}
\bibfield{author}{\bibinfo{person}{Andris Ambainis}.}
  \bibinfo{year}{2003}\natexlab{}.
\newblock \showarticletitle{Quantum walks and their algorithmic applications}.
\newblock \bibinfo{journal}{\emph{International Journal of Quantum
  Information}} \bibinfo{volume}{01}, \bibinfo{number}{04}
  (\bibinfo{year}{2003}), \bibinfo{pages}{507--518}.
\newblock
\urldef\tempurl%
\url{https://doi.org/10.1142/S0219749903000383}
\showDOI{\tempurl}


\bibitem[Arrighi and Dowek(2017)]%
        {AD17}
\bibfield{author}{\bibinfo{person}{Pablo Arrighi} {and} \bibinfo{person}{Gilles
  Dowek}.} \bibinfo{year}{2017}\natexlab{}.
\newblock \showarticletitle{Lineal: {A} linear-algebraic Lambda-calculus}.
\newblock \bibinfo{journal}{\emph{Logical Methods in Computer Science}}
  \bibinfo{volume}{13}, \bibinfo{number}{1} (\bibinfo{year}{2017}),
  \bibinfo{pages}{1--33}.
\newblock
\urldef\tempurl%
\url{https://doi.org/10.23638/LMCS-13(1:8)2017}
\showDOI{\tempurl}


\bibitem[Badescu and Panangaden(2015)]%
        {BP15}
\bibfield{author}{\bibinfo{person}{Costin Badescu} {and}
  \bibinfo{person}{Prakash Panangaden}.} \bibinfo{year}{2015}\natexlab{}.
\newblock \showarticletitle{Quantum Alternation: Prospects and Problems}. In
  \bibinfo{booktitle}{\emph{QPL 2015}} \emph{(\bibinfo{series}{{EPTCS}},
  Vol.~\bibinfo{volume}{195})}, \bibfield{editor}{\bibinfo{person}{Chris
  Heunen}, \bibinfo{person}{Peter Selinger}, {and} \bibinfo{person}{Jamie
  Vicary}} (Eds.). \bibinfo{publisher}{Open Publishing Association},
  \bibinfo{pages}{33--42}.
\newblock
\urldef\tempurl%
\url{https://doi.org/10.4204/EPTCS.195.3}
\showDOI{\tempurl}


\bibitem[Baillot and Terui(2009)]%
        {BT09}
\bibfield{author}{\bibinfo{person}{Patrick Baillot} {and}
  \bibinfo{person}{Kazushige Terui}.} \bibinfo{year}{2009}\natexlab{}.
\newblock \showarticletitle{Light types for polynomial time computation in
  lambda calculus}.
\newblock \bibinfo{journal}{\emph{Information and Computation}}
  \bibinfo{volume}{207}, \bibinfo{number}{1} (\bibinfo{year}{2009}),
  \bibinfo{pages}{41--62}.
\newblock
\urldef\tempurl%
\url{https://doi.org/10.1016/j.ic.2008.08.005}
\showDOI{\tempurl}


\bibitem[Barber and Plotkin(1996)]%
        {BP96}
\bibfield{author}{\bibinfo{person}{Andrew Barber} {and} \bibinfo{person}{Gordon
  Plotkin}.} \bibinfo{year}{1996}\natexlab{}.
\newblock \bibinfo{booktitle}{\emph{Dual intuitionistic linear logic}}.
\newblock \bibinfo{publisher}{University of Edinburgh, Department of Computer
  Science}, \bibinfo{address}{Edinburgh, UK}.
\newblock


\bibitem[Bellantoni and Cook(1992)]%
        {BC92}
\bibfield{author}{\bibinfo{person}{Stephen~J. Bellantoni} {and}
  \bibinfo{person}{Stephen~A. Cook}.} \bibinfo{year}{1992}\natexlab{}.
\newblock \showarticletitle{A New Recursion-Theoretic Characterization of the
  Polytime Functions}.
\newblock \bibinfo{journal}{\emph{Computational Complexity}}
  \bibinfo{volume}{2} (\bibinfo{year}{1992}), \bibinfo{pages}{97--110}.
\newblock
\urldef\tempurl%
\url{https://doi.org/10.1007/BF01201998}
\showDOI{\tempurl}


\bibitem[Bennett et~al\mbox{.}(1993)]%
        {BBGCJPW93}
\bibfield{author}{\bibinfo{person}{Charles~H. Bennett}, \bibinfo{person}{Gilles
  Brassard}, \bibinfo{person}{Claude Cr\'epeau}, \bibinfo{person}{Richard
  Jozsa}, \bibinfo{person}{Asher Peres}, {and} \bibinfo{person}{William~K.
  Wootters}.} \bibinfo{year}{1993}\natexlab{}.
\newblock \showarticletitle{Teleporting an unknown quantum state via dual
  classical and Einstein-Podolsky-Rosen channels}.
\newblock \bibinfo{journal}{\emph{Physical Review Letters}}
  \bibinfo{volume}{70}, \bibinfo{number}{13} (\bibinfo{year}{1993}),
  \bibinfo{pages}{1895--1899}.
\newblock
\urldef\tempurl%
\url{https://doi.org/10.1103/PhysRevLett.70.1895}
\showDOI{\tempurl}


\bibitem[Bernstein and Vazirani(1997)]%
        {BV97}
\bibfield{author}{\bibinfo{person}{Ethan Bernstein} {and}
  \bibinfo{person}{Umesh Vazirani}.} \bibinfo{year}{1997}\natexlab{}.
\newblock \showarticletitle{Quantum Complexity Theory}.
\newblock \bibinfo{journal}{\emph{SIAM J. Comput.}} \bibinfo{volume}{26},
  \bibinfo{number}{5} (\bibinfo{year}{1997}), \bibinfo{pages}{1411--1473}.
\newblock
\urldef\tempurl%
\url{https://doi.org/10.1137/S0097539796300921}
\showDOI{\tempurl}


\bibitem[Chardonnet(2023)]%
        {Chardonnet23}
\bibfield{author}{\bibinfo{person}{Kostia Chardonnet}.}
  \bibinfo{year}{2023}\natexlab{}.
\newblock \emph{\bibinfo{title}{Towards a Curry-Howard Correspondence for
  Quantum Computation}}.
\newblock \bibinfo{thesistype}{Ph.\,D. Dissertation}.
  \bibinfo{school}{Université Paris-Saclay}.
\newblock
\urldef\tempurl%
\url{https://theses.hal.science/tel-03959403}
\showURL{%
\tempurl}


\bibitem[Chardonnet et~al\mbox{.}(2022)]%
        {CDVVV22}
\bibfield{author}{\bibinfo{person}{Kostia Chardonnet}, \bibinfo{person}{Marc de
  Visme}, \bibinfo{person}{Beno{\^\i}t Valiron}, {and} \bibinfo{person}{Renaud
  Vilmart}.} \bibinfo{year}{2022}\natexlab{}.
\newblock \bibinfo{title}{The Many-Worlds Calculus}.
\newblock \bibinfo{howpublished}{Preprint at arXiv}.
\newblock
\urldef\tempurl%
\url{http://arxiv.org/abs/2206.10234}
\showURL{%
\tempurl}


\bibitem[Childs et~al\mbox{.}(2003)]%
        {CCDFGS03}
\bibfield{author}{\bibinfo{person}{Andrew~M. Childs}, \bibinfo{person}{Richard
  Cleve}, \bibinfo{person}{Enrico Deotto}, \bibinfo{person}{Edward Farhi},
  \bibinfo{person}{Sam Gutmann}, {and} \bibinfo{person}{Daniel~A. Spielman}.}
  \bibinfo{year}{2003}\natexlab{}.
\newblock \showarticletitle{Exponential algorithmic speedup by a quantum walk}.
  In \bibinfo{booktitle}{\emph{Proceedings of the thirty-fifth annual {ACM}
  symposium on Theory of computing}}. \bibinfo{publisher}{{ACM}}.
\newblock
\urldef\tempurl%
\url{https://doi.org/10.1145/780542.780552}
\showDOI{\tempurl}


\bibitem[{Dal Lago} et~al\mbox{.}(2010)]%
        {DLMZ10}
\bibfield{author}{\bibinfo{person}{Ugo {Dal Lago}}, \bibinfo{person}{Andrea
  Masini}, {and} \bibinfo{person}{Margherita Zorzi}.}
  \bibinfo{year}{2010}\natexlab{}.
\newblock \showarticletitle{Quantum implicit computational complexity}.
\newblock \bibinfo{journal}{\emph{Theoretical Computer Science}}
  \bibinfo{volume}{411}, \bibinfo{number}{2} (\bibinfo{year}{2010}),
  \bibinfo{pages}{377--409}.
\newblock
\urldef\tempurl%
\url{https://doi.org/10.1016/j.tcs.2009.07.045}
\showDOI{\tempurl}


\bibitem[D\'{\i}az-Caro(2021)]%
        {DiazcaroLSFA21}
\bibfield{author}{\bibinfo{person}{Alejandro D\'{\i}az-Caro}.}
  \bibinfo{year}{2021}\natexlab{}.
\newblock \showarticletitle{A quick overview on the quantum control approach to
  the lambda calculus}. In \bibinfo{booktitle}{\emph{Proceedings of the 16th
  Workshop on Logical and Semantic Frameworks with Applications (LSFA'21)}}
  \emph{(\bibinfo{series}{Electronic Proceedings in Theoretical Computer
  Science}, Vol.~\bibinfo{volume}{357})},
  \bibfield{editor}{\bibinfo{person}{Mauricio Ayala-Rinc\'on} {and}
  \bibinfo{person}{Eduardo Bonelli}} (Eds.). \bibinfo{publisher}{Open
  Publishing Association}, \bibinfo{pages}{1--17}.
\newblock
\urldef\tempurl%
\url{https://doi.org/10.4204/EPTCS.357.1}
\showDOI{\tempurl}


\bibitem[D{\'{\i}}az{-}Caro et~al\mbox{.}(2019a)]%
        {DiazcaroDowekRinaldiBIO19}
\bibfield{author}{\bibinfo{person}{Alejandro D{\'{\i}}az{-}Caro},
  \bibinfo{person}{Gilles Dowek}, {and} \bibinfo{person}{Juan~Pablo Rinaldi}.}
  \bibinfo{year}{2019}\natexlab{a}.
\newblock \showarticletitle{Two linearities for quantum computing in the lambda
  calculus}.
\newblock \bibinfo{journal}{\emph{BioSystems}}  \bibinfo{volume}{186}
  (\bibinfo{year}{2019}), \bibinfo{pages}{104012}.
\newblock
\urldef\tempurl%
\url{https://doi.org/10.1016/j.biosystems.2019.104012}
\showDOI{\tempurl}
\newblock
\shownote{Postproceedings of TPNC 2017}.


\bibitem[D{\'{\i}}az{-}Caro et~al\mbox{.}(2019b)]%
        {DCGMV19}
\bibfield{author}{\bibinfo{person}{Alejandro D{\'{\i}}az{-}Caro},
  \bibinfo{person}{Mauricio Guillermo}, \bibinfo{person}{Alexandre Miquel},
  {and} \bibinfo{person}{Beno{\^{\i}}t Valiron}.}
  \bibinfo{year}{2019}\natexlab{b}.
\newblock \showarticletitle{Realizability in the Unitary Sphere}. In
  \bibinfo{booktitle}{\emph{{LICS} 2019}}. \bibinfo{publisher}{{IEEE}},
  \bibinfo{pages}{1--13}.
\newblock
\urldef\tempurl%
\url{https://doi.org/10.1109/LICS.2019.8785834}
\showDOI{\tempurl}


\bibitem[D{\'{\i}}az{-}Caro and Malherbe(2022)]%
        {DCM22}
\bibfield{author}{\bibinfo{person}{Alejandro D{\'{\i}}az{-}Caro} {and}
  \bibinfo{person}{Octavio Malherbe}.} \bibinfo{year}{2022}\natexlab{}.
\newblock \showarticletitle{Quantum Control in the Unitary Sphere:
  {Lambda-$\mathcal S_1$} and its Categorical Model}.
\newblock \bibinfo{journal}{\emph{Logical Methods in Computer Science}}
  \bibinfo{volume}{18}, \bibinfo{number}{3} (\bibinfo{year}{2022}),
  \bibinfo{pages}{1--32}.
\newblock
\urldef\tempurl%
\url{https://doi.org/10.46298/lmcs-18(3:32)2022}
\showDOI{\tempurl}


\bibitem[D{\'{\i}}az{-}Caro and Malherbe(2023)]%
        {DCM23}
\bibfield{author}{\bibinfo{person}{Alejandro D{\'{\i}}az{-}Caro} {and}
  \bibinfo{person}{Octavio Malherbe}.} \bibinfo{year}{2023}\natexlab{}.
\newblock \bibinfo{title}{A concrete model for a typed linear algebraic lambda
  calculus}.
\newblock \bibinfo{howpublished}{To appear in {Mathematical Structures in
  Computer Science}}.
\newblock
\urldef\tempurl%
\url{https://arxiv.org/abs/1806.09236}
\showURL{%
\tempurl}


\bibitem[Gaboardi et~al\mbox{.}(2008)]%
        {GMR08}
\bibfield{author}{\bibinfo{person}{Marco Gaboardi},
  \bibinfo{person}{Jean{-}Yves Marion}, {and} \bibinfo{person}{Simona
  Ronchi~Della Rocca}.} \bibinfo{year}{2008}\natexlab{}.
\newblock \showarticletitle{A logical account of {P}{S}{P}{A}{C}{E}}. In
  \bibinfo{booktitle}{\emph{POPL 2008}}. \bibinfo{publisher}{{ACM}},
  \bibinfo{pages}{121--131}.
\newblock
\urldef\tempurl%
\url{https://doi.org/10.1145/1328438.1328456}
\showDOI{\tempurl}


\bibitem[Girard(1972)]%
        {G72}
\bibfield{author}{\bibinfo{person}{Jean{-}Yves Girard}.}
  \bibinfo{year}{1972}\natexlab{}.
\newblock \showarticletitle{Interpr\'{e}tation fonctionelle et \'{e}limination
  des coupures dans l'arithm\'{e}tique d'ordre sup\'{e}rieur}.
\newblock
\urldef\tempurl%
\url{https://api.semanticscholar.org/CorpusID:117631778}
\showURL{%
\tempurl}


\bibitem[Girard(1987)]%
        {G87}
\bibfield{author}{\bibinfo{person}{Jean{-}Yves Girard}.}
  \bibinfo{year}{1987}\natexlab{}.
\newblock \showarticletitle{Linear Logic}.
\newblock \bibinfo{journal}{\emph{Theoretical Computer Science}}
  \bibinfo{volume}{50} (\bibinfo{year}{1987}), \bibinfo{pages}{1--102}.
\newblock
\urldef\tempurl%
\url{https://doi.org/10.1016/0304-3975(87)90045-4}
\showDOI{\tempurl}


\bibitem[Girard(1998)]%
        {G94}
\bibfield{author}{\bibinfo{person}{Jean{-}Yves Girard}.}
  \bibinfo{year}{1998}\natexlab{}.
\newblock \showarticletitle{Light Linear Logic}.
\newblock \bibinfo{journal}{\emph{Information and Computation}}
  \bibinfo{volume}{143}, \bibinfo{number}{2} (\bibinfo{year}{1998}),
  \bibinfo{pages}{175--204}.
\newblock
\urldef\tempurl%
\url{https://doi.org/10.1006/inco.1998.2700}
\showDOI{\tempurl}


\bibitem[Girard and Lafont(1987)]%
        {GL87}
\bibfield{author}{\bibinfo{person}{Jean{-}Yves Girard} {and}
  \bibinfo{person}{Yves Lafont}.} \bibinfo{year}{1987}\natexlab{}.
\newblock \showarticletitle{Linear Logic and Lazy Computation}. In
  \bibinfo{booktitle}{\emph{TAPSOFT'87}} \emph{(\bibinfo{series}{Lecture Notes
  in Computer Science}, Vol.~\bibinfo{volume}{250})}.
  \bibinfo{publisher}{Springer}, \bibinfo{pages}{52--66}.
\newblock
\urldef\tempurl%
\url{https://doi.org/10.1007/BFb0014972}
\showDOI{\tempurl}


\bibitem[Green et~al\mbox{.}(2013)]%
        {QUIPPER}
\bibfield{author}{\bibinfo{person}{Alexander~S. Green},
  \bibinfo{person}{Peter~LeFanu Lumsdaine}, \bibinfo{person}{Neil~J. Ross},
  \bibinfo{person}{Peter Selinger}, {and} \bibinfo{person}{Beno{\^{\i}}t
  Valiron}.} \bibinfo{year}{2013}\natexlab{}.
\newblock \showarticletitle{Quipper: a scalable quantum programming language}.
  In \bibinfo{booktitle}{\emph{{PLDI} 2013}}. \bibinfo{publisher}{{ACM}},
  \bibinfo{pages}{333--342}.
\newblock
\urldef\tempurl%
\url{https://doi.org/10.1145/2491956.2462177}
\showDOI{\tempurl}


\bibitem[Grover(1996)]%
        {G96}
\bibfield{author}{\bibinfo{person}{Lov~K. Grover}.}
  \bibinfo{year}{1996}\natexlab{}.
\newblock \showarticletitle{A fast quantum mechanical algorithm for database
  search}. In \bibinfo{booktitle}{\emph{Proceedings of the Twenty-Eighth Annual
  ACM Symposium on Theory of Computing}} (Philadelphia, Pennsylvania, USA)
  \emph{(\bibinfo{series}{STOC '96})}. \bibinfo{publisher}{Association for
  Computing Machinery}, \bibinfo{address}{New York, NY, USA},
  \bibinfo{pages}{212–219}.
\newblock
\showISBNx{0897917855}
\urldef\tempurl%
\url{https://doi.org/10.1145/237814.237866}
\showDOI{\tempurl}


\bibitem[Gurevich(1983)]%
        {G83}
\bibfield{author}{\bibinfo{person}{Yuri Gurevich}.}
  \bibinfo{year}{1983}\natexlab{}.
\newblock \showarticletitle{Algebras of Feasible Functions}. In
  \bibinfo{booktitle}{\emph{FOCS 1983}}. \bibinfo{publisher}{{IEEE} Computer
  Society}, \bibinfo{pages}{210--214}.
\newblock
\urldef\tempurl%
\url{https://doi.org/10.1109/SFCS.1983.5}
\showDOI{\tempurl}


\bibitem[Hainry et~al\mbox{.}(2023)]%
        {HPS23}
\bibfield{author}{\bibinfo{person}{Emmanuel Hainry}, \bibinfo{person}{Romain
  P{\'{e}}choux}, {and} \bibinfo{person}{M{\'{a}}rio Silva}.}
  \bibinfo{year}{2023}\natexlab{}.
\newblock \showarticletitle{A Programming Language Characterizing Quantum
  Polynomial Time}. In \bibinfo{booktitle}{\emph{FoSSaCS 2023}}
  \emph{(\bibinfo{series}{Lecture Notes in Computer Science},
  Vol.~\bibinfo{volume}{13992})}. \bibinfo{publisher}{Springer},
  \bibinfo{pages}{156--175}.
\newblock
\urldef\tempurl%
\url{https://doi.org/10.1007/978-3-031-30829-1\_8}
\showDOI{\tempurl}


\bibitem[Halava et~al\mbox{.}(2005)]%
        {HHHJ05}
\bibfield{author}{\bibinfo{person}{Vesa Halava}, \bibinfo{person}{Tero Harju},
  \bibinfo{person}{Mika Hirvensalo}, {and} \bibinfo{person}{Juhani
  Karhumäki}.} \bibinfo{year}{2005}\natexlab{}.
\newblock \bibinfo{booktitle}{\emph{Skolem's Problem - On the Border between
  Decidability and Undecidability}}.
\newblock \bibinfo{type}{{T}echnical {R}eport} 683.
\newblock
\showISBNx{952-12-1534-8}


\bibitem[Hyland and de~Paiva(1993)]%
        {HDP93}
\bibfield{author}{\bibinfo{person}{Martin Hyland} {and}
  \bibinfo{person}{Valeria de Paiva}.} \bibinfo{year}{1993}\natexlab{}.
\newblock \showarticletitle{Full Intuitionistic Linear Logic (extended
  abstract)}.
\newblock \bibinfo{journal}{\emph{Annals of Pure and Applied Logic}}
  \bibinfo{volume}{64}, \bibinfo{number}{3} (\bibinfo{year}{1993}),
  \bibinfo{pages}{273--291}.
\newblock
\urldef\tempurl%
\url{https://doi.org/10.1016/0168-0072(93)90146-5}
\showDOI{\tempurl}


\bibitem[Immerman(1999)]%
        {I12}
\bibfield{author}{\bibinfo{person}{Neil Immerman}.}
  \bibinfo{year}{1999}\natexlab{}.
\newblock \bibinfo{booktitle}{\emph{Descriptive complexity}}.
\newblock \bibinfo{publisher}{Springer}, \bibinfo{address}{New York, NY, USA}.
\newblock
\urldef\tempurl%
\url{https://doi.org/10.1007/978-1-4612-0539-5}
\showDOI{\tempurl}


\bibitem[Kempe(2003)]%
        {K03}
\bibfield{author}{\bibinfo{person}{Julia Kempe}.}
  \bibinfo{year}{2003}\natexlab{}.
\newblock \showarticletitle{Quantum random walks: An introductory overview}.
\newblock \bibinfo{journal}{\emph{Contemporary Physics}} \bibinfo{volume}{44},
  \bibinfo{number}{4} (\bibinfo{date}{jul} \bibinfo{year}{2003}),
  \bibinfo{pages}{307--327}.
\newblock
\urldef\tempurl%
\url{https://doi.org/10.1080/00107151031000110776}
\showDOI{\tempurl}


\bibitem[Knill(1996)]%
        {K96}
\bibfield{author}{\bibinfo{person}{Emmanuel Knill}.}
  \bibinfo{year}{1996}\natexlab{}.
\newblock \bibinfo{booktitle}{\emph{Conventions for quantum pseudocode}}.
\newblock \bibinfo{type}{{T}echnical {R}eport} LA-UR 96-2724.
  \bibinfo{institution}{Los Alamos National Laboratory}.
\newblock
\urldef\tempurl%
\url{http://arxiv.org/abs/2211.02559}
\showURL{%
\tempurl}


\bibitem[Mairson(2004)]%
        {M04}
\bibfield{author}{\bibinfo{person}{Harry~G. Mairson}.}
  \bibinfo{year}{2004}\natexlab{}.
\newblock \showarticletitle{Linear lambda calculus and PTIME-completeness}.
\newblock \bibinfo{journal}{\emph{Journal of Functional Programming}}
  \bibinfo{volume}{14}, \bibinfo{number}{6} (\bibinfo{year}{2004}),
  \bibinfo{pages}{623--633}.
\newblock
\urldef\tempurl%
\url{https://doi.org/10.1017/S0956796804005131}
\showDOI{\tempurl}


\bibitem[Miquel(2011)]%
        {M11}
\bibfield{author}{\bibinfo{person}{Alexandre Miquel}.}
  \bibinfo{year}{2011}\natexlab{}.
\newblock \showarticletitle{A survey of classical realizability}. In
  \bibinfo{booktitle}{\emph{TLCA 2011}} \emph{(\bibinfo{series}{Lecture Notes
  in Computer Science}, Vol.~\bibinfo{volume}{6690})}.
  \bibinfo{publisher}{Springer}, \bibinfo{pages}{1--2}.
\newblock
\urldef\tempurl%
\url{https://doi.org/10.1007/978-3-642-21691-6_1}
\showDOI{\tempurl}


\bibitem[Odifreddi(1992)]%
        {O92}
\bibfield{author}{\bibinfo{person}{Piergiorgio Odifreddi}.}
  \bibinfo{year}{1992}\natexlab{}.
\newblock \bibinfo{booktitle}{\emph{Classical recursion theory: The theory of
  functions and sets of natural numbers}}.
\newblock \bibinfo{publisher}{Elsevier}, \bibinfo{address}{Amsterdam, The
  Netherlands}.
\newblock


\bibitem[Paykin et~al\mbox{.}(2017)]%
        {QWIRE}
\bibfield{author}{\bibinfo{person}{Jennifer Paykin}, \bibinfo{person}{Robert
  Rand}, {and} \bibinfo{person}{Steve Zdancewic}.}
  \bibinfo{year}{2017}\natexlab{}.
\newblock \showarticletitle{{QWIRE:} a core language for quantum circuits}. In
  \bibinfo{booktitle}{\emph{POPL 2017}},
  \bibfield{editor}{\bibinfo{person}{Giuseppe Castagna} {and}
  \bibinfo{person}{Andrew~D. Gordon}} (Eds.). \bibinfo{publisher}{{ACM}},
  \bibinfo{pages}{846--858}.
\newblock
\urldef\tempurl%
\url{https://doi.org/10.1145/3009837.3009894}
\showDOI{\tempurl}


\bibitem[P{\'e}choux(2020)]%
        {P20}
\bibfield{author}{\bibinfo{person}{Romain P{\'e}choux}.}
  \bibinfo{year}{2020}\natexlab{}.
\newblock \emph{\bibinfo{title}{{Implicit computational complexity: Past and
  Future}}}.
\newblock Habilitation {\`a} diriger des recherches.
  \bibinfo{school}{{Universit{\'e} de Lorraine}}.
\newblock
\urldef\tempurl%
\url{https://hal.univ-lorraine.fr/tel-02978986}
\showURL{%
\tempurl}


\bibitem[Procopio et~al\mbox{.}(2015)]%
        {QuantumSwitch15}
\bibfield{author}{\bibinfo{person}{Lorenzo~M. Procopio}, \bibinfo{person}{Amir
  Moqanaki}, \bibinfo{person}{Mateus Ara\'ujo}, \bibinfo{person}{Fabio Costa},
  \bibinfo{person}{Irati~Alonso Calafell}, \bibinfo{person}{Emma~G. Dowd},
  \bibinfo{person}{Deny~R. Hamel}, \bibinfo{person}{Lee~A. Rozema},
  \bibinfo{person}{{\v C}aslav Brukner}, {and} \bibinfo{person}{Philip
  Walther}.} \bibinfo{year}{2015}\natexlab{}.
\newblock \showarticletitle{Experimental superposition of orders of quantum
  gates}.
\newblock \bibinfo{journal}{\emph{Nature communications}}  \bibinfo{volume}{6}
  (\bibinfo{year}{2015}), \bibinfo{pages}{7913}.
\newblock
\urldef\tempurl%
\url{https://doi.org/10.1038/ncomms8913}
\showDOI{\tempurl}


\bibitem[Reynolds(1974)]%
        {R74}
\bibfield{author}{\bibinfo{person}{John~C. Reynolds}.}
  \bibinfo{year}{1974}\natexlab{}.
\newblock \showarticletitle{Towards a theory of type structure}. In
  \bibinfo{booktitle}{\emph{Programming Symposium, Proceedings Colloque sur la
  Programmation, Paris, France, April 9-11, 1974}}
  \emph{(\bibinfo{series}{Lecture Notes in Computer Science},
  Vol.~\bibinfo{volume}{19})}, \bibfield{editor}{\bibinfo{person}{Bernard~J.
  Robinet}} (Ed.). \bibinfo{publisher}{Springer}, \bibinfo{pages}{408--423}.
\newblock
\urldef\tempurl%
\url{https://doi.org/10.1007/3-540-06859-7\_148}
\showDOI{\tempurl}


\bibitem[Rubino et~al\mbox{.}(2017)]%
        {QuantumSwitch17}
\bibfield{author}{\bibinfo{person}{Giulia Rubino}, \bibinfo{person}{Lee~A.
  Rozema}, \bibinfo{person}{Adrien Feix}, \bibinfo{person}{Mateus Ara\'ujo},
  \bibinfo{person}{Jonas~M. Zeuner}, \bibinfo{person}{Lorenzo~M. Procopio},
  \bibinfo{person}{{\v C}aslav Brukner}, {and} \bibinfo{person}{Philip
  Walther}.} \bibinfo{year}{2017}\natexlab{}.
\newblock \showarticletitle{Experimental verification of an indefinite causal
  order}.
\newblock \bibinfo{journal}{\emph{Science advances}} \bibinfo{volume}{3},
  \bibinfo{number}{3} (\bibinfo{year}{2017}), \bibinfo{pages}{e1602589}.
\newblock
\urldef\tempurl%
\url{https://doi.org/10.1126/sciadv.1602589}
\showDOI{\tempurl}


\bibitem[Selinger(2004)]%
        {S04}
\bibfield{author}{\bibinfo{person}{Peter Selinger}.}
  \bibinfo{year}{2004}\natexlab{}.
\newblock \showarticletitle{Towards a quantum programming language}.
\newblock \bibinfo{journal}{\emph{Mathematical Structures in Computer Science}}
  \bibinfo{volume}{14}, \bibinfo{number}{4} (\bibinfo{year}{2004}),
  \bibinfo{pages}{527--586}.
\newblock
\urldef\tempurl%
\url{https://doi.org/10.1017/S0960129504004256}
\showDOI{\tempurl}


\bibitem[Selinger and Valiron(2006)]%
        {SV06}
\bibfield{author}{\bibinfo{person}{Peter Selinger} {and}
  \bibinfo{person}{Beno{\^{\i}}t Valiron}.} \bibinfo{year}{2006}\natexlab{}.
\newblock \showarticletitle{A lambda calculus for quantum computation with
  classical control}.
\newblock \bibinfo{journal}{\emph{Mathematical Structures in Computer Science}}
  \bibinfo{volume}{16}, \bibinfo{number}{3} (\bibinfo{year}{2006}),
  \bibinfo{pages}{527--552}.
\newblock
\urldef\tempurl%
\url{https://doi.org/10.1017/S0960129506005238}
\showDOI{\tempurl}


\bibitem[Van~Oosten(2008)]%
        {V08}
\bibfield{editor}{\bibinfo{person}{Jaap Van~Oosten}} (Ed.).
  \bibinfo{year}{2008}\natexlab{}.
\newblock \bibinfo{booktitle}{\emph{Realizability: an introduction to its
  categorical side}}.
\newblock \bibinfo{publisher}{Elsevier}, \bibinfo{address}{Amsterdam, The
  Netherlands}.
\newblock


\bibitem[Voichick et~al\mbox{.}(2023)]%
        {QUNITY}
\bibfield{author}{\bibinfo{person}{Finn Voichick}, \bibinfo{person}{Liyi Li},
  \bibinfo{person}{Robert Rand}, {and} \bibinfo{person}{Michael Hicks}.}
  \bibinfo{year}{2023}\natexlab{}.
\newblock \showarticletitle{Qunity: {A} Unified Language for Quantum and
  Classical Computing}.
\newblock \bibinfo{journal}{\emph{Proceedings of the {ACM} on Programming
  Languages}} \bibinfo{volume}{7}, \bibinfo{number}{{POPL}}
  (\bibinfo{year}{2023}), \bibinfo{pages}{921--951}.
\newblock
\urldef\tempurl%
\url{https://doi.org/10.1145/3571225}
\showDOI{\tempurl}


\bibitem[Wooters and Zurek(1982)]%
        {WoottersZurek82}
\bibfield{author}{\bibinfo{person}{William~K. Wooters} {and}
  \bibinfo{person}{Wojciech~H. Zurek}.} \bibinfo{year}{1982}\natexlab{}.
\newblock \showarticletitle{A Single Quantum Cannot be Cloned}.
\newblock \bibinfo{journal}{\emph{Nature}}  \bibinfo{volume}{299}
  (\bibinfo{year}{1982}), \bibinfo{pages}{802--803}.
\newblock
\urldef\tempurl%
\url{https://doi.org/10.1038/299802a0}
\showDOI{\tempurl}


\bibitem[Yamakami(2020)]%
        {Y20}
\bibfield{author}{\bibinfo{person}{Tomoyuki Yamakami}.}
  \bibinfo{year}{2020}\natexlab{}.
\newblock \showarticletitle{A Schematic Definition of Quantum Polynomial Time
  Computability}.
\newblock \bibinfo{journal}{\emph{The Journal of Symbolic Logic}}
  \bibinfo{volume}{85}, \bibinfo{number}{4} (\bibinfo{year}{2020}),
  \bibinfo{pages}{1546--1587}.
\newblock
\urldef\tempurl%
\url{https://doi.org/10.1017/jsl.2020.45}
\showDOI{\tempurl}


\bibitem[Yao(1993)]%
        {Y93}
\bibfield{author}{\bibinfo{person}{Andrew~Chi{-}Chih Yao}.}
  \bibinfo{year}{1993}\natexlab{}.
\newblock \showarticletitle{Quantum Circuit Complexity}. In
  \bibinfo{booktitle}{\emph{FOCS 1993}}. \bibinfo{publisher}{{IEEE} Computer
  Society}, \bibinfo{pages}{352--361}.
\newblock
\urldef\tempurl%
\url{https://doi.org/10.1109/SFCS.1993.366852}
\showDOI{\tempurl}


\bibitem[Ying(2016)]%
        {Y16}
\bibfield{author}{\bibinfo{person}{Mingsheng Ying}.}
  \bibinfo{year}{2016}\natexlab{}.
\newblock \bibinfo{booktitle}{\emph{Foundations of quantum programming}}.
\newblock \bibinfo{publisher}{Morgan Kaufmann}, \bibinfo{address}{Burlington,
  MA, USA}.
\newblock


\bibitem[Yuan et~al\mbox{.}(2023)]%
        {Y23}
\bibfield{author}{\bibinfo{person}{Charles Yuan}, \bibinfo{person}{Agnes
  Villanyi}, {and} \bibinfo{person}{Michael Carbin}.}
  \bibinfo{year}{2023}\natexlab{}.
\newblock \bibinfo{title}{Quantum Control Machine: The Limits of Quantum
  Programs as Data}.
\newblock \bibinfo{howpublished}{Preprint at arXiv}.
\newblock
\urldef\tempurl%
\url{http://arxiv.org/abs/2304.15000}
\showURL{%
\tempurl}


\end{thebibliography}

\newpage

\appendix

\section{Proofs}

\subsection{Proofs of Section~\ref{s:type}}\label{app:SR}
\begin{lemma}[Substitution lemma]
  \label{lem:subs}
  For any two contexts $\Gamma,\Delta$, the following hold:
  \begin{enumerate}
    \item If $\Gamma,x:A;\Delta\vdash \vec t:B$ and $;[z:C]\vdash \vec u:A$ then $[z:C],\Gamma;\Delta \vdash \vec t[\vec u/x]:B$.
    \item If $\Gamma;\Delta_1,x: A \vdash \vec t:B$ and $;\Delta_2\vdash \vec u: A$ then $\Gamma;\Delta_1,\Delta_2\vdash \vec t[\vec u/x]:B$.
  \end{enumerate}
\end{lemma}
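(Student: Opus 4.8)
# Proof Proposal for the Substitution Lemma

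The plan is to prove the two statements \emph{simultaneously} by induction on the typing derivation of $\vec t$ (with induction measure the size of that derivation), because the two parts feed into one another: the intuitionistic arrow and the $\S$-modality move a variable between the linear and the exponential position. Throughout, bound variables of $\vec t$ are $\alpha$-renamed away from $FV(\vec u)$. I would first dispatch the routine cases. For $(\mathsf{Ax})$ we have $\vec t=x$ and the conclusion is immediate from the hypothesis on $\vec u$ (plus $(\mathsf W)$ to restore $\Gamma$). For the binary rules $(\multimap_e)$, $(\times_i)$, $(\times_e)$, $(\times_{e\sharp})$, $(\S_e)$ the contexts split into disjoint halves, so $x$ lies in exactly one premise; we apply the appropriate part of the induction hypothesis there and leave the other premise untouched. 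The administrative rules $(\equiv)$, $(\leq)$, $(\forall_i)$, $(\forall_e)$ and the binders $(\multimap_i)$, $(\Rightarrow_i)$ are straightforward recursions; $(\mathsf{if})$ needs the induction hypothesis applied to whichever of the guard and the (shared) branches actually contains $x$.

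The substantive cases are the mode-switching and multiplicity rules. In $(\Rightarrow_e)$, if the variable being substituted is the optional $z$ of the right premise $;[z{:}C']\vdash s:A'$, then $z$ does not occur in the left premise, so $\vec t[\vec u/z]=t\,(s[\vec u/z])$ and we finish $s[\vec u/z]$ by invoking part~(2) on a strictly smaller derivation, then re-apply $(\Rightarrow_e)$; otherwise $x$ sits in the main exponential context and we recurse on the left premise. In $(\S_i)$, whose premise has an empty exponential context, an exponential-context variable of the conclusion occupies the \emph{linear} context of the premise, so for part~(1) we recurse via part~(2) on the premise and re-apply $(\S_i)$, placing the fresh variable $z{:}C$ into the exponential ($\Gamma$) slot so that its type is not decorated with $\S$; part~(2) into a $(\S_i)$ is analogous. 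Case $(\mathsf W)$ is trivial since then $x\notin FV(\vec t)$. In case $(\mathsf C)$, if $x$ is the contracted variable then $\vec t=\vec t_0[x/y]$ for a fresh $y$, and we substitute $\vec u$ for both $x$ and $y$ in $\vec t_0$ by applying part~(1) once for $x$ and once, with a renamed copy $\vec u'$ introducing $z'$, for $y$, then merging the two resulting exponential hypotheses $z{:}C,\,z'{:}C$ by a single use of $(\mathsf C)$; this is precisely where the restriction in part~(1) to \emph{one} open variable in $\vec u$ is essential, so that duplicating $\vec u$ only duplicates $z$.

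The main obstacle is the preservation of the orthogonality side-conditions in $(\mathsf{if}_\sharp)$ and $(\sharp_i)$, the only rules mentioning $\perp^{\Gamma;\Delta}$. If the substituted variable lies in the guard's context of $(\mathsf{if}_\sharp)$ it does not occur in the branches and nothing is to be checked; otherwise one must show that $\vec s_1\perp^{\Gamma';\Delta'}\vec s_2$ implies $\vec s_1[\vec u/x]\perp^{\Gamma'';\Delta''}\vec s_2[\vec u/x]$ over the new combined contexts. Unfolding the definition, one fixes a type-preserving basis-value substitution $\sigma$ of the free variables of the conclusion. When $\vec u$ is a closed basis value — which is the only case needed to derive subject reduction (Theorem~\ref{thm:sr}), since among the reduction rules only $(\mathrm{Abs})$ and $(\mathrm{Let})$ substitute, and they substitute basis values — the composite $\{\vec u/x\}\cup\sigma$ is itself one of the basis-value substitutions already quantified over in $\vec s_1\perp^{\Gamma';\Delta'}\vec s_2$, so the conclusion is immediate and no circularity arises. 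For a general $\vec u$ one additionally uses that $\vec u\sigma$ is closed of ground type, hence reduces (progress) to a value $\sum_j\gamma_j\,w_j$ whose components $w_j$ are basis values of that type, together with the congruence rules of $\rightsquigarrow$ and the bilinearity of $\langle - \mid - \rangle$, to reduce the claim to the finitely many instances $\vec s_1\sigma[w_j/x]\perp\vec s_2\sigma[w_k/x]$, each supplied by the hypothesis. This orthogonality-preservation step, and keeping its dependencies (progress, the operational reading of $\perp$) from circling back on the lemma itself, is the delicate point of the whole argument; everything else is bookkeeping on contexts.
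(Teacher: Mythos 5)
Your proof is correct and uses a genuinely different induction measure from the paper's: you induct on the typing derivation, whereas the paper proceeds ``by structural induction on $t$'' and organises the cases by the syntactic shape of $\vec t$, appealing implicitly to a generation lemma to collapse the administrative rules. The practical difference is that the paper's write-up never exhibits explicit cases for $(\mathsf W)$, $(\mathsf C)$, $(\equiv)$, $(\leq)$, $(\S_i)$, $(\forall_i)$, $(\forall_e)$, nor for the superposition constructors $\vec 0$, $\alpha\cdot\vec s$, $\vec s_1+\vec s_2$; your derivation induction surfaces all of these, and in particular your treatment of $(\mathsf C)$ — where the contracted variable is the one being substituted, forcing $\vec u$ to be duplicated and then the two exponential copies $z,z'$ to be re-merged — makes visible exactly why the lemma can only tolerate at most \emph{one} free variable in $\vec u$, a structural fact the paper's proof does not articulate. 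Your handling of the orthogonality side-conditions in $(\mathsf{if}_\sharp)$ and $(\sharp_i)$ is also more careful than the paper's: the paper asserts in one sentence that ``they remain orthogonal for any substitution of open variables by closed superpositions'' (even though $\perp^{\Gamma;\Delta}$ is defined via basis-value substitutions only), while you correctly observe that when $\vec u$ is a closed basis value the composite $\{\vec u/x\}\cup\sigma$ is already one of the quantified substitutions — which is the only case Subject Reduction actually needs — and that for a general $\vec u$ one must reduce $\vec u\sigma$ to a value and expand bilinearly. Your explicit flag that this step would pull in Progress, and the resolution that closed basis values sidestep the circularity, is a genuine improvement over the paper, which does not acknowledge the issue. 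In short: same lemma, different (derivation-side) decomposition, and your version is tighter on the delicate points.
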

\begin{proof}
  By structural induction on $t$.

  For 1, consider:
  \begin{itemize}    
    \item If $\vec t=x$, then by a generation lemma we have that $\Delta=\emptyset$ and for some type $\vec t$ we have that $A=\S^{a_\S} T$ and $\sharp^b\S\{\sharp,\S\}^{a_\sharp,a_\S} T\leq B$, for $a_\sharp,a_\S,b\geqslant 0$. By another generation lemma we have that if $;[z:C]\vdash u:\S^{a_\S}T$ then we also have $;[z:C]\vdash u:\{\sharp,\S\}^{a_\sharp,a_\S}T$ and by $\S_i$ and $\sharp_i$ and finally by $\leq$ we obtain $[z;C];\vdash u:B$, after which context $\Gamma$ can be added via $\mathsf{W}$.

    \item Let $\vec t = \lambda y. \vec s$. Then we consider two possible cases:
      \begin{itemize}
	\item  $\Gamma,x:A;y:C,\Delta \vdash \vec s:D$ and $C\multimap D \leq B$. By induction hypothesis we have that $[z:C],\Gamma;y:C,\Delta \vdash \vec s[u/x]: D$ and by rule $\multimap_i$ we obtain $[z:C],\Gamma;\Delta\vdash \lambda y.\vec s[u/x]:C\multimap D$ as intended. Notice in this case that $\vec t[y/x]= (\lambda y.\vec s)[u/x] = \lambda y.(\vec s[u/x])$.
	\item $\Gamma,x:A,y:C;\Delta \vdash \vec s:D$ and $C\Rightarrow D\leq B$. Similarly to the first case, by applying the induction hypothesis and rule $\Rightarrow_i$ we obtain $[z:C],\Gamma;\Delta \vdash \lambda y.\vec s[u/x]:C\Rightarrow D$.
      \end{itemize} 

    \item $\vec t=s\ r$. Let us consider two cases: 
      \begin{itemize}
	\item  $\Gamma_1;\Delta_1 \vdash s:C\multimap D$ and $\Gamma_2;\Delta_2 \vdash r:C$, with $D \leq B$, where the contexts are such that $(\Gamma_1,\Gamma_2)=(\Gamma,x:A)$ and $(\Delta_1,\Delta_2)=\Delta$. Let us take the case where $\Gamma_1=(x:A,\Gamma_1')$. Then by induction hypothesis we have that $\Gamma_1';\Delta_1\vdash s[u/x]:C\multimap D$. By rule $\multimap_i$ we can obtain $\Gamma_1',\Gamma_2;\Delta_1,\Delta_2\vdash s[u/x]\vec{r}:D$, as desired. Notice that $(s\vec{r})[u/x]=s[u/x]\vec{r}$ in this case. The case where $x:A\in \Gamma_2$ is analogous.
	\item  $\Gamma_1;\Delta_1 \vdash s:C\Rightarrow D$ and $\Gamma_2;\Delta_2 \vdash r:C$, with $D \leq B$, where the contexts are such that $(\Gamma_1,\Gamma_2)=(\Gamma,x:A)$ and $(\Delta_1,\Delta_2)=\Delta$. This can be treated similarly to the previous case.
      \end{itemize} 

    \item  Let $\vec t = \tif{s}{\vec p_1}{\vec p_2}$. Consider two cases:

      \begin{itemize}
	\item $\Gamma_1;\Delta_1 \vdash s:\B$ and $\Gamma_2;\Delta_2\vdash (\vec p_1 \bot \vec p_2): C$ such that $C\leq B$, with $(\Gamma_1,\Gamma_2)=(\Gamma,x:A)$. If $x:A\in \Gamma_1$ then applying the induction hypothesis on $s$ gives the desired result. If $\Gamma_2 = (x:A,\Gamma_2')$, then by induction hypothesis, we have that $\Gamma_2';\Delta_2 \vdash (\vec p[u/x] \bot \vec p[u/x]):C$. Notice that $\Gamma_2';\Delta_2 \vdash \vec p[u/x] \bot \vec p[u/x]$ by definition of orthogonality for open superpositions: that they remain orthogonal for \textit{any} substitution of open variables by closed superpositions. In particular this means that they remain orthogonal in partial substitutions by open superpositions, since if not we could find a total substitution that would be non-orthogonal. We can apply rule $\mathsf{if}$ and obtain $\Gamma_1,\Gamma_2';\Delta_1,\Delta_2 \vdash \tif{s}{\vec p_1[u/x]}{\vec p_2[u/x]}$ as intended.

	\item  Case where $\Gamma_1;\Delta_1 \vdash s:\sharp \B$ is analogous.
      \end{itemize} 

    \item Let $\vec t = (t_1,t_2)$, in which case $\Gamma_1;\Delta_1 \vdash t_1 : B_1$ and $\Gamma_2;\Delta_2 \vdash t_2:B_2$ with $(\Gamma_1,\Gamma_2) = (\Gamma,x:A)$ and $(\Delta_1,\Delta_2) = \Delta$ and $B_1\times B_2 \leq B$. Assume that $\Gamma_1 = (\Gamma'_1,x:A)$, then, by induction hypothesis, $\Gamma'_1;\Delta_1 \vdash t_1[u/x]:B_1$ and so by rule $\times_i$ we have that $\Gamma_1',\Gamma_2;\Delta \vdash (t_1[u/x],t_2):B$, which is our desired result. The case where $\Gamma_2 = (\Gamma_2',x:A)$ is analogous.

    \item If $\vec t =(\tlet{x_1}{x_2}{r}{\vec s})$, then consider two options:
      \begin{itemize}
	\item  $\Gamma_1;\Delta_1 \vdash r: C\times D$ and $\Gamma_2,x_1:C,x_2:D;\Delta_2\vdash \vec s:E$ such that $E\leq B$, with $(\Gamma_1,\Gamma_2)=(\Gamma,x:A)$ and $(\Delta_1,\Delta_2)=\Delta$. Assume that $\Gamma_1 =( \Gamma_1', x:A)$. Then, by induction hypothesis, $\Gamma_1';\Delta_1 \vdash r[u/x]: C\times D$ and so by rule $\times_{e_1}$ we have that $\Gamma_1',\Gamma_2;\Delta\vdash \tlet{x_1}{x_2}{r[u/x]}{\vec s}:E$. Notice that in this case $(\Gamma_1',\Gamma_2)=\Gamma$ and that $\tlet{x_1}{x_2}{r[u/x]}{\vec s}$ is $(\tlet{x_1}{x_2}{r}{\vec s})[u/x]$. The case where $\Gamma_2 =(\Gamma_2',x:A)$ is analogous, and so is the case for rule $\times_{e_2}$. 
	\item  $\Gamma_1;\Delta_1 \vdash r: \sharp(C\times D)$ and $\Gamma_2,x_1:\sharp C,x_2:\sharp D;\Delta_2\vdash \vec s:E$ such that $\sharp E\leq B$. This case can also be treated via the induction hypothesis.
      \end{itemize}

  \end{itemize}

  For 2, consider:
  \begin{itemize}

    \item If $\vec t=x$, by a generation lemma we have that $\Delta_1=\emptyset$ and that for some type $\vec t$ we have $A=\S^{a_\S} T$ and $B=\{\sharp,\S\}^{a_\sharp,a_\S}T$. From $;\Delta_2\vdash u:\S^{a_\S}T$ we can also derive $;\Delta_2\vdash u:B$ by the fact that $\sharp_i$ does not alter the context. By $\mathsf{W}$ we obtain $\Gamma;\Delta_2\vdash u:B$ as desired.

    \item Let $\vec t = \lambda y. \vec s$. Then we consider two possible cases:
      \begin{itemize}
	\item  $\Gamma;x:A,y:C,\Delta_1 \vdash \vec s:D$ and $C\multimap D \leq B$. By induction hypothesis we have that $\Gamma;\Delta_1,y:C,\Delta_2 \vdash \vec s[u/x]: D$ and by rule $\multimap_i$ we obtain $\Gamma;\Delta_1,\Delta_2\vdash \lambda y.\vec s[u/x]:C\multimap D$ as intended. Notice in this case that $\vec t[y/x]= (\lambda y.\vec s)[u/x] = \lambda y.(\vec s[u/x])$.
	\item $\Gamma,y:C;x:A,\Delta_1 \vdash \vec s:D$ and $C\Rightarrow D\leq B$. Similarly to the first case, by applying the induction hypothesis and rule $\Rightarrow_i$ we obtain $\Gamma;\Delta_1,\Delta_2\vdash \lambda y.\vec s[u/x]:C\Rightarrow D$.
      \end{itemize}

    \item $\vec t=s\ r$. Let us consider two cases: 
      \begin{itemize}
	\item  $\Gamma_1;\Delta_{1,1} \vdash s:C\multimap D$ and $\Gamma_2;\Delta_{1,2} \vdash r:C$, with $D \leq B$, where the contexts are such that $(\Gamma_1,\Gamma_2)=\Gamma$ and $(\Delta_{1,1},\Delta_{1,2})=(\Delta_1,x:A)$. Let us take the case where $\Delta_{1,1}=(x:A,\Delta'_{1,1})$. Then by induction hypothesis we have that $\Gamma_1;\Delta'_{1,1}\vdash s[u/x]:C\multimap D$. By rule $\multimap_i$ we can obtain $\Gamma_1,\Gamma_2;\Delta_{1,1}',\Delta_{1,2}\vdash s[u/x]\vec{r}:D$, as desired. Notice that $(s\vec{r})[u/x]=s[u/x]\vec{r}$ in this case. The case where $x:A\in \Delta_{1,2}$ is analogous.
	\item  $\Gamma_1;\Delta_{1,1} \vdash s:C\Rightarrow D$ and $\Gamma_2;\Delta_2 \vdash r:C$, with $D \leq B$, can be treated similarly to the previous case.
      \end{itemize} 

    \item  Let $\vec t = \tif{s}{\vec p_1}{\vec p_2}$. Consider two cases:

      \begin{itemize}
	\item $\Gamma_1;\Delta_{1,1} \vdash s:\B$ and $\Gamma_2;\Delta_{1,2}\vdash (\vec p_1 \bot \vec p_2): C$ such that $C\leq B$, with $(\Gamma_1,\Gamma_2)=\Gamma$. If $x:A\in \Delta_{1,1}$ then applying the induction hypothesis on $s$ gives the desired result. If $\Delta_{1,2} = (x:A,\Delta_{1,2}')$, then by induction hypothesis, we have that $\Gamma_2;\Delta_{1,2}' \vdash (\vec p[u/x] \bot \vec p[u/x]):C$. We can apply rule $\mathsf{if}$ and obtain $\Gamma_1,\Gamma_2';\Delta_1,\Delta_2 \vdash \tif{s}{\vec p_1[u/x]}{\vec p_2[u/x]}$ as intended.

	\item  Case where $\Gamma_1;\Delta_{1,1} \vdash s:\sharp \B$ is analogous.
      \end{itemize} 

    \item Let $\vec t = (t_1,t_2)$, in which case $\Gamma_1;\Delta_{1,1} \vdash t_1 : B_1$ and $\Gamma_2;\Delta_2 \vdash t_2:B_2$ with $(\Gamma_1,\Gamma_2) = \Gamma $ and $(\Delta_{1,1},\Delta_{1,2}) = (\Delta_1,x:A)$ and $B_1\times B_2 \leq B$. Assume that $\Delta_{1,1} = (\Delta'_{1,1},x:A)$, then, by induction hypothesis, $\Gamma_1;\Delta'_{1,1} \vdash t_1[u/x]:B_1$ and so by rule $\times_i$ we have that $\Gamma;\Delta'_{1,1},\Delta_{1,2} \vdash (t_1[u/x],t_2):B$, which is our desired result. The case where $\Delta_{1,2} = (\Delta_{1,2}',x:A)$ is analogous.

    \item If $\vec t =(\tlet{x_1}{x_2}{r}{\vec s})$, then consider two options:
      \begin{itemize}
	\item  $\Gamma_1;\Delta_{1,1} \vdash r: C\times D$ and $\Gamma_2,x_1:C,x_2:D;\Delta_{1,2}\vdash \vec s:E$ such that $E\leq B$, with $(\Gamma_1,\Gamma_2)=\Gamma$ and $(\Delta_{1,1},\Delta_{1,2})=\Delta_1$. Assume that $\Delta_{1,1} =(\Delta_{1,1}', x:A)$. Then, by induction hypothesis, $\Gamma_1;\Delta_{1,1}' \vdash r[u/x]: C\times D$ and so by rule $\times_{e_1}$ we have that $\Gamma;\Delta'_{1,1},\Delta_{1,2}\vdash \tlet{x_1}{x_2}{r[u/x]}{\vec s}:E$. Notice that in this case $\tlet{x_1}{x_2}{r[u/x]}{\vec s}$ is $(\tlet{x_1}{x_2}{r}{\vec s})[u/x]$. The case where $\Delta_{1,2} =(\Delta_{1,2}',x:A)$ is analogous, and so is the case for rule $\times_{e_2}$. 
	\item  $\Gamma_1;\Delta_1 \vdash r: \sharp(C\times D)$ and $\Gamma_2,x_1:\sharp C,x_2:\sharp D;\Delta_2\vdash \vec s:E$ such that $\sharp E\leq B$. This case can also be treated via the induction hypothesis.  
      \end{itemize}
  \end{itemize}
\end{proof}

\thmsr*

\begin{proof}
  By induction on the relation $\rightsquigarrow$.
  \begin{itemize}
    \item $\vec t$ is a value. Then, since $\vec t$ does not reduce, there is no $\vec r$ satisfying the condition and the theorem is trivially satisfied. In the next items assume $v$ is a value, if not then apply induction hypothesis to $v$.
    \item $\vec t = (\lambda x.\vec{p})\ v$ and therefore $\vec r = \vec{p}[v/x]$. Then, from $; \vdash \vec t: A $ we consider two cases:
      \begin{itemize}
	\item $;\vdash \lambda x.\vec p : B\Rightarrow A$, with $x:B;\vdash \vec p: A$ and $;\vdash v: B$. Then by the substitution lemma we have that $;\vdash \vec{p}[v/x]:A$ which is the desired conclusion.
	\item $; \vdash \lambda x.\vec p : B\multimap A$, with $;x:B \vdash \vec p:A$ and $;\vdash v: B$. Substitution lemma provides the desired result.
      \end{itemize}

    \item $\vec t = \tif{v}{\vec p}{\vec s} :A$.

      If $;\vdash \vec{t}$ then rule $\mathsf{if}$ implies that $\vec{p} \perp_A \vec{s}$. There are two cases:
      \begin{enumerate}
	\item $v=\ket{0}$, and therefore $t \rightsquigarrow \vec p$ and the conclusion follows,
	\item $v=\ket{1}$, and $t \rightsquigarrow \vec s$ and the conclusion also follows.
      \end{enumerate}
      Analogous case where we used rule $\mathsf{if}_\sharp$.                                                            

    \item $\vec t=\tlet{x}{y}{\pair{v}{w}}{\vec s}$, then consider 3 cases:
      \begin{enumerate}
	\item We used $\times_{e_1}$ and so from $; \vdash \vec t : A$ we have that $;\vdash \pair{v}{w} :B \times C$ and $;x:B,y:C\vdash \vec{s}:A$. This means that $;\vdash v:B$ and $;\vdash w:C$ and therefore by applying the substitution lemma twice we obtain $;\vdash \vec{s}[v/x,w/y]: A$ as desired.
	\item Similar case to $\times_{e_2}$.
	\item Similar case for $\times_{e_\sharp}$.
	  
      \end{enumerate}
  \end{itemize}
\end{proof}

\thmprogress*

\begin{proof}
By induction on the structure of $\vec{t}$.
\begin{itemize}
\item $\vec{t}=x$. Cannot be typed with a closed context.
\item $\vec{t}=\ket{0}\,$. This is a value.
\item $\vec{t}=\ket{1}\,$. This is a value.
\item $\vec{t}=\tif{r}{\vec{s}_0}{\vec{s}_1}$. Since $r$ must be a boolean, either $r\rightsquigarrow \vec r'$ and therefore $\vec{t}\rightsquigarrow \tif{\vec r'}{\vec{s}_0}{\vec{s}_1}$ by rule (If$_+$), or by induction hypothesis $r$ must be a value and therefore either $r=\ket{0}$ or $r=\ket{1}\,$. In any of these two cases, $\vec{t}$ reduces, either by (If$_0$) or (If$_1$).
\item $\vec{t}=\lambda x.\vec{p}$ is a value.
\item $\vec{t}=s\ r$. We consider the three cases.
\begin{enumerate}
\item $r$ is not a value. By the induction hypothesis, $r\rightsquigarrow \vec r'$ and by rule (App), $\vec{t}\rightsquigarrow s\ \vec r'$.
\item $r$ is a value and $s$ is not. By the induction hypothesis, $s\rightsquigarrow \vec{s}'$ and by rule (App$_\V$), $\vec{t}\rightsquigarrow \vec{s}'\ r$.
\item Both $r$ and $s$ are values. Therefore $s=\lambda x.\vec{p}$ for some $\vec{p}$ and by rule (Abs) we have that $\vec{t}=(\lambda x.\vec{p})\ r\rightsquigarrow \vec{p}[s/x]$.
\end{enumerate}
\item $\vec{t}=\pair{s}{r}$. We consider the two possible cases where $\vec{t}$ is not a value:
\begin{enumerate}
\item $s$ is a value and $r$ is not, in which case by the induction hypothesis $r\rightsquigarrow \vec r'$ and therefore by rule (Pair$_\V$) we have that $\vec{t}\rightsquigarrow \pair{s}{\vec r'} $. 
\item $s$ is not a value, in which case by the induction hypothesis $s\rightsquigarrow \vec s'$ and therefore by rule (Pair) we have that $\vec{t}\rightsquigarrow \pair{\vec{s}'}{r} $. 
\end{enumerate}
\item $\vec{t}=\tlet{x}{y}{s}{\vec{r}}$. Consider two cases:
\begin{enumerate}
\item $s$ is a value of the form $\pair{v}{w}$, in which case by rule (Let), we have $\vec{t}\rightsquigarrow \vec{r}[v/x,w/y]$.
\item $s$ is not a value and therefore by the induction hypothesis $s\rightsquigarrow \vec s'$ and we have that $\vec{t}\rightsquigarrow \tlet{x}{y}{\vec{s}'}{\vec{r}}$.
\end{enumerate}
\item $\vec{t}=\vec{0}$. This is a value.
\item $\vec{t}=\alpha\cdot \vec{s}$, in which case either $\vec{s}$ is a value and therefore $\vec{t}$ is also a value, or $\vec{s}\rightsquigarrow \vec{s}'$ and by rule (Sup), $\vec{t}\rightsquigarrow \alpha\cdot \vec{s}'$.
\item $\vec{t}=\vec{s}+\vec{r}$, and by rule (Sup) if $\vec{s}$ or $\vec{r}$ are not values then $\vec{t}$ also reduces.
\end{itemize}
\end{proof}

\subsection{Proofs of Section~\ref{s:unitarity}}\label{app:unitarity}
\thmbangflat*
\begin{proof}
  First we state the following straightforward properties.
  \begin{enumerate}
    \item\label{prop:0} For any set $S\subseteq \BV_c$, we have $S=\flat S$.
    \item\label{prop:1} For any set $S\subseteq \V_c$, we have $\flat S = \flat\,\mathsf{span}(S)$.
    \item\label{prop:2} For any set $S\subseteq \BV_c$, we have $S = S\cap \mathcal S_1$.
  \end{enumerate}

  Then we proceed by induction on the structure of simple types.
  \begin{itemize}
    \item If $A=\B$, then $!(A)=!(\B)=\B$. Hence
      \[
	\llbracket {!}(A)\rrbracket_\emptyset
	=\llbracket \B\rrbracket_\emptyset
	=^{\textrm{\eqref{prop:0}}}\flat\llbracket \B\rrbracket_\emptyset
	=\flat \llbracket A\rrbracket_\emptyset
      \]
    \item If $A=B\multimap C$, then $!(A)={!}(B\multimap C)=B\multimap C$. Hence,
      \[
	\llbracket {!}(A)\rrbracket_\emptyset
	=\llbracket {!}(B\multimap C)\rrbracket_\emptyset
	=\llbracket B\multimap C\rrbracket_\emptyset
	=^{\textrm{\eqref{prop:0}}} \flat\llbracket B\multimap C\rrbracket_\emptyset
	=\flat\llbracket A\rrbracket_\emptyset
      \]
    \item If $A=B\Rightarrow C$, then $!(A)={!}(B\Rightarrow C)=B\Rightarrow C$. Hence,
      \[
	\llbracket {!}(A)\rrbracket_\emptyset
	=\llbracket {!}(B\Rightarrow C)\rrbracket_\emptyset
	=\llbracket B\Rightarrow C\rrbracket_\emptyset
	=^\textrm{\eqref{prop:0}} \flat\llbracket B\Rightarrow C\rrbracket_\emptyset
	=\flat\llbracket A\rrbracket_\emptyset
      \]
    \item If $A=B\times C$, then $!(A)={!}(B)\times {!}(C)$. Hence,
      \begin{align*}
	\llbracket {!}(A)\rrbracket_\emptyset
	&=\llbracket {!}(B)\times {!}(C)\rrbracket_\emptyset
	=\{(\vec v,\vec w):\vec v\in\llbracket{!}(B)\rrbracket_\emptyset, \vec w\in\llbracket{!}(C)\rrbracket_\emptyset\}\\
	&=^{\textrm{(IH)}} \{(\vec v,\vec w):\vec v\in \flat\llbracket B\rrbracket_\emptyset, \vec w\in \flat\llbracket C\rrbracket_\emptyset\}=\flat\{(\vec v,\vec w):\vec v\in \llbracket B\rrbracket_\emptyset, \vec w\in \llbracket C\rrbracket_\emptyset\}\\
	&=\flat(\llbracket B\times C\rrbracket_\emptyset) =\flat\llbracket A\rrbracket_\emptyset
      \end{align*}
    \item If $A=\sharp B$, then $!(A)=!(\sharp B)  = !(B)$. 
      Hence,
      \begin{align*}
	\llbracket !(A)\rrbracket_\emptyset 
	& =\llbracket !(B)\rrbracket_\emptyset
	=^{\textrm{(IH)}} \flat\llbracket B\rrbracket_\emptyset
	=^{\textrm{\eqref{prop:1}}} \flat\mathsf{span}(\llbracket B\rrbracket_\emptyset)
	=^{\textrm{\eqref{prop:2}}}\flat\mathsf{span}(\llbracket B\rrbracket_\emptyset)\cap\mathcal S_1\\
	&=^{\textrm{\eqref{prop:2}}}\flat(\mathsf{span}(\llbracket B\rrbracket_\emptyset)\cap\mathcal S_1)
	=\flat\llbracket \sharp B\rrbracket_\emptyset
	=\flat\llbracket A\rrbracket_\emptyset
      \end{align*}

    \item If $A=\S B$, then $!(A) = !(\S B) = \S !(B)$. Hence,
      \[
	\llbracket !(A)\rrbracket_\emptyset 
	=\llbracket \S !(B)\rrbracket_\emptyset
	=\llbracket !(B)\rrbracket_\emptyset
	=^{\textrm{(IH)}} \flat\llbracket B\rrbracket_\emptyset
	=\flat\llbracket \S B\rrbracket_\emptyset
	=\flat\llbracket A\rrbracket_\emptyset
      \]
      
  \end{itemize}
\end{proof}

\lemsubstitutionRealizability*
\begin{proof}
  By induction on $A$.
  \begin{itemize}
    \item If $A=X$, then $A[B/X] = B$. Hence,
      \[
	\llbracket A[B/X]\rrbracket_\tau
	= \llbracket B\rrbracket_\tau
	= \llbracket X\rrbracket_{\tau\cup\{X\mapsto \llbracket B\rrbracket_\tau\}}
	= \llbracket A\rrbracket_{\tau\cup\{X\mapsto \llbracket B\rrbracket_\tau\}}
      \]
    \item If $A=Y\neq X$, then $A[B/X] = Y$. Hence,
      \[
	\llbracket A[B/X]\rrbracket_\tau
	= \llbracket Y\rrbracket_\tau
	= \tau(Y)
	=(\tau\cup\{X\mapsto \llbracket B\rrbracket_\tau\})(Y)
	= \llbracket Y\rrbracket_{\tau\cup\{X\mapsto \llbracket B\rrbracket_\tau\}}
	= \llbracket A\rrbracket_{\tau\cup\{X\mapsto \llbracket B\rrbracket_\tau\}}
      \]

    \item If $A=\B$, then $A[B/X] = \B$. Hence,
      \[
	\llbracket A[B/X]\rrbracket_\tau
	= \llbracket \B\rrbracket_\tau
	= \{\ket 0,\ket 1\}
	= \llbracket \B\rrbracket_{\tau\cup\{X\mapsto \llbracket B\rrbracket_\tau\}}
	= \llbracket A\rrbracket_{\tau\cup\{X\mapsto \llbracket B\rrbracket_\tau\}}
      \]
    \item If $A=C\multimap D$, then $A[B/X] = C[B/X]\multimap D[B/X]$. Hence,
      \begin{align*}
	\llbracket A[B/X]\rrbracket_\tau
	&=
	\llbracket C[B/X]\multimap D[B/X]\rrbracket_\tau
	=
	\{
	  \lambda x.\vec t:\vec v\in \llbracket C[B/X]\rrbracket_\tau,
	  \vec t\langle \vec v/x\rangle\Vdash\llbracket D[B/X]\rrbracket_\tau
	\}
	\\
	&=^{\textrm{(IH)}}
	\{
	  \lambda x.\vec t:\vec v\in \llbracket C\rrbracket_{\tau\cup\{X\mapsto \llbracket B\rrbracket_\tau\}},
	  \vec t\langle \vec v/x\rangle\Vdash\llbracket D\rrbracket_{\tau\cup\{X\mapsto \llbracket B\rrbracket_\tau\}}
	\}
	\\
	&=\llbracket C\multimap D\rrbracket_{\tau\cup\{X\mapsto \llbracket B\rrbracket_\tau\}}
	= \llbracket A\rrbracket_{\tau\cup\{X\mapsto \llbracket B\rrbracket_\tau\}}
      \end{align*}
    \item If $A=C\Rightarrow D$, then $A[B/X] = C[B/X]\Rightarrow D[B/X]$. Hence,
      \begin{align*}
	\llbracket A[B/X]\rrbracket_\tau
	&=
	\llbracket C[B/X]\Rightarrow D[B/X]\rrbracket_\tau
	=
	\{
	  \lambda x.\vec t:\vec v\in \flat\llbracket C[B/X]\rrbracket_\tau,
	  \vec t[\vec v/x]\Vdash\llbracket D[B/X]\rrbracket_\tau
	\}
	\\
	&=^{\textrm{(IH)}}
	\{
	  \lambda x.\vec t:\vec v\in \flat\llbracket C\rrbracket_{\tau\cup\{X\mapsto \llbracket B\rrbracket_\tau\}},
	  \vec t[\vec v/x]\Vdash\llbracket D\rrbracket_{\tau\cup\{X\mapsto \llbracket B\rrbracket_\tau\}}
	\}
	\\
	&=\llbracket C\Rightarrow D\rrbracket_{\tau\cup\{X\mapsto \llbracket B\rrbracket_\tau\}}
	= \llbracket A\rrbracket_{\tau\cup\{X\mapsto \llbracket B\rrbracket_\tau\}}
      \end{align*}
    \item If $A=C\times D$, then $A[B/X]=C[B/X]\times D[B/X]$. Hence,
      \begin{align*}
	\llbracket A[B/X]\rrbracket_\tau
	&=
	\llbracket C[B/X]\times D[B/X]\rrbracket_\tau
	=
	\{
	  (\vec v,\vec w):
	  \vec v\in \llbracket C[B/X]\rrbracket_\tau,
	  \vec w\in \llbracket D[B/X]\rrbracket_\tau
	\}
	\\
	&=^{\textrm{(IH)}}
	\{
	  (\vec v,\vec w):
	  \vec v\in \llbracket C\rrbracket_{\tau\cup\{X\mapsto \llbracket B\rrbracket_\tau\}},
	  \vec w\in \llbracket D\rrbracket_{\tau\cup\{X\mapsto \llbracket B\rrbracket_\tau\}},
	\}
	\\
	&=\llbracket C\times D\rrbracket_{\tau\cup\{X\mapsto \llbracket B\rrbracket_\tau\}}
	= \llbracket A\rrbracket_{\tau\cup\{X\mapsto \llbracket B\rrbracket_\tau\}}
      \end{align*}
    \item If $A=\sharp C$, then $A[B/X]=\sharp C[B/X]$. Hence,
      \begin{align*}
	\llbracket A[B/X]\rrbracket_\tau
	&=
	\llbracket\sharp C[B/X]\rrbracket_\tau
	=
	\mathsf{span}(\llbracket C[B/X]\rrbracket_\tau)\cap\mathcal S_1
	\\
	&=^{\textrm{(IH)}}
	\mathsf{span}(\llbracket C\rrbracket_{\tau\cup\{X\mapsto \llbracket B\rrbracket_\tau\}})\cap\mathcal S_1
	=\llbracket\sharp C\rrbracket_{\tau\cup\{X\mapsto \llbracket B\rrbracket_\tau\}}
	= \llbracket A\rrbracket_{\tau\cup\{X\mapsto \llbracket B\rrbracket_\tau\}}
      \end{align*}
    \item If $A=\S C$, then $A[B/X]=\S C[B/X]$. Hence,
      \begin{align*}
	\llbracket A[B/X]\rrbracket_\tau
	&=
	\llbracket\S C[B/X]\rrbracket_\tau
	=
	\llbracket C[B/X]\rrbracket_\tau
	\\
	&=^{\textrm{(IH)}}
	\llbracket C\rrbracket_{\tau\cup\{X\mapsto \llbracket B\rrbracket_\tau\}}
	=\llbracket\S C\rrbracket_{\tau\cup\{X\mapsto \llbracket B\rrbracket_\tau\}}
	= \llbracket A\rrbracket_{\tau\cup\{X\mapsto \llbracket B\rrbracket_\tau\}}
      \end{align*}
    \item If $A=\forall Y.C$, then $A[B/X] = \forall Y.C[B/X]$. Hence,
      \begin{align*}
	\llbracket A[B/X]\rrbracket_\tau
	&=
	\llbracket\forall Y.C[B/X]\rrbracket_\tau
	=
	\bigcap_{R\subseteq\mathcal S_1}\llbracket C[B/X]\rrbracket_{\tau\cup\{Y\mapsto R\}}
	\\
	&=^{\textrm{(IH)}}
	\bigcap_{R\subseteq\mathcal S_1}\llbracket C\rrbracket_{\tau\cup\{Y\mapsto R, X\mapsto \llbracket B\rrbracket_\tau\}}
	=\llbracket\forall Y.C\rrbracket_{\tau\cup\{X\mapsto \llbracket B\rrbracket_\tau\}}
	= \llbracket A\rrbracket_{\tau\cup\{X\mapsto \llbracket B\rrbracket_\tau\}}
      \end{align*}
      
  \end{itemize}
\end{proof}

\thmvalidity*
\begin{proof}
  ~
  \begin{itemize}
    \item[$(\mathsf{W})$] Assume that $\Gamma;\Delta\vdash \vec{t}: A$ is a
      valid typing judgment. Then $\dom(\Delta)\subseteq FV(\vec t)\subseteq
      \dom(\Gamma,\Delta)$, and $\vec t\langle \sigma\rangle \Vdash A$, for any
      $\sigma\in\llbracket \Gamma,\Delta\rrbracket_\tau$. Given a variable $x$
      with type $B$ such that $x\not\in \dom(\Gamma,\Delta)$, we have that
      $\dom(\Delta)\subseteq FV(\vec{t})\subseteq
      \dom(\Gamma,\Delta)\cup\{x\}$. Now, for any substitution $\sigma\in
      \llbracket \Gamma,\Delta,\{x:B\} \rrbracket_\tau$, we have that $\sigma =
      \sigma_0 \langle \vec{v}/x\rangle$ for some $\sigma_0\in\llbracket
      \Gamma,\Delta\rrbracket_\tau$ and $\vec{v}\Vdash B$. Then,
      \[
	\vec{t}\langle \sigma\rangle=\vec{t}[\vec{v}/x]\langle \sigma_0\rangle=\vec{t}\langle\sigma_0\rangle \Vdash A,
      \]
      since $\sigma_0\in \llbracket \Gamma,\Delta\rrbracket_\tau$ and $x\not
      \in FV(\vec t)$.

    \item[$(\mathsf{C})$] The premise being valid, this means that
      $\dom(\Delta)\subseteq FV(\vec t)\subseteq
      \dom(\Gamma,\Delta)\cup\{x,y\}$ and for all $\sigma\in\llbracket
      \Gamma,\Delta,\{x:B,y:B\}\rrbracket_\tau$ we have that $\vec{t}\langle
      \sigma \rangle\Vdash A$. It is clear that $FV(\vec{t}[x/y])\subseteq
      \dom(\Gamma,\Delta)\cup \{x\}$. For any $\sigma\in\llbracket
      \Gamma,\Delta,\{x:B\}\rrbracket_\tau$, $\sigma = \sigma_0 \langle
      \vec{v}/x\rangle$ for some $\sigma_0
      \in\llbracket\Gamma,\Delta\rrbracket_\tau$ and some value
      $\vec{v}\in\llbracket B\rrbracket_\tau$. Therefore,
      \begin{align*}
	(\vec t[x/y])\langle \sigma\rangle &= (\vec t  [x/y])\langle x/\vec{v},\sigma_0\rangle = (\vec t [x/y][\vec{v}/x])\langle \sigma_0\rangle\\
	&=(\vec t [\vec{v}/x][x/y])\langle \sigma_0\rangle = (\vec t [\vec{v}/x][\vec{v}/y])\langle \sigma_0\rangle\\
	& = \vec t \langle \vec{v}/x,\vec{v}/y,\sigma_0\rangle\Vdash A,
      \end{align*}
      since $\langle \vec{v}/x,\vec{v}/y,\sigma_0\rangle\in \llbracket \Gamma,\Delta,\{x:B,y:B\}\rrbracket_\tau.$

    \item[$(\equiv)$] By associativity of the substitution.

    \item[$(\leq)$] By definition of subtyping.

    \item[$(\mathsf{ax})$] We have that $\dom(\Delta)=\{x\}=FV(x)$ and
      $\dom(\Gamma)=\tau$. For any substitution $\sigma\in\llbracket
      x:A\rrbracket_\tau$, we have $\sigma =\{\vec{v}/x\}$ for some
      $\vec{v}\in\llbracket A\rrbracket_\tau$. Therefore $x\langle
      \sigma\rangle = x\langle \vec{v}/x\rangle = \vec{v}\Vdash \llbracket
      A\rrbracket_\tau$.

    \item[$(0)$] The rule is valid since $FV(\ket{0})=\emptyset$ and $\ket{0}\in \llbracket \B \rrbracket_\emptyset$ by definition.

    \item[$(1)$] Similar to $(0)$.

    \item[$(\mathsf{if})$] Supposing that the premises are valid, we obtain:
      \begin{enumerate}
	\item\label{eq:ift} $\dom(\Delta_1)\subseteq FV(t)\subseteq
	  \dom(\Gamma_1,\Delta_1)\quad$ and $\quad t\langle \sigma \rangle
	  \Vdash \llbracket \B \rrbracket_\emptyset $ for all $\sigma\in\llbracket
	  \Gamma_1,\Delta_1\rrbracket_\tau$.
	\item\label{eq:ifr} $\dom(\Delta_2)\subseteq FV(\vec{r})\subseteq
	  \dom(\Gamma_2,\Delta_2)\quad$ and $\quad \vec{r}\langle \sigma
	  \rangle \Vdash \llbracket A\rrbracket_\tau $ for all $\sigma\in\llbracket
	  \Gamma_2,\Delta_2\rrbracket_\tau$.
	\item\label{eq:ifs} $\dom(\Delta_2)\subseteq FV(\vec{s})\subseteq
	  \dom(\Gamma_2,\Delta_2)\quad$ and $\quad \vec{s}\langle \sigma
	  \rangle \Vdash \llbracket A \rrbracket_\tau$ for all $\sigma\in\llbracket
	  \Gamma_2,\Delta_2\rrbracket_\tau$.
      \end{enumerate}
      From \eqref{eq:ift}--\eqref{eq:ifs} we have that
      $\dom(\Delta_1,\Delta_2)\subseteq FV(\tif{t}{\vec r}{\vec s})\subseteq
      \dom(\Gamma_1,\Delta_1,\Gamma_2,\Delta_2)$. Given a $\sigma\in\llbracket
      \Gamma_1,\Delta_1,\Gamma_2,\Delta_2\rrbracket_\tau$, then $\sigma =
      \sigma_1\sigma_2$ for $\sigma_i\in\llbracket
      \Gamma_i,\Delta_i\rrbracket_\tau$, and we have that
      \begin{align*}
	&\ (\tif{t}{\vec{r}}{\vec{s}})\langle\sigma\rangle\\
	&=\ (\tif{t}{\vec{r}}{\vec{s}})\langle\sigma_1\rangle\langle\sigma_2\rangle\\
	&=\ \tif{t\langle\sigma_1\rangle}{\vec{r}\langle \sigma_2\rangle}{\vec{s}\langle \sigma_2\rangle}.
      \end{align*}
      Since $\sigma_1\in\llbracket \Gamma_1,\Delta_1\rrbracket_\tau$ we have
      that $t\langle \sigma_1\rangle \Vdash \llbracket \B \rrbracket_\emptyset$ from \eqref{eq:ift}. Therefore we can
      consider two cases:
      \begin{itemize}
	\item $t\langle \sigma_1\rangle \rightsquigarrow^* \ket{0}$ such that
	  \begin{align*}
	    &(\tif{t}{\vec{r}}{\vec{s}})\langle\sigma\rangle\\
	    =\ &\tif{t\langle\sigma_1\rangle}{\vec{r}\langle \sigma_2\rangle}{\vec{s}\langle \sigma_2\rangle} \\
	    \rightsquigarrow^*\ & \tif{\ket{0}}{\vec{r}\langle \sigma_2\rangle}{\vec{s}\langle \sigma_2\rangle} \\
	    \rightsquigarrow^*\ & \vec{r}\langle \sigma_2\rangle \Vdash \llbracket A \rrbracket_\tau
	  \end{align*}
	  using \eqref{eq:ifr} and the fact that $\sigma_2\in\llbracket
	  \Gamma_2,\Delta_2\rrbracket_\tau$. The second case is similar:
	\item $t\langle \sigma_1\rangle \rightsquigarrow^* \ket{1}$ such that
	  \begin{align*}
	    &(\tif{t}{\vec{r}}{\vec{s}})\langle\sigma\rangle\\
	    =\ &\tif{t\langle\sigma_1\rangle}{\vec{r}\langle \sigma_2\rangle}{\vec{s}\langle \sigma_2\rangle} \\
	    \rightsquigarrow^*\ & \tif{\ket{1}}{\vec{r}\langle \sigma_2\rangle}{\vec{s}\langle \sigma_2\rangle} \\
	    \rightsquigarrow^*\ & \vec{s}\langle \sigma_2\rangle \Vdash \llbracket A\rrbracket_\tau.
	  \end{align*}
      \end{itemize}

    \item[$(\mathsf{if}_\sharp)$] From the premises we deduce:
      \begin{enumerate}
	\item \label{eq:ifst}$\dom(\Delta_1)\subseteq FV(\vec t)\subseteq \dom(\Gamma_1,\Delta_1)\quad$ and $\quad \vec t\langle \sigma \rangle \Vdash \llbracket \sharp \B \rrbracket_\emptyset$ for all $\sigma\in\llbracket \Gamma_1,\Delta_1\rrbracket_\tau$.
	\item \label{eq:ifsr}$\dom(\Delta_2)\subseteq FV(\vec{r})\subseteq \dom(\Gamma_2,\Delta_2)\quad$ and $\quad \vec{r}\langle \sigma \rangle \Vdash \llbracket Q\rrbracket_\tau$ for all $\sigma\in\llbracket \Gamma_2,\Delta_2\rrbracket_\tau$.
	\item \label{eq:ifss}$\dom(\Delta_2)\subseteq FV(\vec{s})\subseteq \dom(\Gamma_2,\Delta_2)\quad$ and $\quad \vec{s}\langle \sigma \rangle \Vdash \llbracket Q\rrbracket_\tau$ for all $\sigma\in\llbracket \Gamma_2,\Delta_2\rrbracket_\tau$.
	\item \label{eq:ifsortho}For all $\sigma\in\llbracket \Gamma_2,\Delta_2\rrbracket_\tau$, we have that $\vec r\langle\sigma\rangle \rightsquigarrow^* \vec v$ and $\vec s\langle\sigma\rangle \rightsquigarrow^* \vec w$ such that $\langle \vec v ,\vec w \rangle =0$.
      \end{enumerate}

      Similarly to $(\mathsf{if})$, we have that $\dom(\Delta_1,\Delta_2)\subseteq FV(\tif{t}{\vec r}{\vec s})\subseteq \dom(\Gamma_1,\Delta_1,\Gamma_2,\Delta_2)$ and, given a $\sigma\in\llbracket \Gamma_1,\Delta_1,\Gamma_2,\Delta_2\rrbracket_\tau$, then $\sigma = \sigma_1\sigma_2$ for $\sigma_i\in\llbracket \Gamma_i,\Delta_i\rrbracket_\tau$, and we have that
      \begin{align*}
	&\ (\tif{t}{\vec{r}}{\vec{s}})\langle\sigma\rangle\\
	&=\ (\tif{t}{\vec{r}}{\vec{s}})\langle\sigma_1\rangle\langle\sigma_2\rangle\\
	&=\ \tif{t\langle\sigma_1\rangle}{\vec{r}\langle \sigma_2\rangle}{\vec{s}\langle \sigma_2\rangle}.
      \end{align*}
      From (\ref{eq:ifst}) and the fact that $\sigma_1\in\llbracket \Gamma_1,\Delta_1\rrbracket_\tau$ we deduce $t\langle \sigma_1\rangle \Vdash \sharp \B$ and therefore $t\langle \sigma_1\rangle \rightsquigarrow^* \alpha \cdot \ket{0}+\beta \cdot \ket{1}$, for $\alpha,\beta\in\Ct$ such that $|\alpha|^2+|\beta|^2=1$. Therefore,
      \begin{align*}
	&(\tif{t}{\vec{r}}{\vec{s}})\langle\sigma\rangle\\
	=\ &\tif{t\langle\sigma_1\rangle}{\vec{r}\langle \sigma_2\rangle}{\vec{s}\langle \sigma_2\rangle}\\
	\rightsquigarrow^* \ &\tif{(\alpha \cdot \ket{0}+\beta\cdot \ket{1})}{\vec{r}\langle \sigma_2\rangle}{\vec{s}\langle \sigma_2\rangle}\\
	=\ &\alpha \cdot(\tif{\ket{0}}{\vec{r}\langle \sigma_2\rangle}{\vec{s}\langle \sigma_2\rangle})+\beta \cdot(\tif{\ket{1}}{\vec{r}\langle \sigma_2\rangle}{\vec{s}\langle \sigma_2\rangle})\\
	\rightsquigarrow^*\ &\alpha \cdot \vec{r}\langle \sigma_2\rangle + \beta \cdot \vec{s}\langle \sigma_2\rangle\\
	\rightsquigarrow^*\ &\alpha \cdot \vec{v}+\beta\cdot \vec{w} \Vdash \llbracket \sharp Q \rrbracket_\tau
      \end{align*}
      derived from (\ref{eq:ifsr})--(\ref{eq:ifsortho}) and the fact that $\sigma_2\in\llbracket \Gamma_2,\Delta_2\rrbracket_\tau$.

\item[$(\multimap_i)$] Suppose that $\Gamma;\Delta,x:A\vdash \vec{t}:B$ is
  a valid judgment. Then since $(\dom(\Delta)\cup \{x\}) \subseteq
  FV(\vec{t})$ we have $\dom(\Delta)\subseteq FV(\lambda x.\vec{t})$.
  Similarly, since $FV(\vec{t})\subseteq (\dom(\Gamma,\Delta)\cup \{x\})$
  it is also true that $FV(\lambda x.\vec{t})\subseteq
  \dom(\Gamma,\Delta).$ For any $\sigma\in \llbracket
  \Gamma,\Delta\rrbracket_\tau$, we have that $(\lambda x.\vec{t})\langle
  \sigma\rangle = \lambda x.\vec{t}\langle \sigma \rangle$, since $x\not\in
  \dom(\sigma)$. For all $\vec{v}\in\llbracket A\rrbracket_\tau$, we
  observe that $(\vec{t}\langle \sigma\rangle)\langle \vec{v}/x\rangle =
  \vec{t}\langle \sigma,\{\vec{v}/x\}\rangle\Vdash \llbracket B
  \rrbracket_\tau $, since $\sigma,\{\vec{v}/x\}\in\llbracket
  \Gamma,x:A\rrbracket_\tau$. Therefore, $(\lambda
  x.\vec{t})\langle\sigma\rangle \Vdash  \llbracket A\multimap B
  \rrbracket_\tau$.

\item[$(\Rightarrow_i)$] Suppose that $\Gamma,x:A;\Delta\vdash t:B$ is a
  valid judgment. Then $\dom(\Delta)\subseteq FV(t)\subseteq
  \dom(\Gamma,\{x:A\},\Delta)$. We can therefore conclude that
  $\dom(\Delta)\subseteq FV(\lambda x.t)\subseteq \dom(\Gamma,\Delta)$.
  Given $\sigma\in\llbracket \Gamma,\Delta\rrbracket_\tau$ we want to show
  that $(\lambda x.t)\langle\sigma\rangle\Vdash \llbracket !(A)\Rightarrow
  B\rrbracket_\tau $. Since $x\not\in \dom(\sigma)$, we have $(\lambda
  x.t)\langle \sigma \rangle=\lambda x.(t\langle \sigma \rangle)$, and for
  all values $\vec{v}\in \llbracket A\rrbracket_\tau$, we have that 
  \[
    t\langle \sigma \rangle \langle \vec v/x\rangle= t\langle \sigma,\{\vec v/x\}\rangle \Vdash \llbracket B\rrbracket_\tau
  \]
  from the premise. Since $\flat \llbracket A\rrbracket_\tau\subseteq
  \llbracket A \rrbracket_\tau $, we have that $(\lambda x.t)\langle \sigma
  \rangle \Vdash \llbracket !(A)\Rightarrow B\rrbracket_\tau $. 

\item[$(\multimap_e)$] Suppose that both judgments $\Gamma_1;\Delta_1\vdash
  t:A\multimap B$ and $\Gamma_2;\Delta_2\vdash \vec{r}:A$ are valid,
  meaning that:
  \begin{itemize}
    \item $\dom(\Delta_1)\subseteq FV(t)\subseteq
      \dom(\Gamma_1,\Delta_1)\quad$ and $\quad t\langle \sigma \rangle
      \Vdash \llbracket A\multimap B\rrbracket_\tau $ for all
      $\sigma\in\llbracket \Gamma_1,\Delta_1\rrbracket_\tau$.
    \item $\dom(\Delta_2)\subseteq FV(\vec{r})\subseteq
      \dom(\Gamma_2,\Delta_2)\quad$ and $\quad\vec{r}\langle \sigma \rangle
      \Vdash \llbracket A\rrbracket_\tau$ for all $\sigma\in\llbracket
      \Gamma_2,\Delta_2\rrbracket_\tau$.
  \end{itemize}
  This implies that $\dom(\Delta_1,\Delta_2)\subseteq FV(t\
  \vec{r})\subseteq \dom(\Gamma_1,\Delta_1,\Gamma_2,\Delta_2)$. Given
  $\sigma\in\llbracket
  \Gamma_1,\Delta_1,\Gamma_2,\Delta_2\rrbracket_\tau$,then $\sigma=\sigma_1
  \sigma_2$ for $\sigma_i\in\llbracket \Gamma_i,\Delta_i\rrbracket_\tau$,
  and from $FV(t)\bigcap \dom(\sigma_2)=FV(\vec r)\bigcap
  \dom(\sigma_1)=\emptyset$, then
  \[
    (t\ \vec{r})\langle \sigma \rangle = (t\ \vec{r})\langle\sigma_1\rangle\langle\sigma_2\rangle=(t\langle \sigma_1\rangle\ \vec{r})\langle \sigma_2\rangle=t\langle \sigma_1\rangle\ \vec
    r\langle \sigma_2\rangle,
  \]
  and so we conclude that $(t\ \vec{r})\langle\sigma\rangle=t\langle
  \sigma_1\rangle\ \vec{r}\langle\sigma_2\rangle \Vdash \llbracket
  B\rrbracket_\tau,$ by the application of realizers~\cite[Lemma
  A.4]{DCGMV19}.

\item[($\Rightarrow_e$)] Assuming the premises are valid, we have that
  \begin{itemize}
    \item $\dom(\Delta)\subseteq FV(t)\subseteq \dom(\Gamma,\Delta)\quad$
      and $\quad t\langle \sigma\rangle \Vdash \llbracket !(A)\Rightarrow
      B\rrbracket_\tau $ for all $\sigma\in\llbracket
      \Gamma,\Delta\rrbracket_\tau$.
    \item $FV(r) = \{z\}\quad$ and $\quad r\langle \sigma\rangle \Vdash !A$
      for all $\sigma \in\llbracket z:C\rrbracket_\tau$.
  \end{itemize}
  We can directly conclude that $\dom(\Delta)\subseteq FV(t)\subseteq
  \dom(\Gamma,\Delta)\cup\{z\}.$ From the fact that $z\not \in
  \dom(\Gamma,\Delta)$, we have that for any $\sigma \in\llbracket
  \Gamma,\Delta,\{z:C\}\rrbracket_\tau$, $\sigma=\sigma_1\sigma_2$ for
  $\sigma_1\in\llbracket \Gamma,\Delta\rrbracket_\tau$ and
  $\sigma_2\in\llbracket \{z:C\}\rrbracket_\tau$. Therefore
  \[
    (t\ r)\langle \sigma \rangle = (t\ r)\langle \sigma_1\rangle\langle \sigma_2\rangle = t\langle \sigma_1\rangle\ r\langle \sigma_2\rangle \Vdash \llbracket  B \rrbracket_\tau,
  \]
  by the application of realizers.

\item[$(\times_i)$] Assuming that $\Gamma_1;\Delta_1\vdash \vec{t}: A$ and
  $\Gamma_2;\Delta_2\vdash \vec{r}: B$ are valid typing judgments, we have
  that
  \begin{enumerate}
    \item $\dom(\Delta_1)\subseteq FV(\vec t)\subseteq
      \dom(\Gamma_1,\Delta_1)\quad$ and $\quad \vec t\langle \sigma \rangle
      \Vdash \llbracket  A \rrbracket_\tau $ for all $\sigma\in\llbracket
      \Gamma_1,\Delta_1\rrbracket_\tau$.
    \item $\dom(\Delta_2)\subseteq FV(\vec{r})\subseteq
      \dom(\Gamma_2,\Delta_2)\quad$ and $\quad \vec{r}\langle \sigma
      \rangle \Vdash \llbracket B \rrbracket_\tau$ for all $\sigma\in\llbracket
      \Gamma_2,\Delta_2\rrbracket_\tau$.
  \end{enumerate}
  We have that $\dom(\Delta_1,\Delta_2)\subseteq FV(\pair{
  \vec{t}}{\vec{r}}) \subseteq \dom(\Gamma_1,\Delta_1,\Gamma_2,\Delta_2)$.
  For any $\sigma\in \llbracket
  \Gamma_1,\Delta_1,\Gamma_2,\Delta_2\rrbracket_\tau$ we have that $\sigma
  =\sigma_1 \sigma_2$ for $\sigma_i\in\llbracket
  \Gamma_i,\Delta_i\rrbracket_\tau$ and, from the disjointness of contexts,
  we deduce:
  \begin{align*}
    \pair{ \vec{t}}{\vec{r}} \langle \sigma\rangle & =\pair{ \vec{t}}{\vec{r}} \langle \sigma_1\rangle\langle \sigma_2\rangle\\
    &=\pair{\vec{t}\langle \sigma_1\rangle}{\vec{r}\langle \sigma_2\rangle}\Vdash \llbracket A \times B \rrbracket_\tau,
  \end{align*}
  derived from the definition of $\llbracket A \times B \rrbracket_\tau$.

\item[$(\times_e)$] Supposing that the judgments $\Gamma_1;\Delta_1\vdash
  t:A\times B$ and $\Gamma_2;\Delta_2,x:A,y:B\vdash \vec{s}:C$ are valid,
  we obtain:
  \begin{enumerate}
    \item $\dom(\Delta_1)\subseteq FV(t)\subseteq
      \dom(\Gamma_1,\Delta_1)\quad$ and $\quad t\langle\sigma\rangle \Vdash
      \llbracket A\times B \rrbracket_\tau$ for all $\sigma\in\llbracket
      \Gamma_1,\Delta_1\rrbracket_\tau$.
    \item $\dom(\Delta_2)\cup \{x,y\}\subseteq FV(\vec{s})\subseteq
      \dom(\Gamma_2,\Delta_2)\cup\{x,y\}$ and $\vec{s}\langle \sigma\rangle
      \Vdash \llbracket C \rrbracket_\tau$ for all
      $\sigma\in\llbracket\Gamma_2,\Delta_2,\{x:A,y:B\}\rrbracket_\tau$.
  \end{enumerate}

  We have that $\dom(\Delta_1,\Delta_2)\subseteq
  FV(\tlet{x}{y}{t}{\vec{s}})\subseteq
  \dom(\Gamma_1,\Delta_1,\Gamma_2,\Delta_2).$ For any $\sigma\in \llbracket
  \Gamma_1,\Delta_1,\Gamma_2,\Delta_2\rrbracket_\tau$ we have that $\sigma
  =\sigma_1 \sigma_2$ for $\sigma_i\in\llbracket
  \Gamma_i,\Delta_i\rrbracket_\tau$ and therefore
  \begin{align*}
    (\tlet{x}{y}{t}{\vec{s}})\langle \sigma \rangle &= (\tlet{x}{y}{t}{\vec{s}})\langle \sigma_1 \rangle\langle \sigma_2\rangle\\
    &=  \tlet{x}{y}{t\langle\sigma_1\rangle}{\vec{s}\langle \sigma_2\rangle}\\
    &\rightsquigarrow^* \tlet{x}{y}{\langle \vec{v},\vec{w}\rangle}{\vec{s}\langle\sigma_2\rangle}\\
    &\rightsquigarrow^* (\vec{s}[\vec{v}/x,\vec{w}/y])\langle \sigma_2\rangle\\
    &=\vec{s}\langle \vec{v}/x,\vec{w}/y,\sigma_2\rangle\Vdash \llbracket C \rrbracket_\tau
  \end{align*}
  from the fact that $\langle
  \vec{v}/x,\vec{w}/y,\sigma_2\rangle\in\llbracket\Gamma_2,\Delta_2,\{x:A,y:B\}\rrbracket_\tau$
  and~\cite[Lemmas A.10 and A.3]{DCGMV19}.

\item[$(\times_{e\sharp})$] From the validity of $\Gamma_1;\Delta_1\vdash
  \vec{t}:\sharp (Q\times R)$ and $\Gamma_2;\Delta_2,x:\sharp Q, y:\sharp R
  \vdash \vec{s}:S$ we have that
  \begin{enumerate}
    \item $\dom(\Delta_1)\subseteq FV(\vec t)\subseteq
      \dom(\Gamma_1,\Delta_1)\quad$ and $\quad \vec t\langle\sigma\rangle
      \Vdash \llbracket \sharp (Q\times R)\rrbracket_\tau$ for all $\sigma\in\llbracket
      \Gamma_1,\Delta_1\rrbracket_\tau$.
    \item $\dom(\Delta_2)\cup \{x,y\}\subseteq FV(\vec{s})\subseteq
      \dom(\Gamma_2,\Delta_2)\cup\{x,y\} $ and $ \vec{s}\langle
      \sigma\rangle \Vdash \llbracket S\rrbracket_\tau$ for all
      $\sigma\in\llbracket\Gamma_2,\Delta_2,\{x:\sharp Q,y:\sharp
      R\}\rrbracket_\tau$.
  \end{enumerate}
  We obtain directly that $\dom(\Delta_1,\Delta_2)\subseteq
  FV(\tlet{x}{y}{t}{\vec{s}})\subseteq
  \dom(\Gamma_1,\Delta_1,\Gamma_2,\Delta_2)$ and, for any $\sigma\in
  \llbracket \Gamma_1,\Delta_1,\Gamma_2,\Delta_2\rrbracket_\tau$ we have
  that $\sigma =\sigma_1 \sigma_2$ for $\sigma_i\in\llbracket
  \Gamma_i,\Delta_i\rrbracket_\tau$ and thus we have:
  \begin{align*}
    (\tlet{x}{y}{t}{\vec{s}})\langle \sigma \rangle &= (\tlet{x}{y}{t}{\vec{s}})\langle \sigma_1 \rangle\langle \sigma_2\rangle\\
    &=  \tlet{x}{y}{t\langle\sigma_1\rangle}{\vec{s}\langle \sigma_2\rangle}\\
    &\rightsquigarrow^* \tlet{x}{y}{\sum_{i=1}^n \alpha_i\cdot \pair{ \vec{v}_i}{\vec{w}_i}}{\vec{s}\langle\sigma_2\rangle}\\
    &= \sum_{i=1}^n \alpha_i \cdot \tlet{x}{y}{\pair{ \vec v_i}{\vec w_i}}{\vec{s}\langle \sigma_2\rangle}\\
    &\rightsquigarrow^* \sum_{i=1}^n \alpha_i\cdot \vec{s}[\vec{v}_i/x,\vec{w}_i/y]\langle \sigma_2\rangle\\
    &=\sum_{i=1}^n \alpha_i\cdot \vec{s}\langle \vec{v}_i/x,\vec{w}_i/y, \sigma_2\rangle\rightsquigarrow^* \sum_{i=1}^n \alpha_i\cdot \vec{z}_i \in \mathsf{span}(\llbracket S \rrbracket_\tau)
  \end{align*}
  To see that this has unit norm, consider:
  \begin{align*}
    \big\|\sum_{i=1}^n \alpha_i \cdot \vec{z}_i\big\|^2 &=\big\langle \sum_{i=1}^n \alpha_i \cdot \vec{z}_i\ \big|\ \sum_{i=1}^n \alpha_i \cdot \vec{z}_i\big\rangle 
    = \sum_{i=1}^n \sum_{j=1}^n \alpha_i \bar{\alpha}_j \langle \vec{z}_i|\vec{z}_j\rangle\\
    &= \sum_{i=1}^n \sum_{j=1}^n \alpha_i \bar{\alpha}_j \langle \vec{v}_i|\vec{v}_j\rangle \langle\vec{w}_i|\vec{w}_j\rangle\\
    &= \sum_{i=1}^n \sum_{j=1}^n \alpha_i \bar{\alpha}_j \langle (\vec{v}_i,\vec{w}_i)\mid (\vec{v}_j,\vec{w}_j)\rangle\\
    &=\langle \sum_{i=1}^n \alpha_i  (\vec{v}_i,\vec{w}_i)\mid \sum_{i=1}^n \alpha_i  (\vec{v}_i,\vec{w}_i) \rangle = \big\|\sum_{i=1}^n \alpha_i \cdot (\vec{v}_i,\vec{w}_i)\big\|^2=1,
  \end{align*}
  where we have used~\cite[Lemma A.9 and Proposition A.2]{DCGMV19}.
\item[$(\S_i)$] Assuming that the premise is a valid typing judgment, we
  have that $FV(\vec{t})=\dom(\Gamma,\Delta).$ Using the fact that
  $\llbracket \S B\rrbracket_\tau \triangleq \llbracket B\rrbracket_\tau$,
  we directly deduce that $\dom(\S \Delta)=\dom(\Delta)\subseteq
  FV(\vec{t})\subseteq \dom(\Gamma,\Delta)=\dom(\Gamma,\S \Delta)$. Since
  $\llbracket \Gamma,\Delta\rrbracket_\tau = \llbracket \Gamma,\S
  \Delta\rrbracket_\tau$ we conclude that the rule is valid.

\item[$(\sharp_i)$] If the premises are valid then we have that $\dom(\Delta)\subseteq FV(\vec{t}_i)\subseteq \dom(\Gamma,\Delta)$, for all $i$. Furthermore, for all $\sigma \in \llbracket \Gamma,\Delta\rrbracket$, $\vec{t}_i\langle \sigma \rangle \Vdash \llbracket A\rrbracket_\tau$. Since $\sum_{i=1}^n |\alpha_i|^2=1$, we have that 
\[\bigg(\sum_{i=1}^n \alpha_i\cdot\vec{t}_i\bigg)\langle \sigma\rangle =\sum_{i=1}^n \alpha_i\cdot(\vec{t}_i\langle \sigma \rangle)\Vdash \llbracket \sharp A\rrbracket_\tau.\]
\item[$(\S_e)$] If the premises are valid then:
  \begin{enumerate}
    \item $\dom(\Delta_1)\subseteq FV(\vec{r})\subseteq
      \dom(\Gamma_1,\Delta_1)$ and for all $\sigma\in\llbracket
      \Gamma_1,\Delta_1\rrbracket_\tau $, $\vec{r}\langle\sigma\rangle
      \Vdash \llbracket  B\rrbracket_\tau .$
    \item $\dom(\Delta_2)\cup \{x\}\subseteq FV(t) \subseteq \dom(\Gamma_2,
      \Delta_2)\cup \{x\}$ and for all $\sigma\in\llbracket
      \Gamma_2,\Delta_2,\{x: B\}\rrbracket_\tau $, $t\langle \sigma\rangle
      \Vdash \llbracket A \rrbracket_\tau $. 
  \end{enumerate}
  We immediately obtain that $\dom(\Delta_1,\Delta_2)\subseteq
  FV(t[\vec{r}/x])\subseteq \dom(\Gamma_1,\Delta_1,\Gamma_2,\Delta_2)$. 
  \[
    t[\vec{r}/x]\langle \sigma\rangle = t[\vec{r}/x]\langle \sigma_1,\sigma_2\rangle = t[\vec{r}\langle \sigma_1\rangle/x]\langle \sigma_2\rangle \Vdash \llbracket A\rrbracket_\tau,
  \]
  since $\vec{r}\langle\sigma_1\rangle\Vdash B$ (from 1 and the fact that
  $\sigma_1\in\llbracket \Gamma_1,\Delta_1\rrbracket_\tau$), and so we have
  that $\langle\{\vec{r}\langle\sigma_1\rangle/x\},\sigma_2\rangle\in
  \llbracket \Gamma_2,\Delta_2,\{x:B\}\rrbracket_\tau.$

\item[$(\forall_i)$] Suppose that $\Gamma;\Delta\vdash\vec t:A$ is valid and that $X\notin FV(\Gamma,\Delta)$. Then
  \[
    \dom(\Delta)\subseteq FV(\vec t)\subseteq\dom(\Gamma,\Delta)
    \textrm{ and }
    \vec t\langle\sigma\rangle\Vdash\llbracket A\rrbracket_\tau
    \textrm{ for all }
    \sigma\in\llbracket{\Gamma,\Delta}\rrbracket_\tau
  \]
  Let $\tau' = \tau\setminus\{X\}$. By definition, we have
  \[
  \llbracket \forall X.A\rrbracket_{\tau'} = \bigcap_{B\in\mathcal S_1}\llbracket A\rrbracket_{\tau'\cup\{X\mapsto\llbracket B\rrbracket_\emptyset\}}
  \]
  Since $X\notin FV(\Gamma,\Delta)$, 
  for all $\sigma$ such that
  $\sigma\in\llbracket{\Gamma,\Delta}\rrbracket_\tau$, we also have
  $\sigma\in\llbracket{\Gamma,\Delta}\rrbracket_{\tau'\cup\{X\mapsto R\}}$ for any $R\subseteq\mathcal S_1$.

  Therefore, 
  $\vec t\langle\sigma\rangle\Vdash\llbracket A\rrbracket_\tau$
  implies
  $\vec t\langle\sigma\rangle\Vdash\llbracket A\rrbracket_{\tau'\cup\{X\mapsto R\}}$
  for all $R\in\mathcal S_1$, thus,  
  $\vec t\langle\sigma\rangle\Vdash\llbracket \forall X.A\rrbracket_{\tau'}$.

\item[$(\forall_e)$]
  Suppose that $\Gamma;\Delta\vdash\vec t:\forall X.A$. Then
  \[
    \dom(\Delta)\subseteq FV(\vec t)\subseteq\dom(\Gamma,\Delta)
    \textrm{ and }
    \vec t\langle\sigma\rangle\Vdash\llbracket \forall X.A\rrbracket_\tau
    \textrm{ for all }
    \sigma\in\llbracket{\Gamma,\Delta}\rrbracket_\tau
  \]
  Thus, by definition, since $\llbracket B\rrbracket_\tau\subseteq\mathcal S_1$, we have  $\vec t\langle\sigma\rangle\Vdash\llbracket A\rrbracket_{\tau\cup\{X\mapsto\llbracket B\rrbracket_\tau\}}$.

  We conclude by Lemma~\ref{lem:substitutionRealizability}.
  
  \end{itemize}
\end{proof}

\lemunitaryabstractions*
\begin{proof}
(The condition is necessary.) For some $\lambda x.\vec{t}\in\llbracket \sharp (\B^n) \multimap \sharp (\B^k)\rrbracket_\emptyset$, let $\vec{v}_i\in\llbracket \sharp (\B^k)\rrbracket_\emptyset$ be the values that satisfy $\vec{t}[\ket{i}/x]\rightsquigarrow^* \vec{v}_i,\,\forall i=0,\dots,2^n-1$. Then, for any $\alpha_i\in\Ct$ such that $\sum_i|\alpha_i|^2=1$ we have, by linearity,
\[\vec{t}\bigg[\sum_i \alpha_i\cdot \ket{i}/x\bigg]= \sum_i \alpha_i\cdot \vec{t}[\ket{i}/x]\rightsquigarrow^*\sum_i \alpha_i \cdot \vec{v}_i.\]
Since $\sum_i \alpha_i \cdot\ket{i}\in\llbracket\sharp(\B^n)\rrbracket_\emptyset$, we have that $\sum_i \alpha_i\cdot \vec{v}_i\in\llbracket \sharp(\B^k)\rrbracket_\emptyset$, and therefore $\|\sum_i \alpha_i\cdot \vec{v}_i\|=1$. From this, we can derive
\begin{align*}
1=\bigg|\bigg|\sum_i \alpha_i\cdot \vec{v}_i\bigg|\bigg|^2 &= \big\langle \sum_i \alpha_i\cdot \vec{v}_i\big| \sum_i \alpha_i\cdot \vec{v}_i \big\rangle\\
&= \sum_i|\alpha_i|^2\langle \vec{v}_i\mid\vec{v}_i\rangle + \sum_{i<j}\big(\alpha_i\overline{\alpha_j}\langle \vec{v}_i\mid\vec{v}_j\rangle + \alpha_j\overline{\alpha_i}\langle \vec{v}_j\mid\vec{v}_i\rangle\big)\\
&=\sum_i |\alpha_i|^2 +\sum_{i<j}\big(\alpha_i\overline{\alpha_j}\langle \vec{v}_i\mid\vec{v}_j\rangle + \overline{\alpha_i\overline{\alpha_j}\langle \vec{v}_i\mid\vec{v}_j\rangle}\big)\\
&=1+ \sum_{i<j} 2\text{Re}\big(\alpha_i\overline{\alpha_j}\langle \vec{v}_i|\vec{v}_j\rangle\big).
\end{align*}
Picking $\alpha_0=\alpha_1=\frac{1}{\sqrt{2}}$, and $\alpha_i=0$, for $i\not = 0,1$ we deduce $\text{Re}(\langle \vec{v}_0\mid\vec{v}_1\rangle)=0$ and from $\alpha_0=\frac{1}{\sqrt{2}},\alpha_1=\frac{i}{\sqrt{2}}$ and $\alpha_i=0,\forall i\not = 0,1$ we obtain $\text{Im}(\langle \vec{v}_0\mid\vec{v}_1\rangle)=0$ and therefore $\langle \vec{v}_0\mid\vec{v}_1\rangle=0$. The same reasoning can be applied to all other pairs $\alpha_i,\alpha_j$ such that $i\not = j$ and therefore $\langle \vec{v}_i\mid \vec{v}_j\rangle=0$.

(The condition is sufficient.) Suppose there exist $\vec{v}_i\in \llbracket\sharp(\B^k)\rrbracket_\emptyset$, $i=0,\dots,2^n-1$,  for which $\langle \vec v_i\mid\vec v_j\rangle = 0,\,\forall i\not = j$ and such that $\vec{t}[\ket{i}/x]\rightsquigarrow^* \vec v_i$. In particular, we have that, for all $i$, $\vec{v}_i\in\mathsf{span}(\{\ket{i}\mid i=0,\dots,2^{n}-1\})$ and $\|\vec v_i\|=1.$ For any given $\vec{v}\in\llbracket\sharp (\B^n)\rrbracket_\emptyset$, $\vec{v}=\sum_i \alpha_i \cdot\ket{i}$ such that $\sum_i|\alpha_i|^2 =1$. Then,
\[\vec{t}[\vec{v}/x]=\sum_i\alpha_i\cdot\vec{t}[\ket{i}/x]\rightsquigarrow^*\sum_i \alpha_i\cdot \vec{v}_i\in \llbracket \sharp (\B^k)\rrbracket_\emptyset,\]
since $\sum_i\alpha_i\cdot\vec{v}_i\in\mathsf{span}(\{\ket{i}\mid i=0,\dots,2^{n}-1\})$ and
\begin{align*}
\bigg|\bigg|\sum_i \alpha_i\cdot \vec{v}_i\bigg|\bigg|^2 &= \big\langle \sum_i \alpha_i\cdot \vec{v}_i\big| \sum_i \alpha_i\cdot \vec{v}_i\big\rangle\\
&= \sum_i|\alpha_i|^2\langle \vec{v}_i\mid\vec{v}_i\rangle + \sum_{i<j}\big(\alpha_i\overline{\alpha_j}\langle \vec{v}_i\mid\vec{v}_j\rangle + \alpha_j\overline{\alpha_i}\langle \vec{v}_j\mid\vec{v}_i\rangle\big)\\
&= \sum_i|\alpha_i|^2 \|\vec{v}_i\|^2 + \sum_{i<j}\big(0+0\big)\\
& = \sum_i|\alpha_i|^2 =1.
\end{align*}
Therefore, for all $\vec{v}\in\llbracket \sharp\B^n\rrbracket_\emptyset$, $\vec{t}[\vec{v}/x]\Vdash \sharp (\B^n)$, and finally $\lambda x.\vec{t}\in\llbracket \sharp (\B^n)\multimap\sharp(\B^n)\rrbracket_\emptyset$.
\end{proof}

\thmisocomplete*
\begin{proof}
We provide a constructive proof. If $\mathcal{I}:\Ct^{2^n}\to\Ct^{2^k}$ represents an isometry, then the columns in its matrix representation have unit norm and are mutually orthogonal. This means that $\forall i,j=0,\dots,2^n-1$, $\mathcal{I}(\lmap{\ket{i}})^\dagger\cdot \mathcal{I}(\lmap{\ket{j}})=0$ if $i\not= j$. Furthermore, for any $i=0,\dots,2^n-1$, $\|\mathcal{I}(\lmap{\ket{i}})\|=1$ and so $\mathcal{I}(\lmap{\ket{i}})\triangleq (\alpha^i_0 \dots \alpha^i_{2^k-1})$ can be encoded by a closed \punq{} term $\vec{v}_i= \sum_{j=0}^{2^k-1}\alpha^i_j\cdot\ket{j}\in\llbracket \sharp (\B^k)\rrbracket$.
We now perform the encoding. Let $x,x_m$ and $x_{[1..m]}$ all represent different variables, for $1\leq m\leq n$. We can encode the operator $\mathcal{I}$ using nested $\mathsf{if}$ statements, as follows:
\begin{align*}
&\vec{t}\triangleq\tlet{x_{[1..n-1]}}{x_n}{x\\
&\qquad}{\tlet{x_{[1..n-2]}}{x_{n-1}}{x_{[1..n-1]}\\
&\qquad\quad}{\dots}}\\
&\qquad\qquad\ \mathsf{in}\ \tlet{x_1}{x_2}{x_{[1..2]}}{\mathsf{if}}\ x_1 \!\!\!\!\!\!\!\!\!\!&\mathsf{then} \dots \mathsf{if}\ x_{n-1}\ &\mathsf{then}\ (\tif{x_n}{\vec{v}_0} {\vec{v}_1})\\
&&&\mathsf{else}\ \ \,(\tif{x_n}{\vec{v}_2}{\vec{v}_3})\\
&&&\qquad\qquad\vdots\\
&&\mathsf{else} \dots \mathsf{if}\ x_{n-1}\ &\mathsf{then}\ (\tif{x_n}{\vec{v}_{2^n-4}} {\vec{v}_{2^n-3}})\\
&&&\mathsf{else}\ \ \,(\tif{x_n}{\vec{v}_{2^n-2}}{\vec{v}_{2^n-1}}).
\end{align*}
Then, the term $\lambda x.\vec{t}\in \SUP_c$ has type $\sharp(\B^n)\multimap \sharp(\B^k)$, and $\forall i=0,\dots,2^n-1$, $(\lambda x.\vec{t})\ \ket{i}\rightsquigarrow^* \vec{v}_i$.
\end{proof}

\thmnonseparability*
\begin{proof}
  Consider an argument by contradiction, starting with a candidate closed term
  $\lambda x.\vec{t}\in \llbracket \sharp (\B^{n+k})\multimap
  \big(\sharp(\B^{n})\times \sharp(\B^{k})\big)\rrbracket_\emptyset $. Define
  the values $(\vec u_i,\vec v_i)\in \llbracket \sharp (\B^{n})\times
  \sharp(\B^k)\rrbracket_\emptyset$ satisfying $(\lambda x.\vec{t})\
  \ket{i}\rightsquigarrow^* (\vec u_i,\vec v_i) $. Notice that $\lambda
  x.\vec{t}$ can be subtyped as $\sharp (\B^{n+k})\multimap \sharp (\B^{n+k})$ and
  therefore by Theorem~\ref{thm:unitarity} must satisfy unitarity. We conclude,
  therefore, that since we have $2^{n+k}$ values of $(\vec u_i,\vec v_i)$ such that
  they are all orthogonal and of unit norm, then they must form an orthonormal
  basis of $\llbracket \sharp (\B^{n+k})\rrbracket_\emptyset$. In particular, they
  suffice to represent any value $\vec{s}\triangleq \sum_i \alpha_i \cdot (\vec
  u_i, \vec v_i)$ with $\sum_i |\alpha_i|^2 =1,$ and in particular we may choose
  $\vec{s}$ to be an entangled state and therefore $\vec{s}\not \in \llbracket
  \sharp (\B^{n})\times \sharp (\B^k)\rrbracket_\emptyset$. However, we have
  that $\sum_i \alpha_i \cdot \ket{i}\in \llbracket \sharp
  (\B^{n+k})\rrbracket_\emptyset$ and that $\lambda x.\vec{t}\ (\sum_i \alpha_i \cdot
  \ket{i}\,)\rightsquigarrow^* \vec{s}$. 
\end{proof}

\subsection{Proofs of Section~\ref{s:ptstrongnormalization}}\label{app:ptstrongnormalization}

Shortand notations. For two sets $\mathcal{S},\mathcal{T}$, we have that
\begin{itemize}
\item $\mathcal{S}\ \mathcal{T}\triangleq \{s\ t\mid s\in\mathcal{S}, \ t\in \mathcal{T}\}$
\item $\lambda z.\mathcal{S}\triangleq \{\lambda z. s\mid s\in \mathcal{S}\}$
\item $\mathcal{S}[\mathcal{T}/z]\triangleq \{s[t/z]\mid s\in \mathcal{S},\ t\in \mathcal{T}\}$ 
\end{itemize}

\begin{lemma}\label{lemma:substitution}
Let $\vec{t}\in \SUP{}$ and $v\in \BV{}$, then $\lpar{\vec{t}[v/x]}= \lpar{\vec{t}}[\lpar{v}/x]$. 
\end{lemma}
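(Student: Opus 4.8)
The plan is to argue by structural induction on $\vec t\in\SUP$, reading each case off the defining equations of $\lpar{\cdot}$ in Figure~\ref{fig:mapenc}. Throughout I assume, by $\alpha$-renaming, that every bound variable of $\vec t$ is distinct from $x$ and avoids $FV(v)$, and I use the evident set-level identities $(\mathcal S\ \mathcal T)[\mathcal U/z]=\mathcal S[\mathcal U/z]\ \mathcal T[\mathcal U/z]$, $(\lambda z'.\mathcal S)[\mathcal U/z]=\lambda z'.(\mathcal S[\mathcal U/z])$ and $(\mathcal S\cup\mathcal T)[\mathcal U/z]=\mathcal S[\mathcal U/z]\cup\mathcal T[\mathcal U/z]$ (for $z'\neq z$, $z'\notin FV(\mathcal U)$), all immediate from the shorthand notations introduced above. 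The base cases are direct: for $\vec t=x$ both sides equal $\lpar v$; and for $\vec t$ equal to a variable $y\neq x$, to $\ket 0$, to $\ket 1$, or to $\vec 0$, the term substitution is the identity and $x$ does not occur in $\lpar{\vec t}$, so both sides equal $\lpar{\vec t}$.

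For the purely compositional cases — $\vec t=\alpha\cdot\vec r$, $\vec t=\vec r_1+\vec r_2$, $\vec t=\pair{t_1}{t_2}$, and $\vec t=\tlet{y}{y'}{t_1}{t_2}$ — the encoding applies a fixed $\dlal_\ast$ context to the encodings of the immediate subterms, and $[v/x]$ distributes over the corresponding $\punq$ constructor; pushing $[\lpar v/x]$ through the context with the identities above and invoking the induction hypothesis on each subterm closes these cases.

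The cases needing care are precisely the ones that invoke linearization: $\vec t=\lambda y.\vec r$, $\vec t=t_1\ t_2$, and $\vec t=\tif{t}{t_1}{t_2}$. Here I would first record that substituting a basis value for a free variable $x\neq y$ does not disturb the linearization data — concretely, $\eta(\lambda y.(\vec r[v/x]))=\eta(\lambda y.\vec r)$ and $\eta(t_1[v/x])=\eta(t_1)$, and the list of free \emph{linear} variables of the branches $t_1,t_2$ of a conditional is preserved up to renaming — and that the linearization renaming commutes with the substitution, i.e.\ writing $k=\eta(\lambda y.\vec r)$ one has $(\vec r[v/x])_{[y:1..k]}=(\vec r_{[y:1..k]})[v/x]$ since $y\neq x$ and the fresh copies $y_1,\dots,y_k$ may be taken outside $FV(v)$. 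Granting these, the abstraction case unfolds as
\[
\lpar{(\lambda y.\vec r)[v/x]}=\lambda y_{[1..k]}.\lpar{(\vec r_{[y:1..k]})[v/x]}=\lambda y_{[1..k]}.\big(\lpar{\vec r_{[y:1..k]}}[\lpar v/x]\big)=\lpar{\lambda y.\vec r}[\lpar v/x],
\]
with the induction hypothesis used in the middle; the application and conditional cases are analogous, the $\eta(t_1)$-fold repetition of $\lpar{t_2}$ (resp.\ the outer $\lambda x_{[1..k]}.(\cdot)\ x_1\cdots x_k$ wrapper of the conditional) being untouched by $[\lpar v/x]$ because the variables involved are fresh.

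The hard part will be exactly this linearization bookkeeping. As the text after Figure~\ref{fig:mapenc} notes, the numbers $\eta(\cdot)$ and the free-linear-variable lists are fixed through the typing derivation and are ``handled implicitly'' in the translation, so the real content of the proof is to show that replacing a free variable by a basis value leaves all of that structure invariant, after which the identity follows by routine propagation of substitutions through the defining equations of $\lpar\cdot$; in every application of the lemma the value $v$ is in fact closed, which removes any capture-avoidance concern.
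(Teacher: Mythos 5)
Your plan is the same as the paper's own proof — a structural induction on $\vec t$, pushing $[v/x]$ and $[\lpar{v}/x]$ through each clause of $\lpar{\cdot}$ and closing each case with the induction hypothesis — and you are in fact more scrupulous than the published version exactly where it matters. The appendix proof handles the abstraction and application cases as if $\lpar{\lambda z.\vec s}=\lambda z.\lpar{\vec s}$ and $\lpar{s\ r}=\lpar{s}\ \lpar{r}$, which are only the $\eta=1$ special cases of the actual definitions $\lpar{\lambda z.\vec s}=\lambda z_{[1..k]}.\lpar{\vec s_{[z:1..k]}}$ and the $\eta(s)$-fold repetition of $\lpar{r}$. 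You are right to isolate, as the real content of the lemma, the two invariance facts that the paper leaves tacit: that $\eta(\cdot)$ is unchanged by substituting a basis value for a free variable, and that the linearization renaming $(\cdot)_{[z:1..k]}$ commutes with $[v/x]$ when $z\neq x$ and the fresh copies avoid $FV(v)$. You do not prove those facts either, but flagging them as the obligations to discharge already goes beyond what the paper does, so I would not count this as a gap relative to the paper's own treatment.

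One inaccuracy worth fixing, which the paper shares: in the conditional case the outer variables $x_1,\dots,x_k$ are \emph{not} fresh — by the side condition $\Gamma;x_1{:}A_1,\dots,x_k{:}A_k\vdash t_1,t_2{:}B$ they are precisely the free linear variables of the branches. So the phrase "the wrapper is untouched because the variables involved are fresh" is not the right justification, and the paper's final equality $(\cdots)[\lpar{v}/x]\ x_1\cdots x_k=((\cdots)\ x_1\cdots x_k)[\lpar{v}/x]$ silently assumes $x\notin\{x_1,\dots,x_k\}$. If $x=x_j$ for some $j$, then $t_i[v/x]$ has $k-1$ free linear variables and the translated conditional uses a shorter list, so the two sides are only $\beta$-equivalent, not syntactically equal as the lemma asserts. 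Both your write-up and the paper's would be cleaner stating this side condition explicitly (or weakening the conclusion to equality up to $\beta$), since the lemma is invoked with $x$ a linear variable of the ambient term, where that case can in principle arise.
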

\begin{proof}
By induction on the structure of $\vec{t}$.
\begin{itemize}
\item ($\vec{t}=x$) $\lpar{\vec{t}[v/x]}=\lpar{v}=x[\lpar{v}/x]=\lpar{\vec{t}}[\lpar{v}/x]$.
\item ($\vec{t}=y$) $\lpar{\vec{t}[v/x]}=\lpar{y}=y=y[\lpar{v}/x]=\lpar{y}[\lpar{v}/x]=\lpar{\vec{t}}[\lpar{v}/x]$.
\item ($\vec{t}=\ket{0}$) $\lpar{\ket{0}[v/x]}=\lpar{\ket{0}}=\lambda z_1.\lambda z_2.z_1=(\lambda z_1.\lambda z_2.z_1)[\lpar{v}/x]=\lpar{\ket{0}}[\lpar{v}/x]=\lpar{\vec{t}}[\lpar{v}/x]$.
\item ($\vec{t}=\ket{1}$) Similar to case $(\ket{0})$.
\item ($\vec{t}=\tif{r}{\vec{s_1}}{\vec{s_2}}$)
\begin{align*}
\lpar{\vec{t}[v/x]}&=\lpar{\tif{r[v/x]}{\vec{s_1}[v/x]}{\vec{s_2}[v/x]}} \\
&=(\lpar{r[v/x]}\ \lambda x_{1\dots k}.\lpar{t_1[v/x]}\ \lambda x_{1\dots k}.\lpar{t_2[v/x]})\ x_1\dots x_k\\
&=^\text{IH}(\lpar{r}[\lpar{v}/x]\ \lambda x_{1\dots k}.\lpar{t_1}[\lpar{v}/x]\ \lambda x_{1\dots k}.\lpar{t_2}[\lpar{v}/x])\ x_1\dots x_k\\
&=(\lpar{r}\ \lambda x_{1\dots k}.\lpar{t_1}\ \lambda x_{1\dots k}.\lpar{t_2})[\lpar{v}/x]\ x_1\dots x_k\\
&=((\lpar{r}\ \lambda x_{1\dots k}.\lpar{t_1}\ \lambda x_{1\dots k}.\lpar{t_2})\ x_1\dots x_k)[\lpar{v}/x]= \lpar{\vec{t}}[\lpar{v}/x]
\end{align*}
\item ($\vec{t}= \lambda z.\vec{s}$)
\begin{align*}
\lpar{\vec{t}[v/x]}&=\lpar{\lambda z.\vec{s}[v/x]}\\
&= \lambda z.\lpar{\vec{s}[v/x]}\\
&=^\text{IH} \lambda z. \lpar{\vec{s}}[\lpar{v}/x]\\
&=\lpar{\lambda z.\vec{s}}[\lpar{v}/x]=\lpar{\vec{t}}[\lpar{v}/x].
\end{align*}
\item ($\vec{t} = s\ r$)
\begin{align*}
\lpar{\vec{t}[v/x]}&=\lpar{(s\ r)[v/x]}\\
&= \lpar{s[v/x]\ r[v/x]}\\
&=\lpar{s[v/x]}\ \lpar{r[v/x]}\\
&=^\text{IH} \lpar{s}[\lpar{v}/x]\ \lpar{r}[\lpar{v}/x]\\
&=(\lpar{s}\ \lpar{r})[\lpar{v}/x]=\lpar{s\ r}[\lpar{v}/x]=\lpar{\vec{t}}[\lpar{v}/x].
\end{align*}
\item ($\vec{t} = \pair{s}{r}$)
\begin{align*}
\lpar{\vec{t}[v/x]}&=\lpar{\pair{s}{r}[v/x]}\\
&= \lpar{\pair{s[v/x]}{r[v/x]}}\\
&=\lambda z.(z\ \lpar{s[v/x]}\ \lpar{r[v/x]})\\
&=^\text{IH} \lambda z.(z\ \lpar{s}[\lpar{v}/x]\ \lpar{r}[\lpar{v}/x])\\
&=(\lambda z.(z\ \lpar{s}\ \lpar{r}))[\lpar{v}/x]=\lpar{\pair{s}{r}}[\lpar{v}/x]=\lpar{\vec{t}}[\lpar{v}/x].
\end{align*}
\item ($\vec{t}=\tlet{z_1}{z_2}{s}{\vec{r}}$)
\begin{align*}
\lpar{\vec{t}[v/x]}&=\lpar{\tlet{z_1}{z_2}{s[v/x]}{\vec{r}[v/x]}}\\
&= \lpar{s[v/x]} (\lambda z_1.\lambda z_2.\lpar{\vec{r}[v/x]})\\
&=^\text{IH} \lpar{s}[\lpar{v}/x] (\lambda z_1.\lambda z_2.\lpar{\vec{r}}[\lpar{v}/x])\\
&= \big(\lpar{s}(\lambda z_1.\lambda z_2.\lpar{\vec{r}})\big)[\lpar{v}/x]\\
&= \lpar{\tlet{z_1}{z_2}{s}{\vec{r}}}[\lpar{v}/x]=\lpar{\vec{t}}[\lpar{v}/x].
\end{align*}
\item ($\vec{t}=\vec{0}$) Similar to case of $\ket{0}$.
\item ($\vec{t}=\alpha\cdot \vec{s}$)  $\lpar{\vec{t}[v/x]}=\lpar{\alpha\cdot \vec{s}[v/x]}=$ if $\alpha=0$
\item ($\vec{t}=\vec{s}+\vec{r}$) \qedhere
\end{itemize}
\end{proof}

\lemtransequiv*
\begin{proof}
By inspection of the equivalence relation.
\begin{itemize}
\item $\vec{t}_1+\vec{t}_2\equiv \vec{t}_2+\vec{t}_1$, where we have $\lpar{\vec{t}_1+\vec{t}_2}=\lpar{\vec{t}_1}\cup\lpar{\vec{t}_2}=\lpar{\vec{t}_2}\cup\lpar{\vec{t}_1}=\lpar{\vec{t}_2+\vec{t}_1}$.

\item $\vec 0+\vec{t}_1\equiv \vec{t}_1$, in which case $\lpar{\vec{0}+\vec{t}_1}=\{\ast\}\cup\lpar{\vec{t}_1}$, therefore $\lpar{\vec{0}+\vec{t}_1} \eqast \lpar{\vec{t}_1}$.
\end{itemize}
For all other cases, we easily find that $\lpar{\vec{t}}=\lpar{\vec{r}}$ and the conclusion follows.
\end{proof}

\lemdlaltranslation*
\begin{proof}
 We prove this statement by induction on the derivation tree of $\Gamma;\Delta\vdash\vec t:A$.
\begin{itemize}
\item[$(\mathsf{W})$] By the induction hypothesis, we have that $\Gamma^\star;\Delta^\star \vdash_{\text{DLAL}} \lpar{\vec t} : A^\star$. Using rule (Weak) in DLAL we obtain $\Gamma^\star,\Gamma'^\star;\Delta^\star \vdash_{\text{DLAL}} \lpar{\vec t} : A^\star$.
\item[$(\mathsf{C})$] By the IH we have $\Gamma^\star,x:B^\star,y:B^\star;\Delta^\star \vdash_{\text{DLAL}} \lpar{\vec{t}}: A^\star$.  Applying rule (Cntr) in DLAL we obtain $\Gamma^\star,x:B^\star;\Delta^\star \vdash_{\text{DLAL}} \lpar{\vec t}[x/y]: A^\star$ which by Lemma~\ref{lemma:substitution}
\item[$(\equiv)$] We can check that this is the case for all terms $\vec{t_1}, \vec{t_2}$ such that $\vec{t}_1\equiv \vec{t}_2$. It can be shown by inspection that, for almost all cases in the equivalence relation $\equiv$, the translation is preserved, i.e. that if $\vec{t}_1\equiv \vec{t}_2$, then $\lpar{\vec{t}_1}=\lpar{\vec{t}_2}$. However, this is not the case in two instances:
\begin{itemize}
\item[$\bullet$] $\vec{t_1}=\vec{0}+\vec{t}\equiv \vec{t}=\vec{t_2}$. In this case, $\lpar{\vec{t_1}}=\lpar{\vec{t_2}}\cup \{\ast\}$ and therefore $\lpar{\vec{t_2}}\subseteq \lpar{\vec{t_1}}$ from which the conclusion follows.
\item[$\bullet$] $\vec{t_1}= 0\cdot \vec{t}\equiv \vec{0}=\vec{t_2}$. Here, since $\lpar{\vec{t_2}}=\{\ast\}$ which admits any type, the conclusion follows.
\end{itemize}
\item[$(\leq)$] By Lemma~\ref{lem:translation-subtype}, for all subtying relations $A\leq B$ we have that $A^\star = B^\star$ and therefore the result follows.
\item[$(\mathsf{Ax})$] We have that $\lpar{x}=\{x\}$, and by rule (Id) in \dlal{}, we may obtain $;x:A^\star\vdash x:A^\star$ for any $A\in\mathbb{T}$.
\item[$(0)$] We have that $\lpar{\ket{0}}=\{\lambda x.\lambda y.x\}$ and $\B^\star=\forall X. (X\multimap X\multimap X)$, and by DLAL rules (Id), (Weak), and two uses of ($\multimap$ i), and finally ($\forall$ i) we obtain the desired typing derivation. Indeed, this is the standard encoding in \dlal{} for booleans.
\item[$(1)$] Analogous to $(0)$.
\item[$(\mathsf{if})$] We have that $\lpar{\tif{t}{t_1}{t_2}}=(t\ \lambda x_{1\dots k}. \lpar{t_1}\ \lambda x_{1\dots k}.\lpar{t_2})\ x_1'\dots x_k'$. By the IH we have that $\Gamma'^\star;\Delta'^\star \vdash \lpar{t_1}\cup \lpar{t_2}:A^\star $, and $\Gamma^\star;\Delta^\star \vdash t:\forall X.(X\multimap X\multimap X)$. Let $\Delta'^\star= \{x_i:A_i^\star\}_{i=1\dots k}$,
\[
 \begin{prooftree}
      \hypo{\Gamma'^\star;\Delta'^\star \vdash \lpar{t_1}\cup \lpar{t_2}:A^\star}
      \infer1[($\multimap$ e) $k$ times]{\Gamma'^\star;\vdash \lpar{\lambda x_{1\dots k}.t_1}\cup \lpar{\lambda x_{1\dots k}.t_2}:A_1^\star \multimap \dots \multimap A_k^\star \multimap A^\star}
    \end{prooftree}
\]

Without loss of generality, we may consider that the variables in $\Gamma'^\star$ have different names in the typing of $t_1$ and $t_2$. Let us use $\widetilde{\Gamma'^\star}$ to denote the nonlinear context of $\widetilde{t_2}$, where we have renamed the nonlinear variables. 

Therefore, we may derive the following judgement in DLAL. Let $B\triangleq A_1^\star \multimap \dots \multimap A_k^\star \multimap A^\star$.

\[
\scalebox{0.75}{
 \begin{prooftree}
      \hypo{\Gamma^\star;\Delta^\star \vdash t: \forall X.(X\multimap X\multimap X)}
      \infer1[($\forall$ e)]{ \Gamma^\star;\Delta^\star \vdash t: B\multimap B\multimap B }
      \hypo{\Gamma'^\star; \vdash \lpar{\lambda x_{1\dots k}.t_1}: B}
      \infer2[($\multimap$ e)]{\Gamma^\star,\Gamma'^\star;\Delta^\star \vdash t\  \lpar{\lambda x_{1\dots k}.t_1}: B\multimap B}
      \hypo{\widetilde{\Gamma'^\star}; \vdash \lpar{\lambda x_{1\dots k}.\widetilde{t_2}}: B}
      \infer2[($\multimap$ e)]{\Gamma^\star,\Gamma'^\star,\widetilde{\Gamma'^\star};\Delta^\star \vdash t\  \lpar{\lambda x_{1\dots k}.t_1}\ \lpar{\lambda x_{1\dots k}.\widetilde{t_2}}: B}
      \infer1[(Cntr)]{\Gamma^\star,\Gamma'^\star;\Delta^\star\vdash t\  \lpar{\lambda x_{1\dots k}.t_1}\ \lpar{\lambda x_{1\dots k}.t_2}: A_1^\star \multimap \dots \multimap A_k^\star \multimap A^\star}
       \infer0[(Id)]{x_i:A_i^\star \vdash x_i:A_i^\star}
      \infer2[($\multimap$ e) ]{\Gamma^\star,\Gamma'^\star;\Delta^\star, \Delta'^\star\vdash (t\  \lpar{\lambda x_{1\dots k}.t_1}\ \lpar{\lambda x_{1\dots k}.t_2})\ x_1\dots x_k: A^\star}
    \end{prooftree}}
\]

\item[($\mathsf{if}_\sharp$)] Since $(\sharp A)^\star = A^\star$, the proof is precisely the same as done in the case of ($\mathsf{if}$).

\item[($\multimap_i$)] By the IH we have that $\Gamma^\star;\Delta^\star, x: A^\star \vdash \lpar{\vec{t}}:B^\star$. By rule $(\multimap$ i) in DLAL we obtain $\Gamma^\star;\Delta^\star\vdash \lambda x.\lpar{\vec{t}}:A^\star \multimap B^\star$. Since $\lpar{\lambda x.\vec{t}}=\lambda x.\lpar{\vec{t}}$ and $(A\multimap B)^\star = A^\star \multimap B^\star$ this is precisely our desired conclusion.

\item[($\multimap_e$)] By the IH we have $\Gamma^\star;\Delta^\star\vdash \lpar{t}: A^\star \multimap B^\star$ and $\Gamma'^\star;\Delta'^\star \vdash \lpar{s}: A^\star$. Applying the DLAL rule $(\multimap$ e) we obtain $\Gamma^\star, \Gamma'^\star;\Delta^\star, \Delta'^\star\vdash \lpar{t}\ \lpar{s}: B^\star$. Since $\lpar{t\ s}=\lpar{t}\ \lpar{s}$ the proof is concluded.

\item[($\Rightarrow_i$)] By the IH we have $\Gamma^\star, x:A^\star;\Delta^\star\vdash \lpar{\vec{t}}:B^\star$. By rule $(\Rightarrow$ i) we obtain $ \Gamma^\star;\Delta^\star\vdash \lambda x.\lpar{\vec{t}}: A^\star \Rightarrow B^\star$. By Lemma~\ref{lem:translation-bang} we have that $(!(A))^\star=A^\star$.
\item[($\Rightarrow_e$)] Similar to $(\multimap_e)$. We may simply use the rule ($\Rightarrow$ e) in DLAL to obtain our desired conclusion. 
\item[($\times_i$)]  By the IH we have that $\Gamma^\star;\Delta^\star \vdash \lpar{t}:A^\star$ and $\Gamma'^\star;\Delta'^\star \vdash \lpar{s}: B^\star$. We may obtain the following derivation in DLAL (i.e. the pair encoding). 
\[
\scalebox{0.9}{
\begin{prooftree}
      \infer0[(Id)]{;x:A^\star\multimap B^\star\multimap X\vdash x:A^\star\multimap B^\star\multimap X}
      \hypo{\Gamma^\star;\Delta^\star \vdash \lpar{t}:A^\star}
      \infer2[($\multimap$ e)]{ \Gamma^\star;\Delta^\star,x:A^\star\multimap B^\star\multimap X \vdash x\ \lpar{t}: B^\star \multimap X}
       \hypo{\Gamma'^\star;\Delta'^\star \vdash \lpar{s}:B^\star}
      \infer2[($\multimap$ e)]{ \Gamma^\star,\Gamma'^\star;\Delta^\star, \Delta'^\star,x:A^\star\multimap B^\star\multimap X \vdash x\ \lpar{t}\ \lpar{s}: X}
      \infer1[($\multimap$ i)]{\Gamma^\star,\Gamma'^\star;\Delta^\star, \Delta'^\star\vdash \lambda x.(x\ \lpar{t}\ \lpar{s}): (A^\star\multimap B^\star \multimap X)\multimap X}
      \infer1[($\forall$ i)]{\Gamma^\star,\Gamma'^\star;\Delta^\star, \Delta'^\star\vdash \lambda x.(x\ \lpar{t}\ \lpar{s}): \forall X.( (A^\star\multimap B^\star \multimap X)\multimap X)}
    \end{prooftree}}
\]
This is precisely our desired result.
\item[$(\times_e)$] We have $\Gamma^\star;\Delta^\star \vdash \lpar{t}:\forall X. ((A^\star\multimap B^\star \multimap X )\multimap X)$ and $\Gamma'^\star;\Delta'^\star, x:A^\star,y:B^\star\vdash \lpar{\vec{s}}: C^\star$ by the IH. In DLAL we are able to derive
\[
\scalebox{0.9}{
\begin{prooftree}
	\hypo{\Gamma^\star;\Delta^\star \vdash \lpar{t}:\forall X. ((A^\star\multimap B^\star \multimap X )\multimap X)}
	\infer1[($\forall$ e)]{\Gamma^\star;\Delta^\star \vdash \lpar{t}:(A^\star\multimap B^\star \multimap C^\star )\multimap C^\star}
      \hypo{\Gamma'^\star;\Delta'^\star, x:A^\star,y:B^\star\vdash \lpar{\vec{s}}: C^\star}
      \infer1[($\multimap$ i)]{\Gamma'^\star;\Delta'^\star, x:A^\star\vdash \lambda y.\lpar{\vec{s}}: B^\star\multimap C^\star}
      \infer1[($\multimap$ i)]{\Gamma'^\star;\Delta'^\star\vdash \lambda x.\lambda y.\lpar{\vec{s}}: A^\star \multimap B^\star\multimap C^\star}
      \infer2[($\multimap$ e)]{\Gamma^\star,\Gamma'^\star;\Delta^\star,\Delta'^\star\vdash \lpar{t}\ \lambda x.\lambda y.\lpar{\vec{s}}:C^\star}
    \end{prooftree}}
\]
which concludes this case.
\item[$(\times_{e\sharp})$] Same proof as in $(\times_e)$.
\item[$(\sharp_i)$] Trivial.
\item[$(\S_i)$]$(\S_e)\ (\forall_i)\ (\forall_e)$ These rules have a direct counterpart in DLAL so their proof is trivial.
\end{itemize}
\end{proof}

\lemsize*

\begin{proof}
By induction on the structure of $\vec t$.
\begin{itemize}
\item If $\vec t=\ket{0}$, then $\lpar{\vec{t}}=\{\lambda x.\lambda y. x\}$ and therefore $|\vec t|=1$ and $|\lpar{\vec t}|=5$. Similar case for $t=\ket{1}\,$.
\item If $\vec t= x$, then $|t|=1$ and $\lpar{\vec t}=\{x\}$ and therefore $|\lpar{\vec t}|=1$.
\item If $\vec t=\lambda x.\vec{s}$, then $|\vec t|=1+|\vec{s}|$ and $|\lpar{\vec t}|=|\lambda x_{[1\dots k]}.\lpar{\vec{s}_{[x:1\dots k]}}|= O(2|\vec s|)=O(|\vec t|)$.
\item If $\vec t = t_1\ t_2$, we have $|\vec t|=|t_1|+|t_2|$ and 
\begin{align*}
|\lpar{t_1\ t_2}| \triangleq & |(\dots(\lpar{t_1}\ \lpar{t_2})\ \dots \lpar{t_2})|\\
=& |\lpar{t_1}| + O(|t_1|)\cdot |\lpar{t_2}|\\
=&^\text{IH}\ O(|t_1|^{d_1})+O(|t_1|)\cdot O(|t_2|^{d_2})\\
=&\ O(|\vec{t}|^{\max(d_1,d_2+1)}).
\end{align*}
Notice that, if $t_1$ does not duplicate $t_2$, then the bounding degree is instead $\max(d_1,d_2)$. 
\item If $\vec t= \tif{s}{p_1}{p_2}$, we have $|\vec t|=1+|s|+|p_1|+|p_2|$, and
\begin{align*}
|\lpar{\vec t}| \triangleq & \Big|(\lpar{s}\ \lambda x_{[1\dots k]}.\lpar{p_1}\ \lambda x_{[1\dots k]}.\lpar{p_2})\ x_1'\dots x_k'\Big|\\
=& |\lpar{s}|+O(|p_1|) +O(|\lpar{p_1}|)+O(|p_2|)+O(|\lpar{p_2}|)\\
=&^\text{IH}\ O(|s|^{d_1})+O(|p_1|^{d_2})+O(|p_2|^{d_3})+O(|p_1|)+O(|p_2|)\\
=& O(|\vec{t}|^{\max(d_1,d_2,d_3)}).
\end{align*}
\item If $\vec t =\pair{ t_1}{t_2 }$, then $|\vec t|=1+|t_1|+|t_2|$ and
\begin{align*}
|\lpar{\vec t}| \triangleq  |\lambda x.(x\ \lpar{t_1}\ \lpar{t_2})|=3+|\lpar{t_1}|+|\lpar{t_2}|&=^\text{IH} 2+ O(|t_1|^{d_1})+ O(|t_2|^{d_2})\\
&=O(|\vec t|^{\max(d_1,d_2)}).
\end{align*}
\item If $\vec t=\tlet{x}{y}{t_1}{t_2}$ then $|t|=1+\max(|t_1|,|t_2|)$ and
\begin{align*}
|\lpar{\vec t}|  \triangleq  |\lpar{t_1}\ (\lambda x.\lambda y.\lpar{t_2})|&=2+|\lpar{t_1}|+ |\lpar{t_2}|\\
&=^\text{IH} 2+O(|t_1|^{d_1})+ O(|t_2|^{d_2})\\
&=O(|\vec t|^{\max(d_1,d_2)})
\end{align*}
\item If $\vec{t}= \vec{t}_1+\vec{t}_2$, then $|\vec t|=\max(|\vec t_1|,|\vec t_2|)$, and
\begin{align*}
|\lpar{\vec{t}}|=|\lpar{\vec{t}_1}\cup\lpar{\vec{t}_2}|=\max(\lpar{\vec{t}_1},\lpar{\vec{t}_2})&=^\text{IH} O(\max(|\vec t_1|^{d_1},|\vec t_2|^{d_2}))\\
&=O(|\vec{t}|^{\max(d_1,d_2)}).
\end{align*}
\item If $\vec{t}= \vec{t}_1+\vec{t}_2$, then $|\vec t|=\max(|\vec t_1|,|\vec t_2|)$, and
\begin{align*}
|\lpar{\vec{t}}|=|\lpar{\vec{t}_1}\cup\lpar{\vec{t}_2}|=\max(\lpar{\vec{t}_1},\lpar{\vec{t}_2})&=^\text{IH} O(\max(|\vec t_1|^{d_1},O(|\vec t_2|^{d_2}))\\
&=O(|\vec{t}|^{\max(d_1,d_2)}).
\end{align*}
\item If $\vec{t}= \alpha\cdot\vec{s}$, then $|\vec t|=|\vec s|$, and $|\lpar{\vec{t}}|=|\lpar{\vec{s}}|=^\text{IH} O(|\vec{s}|^{d})=O(|\vec{t}|^d)$.
\end{itemize}
The degree of the polynomial bounding the size of $\lpar{\vec{t}}$ is only increased in translating duplications (see case $\vec{t}=t_1\ t_2$), so it will only depend on the number of such duplications present in the term, which is bounded by the type of the term.
\end{proof}

\lemvalues*

\begin{proof}
  By induction on the structure of $\vec v$. 
\begin{itemize}
\item $\vec{v}=\ket{0}\,$. Then $\lpar{\vec{v}}=\{\lambda x.\lambda y. x\}$, which does not reduce. Same for $\vec{v}=\ket{1}\,$.
\item $\vec{v}=\lambda x.\vec{t}$. Then $\lpar{\vec{v}}=\lambda x.\lpar{\vec{t}}$ which will not reduce in DLAL in call by value.
\item $\vec{v} =\pair{v_1}{v_2}$, in which case $\lpar{\vec{v}}=\lambda x. (x\ \lpar{v_1}\ \lpar{v_2})$. By IH, we consider that no element in $\lpar{v_1}\cup \lpar{v_2}$ reduces in DLAL, therefore $\lpar{\vec{v}}$ will not reduce.
\item $\vec{v}=\vec{0}$, where we have $\lpar{\vec{v}}=\{\ast\}$, which by definition does not reduce.
\item $\vec{v}=\alpha \cdot \vec{w}$, where we apply the IH to $\lpar{\vec{w}}=\lpar{\vec{v}}$.
\item $\vec{v}=\vec{v}_1+\vec{v}_2$, where we apply the IH to $\lpar{\vec{v}_1}$ and $\lpar{\vec{v}_2}$, since $\lpar{\vec{v}}=\lpar{\vec{v}_1}\cup\lpar{\vec{v}_2}$.\qedhere
\end{itemize}  
\end{proof}

\lemtransitiondlal*

\begin{proof}
By induction on the reduction of $\vec t$.
\begin{itemize}
\item[(If$_0$)] $\vec{t}=\tif{\ket{0}}{\vec{s_0}}{\vec{s_1}}\rightsquigarrow \vec{s_0}$, therefore $\lpar{\vec{r}}= \lpar{\vec{s_0}}$ and we have
\begin{align*}
\lpar{\vec{t}}=&((\lambda x.\lambda y.x)\ \lambda x_{1\dots k}.\lpar{\vec{s_0}}\ \lambda x_{1\dots k}.\lpar{\vec{s_1}})\ x_1\dots x_k\\
\rightarrow & ((\lambda y. \lambda x_{1\dots k}.\lpar{\vec{s_0}})\ \lambda x_{1\dots k}.\lpar{\vec{s_1}})\ x_1\dots x_k \\
\rightarrow & (\lambda x_{1\dots k}.\lpar{\vec{s_0}})\ x_1\dots x_k \\
\rightarrow &^k  \lpar{\vec{s_0}} 
\end{align*}

\item[(If$_1$)] $\vec{t}=\tif{\ket{1}}{\vec{s_0}}{\vec{s_1}}\rightsquigarrow \vec{s_1}$, therefore $\lpar{\vec{r}}= \lpar{\vec{s_1}}$ and we have
\begin{align*}
\lpar{\vec{t}}=&((\lambda x.\lambda y.x)\ \lambda x_{1\dots k}.\lpar{\vec{s_0}}\ \lambda x_{1\dots k}.\lpar{\vec{s_1}})\ x_1\dots x_k\\
\rightarrow & ((\lambda y.y)\ \lambda x_{1\dots k}.\lpar{\vec{s_1}})\ x_1\dots x_k \\
\rightarrow & (\lambda x_{1\dots k}.\lpar{\vec{s_1}})\ x_1\dots x_k \\
\rightarrow &^k  \lpar{\vec{s_1}} 
\end{align*}

\item[(Abs)] $\vec  t = (\lambda x.\vec{p})\ v\rightsquigarrow \vec{p}[v/x]$, and therefore
\begin{align*}
\lpar{\vec{t}}=\lpar{(\lambda x.\vec{p})\ v}&=\lambda x_1\dots \lambda x_k.\lpar{\vec{p}_{[x:1\dots k]}}\ \overbrace{\lpar{v}\ \dots\ \lpar{v}}^{k\ \text{times}}\\
& =\{(\lambda x_1\dots\lambda x_k .p_i)\ v_1\ \dots v_k\mid p_i\in \lpar{\vec{p}_{[x:1\dots k]}},\ v_j\in\lpar{v}, j=1\dots k\}\\
&\rightarrow \{p_i[v_1/x_1,\dots, v_k/x_k] \mid p_i \in \lpar{\vec{p}},\ v_j\in\lpar{v}, j=1\dots k\}\\
&= \lpar{\vec{p}}[\lpar{v}/x]= \lpar{\vec{p}[v/x]}=\lpar{\vec{r}}
\end{align*}

\item[(Let)] $\vec t=(\tlet{x}{y}{\pair{v}{w}}{\vec s})                                                                                                                                                                                                                                                                                                                                                                                                                                                                                                                                                                                           $, then $\vec t\rightsquigarrow \vec{s}[v/x,w/y]=\vec{s}[v/x][w/y]$, since $y\not\in FV(v)$.
\begin{align*}
\lpar{\vec{t}}&=\lambda z.(z\ \lpar{v}\ \lpar{w})\ \lambda x.\lambda y.\lpar{\vec{s}}\\ &\rightarrow (\lambda x.\lambda y.\lpar{\vec{t}})\ \lpar{v}\ \lpar{w}\\
& \rightarrow (\lambda y.\lpar{\vec s}[\lpar{v}/x]) \lpar{w}\\
& \rightarrow \lpar{\vec{s}} [\lpar{v}/x][\lpar{w}/y]\\
& = \lpar{\vec{s}[v/x]}[\lpar{w}/y] &\text{(by Lemma~\ref{lemma:substitution})}\\
& = \lpar{\vec{s}[v/x][w/y]} & \text{(by Lemma~\ref{lemma:substitution})}\\
& = \lpar{\vec{r}}, &\text{ since }y\not \in FV(v).
\end{align*}
\item[(If$_+$)] If $s_1\rightsquigarrow \vec{s}_2$, then $\vec{t}=(\tif{s_1}{\vec{p}_1}{\vec{p}_2})\rightsquigarrow (\tif{\vec{s}_2}{\vec{p}_1}{\vec{p}_2})=\vec{r}$. By the induction hypothesis, $\lpar{\vec s_2}\subseteq \bigcup_{n>0} \{\mathcal{R}_n\mid \lpar{s_1}\rightarrow^n \mathcal{R}_n\}$.
\begin{align*}
\lpar{\vec{r}}=\lpar{\tif{\vec{s_2}}{\vec{p}_1}{\vec{p}_2}}& =\lpar{\vec{s_2}}\ \lpar{\vec{p}_1}\ \lpar{\vec{p}_2} \\
&\subsetast^\text{IH} \bigcup_{n>0} \{\mathcal{R}_n\ \lpar{\vec{p_1}}\ \lpar{\vec{p}_2}\mid \lpar{s_1}\rightarrow^n \mathcal{R}_n\}\\
&\subseteq \bigcup_{n>0} \{\mathcal{S}_n\mid \lpar{s_1}\ \lpar{\vec{p_1}}\ \lpar{\vec{p_2}}\rightarrow^n \mathcal{S}_n\}\\
& =\bigcup_{n>0} \{\mathcal{S}_n\mid \lpar{s_1\ \vec{p}_1\ \vec{p}_2 }\rightarrow^n \mathcal{S}_n\}
\end{align*}
\item[(App)] Let $\lpar{\vec s_2}\subseteq \bigcup_{n>0} \{\mathcal{R}_n\mid \lpar{s_1}\rightarrow^n \mathcal{R}_n\}$, then
\begin{align*}
\lpar{\vec{r}}=\lpar{p\ \vec{s}_2}&= \lpar{p}\ \lpar{\vec{s}_2}\\
&\subsetast^\text{IH} \bigcup_{n>0} \{\lpar{p}\ \mathcal{R}_n\mid \lpar{s_1}\rightarrow^n \mathcal{R}_n\}\\
&\subseteq \bigcup_{n>0} \{\mathcal{S}_n\mid \lpar{p}\ \lpar{s_1}\rightarrow^n \mathcal{S}_n\}\\
&=\bigcup_{n>0} \{\mathcal{S}_n\mid \lpar{p\ s_1}\rightarrow^n \mathcal{S}_n\}
\end{align*}
\item[(App$_\V$)] Similar to (App).
\item[(Pair)] Let $\lpar{\vec s_2}\subseteq \bigcup_{n>0} \{\mathcal{R}_n\mid \lpar{s_1}\rightarrow^n \mathcal{R}_n\}$, then
\begin{align*}
\lpar{\vec{r}}=\lpar{\pair{\vec s_2}{p}}&= \lambda x.(x\ \lpar{\vec{s}_2}\ \lpar{p})\\
&\subsetast^\text{IH} \bigcup_{n>0} \{\lambda x.(x\ \mathcal{R}_n\ \lpar{p} )\mid \lpar{s_1}\rightarrow^n \mathcal{R}_n\}\\
&\subseteq \bigcup_{n>0} \{\mathcal{S}_n\mid \lambda x.(x\ \lpar{s_1}\ \lpar{p})\rightarrow^n \mathcal{S}_n\}\\
&=\bigcup_{n>0} \{\mathcal{S}_n\mid \lpar{\pair{s_1}{p}}\rightarrow^n \mathcal{S}_n\}
\end{align*}
\item[(Pair$_\V$)] Similar to (Pair).
\item[(Let$_+$)] Let $\vec t=(\tlet{x}{y}{s_1}{\vec p})                                                                                                                                                                                                                                                                                                                                                                                                                                                                                                                                                                                           $ such that $s_1\rightsquigarrow \vec s_2$. Then 
\begin{align*}
\lpar{\vec{r}}&=\lpar{\vec s_2}\ \lambda x.\lambda y.\lpar{\vec{p}}\\
&\subsetast^\text{IH} \bigcup_{n>0} \{ \mathcal{R}_n\ \lambda x.\lambda y.\lpar{\vec{p}}\mid \lpar{s_1} \rightarrow^n \mathcal{R}_n\}\\
&\subseteq \bigcup_{n>0}\{\mathcal{S}_n\mid \lpar{s_1}\ \lambda x.\lambda y.\lpar{\vec p}\rightarrow^n \mathcal{S}_n\}\\
& =  \bigcup_{n>0}\{\mathcal{S}_n\mid \lpar{s_1\ \lambda x.\lambda y.\vec p}\rightarrow^n \mathcal{S}_n\}
\end{align*}
\item[(Sup)] Let $\vec{t}=\sum_{i\in I} \alpha_i\cdot p_i + \sum_{j\in J} \beta_j\cdot v_j$ such that $\forall i \in I,\ p_i\rightsquigarrow s_i$. Then, by the induction hypothesis, we have that $\forall i \in I,\ \lpar{s_i}\subseteq \bigcup_{n>0} \{ \mathcal{S}_n \mid \lpar{p_i}\rightarrow^n \mathcal{S}_n \}.$ We can see that
\begin{align*}
\lpar{\vec{r}}&=\bigcup_{i\in I} \lpar{s_i} \cup \bigcup_{j\in J} \lpar{v_j}\\
&\subsetast^\text{IH} \bigcup_{i\in I}\bigg(\bigcup_{n>0} \{\mathcal{S}_n \mid \lpar{p_i} \rightarrow^n \mathcal{S}_n\}\bigg)  \cup \bigcup_{j\in J} \lpar{v_j}\\
&=\bigcup_{n>0} \{\mathcal{S}_n \mid \lpar{\vec{t}} \rightarrow^n \mathcal{S}_n\}. & \text{(by Lemma~\ref{lem:values})}
\end{align*}
\item[(Equ)] Since we have that $\vec{s_1}\equiv \vec{s_2}$ and $\vec{s_2}\rightsquigarrow \vec{s_3}$ and $\vec{s_3}\equiv \vec{s_4}$, by the induction hypothesis the following are true:
\begin{align*}
(\text{IH1})\quad & \lpar{\vec{s_1}}\eqast\lpar{\vec{s_2}},\\
(\text{IH2})\quad & \lpar{\vec s_3}\subsetast \bigcup_{n>0} \{\mathcal{S}_n\mid \lpar{\vec s_2}\rightarrow^n \mathcal{S}_n\},\\
(\text{IH3})\quad & \lpar{\vec{s_3}}\eqast\lpar{\vec{s_4}}.
\end{align*}
We may then see that:
\begin{align*}
\lpar{\vec{s_4}}\eqast^\text{IH3}\lpar{\vec{s_3}}\subsetast^{\text{IH2}}  \bigcup_{n>0} \{\mathcal{S}_n\mid \lpar{\vec s_2}\rightarrow^n \mathcal{S}_n\}\eqast^{\text{IH1}}  \bigcup_{n>0} \{\mathcal{S}_n\mid \lpar{\vec s_1}\rightarrow^n \mathcal{S}_n\}.
\end{align*}
\end{itemize}
This concludes the proof.
\end{proof}

\subsection{Proofs of Section~\ref{s:type checking}}
\label{app:type-checking}

In this section of the Appendix we consider the proofs and general points addressed in Section~\ref{s:type checking}.

\paragraph*{Restriction to the algebraic numbers} 
It is important to note the role that the complex coefficients in \punq{} play in the decidability of the typing system. In \punq{}, the restriction of amplitudes to the set $\Ct$ of algebraic numbers ensures that, for any amplitude $\alpha$, we are capable of checking if $\alpha=0$ and therefore, for some $\vec{t}$, if $\alpha\cdot \vec{t}\equiv\vec{0}$. This can be done efficiently when $\alpha\in\Ct$, as shown in~\cite[Proposition 2.2]{HHHJ05}. Allowing for the full set of complex numbers, on the other hand, allows for the encoding of undecidable problems, as shown in~\cite{ADH97}.

\paragraph*{Reducing ground types} In this section, we will be interested in the time complexity of reducing terms in order to check if they are orthogonal (see Section~\ref{ss:orthogonality}). In the program semantics (Figure~\ref{fig:sem}), we allow in rule (Sup) for one-step reductions to alter different parts of a superposition simultaneously, so as to capture a single-step evolution of a quantum machine. However, if type checking requires performing some reduction steps, it must do so in the \emph{classical} model, and therefore, we must consider the number of steps required by a classical Turing machine to perform the reduction.

From the typing rules (Figure~\ref{fig:type}), we are only interested in checking the orthogonality of terms to which we have successfully attributed some ground type $Q$. This limitation ensures a bounded dimension in the set of realizers, and a maximum number of base values that can be in a superposition. For such a $Q$, we define this number the \emph{cardinality of $Q$}.

\begin{definition}[Ground type dimension] For any ground type $Q$, we define the \emph{cardinality of $Q$}, written $\card(Q)$, inductively as:
\[\card(\B)\triangleq 2,\quad \card(\sharp Q)\triangleq \card(Q),\quad \card(\S Q)\triangleq \card(Q),\quad \card(Q\times R)\triangleq \card(Q)\card(R).\]
Using the unitary semantics of Section~\ref{s:unitarity}, we can, in an equivalent way, define $\card(Q)$ as the cardinality of $\flat \llbracket Q\rrbracket_\emptyset$.
\end{definition}

\begin{restatable}[]{lemma}{lemmaclassreduct}\label{lemma:class-reduct}
Let $;\vdash \vec{t}:Q$ such that $\vec{t}\rightsquigarrow^* \vec{v}$ for $\vec{v}\in\V$. Then $\vec{v}$ can be computed from $\vec{t}$ classically in time at most $O(\card(Q)\,\textnormal{poly}(|\vec{t}|))$.
\end{restatable}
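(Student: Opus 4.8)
The statement bounds the \emph{classical} cost of fully reducing a superposition $\vec t$ of ground type $Q$ to a value. The key insight is that, although the operational semantics of \punq{} performs simultaneous one-step reductions across all branches of a superposition (rule (Sup)), a classical Turing machine must process those branches one at a time, so we need to control (i) the number of branches that can ever appear and (ii) the cost of reducing each branch. The plan is to first invoke Theorem~\ref{thm:soundness} (polynomial time normalization): there is a polynomial $P_Q$ such that $\vec t$ reduces to a value $\vec v$ in at most $O(P_Q(|\vec t|))$ steps of $\rightsquigarrow$, and $|\vec v| = O(P_Q(|\vec t|))$. This already bounds the \emph{length} of the reduction sequence in $\rightsquigarrow$; what remains is to bound the classical cost of simulating each such step.

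\textbf{Key steps.} First I would observe that, by subject reduction (Theorem~\ref{thm:sr}), every intermediate superposition in the reduction $\vec t \rightsquigarrow^* \vec v$ still has type $Q$. Second, I would establish a branch-counting invariant: any superposition of type $Q$, when put in canonical form, has at most $\card(Q)$ distinct basis values appearing with nonzero amplitude. This follows from the soundness machinery of Section~\ref{s:unitarity} — a realizer of $\llbracket Q \rrbracket_\emptyset$ (equivalently $\llbracket \sharp Q \rrbracket_\emptyset$ up to the unit-sphere restriction) normalizes to a value whose base lies in $\flat\llbracket Q\rrbracket_\emptyset$, a set of cardinality exactly $\card(Q)$. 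However, a subtlety is that \emph{intermediate} (non-normal) superpositions of type $Q$ may have more than $\card(Q)$ distinct \emph{terms} $t_i$ (since distinct unreduced terms can reduce to the same value), so more carefully I would argue that the number of distinct summands in any canonical form reachable from $\vec t$ is bounded by a polynomial in $|\vec t|$ — indeed by the number of distinct subterm-instances generated during reduction, which is itself controlled by $P_Q$ and $|\vec t|$ via the size bound on reducts. Third, for the cost of a single $\rightsquigarrow$ step simulated classically: each step either (a) fires a classical redex (Abs), (Let), (If$_0$), (If$_1$) inside one branch, or (b) is a (Sup) step firing one redex in each of the active branches. Classically, (b) costs at most $(\text{number of branches}) \times (\text{cost of one redex firing})$; a single redex firing (substitution, $\beta$-like reduction) costs at most $O(\text{poly}(|\vec t|))$ by the polynomial size bound on all reducts, plus the cost of arithmetic on the amplitudes — and here I would invoke the discussion following Lemma~\ref{lemma:orthos-complexity}, namely that amplitude comparisons and arithmetic in $\Ct$ are polytime by \cite[Proposition 2.2]{HHHJ05}, so maintaining canonical form (collapsing equal terms, dropping zero-amplitude summands) after each step also costs $O(\text{poly}(|\vec t|))$. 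Multiplying the number of steps $O(P_Q(|\vec t|))$ by the per-step cost $O(\card(Q)\cdot\text{poly}(|\vec t|))$ gives the claimed bound $O(\card(Q)\,\text{poly}(|\vec t|))$, after folding $P_Q$ into the polynomial.

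\textbf{Main obstacle.} The delicate point is the branch-counting argument: it is tempting to bound the number of branches uniformly by $\card(Q)$, but that bound only applies to \emph{values} (normal forms), not to intermediate superpositions, where syntactically distinct unreduced terms may proliferate — e.g. before the branches of a quantum conditional collapse. The careful version needs a bound on the number of distinct term-summands in every reachable canonical form, which I would derive from the fact (Lemma~\ref{lem:size}, Lemma~\ref{lem:transition-dlal}, and Corollary~\ref{coro:sound}) that all reducts have size polynomially bounded in $|\vec t|$ and the reduction has polynomially many steps, so at most polynomially many distinct subterm-instances can ever be created; each contributes at most one summand per canonical form. Combining this polynomial branch bound with the polynomial per-branch step cost and the polynomial step count yields a polynomial overall, with the $\card(Q)$ factor making explicit the (type-dependent, but here already polynomially-bounded) branching width. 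Everything else is routine bookkeeping on the reduction rules of Figure~\ref{fig:sem}.
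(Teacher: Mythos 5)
Your overall structure matches the paper's: polynomially many $\rightsquigarrow$-steps and polynomially bounded reducts come from Theorem~\ref{thm:soundness}, subject reduction keeps the type fixed, amplitude arithmetic in $\Ct$ is polytime by \cite[Proposition~2.2]{HHHJ05}, and the bound is the product of a branch count, a per-branch per-step cost, and a step count. The paper compresses all of this into a short paragraph and, like you, hinges everything on bounding the number of branches.

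However, the branch-counting ``subtlety'' you raise is not actually a problem, and the workaround you propose for it does not work. You worry that an intermediate superposition of type $Q$ may have more than $\card(Q)$ distinct term-summands because distinct unreduced terms can reduce to the same value. But this cannot happen for a \emph{typable} superposition: by subject reduction every reduct still has type $Q$, and the only way to type a canonical superposition $\sum_{i=1}^n\alpha_i\cdot t_i$ with $n>1$ is through rule $(\sharp_i)$ (up to $(\equiv)$ and $(\leq)$), which requires the $t_i$ to be \emph{pairwise orthogonal}. Two terms that reduce to the same unit-norm value have inner product $1$, not $0$, so they cannot co-occur; the $t_i$ reduce to pairwise orthogonal elements of a space of dimension $\card(Q)$ (and $\card$ is preserved by $\sharp$, $\S$, $!$, and the subtyping rules), so $n\le\card(Q)$ holds for \emph{every} reduct, not only for values. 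This is exactly the bound the paper invokes; its phrasing ``$\vec t\equiv\sum\alpha_i\cdot t_i$ with $t_i\in\flat\llbracket Q\rrbracket_\emptyset$'' is loose (the $t_i$ need not be basis values when $\vec t$ is unreduced), but the intended invariant --- at most $\card(Q)$ branches at every stage --- is what the typing rules enforce.

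Your replacement argument, bounding the branch count polynomially via Lemma~\ref{lem:size}, Lemma~\ref{lem:transition-dlal} and Corollary~\ref{coro:sound}, does not give the needed bound as stated. The size $|\vec t|$ of a superposition is defined as the \emph{maximum} branch size, not the total, so a polynomial bound on $|\cdot|$ says nothing about how many branches there are. And the \dlal{} translation $\lpar{\cdot}$ of a single \emph{term} (not just of a superposition) is already a \emph{set} whose cardinality can be exponential in the number of nested conditionals --- $\lpar{\mathsf{H}}$ is a set of $4$ terms, $\lpar{\mathsf{H}^n\,\ket 0}$ a set of size $2^{2n}$ --- so Corollary~\ref{coro:sound}, which bounds each element's normalization, does not bound the set's cardinality polynomially. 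The orthogonality-enforced $\card(Q)$ invariant is what makes the lemma go through, and is the step your proposal is missing.
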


\begin{proof}
Since $\vec{t}\in\llbracket Q\rrbracket_\emptyset$, by the unitarity semantics, we have that $\vec{t}\equiv\sum_{i=1}^n \alpha_i\cdot t_i$, such that $\forall i, t_i\in\flat\llbracket Q\rrbracket_\emptyset$ and $n\leq\dim(Q)$. The treatment of each $t_i$ in a reduction step is at most quadratic on $|t_i|$ (e.g. checking if $t_i$ is a value, performing variable substitution, converting to normal form, \dots). This is done for each term $t_i$. Since there are at most $n$ such $t_i$, where $n$ is a constant fixed by the type $Q$, the number of reduction steps in the classical model remains polynomial.
\end{proof}

\lemmachecksincluded*
\begin{proof}
The fact that $\CHK_\emptyset\subseteq \CHK_\times$ is trivial. Similarly, $\ortho_\forall \subseteq \ortho$ since the only difference is that we perform an orthogonality check over all values and not the subset of values that correspond to the correct type. We also have that $\ortho_\times\subseteq \ortho_\forall$, since any variable substitution will still lead to orthogonal terms since the closed subterm which ensures orthogonality will not be altered.
\end{proof}

\lemmaorthoscomplexity*

\begin{proof} We analyze each case separately.
\begin{itemize}
\item $(\ortho_\emptyset\in\mathsf{PTIME})$ Since $;\vdash \vec{t},\vec{s}:Q$, by Theorem~\ref{thm:soundness}, there exists a polynomial $P_Q$ such that $\vec{t}$ and $\vec{s}$ reduce to values, say $\vec{t}\rightsquigarrow^\ast \vec{v}$ and $\vec{s}\rightsquigarrow^\ast \vec{w}$ in at most a number $P_Q(\max(|\vec{t}|,|\vec{s}|))$ of reduction steps. Notice that, since $\vec{t}$ and $\vec{s}$ have type $Q$, in each step of the reduction they are represented by a superposition of at most $\dim(Q)$ terms, therefore the classical complexity of calculating the reduction differs only by a constant factor from the number of reduction steps. Finally, verifying that $\langle \vec{v}|\vec{w}\rangle=0$ is also done in polynomial time since $|\vec{v}|,|\vec{w}|\leq P_Q(\max(|\vec{t}|,|\vec{s}|))$ and we only need to reduce at most a constant $\dim(Q)$ number of terms in superposition.
\item $(\ortho_\times\in \mathsf{PTIME})$ Checking if a term contains free variables can be done in linear time on its size, i.e. in time $O(\max(|\vec{t}|,|\vec{s}|))$. If the terms are closed, computing $\ortho_\emptyset(\vec{t},\vec{s})$ can be done in polynomial time. If the terms are open, we proceed inductively on the structure of the term until we find (or not) matching subterms that are closed and from which we can prove orthogonality. This can easily be done in polynomial time.
\item $(\ortho_\mathsf{untyped}\in \Pi_1^0)$ Since $\V_c$ is a countable set, it suffices to show that $\mathsf{PERP}$ is computable. To compute $\mathsf{PERP}$, we only need to reduce the terms to values $\vec{w}_1$ and $\vec{w}_2$. Notice that, since the terms which we are using in the substitution may not have the correct type, this implies that the term we are reducing might not be typable in \punq{}, which renders Theorem~\ref{thm:soundness} inapplicable and therefore we cannot place a (in particular, finite) bound on its number of reduction steps. However, by Theorem~\ref{thm:soundness}, we know that we need only apply a polynomial $P_Q$ number of reduction steps and, if the term is not yet a value (i.e., by Theorem~\ref{thm:progress}, it does not reduce), we know that it does not correspond to the candidate type $Q$ in \punq{}, and therefore we do not need to consider it.

If the terms do reduce to values $\vec{w}_1$ and $\vec{w}_2$, then we may check if they belong to $\llbracket Q\rrbracket_\emptyset$, the set of realizers of $Q$. Notice that, since the grammar of ground types $Q$ does not include applications nor polymorphisms, we can check that a term belongs in $\llbracket Q\rrbracket_\emptyset$ purely syntactically. Finally, we can check $\ortho_\emptyset$ in polynomial time.\qedhere
\end{itemize}
\end{proof}

\end{document}